\documentclass[final]{memo-l}
\usepackage{amssymb,amsrefs}
\usepackage{graphicx}
\usepackage{bm}
\usepackage{hyperref} 
\usepackage{enumerate}

\numberwithin{equation}{chapter}


\newtheorem{theorem}{Theorem}[chapter]
\newtheorem{corollary}[theorem]{Corollary}
\newtheorem{lemma}[theorem]{Lemma}
\newtheorem{proposition}[theorem]{Proposition}

\theoremstyle{definition}
\newtheorem{definition}{Definition}[chapter]

\theoremstyle{remark}
\newtheorem{remark}{Remark}[chapter]

\newenvironment{proofof}{\trivlist\item[]}%
{\unskip\nobreak\hskip 1em plus 1fil\nobreak$\Box$
\parfillskip=0pt%
\endtrivlist}


\providecommand{\abs}[1]{\left\lvert#1\right\rvert}
\providecommand{\norm}[1]{\left\lVert#1\right\rVert}
\providecommand{\inner}[1]{\left\langle#1\right\rangle}
\providecommand{\ee}{\text{e}}
\providecommand{\oo}{\text{o}}

\providecommand{\D}{\partial}
\providecommand{\R}{\mathbb{R}}
\providecommand{\C}{\mathbb{C}}
\providecommand{\Z}{\mathbb{Z}}
\providecommand{\eps}{\varepsilon}
\providecommand{\nit}{\noindent}
\providecommand{\nn}{\nonumber}
\DeclareMathOperator{\sgn}{sgn}

\providecommand{\lamsharp}{{\lambda_{\sharp}}}
\providecommand{\thetasharp}{{\vartheta_{\sharp}}}

\makeatletter
\newcommand{\vast}{\bBigg@{3}}
\newcommand{\Vast}{\bBigg@{4}}


\begin{document}

\frontmatter

\title{Topologically Protected States in One-Dimensional Systems}

\author{C. L. Fefferman}
\address{Department of Mathematics, Princeton University, Princeton, NJ, USA}
\email{cf@math.princeton.edu}
\thanks{The first author was supported in part by NSF grant DMS-1265524.}

\author{J. P. Lee-Thorp}
\address{Department of Applied Physics and Applied Mathematics, Columbia University, New York, NY, USA}
\email{jpl2154@columbia.edu}

\author{M. I. Weinstein}
\address{Department of Applied Physics and Applied Mathematics and Department of Mathematics, Columbia University, New York, NY, USA}
\email{miw2103@columbia.edu}
\thanks{The second and third authors were supported in part by NSF grants: DMS-10-08855, DMS-1412560 and the Columbia Optics and Quantum Electronics IGERT  DGE-1069420.}

\date{\today}

\subjclass[2010]{Primary 35J10 35B32;\\Secondary 35P  35Q41 37G40 34B30}

\keywords{Schr\"odinger equation, Dirac equation, Floquet-Bloch theory, Topological protection, Edge states, Hill's equation, Domain wall}


\maketitle

\tableofcontents

\begin{abstract}  
We study a class of periodic Schr\"odinger operators, which in distinguished cases can be proved to have linear band-crossings or ``Dirac points". We then show that the introduction of an ``edge'',  via adiabatic modulation of these periodic potentials by a domain wall, results in the bifurcation of  spatially localized ``edge states''. These bound states are associated with the topologically protected zero-energy mode of an   asymptotic one-dimensional Dirac operator. 
Our model captures many aspects of the phenomenon of topologically protected edge states for two-dimensional bulk structures such as the honeycomb structure of graphene. The states we construct can be realized as highly robust TM- electromagnetic modes for a class of photonic waveguides with a phase-defect.
\end{abstract}

\markboth{Topologically protected bound states in 1D}{C.L. Fefferman, J.P. Lee-Thorp, M.I. Weinstein}

\mainmatter

\chapter{Introduction and Outline}\label{introduction}
Energy localization in {\it surface modes} or {\it edge states} at the interface between dissimilar media has
been explored, going back to the 1930's, as a vehicle for localization and transport of energy 
\cites{Tamm:32,Shockley:39,RH:08,Fan-etal:08,Soljacic-etal:08,Chen-etal:09,Rechtsman-etal:13,Rechtsman-etal:13a}. These
phenomena can be exploited in, for example, quantum, electronic or photonic systems.
An essential property for applications is robustness; the localization properties of such surface states needs to be
stable with respect to distortions of or imperfections along the interface. 

A class of structures, which have attracted great interest since about 2005, are topological insulators
\cite{Kane-Mele:05}. 
 In certain energy ranges, such structures behave as insulators in their bulk (this is associated with an energy gap in
the spectrum of the bulk Hamiltonian), but have boundary conducting states with energies in the bulk energy gap; these
are states which propagate along the boundary and are localized transverse to the boundary. Some of these states may be
{\it topologically protected}; they persist under deformations of the interface which preserve the bulk spectral gap,
{\it e.g.} localized perturbations of the interface.  In  honeycomb structures, {\it e.g.} graphene,  where a bulk gap
is opened at a ``Dirac point'' by breaking time-reversal symmetry \cites{RMP-Graphene:09,HK:10,Katsnelson:12,FW:12,FW:14},  protected edge
states are uni-directional and furthermore do not backscatter in the presence of interface perturbations
\cites{RH:08,Fan-etal:08,Soljacic-etal:08,Rechtsman-etal:13a}. An early well-known instance of topological protected
states  in condensed matter physics are the chiral edge states observed in the quantum Hall effect. In tight-binding
models, discrete lattice
models which arise, for example, as infinite contrast limits, this property can be
understood in terms of topological invariants associated with the  band structure of the bulk periodic structure
\cites{W:81,TKNN:82,H:82,Hatsugai:93,Graf-Porta:13,Avila-etal:13}.

In this article we introduce a one-dimensional continuum model, the Schr\"odinger equation with a periodic potential modulated by a  {\it domain wall}, for which we rigorously study the bifurcation of topologically protected edge states as a parameter {\it lifts} a Dirac point degeneracy.  This model,  which has many of the features of the above examples, is motivated by the study of photonic edge states in honeycomb structures in \cite{RH:08}.  The bifurcation we study is governed by the existence of a topologically protected zero-energy eigenstate of a one-dimensional Dirac operator, $\mathcal{D}$; see  \eqref{dirac_op}. The zero-mode of this operator plays a role in electronic excitations in coupled scalar - spinor fields \cite{JR:76} and polymer chains \cite{Su-Schrieffer-Heeger:79}.
There are numerous studies of edge states for {\it tight-binding} models; see, for example, the above citations. The present work considers the far less-explored setting of edge states in the underlying partial differential equations; see also \cites{FW:12,FW:14}. A summary of  our results is given in \cite{FLW-PNAS:14}.

\section{Motivating example - Dimer model with a phase defect}\label{sec:motivation}

 In this section we build up a family of Schr\"odinger Hamiltonians with the properties outlined above. 
 The construction has several steps: (i) We introduce a family of  one-periodic potentials which, due to an additional translation symmetry, have spectral bands which touch at a ``Dirac points''. The spectrum of this periodic Schr\"odinger operator is continuous; all states are extended. (ii) We observe that by maintaining periodicity but breaking the additional translation symmetry 
 (``dimerizing''),   gaps in the continuous spectrum are opened about the Dirac points. (iii)  Finally, we introduce a potential which interpolates between 
 different ``dimerizations'' at $+\infty$ and at $-\infty$, having a common spectral gap. The latter Schr\"odinger Hamiltonian is one of a general class for  which we will prove in Theorem  \ref{thm:validity} that localized eigenstates exist with eigenvalues  approximately located mid-gap.
 
Start with a real-valued $1-$ periodic function, $Q(x)$, which is even: $Q(x)=Q(-x)$. Introduce the one-parameter family of potentials $\mathcal{Q}(x;s)$, a sum of translates of $Q$:
\begin{equation}
\mathcal{Q}(x;s) = Q\left(x+\frac{s}2\right) + Q\left(x-\frac{s}2\right),\ \ 0\le s\le1.
\label{Qxs}
\end{equation}
Clearly, $\mathcal{Q}(x;s)$ is  $1-$ periodic. Moreover, $\mathcal{Q}\left(x;s=1/2\right)$ has minimal period equal to
$1/2$, {\it i.e.} $\mathcal{Q}(x;1/2)$ has an additional translation symmetry. 

An example of particular interest is obtained by first taking $q_0(x)$ to be a real-valued, even, smooth and  rapidly
decaying function on $\R$, $Q(x)$ to be a sum over its integer translates, {\it i.e.} $Q(x)=\sum_{n\in\Z}q_0(x+n)$,  
 and $\mathcal{Q}(x;s)$ to be the {\it superlattice potential}, 
concentrated on staggered sub-lattices $\Z-s/2$ and $\Z+s/2$, given by \eqref{Qxs}; see Figure \ref{fig:double}.

The function $\mathcal{Q}(x;s)$ may be expressed as a Fourier series
\begin{align}
\mathcal{Q}(x;s)\ &=\ 4\sum_{m\in\Z_+}\widehat{Q}(m)\cos(\pi ms)\ \cos(2\pi mx) ,
 \end{align}
 and we begin by considering the family of operators 
 \[ H(s)\equiv -\D_x^2+{\mathcal Q}(x;s).\]
For $s=1/2$,  $\mathcal{Q}(x;1/2)$ reduces to an 
even-index cosine series:
\[ \mathcal{Q}(x;1/2) = \sum_{m\in2\Z_+} \mathcal{Q}_m \cos(2\pi m x) \equiv V_\ee(x). \]

The operator $ H(1/2) = -\D_x^2 + V_\ee(x)$ 
 can be shown to have {\it Dirac points}.  These are quasi-momentum / energy pairs, $(k_\star,E_\star)$, where
$k_\star$ is a point with ``high-symmetry'' (see Chapter \ref{1dperiodic&dirac}, Definition \ref{dirac-pt-gen}) at which
neighboring spectral bands touch and at which dispersion curves cross linearly; see Figure \ref{fig:unperturbedspectrum}.

Below we construct an appropriately modulated dimer structure, by allowing the parameter $s$ to vary slowly with $x$. This modulated structure, which adiabatically transitions between periodic structures at $\pm \infty$, that shall be the focus of this paper.
 
\medskip

\begin{figure}
\includegraphics[width=\textwidth]{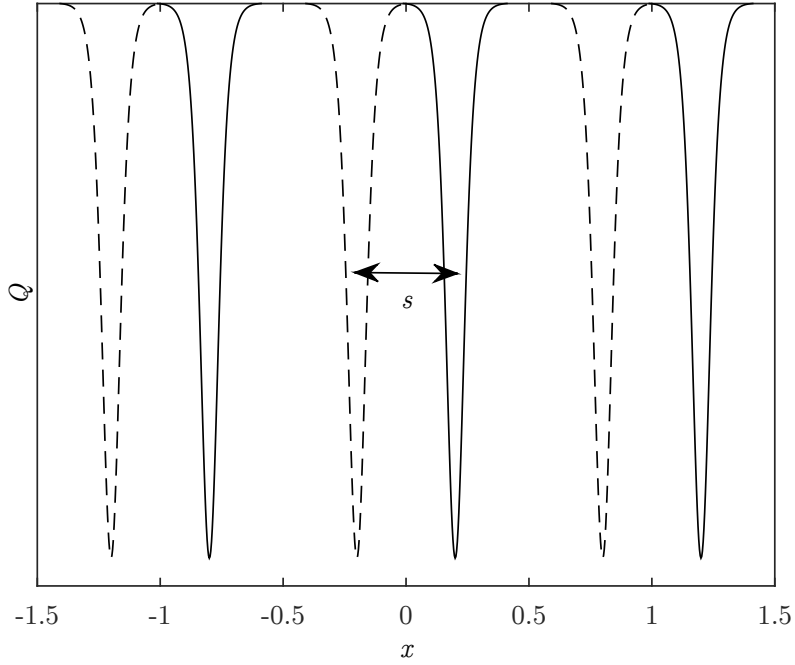}
\caption{$\mathcal{Q}(x;s) ={Q}(x+s/2)+{Q}(x-s/2)$, a periodic potential, consisting of wells
supported on two sub-lattices. ${Q}(x+s/2)$ is given by the dashed curve and ${Q}(x-s/2)$ is given by the solid curve.
$\mathcal{Q}(x;s)$ has minimal period 1 for $s\neq1/2$ and consists of {\it dimers}; each period cell contains
a double-well. For $s=1/2$, $\mathcal{Q}(x;1/2)$ has minimal period  $1/2$.}
\label{fig:double}
\end{figure}

\nit {\bf Dimer model:}\ For $s\ne 1/2$, $\mathcal{Q}(x;s)$ may be viewed as a superposition of {\it dimers},
double-well potentials in each period cell; see Figure \ref{fig:double}. Set $s^\delta=\frac12+\delta\kappa_0$, with
$0<|\delta|\ll1$ small and $0\ne\kappa_0\in\R$
fixed. Then, $\mathcal{Q}(x;s^\delta)$ 
is of the form
\begin{equation}
 \mathcal{Q}(x;s^\delta) = \sum_{m\in2\Z_+} \mathcal{Q}^\delta_m\ \cos(2\pi m x) + 
 \delta\kappa_0\ \sum_{m\in2\Z_++1}\mathcal{W}^\delta_m\ \cos(2\pi m x), \label{dimer-pot}
 \end{equation}
where $\mathcal{Q}^\delta_m,\ \mathcal{W}^\delta_m\ =\ \mathcal{O}(1)$ as $\delta\to0$.
 The operator 
 \[ H(1/2+\delta\kappa_0)\ =\ -\D_x^2+\mathcal{Q}(x;s^\delta)  \]
  is a Hill's operator \cites{magnus2013hill,Eastham:74}. The character of its spectrum is well-known. Two examples are
displayed in the top panels of Figures \ref{perturbed-hills} and \ref{perturbed-hills-eps5}. For $\delta$ fixed and small, 
the gap widths are of order $\mathcal{O}(\delta)$; see Appendix \ref{band_splitting}.
\medskip

\nit {\bf Dimer model with phase defect:} Now fix a constant $\kappa_\infty>0$ and let 
\[ s^\delta(x)= \frac12+\delta\kappa(\delta x),\]
where $\kappa(X)\to\pm\kappa_\infty$ as $X\to\pm\infty$. Consider the operator
 \begin{align}
  H(s^\delta(x)) &=-\D_x^2+\mathcal{Q}(x;s^\delta(x)).\label{dimer_Hdelta-def} 
  \end{align}
Figure \ref{fig:double_wells_s_delta} displays a typical potential $\mathcal{Q}(x;s^\delta(x))$ (bottom panel) along with its phase-shifted asymptotics (top and middle panels).
  
 \begin{figure}
\includegraphics[width=\textwidth]{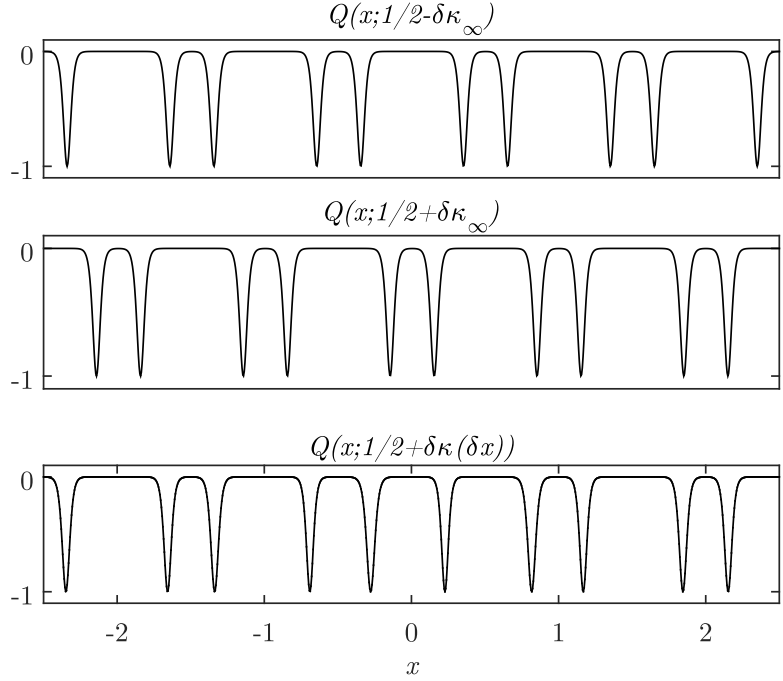}
\caption{Dimerized potential, $\mathcal{Q}(x;s) ={Q}(x+s/2)+{Q}(x-s/2)$, for  $s=s^\delta\ne1/2$.
 {\bf Top panel:} $\mathcal{Q}(x;1/2-\delta\kappa_\infty)$. {\bf Middle panel:} 
 $\mathcal{Q}(x;1/2+\delta\kappa_\infty)$. {\bf Bottom panel:} Domain wall modulated structure, $\mathcal{Q}(x;s^\delta(x))$, where  $s^\delta(x)=1/2+\delta\kappa(\delta x)$ with  $\kappa(X)\to\pm\kappa_\infty$ as $X\to\infty$. 
Top and middle panels display phase-shifted asymptotics of $\mathcal{Q}(x;s^\delta(x))$, as $x\to\pm\infty$.}
\label{fig:double_wells_s_delta}
\end{figure} 
  
 Expanding the potential, we have
  \begin{equation}
 \label{Qexpand}
 \mathcal{Q}(x;s^\delta(x))\approx \mathcal{Q}(x;1/2) + (s(x)-1/2) \frac{\partial}{\partial s}\mathcal{Q}(x;1/2) \equiv
V_{\ee}(x) +
\delta\kappa(\delta x) W_{\oo}(x),
\end{equation} 
where $V_\ee(x)$ denotes an even-index cosine series and $W_\oo(x)$ denotes an odd-index cosine series;
compare with  \eqref{dimer-pot}.
 This motivates our study of the family of operators:
\begin{equation}
H_\delta \ =\  -\D_x^2 + V_{\ee}(x) +
\delta\kappa(\delta x) W_{\oo}(x),
\label{Hdelta-def}\end{equation}
parametrized by $\delta$,  which interpolates adiabatically, through
a {\it domain wall},
 between the operators
 \begin{align}
 H(1/2-\kappa_\infty\delta)&\approx H_{\delta,-}\equiv H(1/2)-\delta \kappa_\infty W_\oo(x)\ \ {\rm at}\ \  
 x=-\infty,\ \ and \label{H-delta}\\ 
 H(1/2+\kappa_\infty\delta)&\approx H_{\delta,+}\equiv H(1/2)+\delta \kappa_\infty W_\oo(x)\ \ {\rm at}\ \  x=+\infty.
 \label{H+delta}\end{align}
 The operator $H_\delta$ has,  for $|\delta|\ll1$,  a
spectral gap of width $\mathcal{O}(\delta)$ about $E_\star$; see Proposition \ref{kappa_spec}.
\medskip

{\it The main results of this paper concern the bifurcation of a branch of simple eigenvalues,
$E^\delta=E_\star+\mathcal{O}(\delta^2)$, located approximately mid-gap,
with corresponding spatially localized eigenstate, $\Psi^\delta$. The bifurcating eigenpair $\delta\mapsto (E^\delta,\Psi^\delta)$ is
topologically stable; that is, it persists under spatially localized
perturbations of the domain wall, $\kappa(X)$.  The existence and stability of this bifurcation are related to a
topologically protected zero mode of a limiting one-dimensional Dirac equation; see \cites{JR:76,RH:08}.}
\medskip

Note that this bifurcation is associated with a {\it non-compact} perturbation of $H_0=-\D_x^2+V_\ee$, a phase change or phase
defect across the structure, which at once changes the essential spectrum and spawns a bound state.  Bifurcations from
the edge of continuous spectra, arising from localized perturbations have been studied extensively; see, for example,
\cites{Simon:76,Deift-Hempel:86,Figotin-Klein:97,Borisov-Gadylshin:08,DVW:12,DVW-CMS:14}. A class of edge bifurcations
due to a non-compact perturbation is studied in \cite{BR:11}.

Our model also captures many aspects of the phenomenon of
topologically protected edge states, for two-dimensional bulk structures, such as the honeycomb structure of graphene. We explore edge states in the two-dimensional setting of honeycomb structures in forthcoming publications.

Finally, we remark that the bound states we construct can be realized as TM- electromagnetic modes for a class of
photonic waveguides with a
phase-defect; we explore this in  \cite{experiment:14}. 
 
 \section{Outline of main results}\label{setup-results}
Let  $V_\ee(x)$ be a sufficiently smooth,  even-index Fourier cosine-series:
 \begin{equation}
V_\ee(x)\ =\ \sum_{p\in2\Z_+} v_p\cos(2\pi px)\ ,\quad v_p\in\R,  \label{Ve-def}
\end{equation} 
and introduce the Schr\"odinger operator, $H_0$:
\begin{equation}
H_0\ =\ -\D_x^2 + V_\ee(x).
\label{H0def}\end{equation}

Recall that the spectrum of a Schr\"odinger operator with a periodic potential is given by the union of closed
intervals,
the spectral bands \cites{RS4,Eastham:74}.
\begin{figure}
\includegraphics[width=\textwidth]{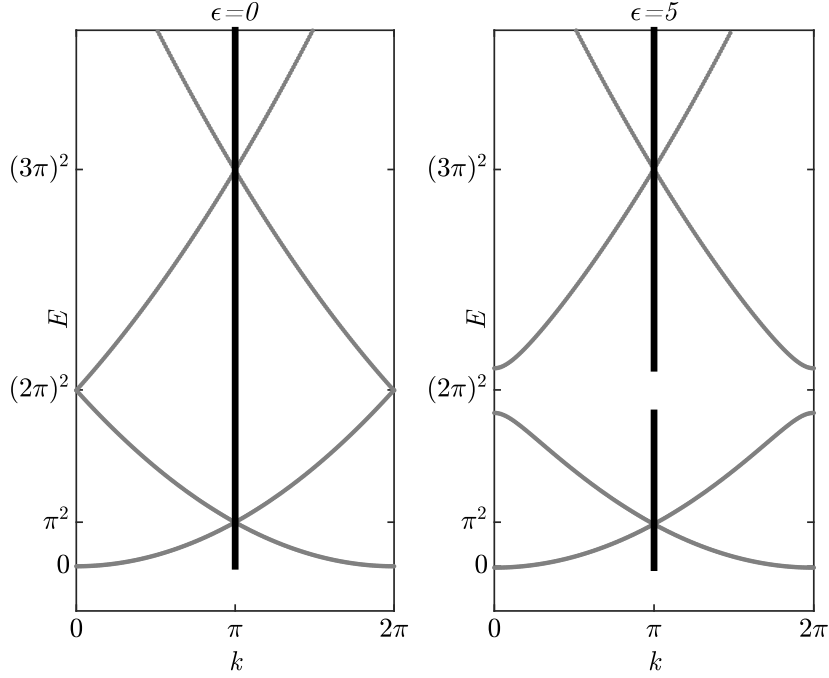}
\caption{Dispersion curves (gray), $E_b$ vs $k$, $k\in\mathcal{B}=[0,2\pi]$, and spectrum (black) for two choices
of the unperturbed Hamiltonian
$H^{(\eps)}_0=-\D_x^2+\eps V_{\ee}(x)$, given by \eqref{H0eps-def}, with $V_{\ee}(x) =
2\cos(2(2\pi x))$. {\bf Left panel:} $\varepsilon=0$, $H^{(0)}_0=-\D_x^2$.
Bands cross at $E_m=m^2\ \pi^2,\ m=1,2,3,\dots$. {\bf Right panel:} $\varepsilon\ne0$. Band crossings at
$(k=0,E=(2m)^2\pi^2)$ (equivalently,
$k=2\pi$) are ``lifted'' and spectral gaps open; see Appendix \ref{degeneracies_lifted}.
  For $\eps\ne0$, crossings at {\it Dirac points}:
$(k=\pi,E^{(\eps)}_{\star,m}\approx(2m-1)^2\pi^2),\ m=1,2,\dots$ persist.
$E_{\star,1}^{(\eps=0)}=\pi^2\approx9.87$
 and $E_{\star,1}^{(\eps=5)}\approx9.45$.}
\label{fig:unperturbedspectrum}
\end{figure}
\medskip

 We outline our main results:
\begin{itemize}
\item[(1)]  Theorems \ref{thm:dirac-pt}-\ref{thm:dirac-pt-gen},\ {\it Generic existence of ``Dirac points'' for a class
of 1D periodic structures}:
    Consider the one-parameter family of periodic Hamiltonians:
\begin{equation}
H_0^{(\eps)}\ =\ -\D_x^2 + \eps V_\ee(x),\ \ \eps\in\R.\label{H0eps-def}
\end{equation}
We prove that for all $\eps\in\R$, except possibly a discrete set of values contained in $\R\setminus(-\eps_0,\eps_0)$, with $ \eps_0>0$,
the  $H^{(\eps)}$ has a Dirac point  or linear band crossing at $(k_\star=\pi,E_\star^{(\eps)})$,  in the sense of 
Definition \ref{dirac-pt-gen}.
That is, $E=E_\star$ is a degenerate $k_\star-$ pseudo-periodic  eigenvalue
 with corresponding two-dimensional eigenspace given by ${\rm span}\{\Phi_1,\Phi_2\}$. Furthermore,  there exist $\lambda_{\sharp}\ne0$ and Floquet-Bloch eigenpairs 
\begin{align}
\label{Phi_pm}
& k\ \mapsto\ (\Phi_+(x;k),E_+(k))\ \ {\rm and}\ \   k\ \mapsto\ (\Phi_-(x;k),E_-(k)), \\
&\Phi_1(x)\ =\ \Phi_-(x;k_\star)\in L^2_{k_\star,\ee}\ ,\ \Phi_2(x)\ =\ \Phi_+(x;k_\star))\in L^2_{k_\star,\oo},\nn
 \end{align}
  and such that
 $E_\pm(k)-E_\star\ =\ \pm\ \lambda_\sharp\ \left(k-k_\star\right)\ \Big(1+ o(k-k_\star)\Big)$
as $k\to k_\star$. See Chapter \ref{introduction}, Section \ref{1d-toy} for the definitions of $L^2_{k_\star,\ee}$ and $L^2_{k_\star,\oo}$ ($k_\star=\pi$) and the decomposition: $L^2_{k_\star}=L^2_{k_\star,\ee}\oplus L^2_{k_\star,\oo}$.

\medskip Figure \ref{fig:unperturbedspectrum} shows the dispersion curves and spectrum for
$H_0^{(\eps)}=-\D_x^2+\eps V_\ee(x)$ in the cases:  $\eps=0$ (left) and $\eps\ne0$ (right). One sees that for
$\eps\ne0$,
the band touchings at quasi-momentum $k=0$, or equivalently $k=2\pi$, are ``lifted'' (Appendix
\ref{degeneracies_lifted})  while those occurring for
quasi-momentum $k=\pi$, the Dirac points, persist. The persistence is associated with the  $k_\star$ being a high-symmetry quasi-momentum
 for  $H^{(\eps)}$; see Definition \ref{dirac-pt-gen}. The analogous picture was established for 
honeycomb structures defined by honeycomb lattice potentials; see  \cites{FW:12,FW:14}.
 \medskip

\item[(2)]  Theorem  \ref{thm:validity},\  {\it Topologically protected edge states in perturbations of 1D
systems with Dirac points:}\ Consider the periodic Hamiltonian $H_0=-\D_x^2+V_\ee(x)$,  
assumed to have a 
Dirac point, {\it e.g.} $V_\ee=\eps \tilde{V}_\ee$, where $\tilde{V}_\ee$ is of the form \eqref{Ve-def} 
 and $\eps$ is as in (1), {\it i.e.} $\eps\in\R$, except possibly a discrete set of values contained in $\R\setminus(-\eps_0,\eps_0)$, with $ \eps_0>0$.

\nit Define the perturbed Hamiltonian:
\begin{align}
 H_\delta&=  -\D_x^2+V_\ee(x)+\delta\kappa(\delta x)W_\oo(x) ,
\label{Hdelta}
\end{align}
where $W_\oo$ is a sufficiently smooth \underline{odd}-index cosine series:
\begin{equation}
 W_\oo(x)= \sum_{p\in2\Z_++1} w_p\cos(2\pi px), \quad w_p\in\R .
 \label{Woo-def}
 \end{equation}

For $\kappa\equiv\kappa_\infty$, a non-zero constant, the spectrum of $H_\delta$ is continuous
with a spectral gap of order $\delta|\kappa_\infty|$ about the Dirac point, $(k_\star,E_\star)$, of $H_0$; see Proposition \ref{kappa_spec}.\medskip

Now let 
$\kappa(X)$ be a sufficiently smooth ``domain wall'', {\it i.e.} 
$\kappa(X)$ tends to $\pm\kappa_{\infty}$ as $ X\to\pm\infty, \ \ \kappa_{\infty}>0$. More precisely,
\begin{equation}
 \kappa(X)-\sgn(X)\kappa_{\infty}\ \ {\rm and}\ \ \kappa'(X)\ \  \textrm{tend to zero sufficiently rapidly as}\ \
X\to\pm\infty.\label{kappa-def}
 \end{equation}
 (We believe our results extend to the case where $\kappa(X)$ has the
more general asymptotic behavior, $\kappa(X)\to\pm\kappa_{\pm\infty}$ as $|X|\to\infty$, where
$\kappa_\infty\times\kappa_{-\infty}>0$, but have assumed \eqref{kappa-def}  to simplify
an aspect of the proofs.)

\nit Figure \ref{fig:domain_wall_potentials} displays two examples of the potential $V_\ee(x)+\delta\kappa(\delta
x)W_\oo(x)$ of $H_\delta$.  \medskip
 
 \begin{figure}
\includegraphics[width=\textwidth]{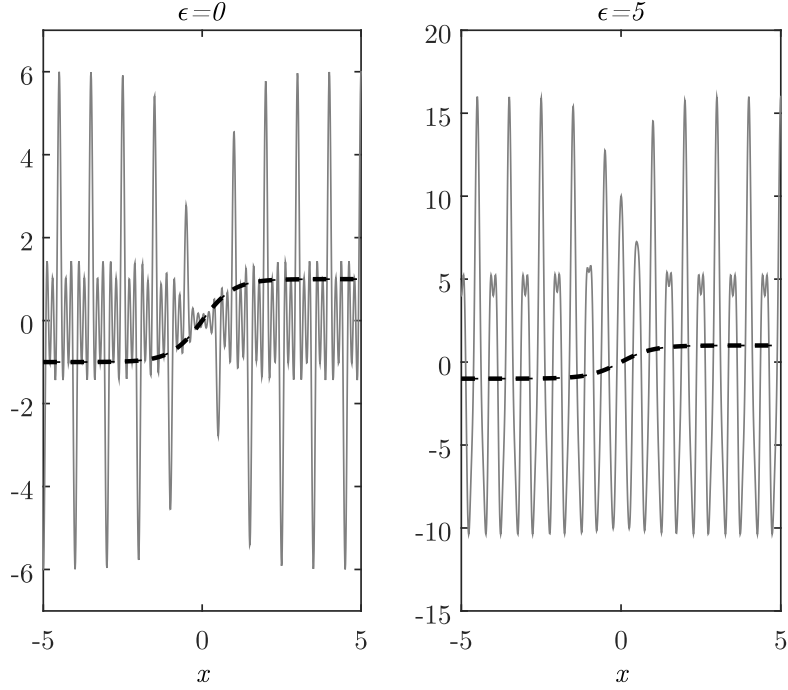}
\caption{Domain wall modulated potentials 
$2\varepsilon\cos(2(2 \pi x))+2\delta\tanh(\delta x)\sum_{j\in\{1,3,5\}}\cos(2\pi j x)$ for $(\eps,\delta)=(0,1)$ ({\bf left panel})
and  $(\eps,\delta)=(5,1)$  ({\bf right panel}). Both the potential $U_{\delta}(x)$ (solid gray) and the domain wall profile
$\kappa(\delta x)$ (dashed black) are plotted.
Spectra of $H_\delta$, as $\delta$ is varied, are displayed in the bottom panels of Figures \ref{perturbed-hills} ($\eps=0$) and 
\ref{perturbed-hills-eps5}
($\eps=5$). }
\label{fig:domain_wall_potentials}
\end{figure}

Let  $(k_\star,E_\star)$ denote a Dirac point of $H_0$, with corresponding $k_\star$- pseudo-periodic eigenspace 
spanned by $\Phi_1(x)$ and $\Phi_2(x)$.
From Theorem  \ref{thm:validity} we have:\medskip

\begin{itemize}
 \item [(A)]  There exists $\delta_0>0$ such that for all $0<\delta<\delta_0$, the perturbed Schr\"odinger operator,
$H_\delta$, 
has an eigenpair $(\Psi^{\delta},E^\delta)$. The energy, $E^\delta$, falls within a gap in the essential spectrum of $H_\delta$ of width $\mathcal{O}(\delta)$, and $\Psi^\delta\in H^2(\R_x)$ is exponentially decaying. \smallskip

 \item [(B)] To leading order, $\Psi^\delta(x)$ is a slow and localized modulation of the degenerate (Dirac) eigenspace:
\begin{align}
\Psi^{\delta}(x)&\approx \alpha_{\star,1}(\delta x)\Phi_1(x)+\alpha_{\star,2}(\delta x)\Phi_2(x)\ \ \textrm{in}\ \
H^2(\R_x),\label{Psi-delta-leading}\\
E^\delta&\approx E_\star +\mathcal{O}(\delta^2).\nn
\end{align}
The vector of amplitudes, 
$\alpha_\star=
\left(\ \alpha_{\star,1}(X),\alpha_{\star,2}(X)\ \right)$, is the topologically protected zero-energy
eigenstate of the system of Dirac equations: $\mathcal{D} \alpha_\star=0$, where $\mathcal{D}$  is a one-dimensional Dirac operator given by:
 \begin{equation}
\mathcal{D}\ \equiv
i\lamsharp\sigma_3\partial_{X}+\thetasharp\kappa(X)\sigma_1\ .
\label{D-ms}\end{equation}

\end{itemize}
\medskip

\begin{figure}
\includegraphics[width=\textwidth]{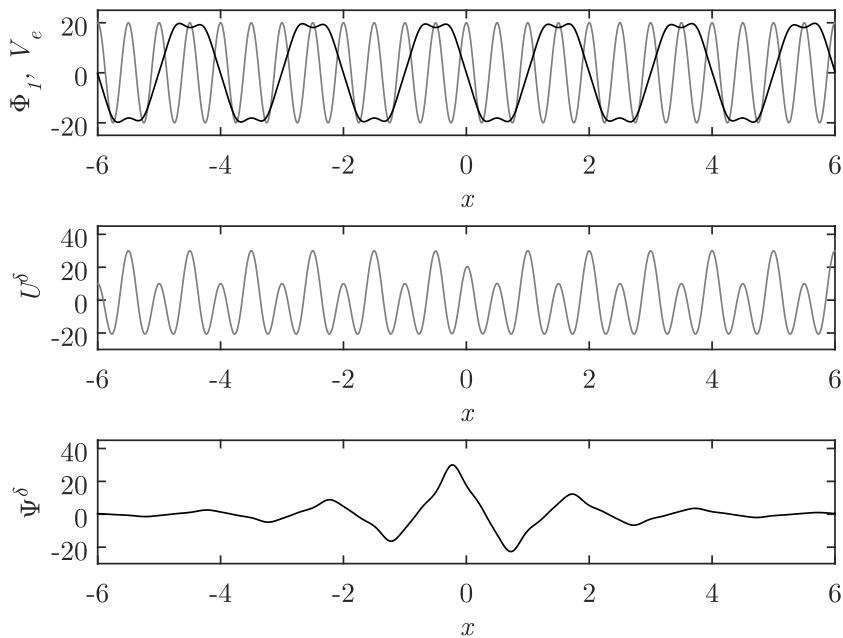}
\caption{Potentials and modes for $H^{\delta}=-\D_x^2+U_\delta(x)$, where $U_\delta(x)=20\cos(2(2 \pi
x))+2\delta\tanh(\delta x)\cos(2(2 \pi
x))$. {\bf Top panel:}  Floquet-Bloch eigenmode $\Phi_1(x)\in L^2_{\pi,\ee}$ (black) corresponding to the degenerate
eigenvalue
$E_{\star}$ at the Dirac point of unperturbed Hamiltonian, $H_{\delta=0}$, superimposed on a plot of $U_0(x)$ (gray). Here,
$\Phi_2(x)=\Phi_1(-x)\in L^2_{\pi,\oo}$. {\bf Middle panel:} Domain wall modulated periodic structure,
$U_{\delta}(x)$,\ 
$\delta=5$. {\bf Bottom panel:}  Localized mid-gap eigenmode $\Psi^{\delta}(x)$ of $H_\delta$, $\delta=5$. }
\label{fig:efunc_plots}
\end{figure}

\begin{figure}
\includegraphics[width=\textwidth]{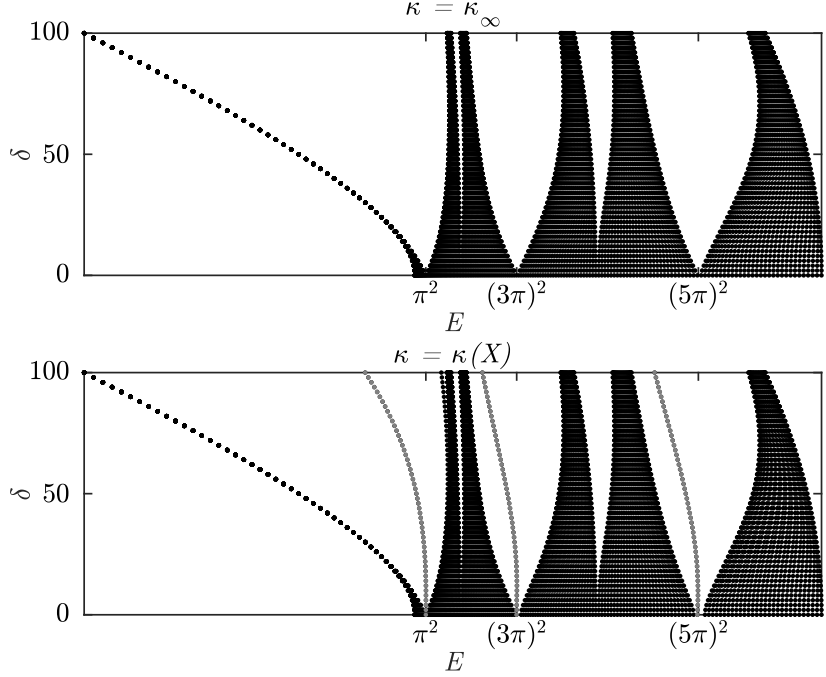}
\caption{{\bf Top panel:} Spectrum of Hill's operator with $V_\ee(x)\equiv0$:
$-\D_x^2+\delta\kappa_\infty\sum_{j\in\{1,3,5\}}\cos(2\pi
j x)$. {\bf Bottom panel:} Spectrum of the
domain-wall modulated periodic potential shown in the left panel of Figure \ref{fig:domain_wall_potentials}:
$-\D_x^2+2\delta\kappa(\delta x) \sum_{j\in\{1,3,5\}} \cos(2\pi j
x)$ with $\kappa(X)=\tanh(X)$. Bifurcating branches of topologically protected ``edge states'', predicted by Theorem
\ref{thm:validity},  are the solid (gray) curves emanating from
the points: $(E_{\star,m},\delta=0)$ where $(k_\star=\pi,E_{\star,m}\equiv(2m+1)^2\pi^2)$ are Dirac points.}
\label{perturbed-hills}
\end{figure}

\begin{figure}
\includegraphics[width=\textwidth]{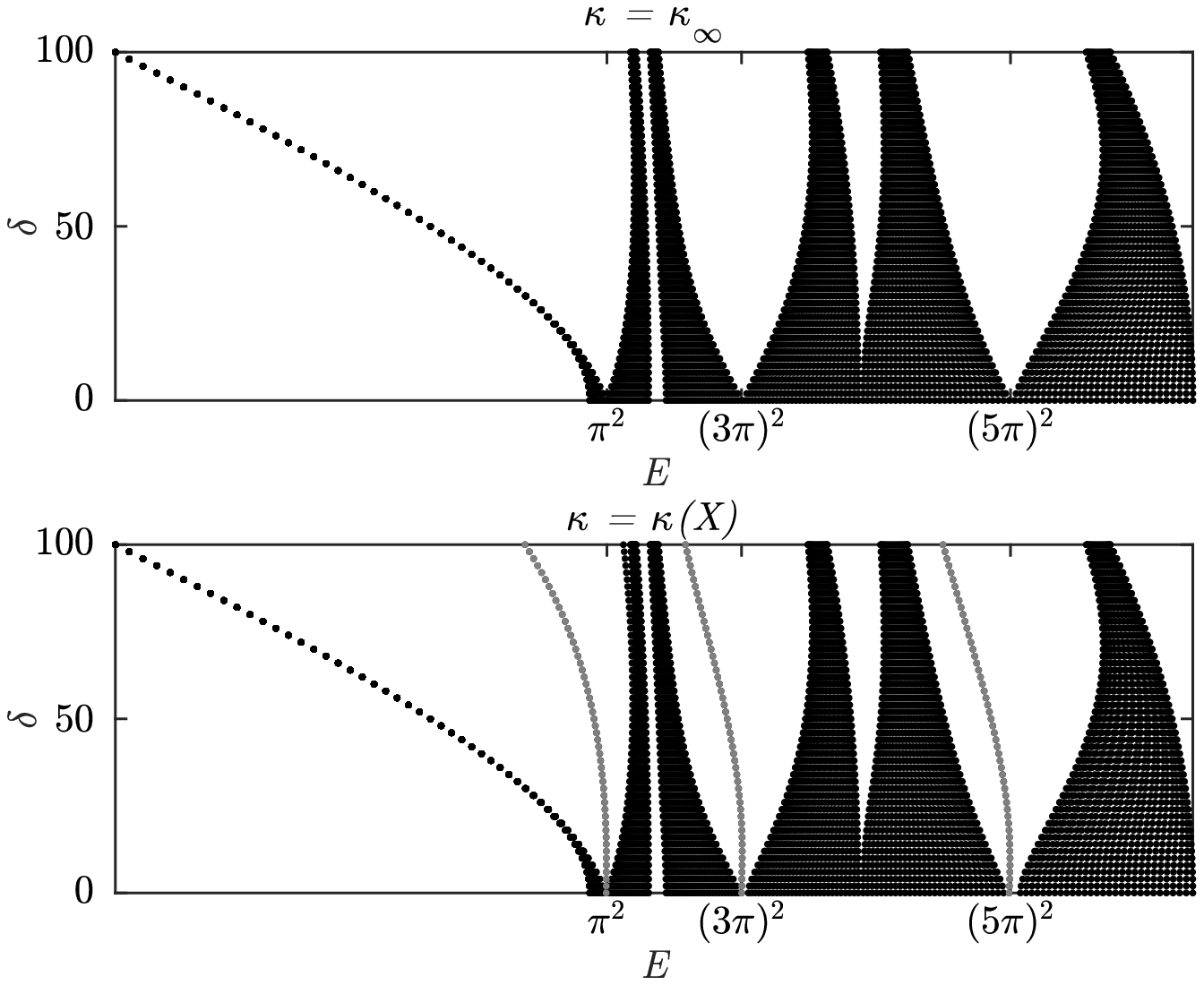}
\caption{
{\bf Top panel:} Spectrum of Hill's operator with $V_\ee(x)\neq0$: $-\D_x^2+2\varepsilon\cos(2(2 \pi x))+2\delta\kappa_\infty
\sum_{j\in\{1,3,5\}} \cos(2\pi j x)$, for $\eps=5$. {\bf Bottom panel:} Spectrum of the domain-wall modulated
periodic potential shown in the right panel of Figure \ref{fig:domain_wall_potentials}: $-\D_x^2+2\varepsilon\cos(2(2
\pi x))+2\delta\kappa(\delta x) \sum_{j\in\{1,3,5\}} \cos(2\pi j x)$ with
$\kappa(X)=\tanh(X)$.}
\label{perturbed-hills-eps5}
\end{figure}

Here, $\lambda_\sharp=2i\left\langle\Phi_1,\D_x\Phi_1\right\rangle_{L^2[0,1]}$  and $\vartheta_\sharp=\left\langle\Phi_1,W_\oo\Phi_2\right\rangle_{L^2[0,1]}$. We assume  $\lambda_\sharp\times\vartheta_\sharp\ne0$. 
 This
non-degeneracy assumption holds for generic $V_\ee$ and $W_\oo$. \\

Figure \ref{fig:efunc_plots} displays an unperturbed periodic structure ($\delta=0$) with a Dirac point and the Floquet-Bloch mode $\Phi_1(x)$ (top panel), the perturbed (domain wall modulated) potential (middle panel) and a localized edge mode (bottom panel), ensured by Theorem \ref{thm:validity}.

\medskip
Bifurcation curves of mid-gap modes are depicted for the operators
$H_\delta=-\D_x^2+\delta\kappa(\delta x) W_\oo (x)$ in Figure \ref{perturbed-hills} and 
$H_\delta=-\D_x^2+ 10\cos(2(2 \pi x))+ W_\oo(x)$ 
 in Figure \ref{perturbed-hills-eps5}. Here, $W_\oo(x)=2\sum_{j\in\{1,3,5\}} \cos(2\pi j x)$  is even
and $1-$ periodic, and contains only the $1-,3-$ and $5-$ harmonics. Theorem \ref{thm:validity} ensures these
bifurcations since, by Remark \ref{on-theta-sharp}, $\vartheta_\sharp\ne0$ for each of the first three Dirac points
$(k_\star=\pi,E_\star=\pi^2(2m+1)^2),\ m=0,1,2$.
 For a
discussion of bifurcations in the case where $\vartheta_\sharp=0$, see Remark \ref{on-theta-sharp-eq0}.
\end{itemize}

\section{Outline} We briefly comment on the contents of each section.

\nit$\bullet$\ In Chapter \ref{floquet-bloch} we review the spectral (Floquet-Bloch) theory of one-dimensional
Schr\"odinger (Hill's) operators, $H_Q=-\D_x^2+Q(x)$, where $ Q(x+1)=Q(x)$. We also present a formulation of the Poisson
summation formula in $L^2_{\rm loc}$, which we require due to the non-compactness of the domain wall perturbation.\\
$\bullet$\  In Chapter \ref{1dperiodic&dirac} we define what it means for an energy / quasi-momentum pair, $(E,k)$, to
be a Dirac point of $H_Q$ (Definition \ref{dirac-pt-gen}). We discuss the spectral theory of $H^{(\eps)}=-\D_x^2+\eps
V_\ee(x),\ \eps\in\R$ (see \eqref{H0def}, \eqref{Ve-def}) and, in particular, show that such Hamiltonians have Dirac
points for all real  $\eps$ outside of a discrete set (Theorems \ref{thm:dirac-pt} and \ref{thm:dirac-pt-gen}).
\\
$\bullet$\   In Chapter \ref{sec:bound_states} we embark on a study of topologically protected bound states of
$H_\delta$; see \eqref{Hdelta}. In Section \ref{subsec:multiscale_analysis}, we give a systematic, but formal,
multiple-scale expansion of
these bound states and, in particular, derive an effective Dirac equation with ``potential'' $\kappa(X)$, governing the
spatial envelope of the bound state. This Dirac equation has a spectral gap of width $\mathcal{O}(|\kappa_\infty|)$, centered at zero.  
  In Section \ref{subsec:dirac_bound_state} we show that for {\it any} $\kappa(X)$ of domain-wall-type
   (see \eqref{kappa-def}) these  Dirac equations have a zero energy mode. Correspondingly, the operator  $H_\delta$ has
a localized mode, approximately mid-gap: $E^\delta= E_\star+\mathcal{O}(\delta^2)$. This zero mode is topologically
stable; it
persists for any deformation of $\kappa(X)$ which respects the imposed asymptotic conditions at $X=\pm\infty$.
 This zero mode plays an important role in \cites{JR:76,RH:08}. 
\\
$\bullet$\   In Chapter \ref{sec:main_result}  we state our main result (Theorem \ref{thm:validity}) on the existence,
for $\delta$ small,  of a
bifurcating branch of eigenpairs of the Hamiltonian $H_\delta$, with eigenvalues located approximately mid-gap. This bifurcation is topologically protected in the sense that it persists under  arbitrary spatially localized perturbation in the domain wall, $\kappa(X)$.
\\ $\bullet$\   In Chapter \ref{sec:proof-exists-mode} we present the proof of Theorem \ref{thm:validity}.
The proof is based on a Lyapunov-Schmidt reduction of the spectral problem for $H_\delta$ to one for the Dirac
operator.  A rough strategy of the proof is given in Section \ref{rough-strategy} and
a more detailed 
sketch in Section \ref{subsec:freq_decomp}. 
 \\ There are several appendices.\\ 
 $\bullet$\  In Appendix \ref{PSF} we prove a variant of the Poisson Summation
Formula in $L^2_{\rm loc}$, stated in Chapter \ref{floquet-bloch}.\\
 $\bullet$\  In Appendix  \ref{linear-band-crossing} we give the
proof of Theorem
\ref{conditions-for-dirac}, which gives sufficient conditions for the  existence of a Dirac point.\\
$\bullet$\ In  Appendix \ref{dirac-pt-small-eps} we use this characterization to prove
  Theorem \ref{thm:dirac-pt}, the existence of Dirac points
of $H^{(\eps)}$ for all $\eps$ in some interval, $(-\eps_0,\eps_0)$, about zero. \\
$\bullet$\ In Appendix \ref{tC-discrete!} we study the existence of Dirac points of $H^{(\eps)}$ without  restrictions on the size of $\eps$. Using an extension of the methods of \cite{FW:12} we 
prove Theorem \ref{thm:dirac-pt-gen}:\ $H^{(\eps)}$ has Dirac points for all $\eps\in\R$ except possibly a discrete set in $\R\setminus(-\eps_0,\eps_0)$. Dirac points occur as well in two-dimensional honeycomb structures \cite{FW:12}. In Remark \ref{honey-discrete} we indicate how to adapt the arguments of Appendix  \ref{tC-discrete!}   to prove that the exceptional set, outside of which  the two-dimensional Schr\"odinger equation with honeycomb lattice potential has Dirac points, is discrete. The more restrictive result, that the exceptional set is countable and closed, was proved in \cite{FW:12}. See also Remark \ref{TexasA&M} concerning related recent work by G. Berkolaiko and A. Comech.\\
$\bullet$\  In Appendix \ref{degeneracies_lifted} we show that 
band crossings of $H^{(\eps)}=-\partial_x^2+\eps V_\ee$,
occurring at quasi-momentum $k=0$ for $\eps=0$ are ``lifted'' for $\eps$ non-zero and small; see Figure
\ref{fig:unperturbedspectrum}.\\
 $\bullet$\ Appendix \ref{band_splitting}  presents the key step
 in the proof of Proposition \ref{kappa_spec}, establishing that the operators $H_{\delta,\pm}$ and $H_\delta$ have a
gap in their essential spectrum of width $\mathcal{O}(\delta)$.\\
$\bullet$\  Appendix \ref{psi_bound_proof} provides useful bounds on functions
comprising the leading order expansion of the bifurcating bound state. \\
$\bullet$\ Finally, in Appendix \ref{near_freq_limits} we
present explicit computations needed to justify the {\it bifurcation equation}, arising in the Lyapunov-Schmidt
reduction.

\section{Notation\label{subsec:notation}}

\begin{enumerate}[(1)]
 \item $\overline{z}$ denotes the complex conjugate of $z\in\mathbb{C}$.
 \item $\bm{x},\bm{y}\in\mathbb{C}^n$, $\inner{\bm{x},\bm{y}}=\overline{\bm{x}}\cdot\bm{y} =
\overline{x_1}y_1+\ldots+\overline{x_n}y_n$.
  \item $\inner{f,g} = \int\overline{f}g$.
  \item $\mathcal{B}\equiv[0,2\pi]$ denotes the Brillouin Zone centered at $\pi$.
  \item $x\lesssim y$ if and only if there exists $C>0$ such that $x \leq Cy$. $x \approx y$ if and only if $x \lesssim
y$ and $y \lesssim x$.
  \item $\ell^2\equiv\ell^2(\mathbb{Z})=\left\{\bm{x}=\{x_j\}_{j\in\mathbb{Z}}:\|\bm{x}\|_{_{\ell^2}}=
\sum_{j\in\mathbb{Z}}\abs{{x}_j} ^2<\infty \right\}$.
  \item For $s\ge0$, $H^s_{\text{per}}([0,1])$ denotes the space of $H^s$ functions which are periodic with period 1.
    \item $L^2_k=L^2_k([0,1])$ denotes the subspace of $L^2([0,1])$ functions, which satisfy the pseudo-periodic boundary
condition: $f(x+1) = f(x)e^{ik}$, for $x\in\R$
  \item $W^{k,p}(\R)$, $k\geq1$, $p\geq1$, denotes the Sobolev space of functions with derivatives up to order
$k$ in $L^p(\R)$.
  \item $L^{p,s}(\R)$ is the space of functions $F:\R\rightarrow\R$ such that
$(1+\abs{\cdot}^2)^{s/2}F\in L^p(\R)$, endowed with the norm
  \begin{align*}
  \norm{F}_{L^{p,s}(\R)} &\equiv \norm{(1+\abs{\cdot}^2)^{s/2}F}_{L^p(\R)} \approx
\sum_{j=0}^{s}\norm{\abs{\cdot}^jF}_{L^p(\R)} < \infty,~~~ 1\leq p\leq \infty.
  \end{align*}
  \item For $f,g\in L^2(\R)$, the Fourier transform and its inverse are given by
  \begin{equation}
  \mathcal{F}\{f\}(\xi)\equiv\widehat{f}(\xi)=\frac{1}{2\pi}\int_{\R}e^{-iX\xi}f(X)dX,~~~
  \mathcal{F}^{-1}\{g\}(X)\equiv\check{g}(X)=\int_{\R}e^{ iX\xi}g(\xi)d\xi.
  \label{FT-def}
  \end{equation}
The Plancherel relation states:
  \begin{equation}
  \int_\R f(x)\overline{g(x)} dx = 2\pi\ \int_\R \widehat{f}(\xi)\overline{\widehat{g}(\xi)} d\xi .
  \label{plancherel}\end{equation}
  \item $\chi_a$ and $\overline{\chi}_a$ denote the characteristic functions defined by
  \begin{equation}
  \chi_a(\xi)=\chi(\abs{\xi}\leq a)\equiv \left\{
  \begin{array}{cc}
   1,&\abs{\xi}\leq a \\0,&\abs{\xi}> a
  \end{array}\right.,~~~
  \overline{\chi_a}(\xi)=\overline{\chi}(\abs{\xi}\leq a)\equiv1-\chi(\abs{\xi}< a). \label{chi-def}
  \end{equation}
  \item $\sigma_j$, $j=1,2,3$, denote the Pauli matrices, where
  \begin{equation}\label{Pauli-sigma}
  \sigma_1 = \begin{pmatrix}0&1\\1&0\end{pmatrix},~~
  \sigma_2 = \begin{pmatrix}0&-i\\i&0\end{pmatrix},~~\text{and}~~
  \sigma_3 = \begin{pmatrix}1&0\\0&-1\end{pmatrix}.
  \end{equation}
\end{enumerate} Further notations are introduced in the text.
\medskip

\section{Acknowledgements}
 The authors thank P. Deift, J.B. Keller, J. Lu, A. Millis, M. Rechtsman and M. Segev for stimulating discussions. We acknowledge a very helpful discussion  with J. K$\acute{o}$llar  in connection with Appendix  \ref{tC-discrete!}.

\chapter{Floquet-Bloch and Fourier Analysis}\label{floquet-bloch}

\section{Floquet-Bloch theory\ -\ 1D}

Let $Q\in C^\infty$ denote a one-periodic real-valued potential, {\it i.e.} $Q(x+1)=Q(x),\ x\in\R$. 
In this section we outline the spectral theory of the Schr\"odinger operator:
\begin{equation} H_Q=-\D_x^2+Q(x).\label{HQdef}\end{equation}

\begin{definition} The space of $k-$ pseudo-periodic $L^2$ functions is given by:
\begin{equation}
L^2_k\ =\ \left\{f\in L^2_{\rm loc}\ :\ f(x+1;k)=e^{ik}f(x;k)\right\}.
\label{L2k}
\end{equation}
\end{definition}

Since the $k-$ pseudo-periodic boundary condition is invariant under $k\mapsto k+2\pi$, it is natural to work with a
fundamental dual period cell or Brillouin zone, which we take to be: 
\begin{equation}
\mathcal{B}\equiv[0,2\pi].
\label{BZ-def}\end{equation}
Typically the Brillouin zone is taken to be $[-\pi,\pi]$, which is symmetric about $k=0$. Our choice is motivated by 
the presentation being simplified if the quasi-momentum $k=\pi$ is an interior point of $\mathcal{B}$.

We next consider a one-parameter family of Floquet-Bloch eigenvalue problems, parametrized by $k\in\mathcal{B}$:
\begin{equation}
H_Q\Phi=E\Phi,\ \ \Phi(x+1;k)=e^{ik}\Phi(x;k).
\label{FB-evp}
\end{equation}
The  eigenvalue problem \eqref{FB-evp} is self-adjoint on $L^2_k$ and has a discrete set of eigenvalues
\begin{equation}
E_1(k)\le E_2(k)\le\cdots\le E_j(k)\le\cdots,
\nn\end{equation}
listed with repetitions. The maps $k\in\mathcal{B}=[0,2\pi]\mapsto E_j(k)$, $j\ge1$,  are Lipschitz continuous functions;
see, for example, Appendix A of \cite{FW:14}. Furthermore,  these maps satisfy the following  properties:
\begin{align}
& \textrm{Symmetry:}\qquad E_j(k) = E_j(2\pi-k),\ k\in[0,\pi]\ ,\nn\\
&  \textrm{Monotonicity:}\qquad k\mapsto E_j(k)\ \textrm{is monotone for}\ k\in[0,\pi]\ \textrm{and for}\ k\in[\pi,2\pi]\ ;\label{Ej-sym-mon}
\end{align}
see \cite{RS4}. Figure \ref{fig:unperturbedspectrum} illustrates these properties for two different potentials, $Q(x)$.\medskip

An equivalent formulation is obtained by setting $\Phi(x;k)=e^{ikx}p(x;k)$. This yields  the 
periodic eigenvalue problem
for an eigenpair, $(p(x;k),E)$:
 \begin{align}
H_Q(k)p &=Ep,\ \ p(x+1;k)=p(x;k),
\label{kFB-evp}\\
\textrm{where}\ \ H_Q(k)&=-(\D_x+ik)^2+Q(x)\label{HQk}.
\end{align}
For each $k\in\mathcal{B}$,
there is a sequence of eigenvalues, $\{E_j(k)\}_{ j\ge1}$ with corresponding  orthonormal sequence of eigenstates $\{p_j(x;k)\}_{j\ge1}$, which are complete in $L^2[0,1]$:
\begin{equation}
f\in L^2[0,1]\ \implies\ f(x)= \frac{1}{2\pi} \ \sum_{j\ge1} \left\langle
p_j(\cdot;k),f(\cdot)\right\rangle_{L^2[0,1]}\ p_j(x;k).
\label{pj-complete}
\end{equation}

For $g\in L^2(\R)$, define the Floquet-Bloch coefficients
\begin{equation}
\tilde{g}_b(k)\ \equiv\ \left\langle
\Phi_b(\cdot;k),g(\cdot)\right\rangle_{L^2(\R)},\ \ b\ge1.
\label{g-tilde}\end{equation}
The family of states $\{\Phi_b(x;k):\ b\ge1,k\in\mathcal{B}\}$ is complete in $L^2(\R)$: 
\begin{equation}
g\in L^2(\R)\ \implies\ g(x)= \frac{1}{2\pi} \ \sum_{b\ge1}\ \int_{\mathcal{B}} \tilde{g}_b(k)\ \Phi_b(x;k)\ dk .
\label{uj-complete}
\end{equation}
Furthermore,  the Parseval relation holds:
\begin{equation}
 \norm{g}_{L^2(\R)}^2\ =\ \frac{1}{(2\pi)^2} \ 
\sum_{b=1}^{\infty}\int_{\mathcal{B}}\abs{\inner{\Phi_b(\cdot,k),g(\cdot)}_{L^2(\R)}}^2 dk\ =\ \frac{1}{(2\pi)^2} \ 
\sum_{b=1}^{\infty}\int_{\mathcal{B}} |\tilde{g}_b(k)|^2 dk\ . \label{parseval}
\end{equation}

\begin{remark}\label{welldefinedonL2}
Note that $\tilde{g}_b(k)$, set equal to the expression $\left\langle\Phi_b(\cdot;k),g(\cdot)\right\rangle_{L^2(\R)}$,
in \eqref{g-tilde}
 requires interpretation, since $\Phi_b(\cdot,k)\notin L^2(\R)$. This inner product and many other inner products in
this paper are interpreted in the following sense:
\medskip

Let $f(x,k)$ and $g(x,k)$ be functions defined on $\R\times\mathcal{B}$. Assume 
\[ f, g\in L^2([-M_1,M_2]\times\mathcal{B})\]
  for each positive and finite $M_1$ and $M_2$ . We say that 
\begin{equation}
\textrm{``\ $\left\langle f(\cdot,k),g(\cdot,k)\right\rangle_{L^2(\R)}$\ is well-defined\ ''}
\nn\end{equation}
if the functions $\mathcal{I}_{M_1M_2}(k)=\int_{-M_1}^{M_2}\overline{f(x,k)} g(x,k) dx$ belong to $L^2(\mathcal{B})$ 
and converge in  $L^2(\mathcal{B})$   as $M_1, M_2\to\infty$.
\medskip

\nit Returning to the particular case of $\left\langle\Phi_b(\cdot;k),g(\cdot)\right\rangle_{L^2(\R)}$, we note that
this inner product is well-defined in the sense just discussed for $g\in L^2(\R)$.
\end{remark}
\medskip

By elliptic theory, $p_j(x;k)$ and $\Phi_j(x;k)$ are smooth as functions of  $x$, for  each fixed $k\in\mathcal{B}$. 
 Sobolev regularity can also be measured in terms of the Floquet-Bloch coefficients:
\begin{lemma}
 \label{lemma11} For $g(x)\in H^s(\R)$, with $s\in\mathbb{N}$,
 \begin{equation}
  \label{far10}
\norm{g}^2_{H^s(\R)} \approx
\int_{\mathcal{B}}\sum_{b=1}^{\infty}(1+b^2)^s\abs{\inner{\Phi_b(\cdot,k),g(\cdot)}_{L^2(\R)}}^2dk =
\int_{\mathcal{B}}\sum_{b=1}^{\infty}(1+b^2)^s\ |\tilde{g}_b(k)|^2 dk \ .
\end{equation} 
\end{lemma}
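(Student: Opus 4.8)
\section*{Proof proposal for Lemma~\ref{lemma11}}

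The plan is to reduce the statement to the classical characterization of $H^s(\R)$ via the Fourier transform, using the Floquet--Bloch transform as an $L^2$-unitary (up to constants) intertwining map. First I would recall that for $g\in H^s(\R)$ with $s\in\mathbb{N}$, the norm $\norm{g}_{H^s(\R)}^2$ is equivalent to $\sum_{|\alpha|\le s}\norm{\D_x^\alpha g}_{L^2(\R)}^2$, and that each $\D_x^\alpha g$ is again in $L^2(\R)$, so the Parseval relation \eqref{parseval} applies to each of them. Thus it suffices to show that the Floquet--Bloch coefficients of $\D_x^m g$, namely $\inner{\Phi_b(\cdot,k),\D_x^m g}_{L^2(\R)}$, are comparable (after summing against the appropriate weights) to $b^m\,\tilde g_b(k)$, so that the full sum $\int_\mathcal{B}\sum_b (1+b^2)^s|\tilde g_b(k)|^2\,dk$ captures exactly the $H^s$ norm.

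The key computation is an integration-by-parts identity on each period cell: since $H_Q\Phi_b(\cdot;k)=E_b(k)\Phi_b(\cdot;k)$, i.e. $-\D_x^2\Phi_b = (E_b(k)-Q)\Phi_b$, one moves derivatives off $g$ and onto $\Phi_b$. The boundary terms that arise under the pseudo-periodic boundary condition $\Phi_b(x+1;k)=e^{ik}\Phi_b(x;k)$ and the corresponding decay/summability of $g$ vanish in the limit $M_1,M_2\to\infty$ (interpreting the inner products in the sense of Remark~\ref{welldefinedonL2}). Iterating, one finds that $\inner{\Phi_b(\cdot,k),\D_x^{2\ell} g}$ and $\inner{\Phi_b(\cdot,k), \D_x^{2\ell+1}g}$ are expressible through $E_b(k)^\ell\,\tilde g_b(k)$, $E_b(k)^\ell\inner{\Phi_b,\D_x g}$, and lower-order terms involving $Q$ and its derivatives (which are bounded, since $Q\in C^\infty$ is one-periodic). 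The essential quantitative input is the Weyl-type asymptotics $E_b(k)\approx b^2$ uniformly in $k\in\mathcal{B}$ — more precisely, there exist $c_1,c_2>0$ with $c_1 b^2 - C \le E_b(k)\le c_2 b^2 + C$ for all $b\ge 1$, $k\in\mathcal{B}$ — which lets one replace the weight $E_b(k)^s$ by $(1+b^2)^s$ up to multiplicative constants, absorbing the lower-order $Q$-terms into the same two-sided bound. Putting this together with \eqref{parseval} applied to $g$, $\D_x g$, $\ldots$, $\D_x^s g$, and noting each step only costs a constant, yields the claimed equivalence \eqref{far10}; the second equality in \eqref{far10} is just the definition \eqref{g-tilde} of $\tilde g_b(k)$.

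I expect the main obstacle to be bookkeeping rather than conceptual: controlling the lower-order terms generated by repeated integration by parts against the non-constant coefficient $Q(x)$, and verifying that all boundary contributions genuinely vanish in the $L^2(\mathcal{B})$ sense of Remark~\ref{welldefinedonL2} rather than merely pointwise in $k$. A clean way to organize this is to argue by induction on $s$: the base case $s=0$ is exactly the Parseval relation \eqref{parseval}, and the inductive step handles $H^{s}\to H^{s+1}$ by applying the $s=1$ argument to $g$ together with the elliptic identity above, using that $g\in H^{s+1}(\R)$ implies $(-\D_x^2+Q)g\in H^{s-1}(\R)$. One should also note that this argument simultaneously shows the two expressions in \eqref{far10} are finite precisely when $g\in H^s(\R)$, so the lemma characterizes $H^s(\R)$ and not merely bounds its norm. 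Uniformity in $k$ of the band asymptotics $E_b(k)\approx b^2$ can be cited from standard references on Hill's equation such as \cite{magnus2013hill,Eastham:74}, or deduced from min-max comparison with $-\D_x^2$ (whose $k$-pseudoperiodic eigenvalues on $[0,1]$ are $(2\pi n + k)^2$) together with the boundedness of $Q$.
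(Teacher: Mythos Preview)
Your approach is workable and ultimately correct, but it is more laborious than the paper's. The paper bypasses all the integration-by-parts bookkeeping by invoking the elliptic-regularity equivalence
\[
\norm{g}^2_{H^s(\R)} \;\approx\; \norm{(I+H+c)^{s/2}g}^2_{L^2(\R)}
\]
for a constant $c$ large enough that $H+c>0$. Since the Floquet--Bloch states $\Phi_b(\cdot,k)$ are eigenfunctions of $H$ itself (not just of $-\D_x^2$), the operator $(I+H+c)^{s/2}$ acts \emph{diagonally} in this basis: there are no commutator or lower-order terms to track, and Parseval gives directly $\sum_b\int_{\mathcal{B}}(1+E_b(k)+c)^s|\tilde g_b(k)|^2\,dk$. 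The Weyl asymptotics $E_b(k)\approx b^2$ then finish the proof in one line.

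Your route---moving pure derivatives $\D_x^m$ onto $\Phi_b$ via the identity $-\D_x^2\Phi_b=(E_b-Q)\Phi_b$---forces you to carry the $Q$-terms through an induction, and in the odd-derivative case leaves the auxiliary quantities $\inner{\Phi_b,\D_x g}$, which do not reduce to $\tilde g_b$ alone. You can indeed control these (e.g.\ via the quadratic-form identity $\norm{\D_x g}_{L^2}^2=\inner{g,Hg}-\inner{g,Qg}$ for $s=1$, and its higher analogues), but the argument is longer and the ``main obstacle'' you flag is genuinely present. The paper's choice to work with powers of $H$ rather than of $\D_x$ is precisely what makes that obstacle disappear.
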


{\it Proof.}
Let $H=-\D_x^2+V_\ee(x)$. 
For some constant $g$, positive and  sufficiently large, we have by elliptic theory that 
$\norm{f}^2_{H^s(\R)} \approx \norm{(I+H+g)^{s/2}f}_{L^2(\R)}^2$. 
Using the Weyl asymptotics: $C_1b^2\leq E_b(k)\leq C_2b^2$ for all $k\in\mathcal{B}$ and
$b\gg1$ we have:
 \begin{align}
  \label{far13}
\norm{f}^2_{H^s(\R)} &\approx \norm{(I+H+g)^{s/2}f}_{L^2(\R)}^2 \nonumber \\
&= \frac{1}{(2\pi)^2} \ \sum_{b=1}^{\infty}\int_{\mathcal{B}}
\abs{\inner{\Phi_b(\cdot,k),(I+H+g)^{s/2}f(\cdot)}_{L^2(\R)}}^2dk \nonumber\\
&= \frac{1}{(2\pi)^2} \  \sum_{b=1}^{\infty}\int_{\mathcal{B}}\abs{\inner{1+E_b(k)+g)^{s/2}
\Phi_b(\cdot,k),f(\cdot)}_{L^2(\R)}}^2dk \nonumber\\
&= \frac{1}{(2\pi)^2} \  \sum_{b=1}^{\infty}\int_{\mathcal{B}}\abs{1+E_b(k)+g}^{s}
\abs{\inner{\Phi_b(\cdot,k),f(\cdot)}_{L^2(\R)}}^2dk \nonumber\\
&\approx \sum_{b=1}^{\infty}\int_{\mathcal{B}}(1+b^2+g)^{s}
\abs{\inner{\Phi_b(\cdot,k),f(\cdot)}_{L^2(\R)}}^2dk. \qquad \Box
\end{align} 

\begin{remark}\label{L2H2}
Note that if $g$ is a function for which only finitely many $\widetilde{g}_b(k)$ are non-zero, then
$\|g\|_{H^s(\R)}\lesssim \|g\|_{L^2(\R)}$.
\end{remark}

\section{Poisson summation in $L^2_{\rm  loc}$}\label{poisson-summation}
We shall require a variant of the Poisson summation formula which holds in $L^2_{\rm loc}$.
 
\begin{theorem}\label{psum-L2} (See also Theorem \ref{psum-L2a})
Let $\Gamma(x,X)$ be a function defined for $(x,X)\in\R\times\R$. Assume that  $x\mapsto\Gamma(x,X)$ is $H^2_{\rm
periodic}([0,1])$ with respect to $x$ with values in $L^2(\R_X)$, {\it i.e.}
\begin{align}\label{Gamma-cond1}
&\Gamma(x+1,X)\ =\ \Gamma(x,X),\\
&\sum_{j=0}^2\ \int_0^1\ \left\|\D_x^j\Gamma(x,\cdot)\right\|_{L^2(\R_X)}^2\ dx\ <\ \infty .
\label{Gamma-cond2}\end{align}
We denote this Hilbert space of functions by $\mathbb{H}^2$ with norm-squared, $\|\cdot\|_{\mathbb{H}^2}^2$, given in
\eqref{Gamma-cond2}.
Denote by $\widehat{\Gamma}(x,\omega)$ the Fourier transform of $\Gamma(x,X)$ with respect to $X$
given by
\begin{equation}
\widehat{\Gamma}(x,\omega)\ \equiv\ \lim_{N\uparrow\infty}\ \frac{1}{2\pi}\int_{|x|\le N}e^{- i\omega X}\Gamma(x,X)dX ,
\label{Gammahat-def}
\end{equation}
where the limit is taken in $L^2([0,1]_x\times\R_\omega)$. \ Fix an arbitrary $\zeta_{\rm max}>0$. Then, 
\begin{align}
\label{psum-Gofx-X} 
\sum_{n\in\Z} e^{- i\zeta(x+n)}\Gamma(x,x+n) &= 2\pi\ \sum_{n\in\Z} 
e^{2\pi i n x}\widehat{\Gamma}\left(x,2\pi n+\zeta \right) \\
&\qquad\qquad \textrm{in $L^2([0,1]\times[-\zeta_{\rm max},\zeta_{\rm max}];dxd\zeta)$.} \nonumber
\end{align}
\end{theorem}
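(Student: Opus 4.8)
The plan is to reduce the claimed identity to the classical Poisson summation formula by a density argument, then upgrade convergence to the $L^2([0,1]\times[-\zeta_{\rm max},\zeta_{\rm max}])$ topology using the hypothesis \eqref{Gamma-cond2}. First I would observe that both sides of \eqref{psum-Gofx-X} are, for fixed $x$, the two sides of the ordinary Poisson summation formula applied to the function $X\mapsto e^{-i\zeta X}\Gamma(x,X)$ (so that sampling at $X=x+n$ produces the left-hand series, while the Fourier transform, which is $\widehat{\Gamma}(x,\cdot)$ translated by $\zeta$, sampled at $2\pi n$ produces the right-hand series up to the factor $e^{2\pi i n x}$ coming from the shift $X\mapsto x+n$ rather than $X\mapsto n$). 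Thus the content is that the classical pointwise-in-$x$ identity assembles into an identity in $\mathbb{H}^2$-valued / mixed $L^2$ sense. I would therefore first establish it on a dense class — e.g. finite sums $\Gamma(x,X)=\sum_{\rm finite} e^{2\pi i m x}\,g_m(X)$ with $g_m\in\mathcal{S}(\R)$ Schwartz — where both series converge absolutely and uniformly on $[0,1]\times[-\zeta_{\rm max},\zeta_{\rm max}]$ and the classical Poisson formula applies term-by-term in $m$.

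The second step is to show each side of \eqref{psum-Gofx-X} depends continuously on $\Gamma$ in the $\mathbb{H}^2$ norm, with values in $L^2([0,1]\times[-\zeta_{\rm max},\zeta_{\rm max}])$. For the right-hand side this is the easier bound: writing $\widehat{\Gamma}(x,2\pi n+\zeta)$ and using that $x\mapsto\Gamma(x,\cdot)$ has two $L^2(\R_X)$-derivatives, one gets decay in $n$ of the form $\|\widehat{\Gamma}(\cdot,2\pi n+\zeta)\|\lesssim (1+n^2)^{-1}\|\Gamma\|_{\mathbb{H}^2}$ uniformly for $\zeta\in[-\zeta_{\rm max},\zeta_{\rm max}]$ (here one expands $\Gamma$ in a Fourier series in $x$, $\Gamma(x,X)=\sum_m e^{2\pi i m x}\gamma_m(X)$, so that $\widehat{\Gamma}(x,\omega)=\sum_m e^{2\pi i m x}\widehat{\gamma_m}(\omega)$ and $\widehat{\Gamma}(x,2\pi n+\zeta)$ effectively tests $\widehat{\gamma_m}$ near $\omega\approx 2\pi(n-m)$, with the $H^2$-in-$x$ control giving the summable-in-$n$ weight), so the series converges absolutely in $L^2([0,1]\times[-\zeta_{\rm max},\zeta_{\rm max}])$ and the sum is bounded by $C\|\Gamma\|_{\mathbb{H}^2}$. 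For the left-hand side I would instead use the already-established Plancherel/Parseval identity for the Bloch-type decomposition (or simply compute directly): the map $\Gamma\mapsto \sum_n e^{-i\zeta(x+n)}\Gamma(x,x+n)$ is an $L^2$-bounded operator because, by the right-hand representation just bounded and the identity on the dense class, it equals the bounded right-hand operator. Concretely, it is cleanest to \emph{define} the left side as an $L^2$-limit of its partial sums and show the partial sums are Cauchy using the same Fourier-in-$x$ expansion of $\Gamma$ together with $\sum_j\int_0^1\|\partial_x^j\Gamma(x,\cdot)\|_{L^2_X}^2\,dx<\infty$.

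The third and final step is the density/continuity closure: given arbitrary $\Gamma\in\mathbb{H}^2$, approximate it in $\mathbb{H}^2$ by $\Gamma_k$ in the dense class, apply \eqref{psum-Gofx-X} to each $\Gamma_k$, and pass to the limit using the two continuity bounds from Step 2; both sides converge in $L^2([0,1]\times[-\zeta_{\rm max},\zeta_{\rm max}])$ to the respective sides for $\Gamma$, so the identity persists. The main obstacle I anticipate is \textbf{not} the classical Poisson formula but rather the bookkeeping in Step 2 for the left-hand series: unlike the right-hand side, whose summand is a genuine Fourier transform evaluated at well-separated points $2\pi n+\zeta$, the left-hand summand samples $\Gamma(x,x+n)$ along the diagonal, and one must extract decay in $n$ from $L^2_X$-smoothness of $\Gamma(x,\cdot)$ in a way that is uniform in $x\in[0,1]$ and integrable after squaring in $(x,\zeta)$ — this requires carefully interchanging the $n$-sum, the $x$-integral, and the Fourier-in-$x$ expansion, and it is where the precise hypothesis \eqref{Gamma-cond2} (two derivatives in $x$, $L^2$ in $X$) is used. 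The restriction to a finite $\zeta$-window $[-\zeta_{\rm max},\zeta_{\rm max}]$ is exactly what makes the $\zeta$-uniformity in these estimates harmless.
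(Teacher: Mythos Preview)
Your three-step strategy --- verify on a dense class via classical Poisson summation, prove that each side defines a bounded linear map $\mathbb{H}^2\to L^2([0,1]\times[-\zeta_{\rm max},\zeta_{\rm max}])$, then close by density --- is exactly the paper's approach. But your Step~2 contains a false claim that would derail the argument as written. You assert that $\|\widehat{\Gamma}(\cdot,2\pi n+\zeta)\|\lesssim(1+n^2)^{-1}\|\Gamma\|_{\mathbb{H}^2}$. This is not true: take $\Gamma(x,X)=g(X)$ independent of $x$ with $g\in L^2(\R)$; then $\|\Gamma\|_{\mathbb{H}^2}=\|g\|_{L^2}$ while $\|\widehat{\Gamma}(\cdot,2\pi n+\zeta)\|_{L^2_x[0,1]}=|\widehat{g}(2\pi n+\zeta)|$, which has no pointwise decay in $n$ for generic $g\in L^2$. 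The $H^2$-in-$x$ hypothesis gives decay in the \emph{$x$-Fourier index}, not in $n$; there is no smoothness in $X$ to trade for decay of $\widehat{\Gamma}(x,\cdot)$.

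The paper's remedy for the right-hand side is to abandon termwise decay and instead bound $\int_0^1\bigl|\sum_{n\in I}e^{2\pi i n x}\widehat{\Gamma}(x,2\pi n+\zeta)\bigr|^2dx$ by expanding the square, integrating by parts twice in $x$ on each cross-term to extract $(n-n')^{-2}$, and then applying Cauchy--Schwarz and Young's inequality in $n,n'$ to get $\lesssim\sum_{j=0}^2\sum_{n\in I}\int_0^1|\partial_x^j\widehat{\Gamma}(x,2\pi n+\zeta)|^2dx$. Integrating over $\zeta\in[-\zeta_{\rm max},\zeta_{\rm max}]$ and using that the shifted windows $[2\pi n-\zeta_{\rm max},2\pi n+\zeta_{\rm max}]$ have bounded overlap then recovers $\|\Gamma\|_{\mathbb{H}^2}^2$ via Plancherel. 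For the left-hand side --- which you correctly flag as the delicate part --- the paper's key observation is that for fixed $x$ the sum $\sum_n e^{-i\zeta n}\Gamma(x,x+n)$ is a $2\pi$-periodic Fourier series in $\zeta$, so Parseval over $K$ full periods gives $\int|\cdot|^2d\zeta=2\pi K\sum_n|\Gamma(x,x+n)|^2$; the diagonal restriction is then handled by the Sobolev trace bound $|\Gamma(x,x+n)|^2\le\sup_y|\Gamma(y,x+n)|^2\lesssim\sum_{j=0}^2\int_0^1|\partial_y^j\Gamma(y,x+n)|^2dy$, after which integrating $dx$ over $[0,1]$ and summing over $n$ reassembles $\|\Gamma\|_{\mathbb{H}^2}^2$. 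Your sketch does not identify either of these mechanisms.
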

\medskip

Results of this type (in $L^p$) were obtained in \cite{Boas:46}. 
 We give a different, self-contained, proof in Appendix \ref{PSF}.

%
\chapter{Dirac Points of 1D Periodic Structures}\label{1dperiodic&dirac}

In this section we define what we mean by a Dirac point of a one-dimensional periodic structure or loosely, a ``1D Dirac
point''.  We then introduce a concrete one-parameter family of operators, $H(s)$, which for $s=\frac12$ has a 1D Dirac
point.
\medskip

\begin{definition}\label{dirac-pt-gen}
Consider the Schr\"odinger operator $H=-\D_x^2+Q(x)$, where $Q(x+1)=Q(x)$. 
 We say that a \underline{linear band crossing of Dirac type} occurs at the quasi-momentum
$k_\star\in\mathcal{B}$ and energy $E_\star$, or loosely \underline{$(k_\star,E_\star)$ is a 1D Dirac point}, if the
following holds:
\begin{enumerate}
\item There exists $b_{\star}\geq1$ such that $E_{\star}=E_{b_{\star}}=E_{b_{\star}+1}$.
\item $E_\star$ is an $L^2_{k_\star}$ eigenvalue of multiplicity $2$.
\item $H_{k_\star}^2=H^2_A\oplus H^2_B$, where $H:H^2_A\to L^2_A$ and $H:H^2_B\to L^2_B$.
\item There is an operator $\mathcal{S}:H^2_A\to H^2_B$ and $\mathcal{S}:H^2_B\to H^2_A$,\
$\mathcal{S}\circ\mathcal{S}=I$,\ such that $\mathcal{S}$ commutes with $H(k_\star)\equiv e^{-ik_\star x}He^{ik_\star x}$. That is, 
$[H(k_\star),\mathcal{S}]\equiv H(k_\star)\mathcal{S}-\mathcal{S}H(k_\star)$ vanishes. 
\item The $L^2_{k_\star}-$ nullspace of $H-E_\star I$ is spanned by
 \[\left\{\ \Phi_1(x)\ ,\ \Phi_2(x)\equiv \mathcal{S}\left[\Phi_1\right](x)\ \right\},\quad 
\left\langle\Phi_a,\Phi_b\right\rangle_{L^2[0,1]}\ =\ \delta_{ab},\ a,b=1,2. \]
\item There exist $\lambda_{\sharp}\ne0$, $\zeta_0>0$ and Floquet-Bloch eigenpairs 
\[ (\Phi_+(x;k),E_+(k))\ \ {\rm and}\ \   (\Phi_-(x;k),E_-(k))\ ,\]
and smooth functions $\eta_\pm(k)$, with $\eta_\pm(0)=0$,
defined for $|k-k_\star|<\zeta_0$ and such that
\begin{equation}
 E_\pm(k)-E_\star\ =\ \pm\ \lambda_\sharp\ \left(k-k_\star\right)\ \Big(1+\eta_\pm(k-k_\star)\Big).
\label{1d-crossing} \end{equation}
\end{enumerate}
\end{definition}

\begin{remark} As remarked earlier, the eigenvalue maps $k\mapsto E_j(k)$ are, in general, only
Lipschitz continuous \cite{FW:14}.
Here, in 1D, the dispersion locus near a Dirac point is the union of smooth, transversely intersecting, curves. 
Let $(k_\star,E_\star)$ denote a Dirac point in the sense of Definition \ref{dirac-pt-gen}.
Then,
\begin{align}
\label{E_b_star_defn}
E_{b_\star}(k) = 
\begin{cases} 
E_+(k),\ &\mbox{if } k\in\mathcal{B},\ k_\star-\zeta_0 <k\le k_\star \\
E_-(k),\ &\mbox{if } k\in\mathcal{B},\  k_\star\le k < k_\star+\zeta_0
\end{cases}
\end{align}
and 
\begin{align}
\label{E_b1_star_defn}
E_{b_\star+1}(k) = 
\begin{cases} 
E_-(k),\ &\mbox{if } k\in\mathcal{B},\ k_\star-\zeta_0<k\le k_\star \\
E_+(k),\ &\mbox{if } k\in\mathcal{B},\  k_\star\le k<k_\star+\zeta_0
\end{cases}\ .
\end{align}
Figure \ref{fig:unperturbed_vs_kappa} (left panel) illustrates Dirac points
of a periodic potential, $V_\ee$ of the type plotted in Figure \ref{fig:efunc_plots}  (top panel). 
\end{remark}
\begin{figure}
\includegraphics[width=\textwidth]{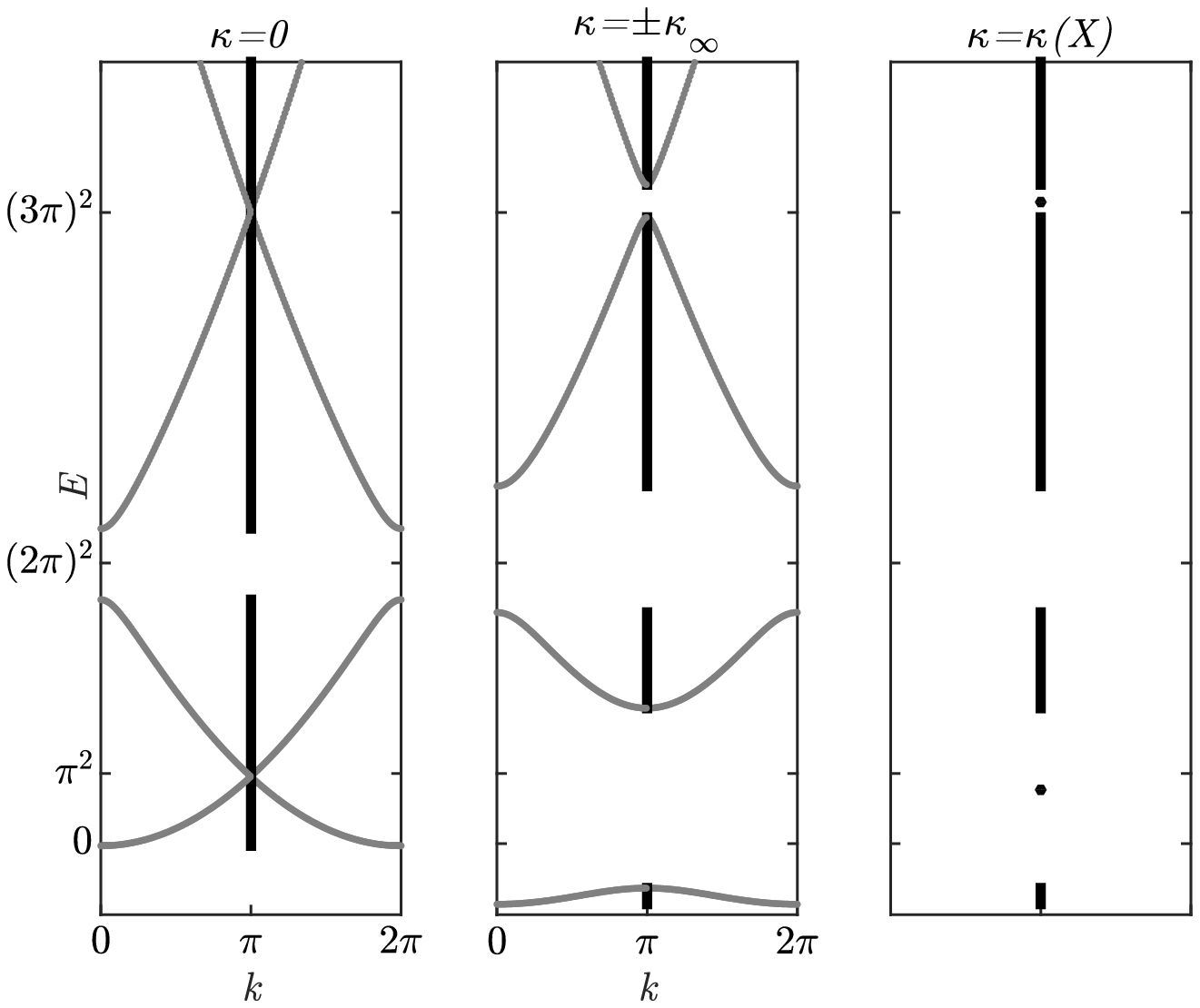}
\caption{Spectra (black) for $H_{\delta=15} = -\D_x^2+10\cos(2(2 \pi x))+30\kappa(X) \cos(2\pi x)$ for different choices
of $\kappa(X)$. {\bf Left panel:} $\kappa(X)\equiv0$. Plotted are the first four Floquet-Bloch dispersion curves
$k\mapsto E_b(k)$ (gray) and the Dirac
points $(k_\star=\pi,E_\star)$ approximately at $E_\star=\pi^2, (3\pi)^2$. {\bf Middle panel:} 
$\kappa(X)\equiv\kappa_\infty$, a non-zero constant.  Shown are open gaps about Dirac points
of the unperturbed potential and smooth dispersion curves (gray).  {\bf Right panel:}\  
Mid-gap eigenvalues (black dots) are shown for the periodic potential modulated by domain-wall:  $\kappa(X)=\tanh(X)$.}
\label{fig:unperturbed_vs_kappa}
\end{figure}
\medskip

\begin{remark}\label{2D-lip-surf} In 2D, the dispersion locus of honeycomb structures with Dirac points is locally
conical, the union of Lipschitz surfaces \cite{FW:12}.
\end{remark}
\medskip

\begin{remark}\label{moser:81}
In \cite{Moser:81},  Dirac points play a central role in the construction of 1-dimensional almost periodic potentials for 
which the Schr\"odinger operator has nowhere dense spectrum.
\end{remark}

\section{The family of Hamiltonians, $H(s)$, and its Dirac points for $s=\frac12$}\label{1d-toy}
We consider a  family of Hamiltonians depending on a real parameter, $s$, discussed in Section   \ref{sec:motivation} of Chapter \ref{introduction}:
\begin{align}
H(s)\ &=\ -\D_x^2+\mathcal{Q}(x;s), 
\label{Hofs}\\
\mathcal{Q}(x;s)\ &=\ \ \mathcal{Q}_0+\sum_{p\ge1} \mathcal{Q}_p\ \cos(\pi p s)\ \cos(2\pi p x), 
\label{Vxs}
\end{align}
where $\{\mathcal{Q}_p\}_{p\in\Z}$ is a real sequence 
which tends to zero rapidly as $p\to\infty$.  For each real $s$, 
(i)\ $x\mapsto\mathcal{Q}(x;s)$  is real-valued and smooth,  (ii)\  $1-$ periodic, $\mathcal{Q}(x+1;s)=\mathcal{Q}(x;s)$
and 
(iii)\ even, $\mathcal{Q}(-x;s)=\mathcal{Q}(x;s)$ . \medskip

\nit Since $\mathcal{Q}(x;s)$, is even,  we have the following:
\begin{proposition}\label{inversion}
For any $s\in[0,1]$, 
$H(s) = -\D_x^2+\mathcal{Q}(x;s)$\ commutes with the inversion operator\ $\mathcal{I}$\  defined by
\begin{align}
\mathcal{I}[f](x)\ &=\ f(-x)\ .\label{eqn-inversion}
\end{align}
Therefore if $H(s)\Phi=\mu\ \Phi$, then $H(s)\mathcal{I}\left[\Phi\right] = \mu\ \mathcal{I}\left[\Phi\right]$\ .
\end{proposition}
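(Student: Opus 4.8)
The plan is to check directly that $[H(s),\mathcal{I}]=0$ on the natural domain of $H(s)$, and then obtain the eigenvalue assertion by applying $\mathcal{I}$ to the eigenvalue equation. Since $\mathcal{I}$ is a bounded (in fact unitary) operator on $L^2$ that maps $H^2_{\rm loc}$ into itself, the compositions $H(s)\mathcal{I}$ and $\mathcal{I}H(s)$ are well-defined on $\mathrm{dom}(H(s))$, so the computation below is legitimate.

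First I would handle the kinetic term: by the chain rule, $\D_x^2\big(\mathcal{I}[f]\big)(x)=\D_x^2\big(f(-x)\big)=f''(-x)=\mathcal{I}[\D_x^2 f](x)$, hence $-\D_x^2$ commutes with $\mathcal{I}$. Next the potential term: using the evenness of $x\mapsto\mathcal{Q}(x;s)$ --- property (iii) recorded after \eqref{Vxs}, equivalently visible from the cosine-only form \eqref{Vxs} --- we get
\[
\mathcal{I}\big[\mathcal{Q}(\cdot;s)\,f\big](x)=\mathcal{Q}(-x;s)\,f(-x)=\mathcal{Q}(x;s)\,f(-x)=\mathcal{Q}(x;s)\,\mathcal{I}[f](x),
\]
so multiplication by $\mathcal{Q}(\cdot;s)$ also commutes with $\mathcal{I}$. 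Adding the two contributions gives $H(s)\mathcal{I}=\mathcal{I}H(s)$, i.e. $[H(s),\mathcal{I}]=0$. Then, if $H(s)\Phi=\mu\Phi$, applying $\mathcal{I}$ to both sides and using this commutation yields $H(s)\mathcal{I}[\Phi]=\mathcal{I}\big[H(s)\Phi\big]=\mathcal{I}[\mu\Phi]=\mu\,\mathcal{I}[\Phi]$, as claimed.

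I do not anticipate any genuine obstacle: the statement is an immediate consequence of the evenness of the potential together with the parity of $-\D_x^2$. The only point worth a word of care is tracking which function space the underlying eigenvalue problem is posed on: for the period-$1$ (periodic) problem $\mathcal{I}$ leaves the space invariant, while for the $L^2_k$ problem one has $\mathcal{I}:L^2_k\to L^2_{-k}=L^2_{2\pi-k}$, consistent with the band symmetry $E_j(k)=E_j(2\pi-k)$ in \eqref{Ej-sym-mon}; in the case $k_\star=\pi$ relevant to the Dirac-point construction this reduces to $\mathcal{I}:L^2_\pi\to L^2_\pi$, so $\mathcal{I}$ acts on the degenerate eigenspace and the conclusion applies there verbatim.
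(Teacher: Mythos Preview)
Your proof is correct and matches the paper's approach: the paper simply states the proposition as a direct consequence of the evenness of $\mathcal{Q}(\cdot;s)$ (noted just before the statement) and does not write out a separate proof, since the verification is exactly the one-line chain-rule/evenness computation you gave. Your additional remark about $\mathcal{I}:L^2_k\to L^2_{2\pi-k}$ and its compatibility with $k_\star=\pi$ is accurate and in fact anticipates Proposition~\ref{Imaps}.
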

\medskip

\section{$H\left(\frac12\right )=-\D_x^2+\mathcal{Q}\left(x;\frac12\right)$ has an additional translation symmetry}
By \eqref{Vxs}  we have for  $s=1/2$ that $
\mathcal{Q}\left(x+\frac12;\frac12\right)\ =\ \mathcal{Q}\left(x;\frac12\right),\ \textrm{for all}\ x\in\R.
$ Hence, $\mathcal{Q}(x;\frac12)$ is an even-index cosine series:
\[ \mathcal{Q}\left(x;\frac12\right)\ =\ \sum_{m\in 2\mathbb{Z}_+} \mathcal{Q}_m\cos(2\pi mx)\ .\]

\medskip

To explore the consequences of this extra translation symmetry, we  introduce the following two key subspaces of
$L^2_k$.
\begin{definition}\label{L2eL2o}
\begin{enumerate}
\item $L^2_{k,\ee}$ is the subspace of $L^2_k$ consisting of functions of the form 
\[e^{ikx}P_\ee(x)\ =\ e^{ikx}\ \sum_{m\in2\Z}p(m) e^{2\pi imx},\ \ \sum_{m\in2\Z}|p(m)|^2<\infty,\]
{\it i.e.}  $P_\ee(x)$ is an even-index $1-$ periodic Fourier series.
\item $L^2_{k,\oo}$ is the subspace of $L^2_k$ consisting of functions of the form 
\[e^{ikx}P_\oo(x)\ =\ e^{ikx}\ \sum_{m\in2\Z+1}p(m) e^{2\pi imx},\ \ \sum_{m\in2\Z+1}|p(m)|^2<\infty,\]
{\it i.e.}  $P_\oo(x)$ is an odd-index $1-$ periodic Fourier series.
\item Sobolev spaces,
\[ H^M_{k,\ee} \ {\rm and}\  H^M_{k,\oo},\quad M=0,1,2,\dots,\]  are defined in the natural way. 
\item For $k=0$, we shall used the simplified notations: 
\[ L^2_\sigma=L^2_{0,\sigma}\  {\rm and}\  H^M_\sigma=H^M_{0,\sigma},\ \ \sigma=\ee,\oo.\]
\end{enumerate}
$L^2_\ee-$ functions are one-periodic even-index cosine series and 
$L^2_\oo-$ functions are one-periodic odd-index cosine series. 
\end{definition}
\medskip

\nit Clearly, we have for any $M\ge0$:\ $H^M_k\ =\ H^M_{k,\ee}\ \oplus\ H^M_{k,\oo}.$
\medskip

The quasi-momentum, $k_\star=\pi$, is distinguished by the following property of  $\mathcal{I}$:
\medskip

\begin{proposition}\label{Imaps}
Let $\Phi\in L^2_{\pi,\ee}$ with expansion
\begin{equation}
\Phi(x) = e^{\pi ix}\sum_{m\in2\mathbb{Z}}c(m)e^{2\pi i m x}\ .
\label{Psi-pi-ee}\end{equation}
Then, $\mathcal{I}\left[\Phi\right]\in  L^2_{\pi,\oo}$ with expansion
\begin{equation}
 \label{Psi-pi-oo}
 \mathcal{I}\left[\Phi\right](x)=\Phi(-x) = e^{\pi ix}\sum_{m\in2\mathbb{Z}+1}c(-m-1)e^{2\pi i m x}.
 \end{equation}
 \end{proposition}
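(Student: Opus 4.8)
The plan is to verify the claimed expansion by a direct computation, using only the definitions of $L^2_{\pi,\ee}$ and $L^2_{\pi,\oo}$ from Definition \ref{L2eL2o} and the definition \eqref{eqn-inversion} of $\mathcal{I}$.

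First I would take $\Phi\in L^2_{\pi,\ee}$ with the stated expansion $\Phi(x) = e^{\pi i x}\sum_{m\in 2\Z}c(m)e^{2\pi i m x}$, which we may rewrite as a single Fourier-type series $\Phi(x)=\sum_{m\in 2\Z}c(m)e^{i\pi(2m+1)x}$, i.e.\ a sum over frequencies of the form $\pi(2m+1)$ with $m$ even (equivalently, frequencies $\pi\ell$ with $\ell\equiv 1\pmod 4$). Then I would apply $\mathcal{I}$ termwise: $\mathcal{I}[\Phi](x)=\Phi(-x)=\sum_{m\in 2\Z}c(m)e^{-i\pi(2m+1)x}$. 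The point is now a reindexing: substitute $m'=-m-1$, so that as $m$ ranges over $2\Z$, $m'$ ranges over $2\Z+1$ (the odd integers), and $2m'+1 = -2m-1$. Hence $e^{-i\pi(2m+1)x}=e^{i\pi(2m'+1)x}=e^{\pi i x}e^{2\pi i m' x}$, giving
\[
\mathcal{I}[\Phi](x)=\sum_{m'\in 2\Z+1}c(-m'-1)\,e^{\pi i x}e^{2\pi i m' x}=e^{\pi i x}\sum_{m\in 2\Z+1}c(-m-1)e^{2\pi i m x},
\]
which is exactly \eqref{Psi-pi-oo}. Since the coefficient sequence $\{c(-m-1)\}_{m\in 2\Z+1}$ is just a relabelling of $\{c(m)\}_{m\in 2\Z}$, it is square-summable, so $\mathcal{I}[\Phi]\in L^2_{\pi,\oo}$; the pseudo-periodicity $\mathcal{I}[\Phi](x+1)=e^{i\pi}\mathcal{I}[\Phi](x)$ is immediate either from the series or from the general fact that $\mathcal{I}$ maps $L^2_k$ to $L^2_{-k}$ and $-\pi\equiv\pi\pmod{2\pi}$.

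There is no real obstacle here — the statement is essentially bookkeeping about how the inversion $x\mapsto -x$ acts on the frequency lattice $\pi+2\pi\Z$ and interchanges the even-index and odd-index sublattices. The only point requiring a moment's care is getting the index substitution right: one must check that $m\mapsto -m-1$ is a bijection from the even integers to the odd integers (it is, being an involution up to sign that sends $0\mapsto -1$, $2\mapsto -3$, $-2\mapsto 1$, etc.) and that it correctly tracks the coefficient $c(m)$ to the new slot. I would present this reindexing explicitly and then simply read off the two conclusions: membership in $L^2_{\pi,\oo}$ and the coefficient formula $c(-m-1)$.
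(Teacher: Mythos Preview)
Your proposal is correct and takes essentially the same approach as the paper: the paper's proof is the one-line computation $\Phi(-x)= e^{-\pi ix}\sum_{m\in2\mathbb{Z}}c(m)e^{-2\pi i m x} = e^{\pi ix}\sum_{n\in2\mathbb{Z}+1}c(-n-1)e^{2\pi i n x}$, which is exactly your reindexing $m\mapsto -m-1$ written more tersely.
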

 \medskip
 
 \nit {\it Proof of Proposition \ref{Imaps}:}\ 
 \begin{equation}
 \label{inverse_soln}
 \begin{split}
 \Phi(-x)&= e^{-\pi ix}\sum_{m\in2\mathbb{Z}}c(m)e^{-2\pi i m x} = e^{\pi ix}\sum_{n\in2\mathbb{Z}+1}c(-n-1)e^{2\pi i n
x}\ .\qquad \Box
 \end{split}
\end{equation} 

\section{The action of $-\D_x^2+V_\ee\left(x\right)$ on $L^2_{k_\star=\pi}$}
For potentials which are one-periodic even-index cosine series we have

\begin{proposition} \label{H-action-on-L_e}
$V_\ee(x)\in L^2_\ee$ and therefore, for $M\ge2$, 
\begin{align*}
-\D_x^2+V_\ee\left(x\right)\ &:\ H^M_{k,\ee}\ \to\ H^{M-2}_{k,\ee}\ , \\
-\D_x^2+V_\ee\left(x\right)\ &:\ H^M_{k,\oo}\ \to\ H^{M-2}_{k,\oo}\ .
\end{align*}
It follows that  $\Phi(x)$ is an $ L^2_{\pi,\ee}$ eigenfunction of $-\D_x^2+V_\ee\left(x\right)$, 
with eigenvalue $E$ if and only if  $\mathcal{I}\left[\Phi\right](x)$ is a linearly independent 
$ L^2_{\pi,\oo}$ eigenfunction of $-\D_x^2+V_\ee\left(x\right)$, 
with eigenvalue $E$.
\end{proposition}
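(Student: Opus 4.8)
The plan is to treat the two assertions separately: first the mapping properties of $-\D_x^2+V_\ee$ on the Sobolev spaces $H^M_{k,\sigma}$, $\sigma=\ee,\oo$; then the even/odd correspondence between eigenfunctions at $k_\star=\pi$. Both come down to elementary bookkeeping with Fourier indices, combined with Propositions \ref{inversion} and \ref{Imaps}.

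For the mapping properties I would write a generic element of $H^M_{k,\sigma}$ as $f(x)=e^{ikx}P_\sigma(x)$ with $P_\sigma\in H^M_{\text{per}}([0,1])$ a $\sigma$-index Fourier series, and compute $-\D_x^2 f = e^{ikx}\big(-(\D_x+ik)^2P_\sigma\big)$. Since $-(\D_x+ik)^2$ acts on $e^{2\pi imx}$ by multiplication by $(2\pi m+k)^2$, it preserves the parity of the index $m$ and lowers Sobolev regularity by two, so $-\D_x^2\colon H^M_{k,\sigma}\to H^{M-2}_{k,\sigma}$. For the potential term, $V_\ee f = e^{ikx}(V_\ee P_\sigma)$; from $\cos(2\pi px)\,e^{2\pi imx}=\tfrac12\big(e^{2\pi i(m+p)x}+e^{2\pi i(m-p)x}\big)$ and the fact that $m\pm p$ has the same parity as $m$ when $p$ is even, multiplication by the even-index series $V_\ee$ maps even-index series to even-index series and odd-index series to odd-index series; since $V_\ee$ is sufficiently smooth, this multiplication is also bounded on $H^M_{\text{per}}([0,1])$. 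Adding the two contributions gives both claimed mappings.

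For the eigenfunction correspondence, note $-\D_x^2+V_\ee=H(1/2)$ with $\mathcal{Q}(x;1/2)=V_\ee$ even, so by Proposition \ref{inversion} the operator commutes with the inversion $\mathcal{I}$. Given $0\ne\Phi\in L^2_{\pi,\ee}$ with $(-\D_x^2+V_\ee)\Phi=E\Phi$, applying $\mathcal{I}$ yields $(-\D_x^2+V_\ee)\mathcal{I}[\Phi]=E\,\mathcal{I}[\Phi]$, and Proposition \ref{Imaps} gives $\mathcal{I}[\Phi]\in L^2_{\pi,\oo}$, with $\mathcal{I}[\Phi]\ne0$ since $\mathcal{I}\circ\mathcal{I}=I$. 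Because $L^2_{\pi,\ee}$ and $L^2_{\pi,\oo}$ are complementary subspaces of $L^2_\pi$, any relation $a\Phi+b\mathcal{I}[\Phi]=0$ forces $a\Phi\in L^2_{\pi,\ee}\cap L^2_{\pi,\oo}=\{0\}$, hence $a=b=0$; so $\{\Phi,\mathcal{I}[\Phi]\}$ is linearly independent. The converse — $\Psi\in L^2_{\pi,\oo}$ an eigenfunction implies $\mathcal{I}[\Psi]\in L^2_{\pi,\ee}$ an eigenfunction — is the same argument with $\ee$ and $\oo$ swapped, using $\mathcal{I}^2=I$ and the (identically proved) counterpart of Proposition \ref{Imaps} that $\mathcal{I}$ maps $L^2_{\pi,\oo}$ onto $L^2_{\pi,\ee}$.

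There is no real obstacle here; the only steps needing a word of care are verifying that a ``sufficiently smooth'' $V_\ee$ is indeed a bounded multiplier on $H^M_{\text{per}}([0,1])$ (a Sobolev algebra estimate), and recording that $\mathcal{I}$ restricts to a bijection $L^2_{\pi,\oo}\to L^2_{\pi,\ee}$, which is immediate by conjugating the computation \eqref{inverse_soln}. Everything else is a direct consequence of Propositions \ref{inversion} and \ref{Imaps}.
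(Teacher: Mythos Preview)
Your proof is correct and matches the paper's (implicit) approach: the paper states this proposition without a written proof, treating it as an immediate consequence of the Fourier index bookkeeping and of Propositions \ref{inversion} and \ref{Imaps}. Your argument spells out exactly those details.
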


\medskip

In the coming subsections we show that $-\D^2_x+V_\ee(x)$, with $V_\ee\in L^2_\ee$, smooth and generic, has 1D Dirac points $(k_\star=\pi,E_\star)$ in the sense of Definition \ref{dirac-pt-gen}.

\section{Spectral properties of $H^{(\eps=0)}=-\D_x^2$ in $L^2_k$\label{eps_zero}}
For $\eps=0$ the family of Floquet-Bloch eigenvalue problems reduces to 
\begin{align}\label{FBeps0}
\left(-\partial_x^2-\mu\right)\Phi(x) &= 0,\ \  x\in\R,\\
\Phi(x+1) &= e^{i k}\Phi(x),
\nn\end{align}
where $k\in\mathcal{B}=[0,2\pi]$. Eigensolutions are of the form
\begin{equation}
\Phi(x)\ =\ e^{ikx}\ e^{2\pi imx},\ \ m\in\Z,
\label{Psim}
\end{equation}
with corresponding eigenvalues
\begin{equation}
\mu_m(k)\ =\ (k+2m\pi)^2,\ \ m\in\Z.
\label{mu-m}
\end{equation}
We seek $k\in[0,2\pi]$ for which there are degenerate eigenvalues. These occur when $\mu_m(k)=\mu_n(k)$, with
$m\ne n$. That is, $(k+2m\pi)^2=(k+2n\pi)^2$ or equivalently
\begin{equation}
 (m-n)\left(\ k\ +\ \pi(m+n)\ \right)\ =\ 0\ .
 \label{kmn}\end{equation}
Since $m\ne n$ and $k\in\mathcal{B}$, we have that either
\begin{equation}
k=0\ \ {\rm and}\ m=-n
\nonumber\end{equation}
or 
\begin{equation}
k=k_\star\equiv \pi \ \ {\rm and}\ m+n+1=0 \ .
\label{kstardef}\end{equation}
\medskip

\begin{remark}\label{k0_remark}
A Lyapunov-Schmidt reduction argument (see Section \ref{conds4dirac-appendix} of Appendix \ref{linear-band-crossing}) shows that the 
degeneracies at quasi-momentum $k=0$ are ``lifted'' for
$\varepsilon>0$. In contrast, the degeneracies at $k=\pi$ persist for $\eps\ne0$; see Section \ref{generic-Dirac}.
\end{remark}
\medskip

Focusing on the case $k_\star=\pi$ and $m=-n-1$, \eqref{kstardef}:
\begin{equation}
E^{(0)}_{\star,m}\ \equiv\ \mu_m(\pi)\ =\ \mu_{-m-1}(\pi)\ =\ (2m+1)^2\ \pi^2,\qquad m=0,1,2,\dots \ ,
\label{mu-starm}
\end{equation}
is a sequence of multiplicity two eigenvalues of the Floquet-Bloch eigenvalue problem \eqref{FBeps0} for quasi-momentum
$k=k_\star=\pi$. See the left panel of Figure \ref{fig:unperturbedspectrum}, where two such crossings, for
$k=k_\star=\pi$, are shown.

Corresponding to $E^{(0)}_{\star,m}$ is  the two-dimensional $L^2_{k_\star}-$ null space of $H^{(0)}$:
\begin{equation}
{\rm span}\left\{ \Phi^{(0)}_m(x)=e^{i\pi x}\ e^{2\pi i m x},\ \ e^{i\pi x}\ e^{2\pi i (-m-1) x}\ =\ \Phi^{(0)}_m(-x)\
\right\} .
\label{nullspace-m}
\end{equation}
Furthermore, recalling the decomposition $L^2_{k_\star}=L^2_{k_\star,\ee}\oplus L^2_{k_\star,\oo}$, we have that for
each $m\ge0$:
\begin{enumerate}
\item $E^{(0)}_{\star,m}$ is a simple $L^2_{k_\star,\ee}-$ eigenvalue, with eigenspace
\[ {\rm span}\left\{ \Phi^{(0)}_m(x)=e^{i\pi x}\ e^{2\pi i m x}\right\}\subset L^2_{\pi,\ee},
\quad \textrm{if $m$ is even},  \]
  and 
eigenspace \[ {\rm span}\left\{ \Phi^{(0)}_m(-x)=e^{i\pi x}\ e^{2\pi i (-m-1) x}\right\}\subset L^2_{\pi,\ee},\quad
\textrm{ if $m$ is odd.}\]

\item $E^{(0)}_{\star,m}$ is a simple $L^2_{k_\star,\oo}-$ eigenvalue, 
with eigenspace \[ {\rm span}\left\{ \Phi^{(0)}_m(-x)=e^{i\pi x}\ e^{2\pi i (-m-1) x}\right\}\subset L^2_{\pi,\oo},\quad
\ \textrm{ if $m$ is even, }\]
and 
eigenspace  \[ {\rm span}\left\{ \Phi^{(0)}_m(x)=e^{i\pi x}\ e^{2\pi i m x}\right\}\subset L^2_{\pi,\oo},
\quad \textrm{if $m$ is odd.}\] 
\end{enumerate}

\section{Sufficient conditions for occurrence of a 1D Dirac point}

The following result gives sufficient conditions to be established for $(k_{\star},E_\star)$ to be a Dirac point in the sense of
Definition \ref{dirac-pt-gen}.  The proof, given in Section  \ref{conds4dirac-appendix} of Appendix \ref{linear-band-crossing}, closely follows that of Theorem
4.1 of \cite{FW:12}.
\medskip

 \begin{theorem}\label{conditions-for-dirac}
 Consider $H=-\D_x^2+V_\ee$, where $V_\ee\in L^2_\ee$  and is sufficiently smooth. 
 Thus, $V$ has minimal period $1/2$. Let $k_\star=\pi$ and assume that $E_\star$ is a double eigenvalue, lying at the
intersection of the $b_{\star}^{th}$ and $(b_{\star}+1)^{st}$ spectral bands:
 \[ E_\star\ =\ E_{b_{\star}}(k_\star)\ =\ E_{b_{\star}+1}(k_\star)\ .\]
 Assume the following conditions:
\begin{enumerate}[I.]
 \item $E_{\star}$ is a simple $L^2_{k_{\star},\ee}$- eigenvalue of $H$ with 
1-dimensional
eigenspace \[{\textrm span}\{\Phi_1(x)\}\subset L^2_{k_{\star},\ee}.\]
 \item $E_{\star}$ is a simple $L^2_{k_{\star},\oo}$- eigenvalue of $H$ with 
1-dimensional
eigenspace \[{\textrm span}\Big\{\Phi_2(x)=\mathcal{I}\left[\Phi_1\right](x)=\Phi_1(-x)\Big\}\subset
L^2_{k_{\star},\oo}.\]
 \item Non-degeneracy condition:
 \begin{equation}
  0\ne\lamsharp \equiv 2i\left\langle\Phi_1,\D_x\Phi_1\right\rangle\ = 
-2\pi\left\{2\sum_{m\in2\mathbb{Z}}m\abs{c_1(m)}^2+1\right\}.\ \label{lambda-sharp}
 \end{equation} Here, $\{c_1(m)\}_{m\in\Z}$ denote the $L^2_{k_\star,\ee}-$ Fourier coefficients of
$\Phi_1(x)$.
\end{enumerate}
\medskip

Then, $(k_{\star}=\pi,E_\star)$ is a Dirac point in the sense of Definition \ref{dirac-pt-gen} with 
\begin{itemize}
\item Symmetry:\ $\mathcal{S}=\mathcal{I}$.
\item Decomposition: $H^s_{k_\star}= H^s_{k_\star,\ee}\oplus H^s_{k_\star,\oo}$.
\end{itemize}
\end{theorem}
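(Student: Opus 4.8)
\textbf{Proof proposal for Theorem \ref{conditions-for-dirac}.}

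The plan is to verify the six conditions of Definition \ref{dirac-pt-gen} one at a time, with $\mathcal{S}=\mathcal{I}$ and the decomposition $H^s_{k_\star}=H^s_{k_\star,\ee}\oplus H^s_{k_\star,\oo}$. Conditions (1)--(2) are immediate from the hypothesis that $E_\star$ is a double eigenvalue sitting at the crossing of bands $b_\star$ and $b_\star+1$. Condition (3) is exactly the statement that $-\D_x^2+V_\ee$ preserves the even/odd-index splitting of $L^2_{k_\star}$, which is Proposition \ref{H-action-on-L_e}; condition (4) is the fact that $\mathcal{I}\circ\mathcal{I}=I$ together with Proposition \ref{inversion} ($\mathcal{I}$ commutes with $H(s)$, hence with $H(k_\star)=e^{-ik_\star x}He^{ik_\star x}$ — one checks directly that conjugating by $e^{i\pi x}$ intertwines $\mathcal{I}$ on $L^2_{\pi}$ with $\mathcal{I}$ on $L^2_0$). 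Condition (5) combines hypotheses I and II: the nullspace of $H-E_\star I$ on $L^2_{k_\star}$ is spanned by $\Phi_1\in L^2_{k_\star,\ee}$ and $\Phi_2=\mathcal{I}[\Phi_1]\in L^2_{k_\star,\oo}$, and Proposition \ref{Imaps} confirms $\mathcal{I}$ really maps the even-index subspace to the odd-index one, so these two functions are linearly independent; orthonormality of $\{\Phi_1,\Phi_2\}$ follows from normalizing $\Phi_1$ and noting the two subspaces are orthogonal in $L^2[0,1]$. So the real content is condition (6): producing the two smooth dispersion branches $E_\pm(k)$ through $(k_\star,E_\star)$ with the linear crossing \eqref{1d-crossing} and nonzero slope $\lambda_\sharp$.

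For condition (6) I would pass to the $k$-dependent periodic eigenvalue problem \eqref{kFB-evp}--\eqref{HQk}, writing $\kappa=k-k_\star$ and $H_Q(k)=H_Q(k_\star)-2i\kappa(\D_x+ik_\star)+\kappa^2$, and perform a Lyapunov--Schmidt / degenerate Rayleigh--Schrödinger reduction on the two-dimensional eigenspace $X_0=\mathrm{span}\{\phi_1,\phi_2\}$ where $\phi_j(x)=e^{-ik_\star x}\Phi_j(x)$. Let $P$ be the $L^2[0,1]$-orthogonal projection onto $X_0$ and $Q_\perp=I-P$. The equation $(H_Q(k)-E)p=0$ splits into a regular part on the range of $Q_\perp$ — solvable for $Q_\perp p$ as an analytic function of $(\kappa,E)$ near $(0,E_\star)$ since $H_Q(k_\star)-E_\star$ is boundedly invertible there — and a $2\times2$ reduced eigenvalue problem $M(\kappa,E)c=0$ for $c=Pp\in\mathbb{C}^2$. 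The matrix $M(\kappa,E)$ is analytic, equals $(E_\star-E)I$ at $\kappa=0$, and its first-order term in $\kappa$ is the $2\times2$ matrix with entries $-2i\langle\phi_a,(\D_x+ik_\star)\phi_b\rangle_{L^2[0,1]}=-2i\langle\Phi_a,\D_x\Phi_b\rangle_{L^2[0,1]}$. The crucial computation is that this matrix, restricted to $X_0$, is off-diagonal in the basis $\{\Phi_1,\Phi_2\}$: the diagonal entries $\langle\Phi_a,\D_x\Phi_a\rangle$ vanish by the parity symmetry $\mathcal{I}$ (since $\Phi_2=\mathcal{I}[\Phi_1]$ forces $\langle\Phi_1,\D_x\Phi_1\rangle=-\langle\Phi_2,\D_x\Phi_2\rangle$, while the even/odd-index structure forces these to be equal — or one computes directly from the Fourier expansion as in \eqref{lambda-sharp}), and the off-diagonal entry is $\tfrac{1}{2i}\lambda_\sharp\ne0$ by hypothesis III. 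So $M(\kappa,E)$ has the form $(E_\star-E)I+\tfrac{1}{2}\kappa\lambda_\sharp\sigma_1+O(|\kappa|^2+|\kappa||E-E_\star|)$, and setting $\det M=0$ and solving for $E$ via the implicit function / analytic-branch theorem yields two smooth branches $E_\pm(k)=E_\star\pm\lambda_\sharp(k-k_\star)(1+\eta_\pm(k-k_\star))$ with $\eta_\pm(0)=0$. Finally I would check, using the symmetry and monotonicity \eqref{Ej-sym-mon} of the band functions $E_{b_\star}$, $E_{b_\star+1}$, that these two local branches are precisely $E_{b_\star}(k)$ and $E_{b_\star+1}(k)$ glued as in \eqref{E_b_star_defn}--\eqref{E_b1_star_defn}, completing the verification.

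The main obstacle is the off-diagonality of the first-order perturbation matrix: everything hinges on showing $\langle\Phi_1,\D_x\Phi_1\rangle_{L^2[0,1]}$ and $\langle\Phi_2,\D_x\Phi_2\rangle_{L^2[0,1]}$ vanish. The clean way is the Fourier-coefficient computation: if $\Phi_1(x)=e^{i\pi x}\sum_{m\in2\Z}c_1(m)e^{2\pi i mx}$ then $\langle\Phi_1,\D_x\Phi_1\rangle=i\pi\sum_{m\in2\Z}(2m+1)|c_1(m)|^2$, which is purely imaginary and nonzero in general — so in fact $\lambda_\sharp=2i\langle\Phi_1,\D_x\Phi_1\rangle$ is real, and the point is not that the diagonal entries vanish but that the $2\times2$ matrix $\mathcal{M}_{ab}=-2i\langle\Phi_a,\D_x\Phi_b\rangle$ is, after the substitution $\Phi_2=\mathcal{I}[\Phi_1]$ and Proposition \ref{Imaps}, of the form $\lambda_\sharp\sigma_3$ or $\lambda_\sharp\sigma_1$ depending on the chosen basis ordering — and in the eigenbasis diagonalizing it one reads off eigenvalues $\pm\lambda_\sharp$. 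I would carry out this $2\times2$ computation explicitly (it is short: $\langle\Phi_1,\D_x\Phi_2\rangle$ and $\langle\Phi_2,\D_x\Phi_1\rangle$ reduce to sums over $c_1$ via \eqref{Psi-pi-oo}), confirm the eigenvalues are $\pm\lambda_\sharp\ne0$, and then the implicit-function-theorem step producing \eqref{1d-crossing} is routine. A secondary technical point, the analyticity of the Lyapunov--Schmidt reduction and the resulting branches in a neighborhood of $\kappa=0$, follows from analytic perturbation theory for the family $H_Q(k)$ (Kato), since $H_Q(k)$ depends polynomially on $k$; this mirrors the argument of Theorem 4.1 of \cite{FW:12}.
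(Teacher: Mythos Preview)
Your overall strategy --- Lyapunov--Schmidt reduction onto the two-dimensional nullspace at $k=k_\star$, leading to a $2\times2$ matrix $\mathcal{M}(E^{(1)},k')$ whose determinant vanishing yields the two branches --- is exactly the paper's approach. But you get the structure of the reduced matrix wrong, and you visibly struggle with it in the proposal.

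The first-order perturbation matrix with entries $2ik'\langle\Phi_a,\D_x\Phi_b\rangle$ is \emph{diagonal} in the basis $\{\Phi_1,\Phi_2\}$, not off-diagonal. The mechanism is the even/odd-index decomposition: $\D_x$ maps $L^2_{k_\star,\ee}\to L^2_{k_\star,\ee}$ and $L^2_{k_\star,\oo}\to L^2_{k_\star,\oo}$ (differentiate $e^{i\pi x}\sum_{m\in2\Z}c(m)e^{2\pi imx}$ and the index parity is preserved), and these two subspaces are $L^2[0,1]$-orthogonal, so $\langle\Phi_1,\D_x\Phi_2\rangle=\langle\Phi_2,\D_x\Phi_1\rangle=0$. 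The same reasoning kills the off-diagonal higher-order corrections $\langle\Phi_i,\D_xG^{(j)}\rangle$ for $i\ne j$, since the resolvent $R_{k_\star}(E_\star)$ also preserves the splitting; hence the full matrix $\mathcal{M}(E^{(1)},k')$ is diagonal to all orders, not just at first order. This is the content of the paper's \eqref{relations}--\eqref{diff-ij0}.

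The diagonal entries are $E^{(1)}+2ik'\langle\Phi_j,\D_x\Phi_j\rangle+O((k')^2)$, and from $\Phi_2(x)=\Phi_1(-x)$ one gets $\langle\Phi_2,\D_x\Phi_2\rangle=-\langle\Phi_1,\D_x\Phi_1\rangle$, so they read $E^{(1)}+\lambda_\sharp k'+O((k')^2)$ and $E^{(1)}-\lambda_\sharp k'+O((k')^2)$. Thus $\det\mathcal{M}=0$ \emph{factors} into two decoupled scalar equations $\mathcal{F}_\pm(E^{(1)},k')=0$, each solved directly by the implicit function theorem (no quadratic formula, no branch selection), and each branch comes with the eigenvector $(1,0)$ or $(0,1)$ --- i.e.\ is tied to $\Phi_1$ or $\Phi_2$ alone. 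This is what gives the clean expansions of Proposition~\ref{flo-blo-dirac}. Your initial claim that the diagonal entries vanish is simply false (as you then notice), and the later hedge ``$\lambda_\sharp\sigma_3$ or $\lambda_\sharp\sigma_1$ depending on the chosen basis ordering'' misses the point: in the natural basis $\{\Phi_1,\Phi_2\}$ the matrix is already diagonal, and this diagonality --- coming from the invariance of the $L^2_{k_\star,\ee}\oplus L^2_{k_\star,\oo}$ splitting under both $H$ and the perturbation $(\D_x+ik_\star)$ --- is the structural heart of the argument.
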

\medskip

\section{Expansion of Floquet-Bloch eigenfunctions near a Dirac point}
\label{sec:flo-blo-dirac}


\begin{proposition}\label{flo-blo-dirac}
Let $(k_\star,E_\star)$ denote a 1D Dirac point in the sense of Definition \ref{dirac-pt-gen}.
Recall the basis $\{\Phi_1(x),\Phi_2(x)\}$ of the $L^2_{k_\star}-$ nullspace of $H-E_\star I$ in  Definition
\ref{dirac-pt-gen} and introduce the periodic functions
 \begin{equation}
 p_1(x)=e^{-ik_\star x}\Phi_1(x),\ \ p_2(x)=e^{-ik_\star x}\Phi_2(x).
 \label{pj-def}\end{equation}
 
\noindent Further, let $(\Phi_\pm(x;k),E_\pm(k))$ denote $k-$ pseudo-periodic eigenpairs as in part 6 of 
 Definition \ref{dirac-pt-gen}.   Introduce the periodic functions $p_\pm(x;k)$ by
 \begin{equation}
 \Phi_\pm(x;k)\ =\ e^{ikx}\ p_\pm(x;k),\ \ \left\langle p_a(\cdot;k),p_b(\cdot;k)\right\rangle=\delta_{ab} ,\ \ 
 a,b\in\{+,-\}.
 \label{p_pm-def}
 \end{equation}
\medskip
 
 \noindent Let $k=k_\star+k'$. Let $\lambda_{\sharp}\in\R$ be as in \eqref{lambda-sharp}.
 Then, there is a constant $\zeta_0>0$ such that  for $0<|k'|<\zeta_0$  we have the following:
 \begin{enumerate}
 \item The mapping $k\mapsto E_\pm(k)$ is smooth near $k_\star$ and 
 \begin{equation}
 E_\pm(k_\star+k')=E_\star \pm \lambda_\sharp\ k' + \mathcal{O}( |k'|^2 ).
 \label{Epm-expand}\end{equation}
 Here, $E_\pm'(k_\star)=\pm\lambda_\sharp$, where  $\lambda_\sharp$ is given by \eqref{lambda-sharp}. Note also that 
 \begin{equation}
 \lambda_\sharp = 2i\left\langle\Phi_1,\D_x\Phi_1\right\rangle\ =\ -2i\left\langle\Phi_2,\D_x\Phi_2\right\rangle .
 \label{pm-lambda_sharp}\end{equation}
  \item $p_\pm(x;k)$, a priori defined up to an arbitrary complex multiplicative constant of absolute value $1$, can be
chosen so that 
 \begin{align}
 p_-(x;k_\star+k')\ &=\ c_{-}(k')\left(\ p_1(x) + \varphi_-(x;k')\ \right), \nn \\
 p_+(x;k_\star+k')\ &=\ c_{+}(k')\left(\ p_2(x) + \varphi_+(x;k')\ \right).
 \label{p-pm-expand}
 \end{align} 
 The maps $k'\mapsto\varphi_{\pm}(x;k'),\ c_{\pm}(k')$ are smooth with $c_{\pm}(k')=1+\mathcal{O}_\pm(k')$  and $\varphi(x;k')=\mathcal{O}(k')$ for $x\in[0,1]$ and $|k'|<\zeta_0$.
 \end{enumerate}
\end{proposition}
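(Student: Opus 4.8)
The plan is to reduce the statement, via a Lyapunov--Schmidt reduction carried out separately on two complementary Fourier subspaces, to the perturbation theory of an \emph{isolated simple} eigenvalue. The crucial structural observation is that the inversion $\mathcal{S}=\mathcal{I}$ which block--diagonalizes $H=-\D_x^2+V_\ee$ at $k=k_\star$ does \emph{not} commute with the conjugated operator $H_Q(k)=-(\D_x+ik)^2+V_\ee$ for $k\ne k_\star$ (inversion anticommutes with $\D_x$, so it conjugates the $(k_\star+k')$--problem to the $(k_\star-k')$--problem). What does persist, for \emph{every} $k$, is the Fourier--support structure: in the periodic eigenvalue problem $H_Q(k)p=Ep$, $p\in L^2_{\rm per}[0,1]$, the kinetic term is diagonal in $\{e^{2\pi inx}\}_{n\in\Z}$ and multiplication by the \emph{even}--index cosine series $V_\ee$ shifts the Fourier index by an even amount; hence $H_Q(k)$ leaves invariant the subspaces $L^2_\ee,L^2_\oo$ of periodic functions supported on even, resp.\ odd, Fourier indices. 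Specializing to $H=-\D_x^2+V_\ee$, the blocks $H^2_A,H^2_B$ of Definition \ref{dirac-pt-gen} are precisely these even-- and odd--index subspaces and $\mathcal{S}=\mathcal{I}$ (Theorem \ref{conditions-for-dirac}); moreover, by \eqref{pj-def}, $p_1=e^{-ik_\star x}\Phi_1\in L^2_\ee$ and $p_2=e^{-ik_\star x}\Phi_2\in L^2_\oo$, so $E_\star$ is a \emph{simple} eigenvalue of $H_Q(k_\star)$ restricted to $L^2_\ee$, with eigenfunction $p_1$, and a simple eigenvalue of its restriction to $L^2_\oo$, with eigenfunction $p_2$.

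First I would work in $L^2_\ee$. With $\pi_1=\langle p_1,\cdot\rangle p_1$, $\pi_1^\perp=I-\pi_1$, and the exact expansion
\[ H_Q(k_\star+k')=H_Q(k_\star)-2ik'(\D_x+ik_\star)+(k')^2 ,\]
I seek an eigenpair of $H_Q(k_\star+k')$ on $L^2_\ee$ of the form $\big(E,\ p=c\,(p_1+\varphi)\big)$, $\varphi\in\pi_1^\perp L^2_\ee$. The $\pi_1^\perp$--projection of $(H_Q(k_\star+k')-E)p=0$ is solvable for $\varphi=\varphi(x;k',E)$ by the implicit function theorem, because $\pi_1^\perp\big(H_Q(k_\star)-E_\star\big)\pi_1^\perp$ is boundedly invertible on $\pi_1^\perp L^2_\ee$ (simplicity of $E_\star$ there); the family being quadratic in $k'$, $\varphi$ is real--analytic, vanishes at $(k',E)=(0,E_\star)$, and is $\mathcal{O}(|k'|)$ uniformly for $E$ near $E_\star$. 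Substituting into the $\pi_1$--projection gives the scalar \emph{bifurcation equation} $E-E_\star=\mu_\ee(k',E)$, where $\mu_\ee(k',E)=\langle p_1,\big(-2ik'(\D_x+ik_\star)+(k')^2\big)(p_1+\varphi)\rangle=\mathcal{O}(|k'|)$; solving it by the implicit function theorem yields a real--analytic branch $E=\mathcal{E}_\ee(k')$ with $\mathcal{E}_\ee(0)=E_\star$ and, by first--order perturbation theory (equivalently, differentiating the bifurcation equation), $\mathcal{E}_\ee'(0)=\langle p_1,-2i(\D_x+ik_\star)p_1\rangle=-2i\langle\Phi_1,\D_x\Phi_1\rangle=-\lamsharp$ using \eqref{lambda-sharp}. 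Hence $\mathcal{E}_\ee(k')=E_\star-\lamsharp k'+\mathcal{O}(|k'|^2)$, with periodic eigenfunction $p_-(x;k_\star+k')=c_-(k')\big(p_1(x)+\varphi_-(x;k')\big)$, $\varphi_-:=\varphi(\cdot;k',\mathcal{E}_\ee(k'))=\mathcal{O}(k')$, $c_-(k')=1+\mathcal{O}(k')$; the bound $\varphi_-(x;k')=\mathcal{O}(k')$ uniformly for $x\in[0,1]$ follows from elliptic regularity of eigenfunctions of $H_Q(k)$.

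Running the identical argument in $L^2_\oo$ with $p_1$ replaced by $p_2$ produces a second real--analytic branch with eigenfunction $p_+(x;k_\star+k')=c_+(k')\big(p_2(x)+\varphi_+(x;k')\big)$ and eigenvalue $\mathcal{E}_\oo(k')$, $\mathcal{E}_\oo'(0)=-2i\langle\Phi_2,\D_x\Phi_2\rangle$. To evaluate this slope I would use $\Phi_2(x)=\mathcal{I}[\Phi_1](x)=\Phi_1(-x)$ together with the $1$--periodicity of $\overline{\Phi_1}\,\D_x\Phi_1$ (valid since $\Phi_1\in L^2_\pi$): the substitution $x\mapsto -x$ over one period gives $\langle\Phi_2,\D_x\Phi_2\rangle=-\langle\Phi_1,\D_x\Phi_1\rangle$, so $\mathcal{E}_\oo'(0)=+\lamsharp$, $\mathcal{E}_\oo(k')=E_\star+\lamsharp k'+\mathcal{O}(|k'|^2)$, and simultaneously $\lamsharp=2i\langle\Phi_1,\D_x\Phi_1\rangle=-2i\langle\Phi_2,\D_x\Phi_2\rangle$, which is \eqref{pm-lambda_sharp}. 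Finally, since the Riesz projection of $H_Q(k)$ associated with a small circle about $E_\star$ (enclosing no other eigenvalue) has rank $2$ at $k=k_\star$ and depends continuously on $k$, it has rank $2$ for $k$ near $k_\star$ and decomposes, parity block by parity block, into the rank--one projections of $\mathcal{E}_\ee$ and of $\mathcal{E}_\oo$; these two branches therefore exhaust the spectrum of $H_Q(k)$ near $E_\star$, i.e.\ are the two band functions through the Dirac point. Matching the labelling convention of Definition \ref{dirac-pt-gen}(6) ($E_+$ has slope $+\lamsharp$ and $k_\star$--eigenfunction $\Phi_+(\cdot;k_\star)=\Phi_2\in L^2_{\pi,\oo}$, $E_-$ has slope $-\lamsharp$ and $\Phi_-(\cdot;k_\star)=\Phi_1\in L^2_{\pi,\ee}$) identifies $E_-(k)=\mathcal{E}_\ee(k-k_\star)$, $E_+(k)=\mathcal{E}_\oo(k-k_\star)$, with $p_\pm$ as constructed above, establishing \eqref{Epm-expand}, \eqref{pm-lambda_sharp}, and \eqref{p-pm-expand} with all the stated smoothness and $\mathcal{O}(k')$ bounds.

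I do not expect a serious obstacle once the parity decoupling is in hand: modulo that observation, the whole statement is Rellich's analytic perturbation theory of a simple isolated eigenvalue together with the one--line reflection identity for the sign of the group velocity. The single point requiring genuine care — and the reason the ``$\mathcal{S}$ commutes with $H(k_\star)$'' part of Definition \ref{dirac-pt-gen} is not directly exploitable in the perturbed problem — is precisely that the block structure of $H_Q(k)$ for $k\ne k_\star$ has to be extracted from the even--index Fourier support of $V_\ee$ rather than from the symmetry operator $\mathcal{S}=\mathcal{I}$, which fails to commute with $H_Q(k)$ away from $k=k_\star$.
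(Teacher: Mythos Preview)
Your proposal is correct and rests on the same mechanism as the paper's proof: the parity decomposition $L^2_{k_\star}=L^2_{k_\star,\ee}\oplus L^2_{k_\star,\oo}$ decouples the perturbation problem into two \emph{simple}-eigenvalue branches, each handled by standard analytic perturbation theory, with the sign relation between the two slopes coming from $\Phi_2(x)=\Phi_1(-x)$.

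The only difference is organizational. The paper carries out a single Lyapunov--Schmidt reduction on the full two-dimensional kernel $\mathrm{span}\{p_1,p_2\}$, producing a $2\times 2$ bifurcation matrix $\mathcal{M}(E^{(1)},k')$, and then observes a posteriori that the off-diagonal entries vanish because $\D_x$ and the resolvent $R_{k_\star}(E_\star)$ preserve the parity sectors $L^2_{k_\star,\ee}$, $L^2_{k_\star,\oo}$, which are mutually orthogonal. You instead note from the outset that $H_Q(k)=-(\D_x+ik)^2+V_\ee$ leaves $L^2_\ee$ and $L^2_\oo$ invariant for \emph{every} $k$ (since $V_\ee$ shifts Fourier indices by even integers), and run two independent scalar reductions. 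Your framing is slightly cleaner conceptually---it makes explicit that the block structure is not a property of the special quasimomentum $k_\star$ but of the potential $V_\ee$---and it sidesteps the need to write down and then diagonalize the $2\times2$ matrix. The paper's version, on the other hand, packages both branches in one calculation and makes the role of $\lambda_\sharp$ as the determinant obstruction visible. Either way the content is the same.
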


The proof of Proposition \ref{flo-blo-dirac} is given in Section  \ref{flo-blo-appendix} of Appendix \ref{linear-band-crossing}. 

\section{Genericity of Dirac points at $k=k_\star$}
\label{generic-Dirac}
Fix $V\in L^2_\ee$, a one-periodic and sufficiently smooth potential.  
We consider the one-parameter family of Hamiltonians
\begin{equation}
H^{(\eps)}\ =\ -\D_x^2+\eps V(x),
\label{Heps}
\end{equation}
where $\eps\in\R$.  
We denote by $\Omega=[0,1]$ the unit period cell of $V(x)$ and  $\mathcal{B}=[0,2\pi]$ is a choice of fundamental dual
period cell, chosen so that the quasi-momentum, $k_{\star}=\pi$, is an interior point.

\medskip

\begin{theorem}[Dirac points / linear band crossings for small $\eps$]\label{thm:dirac-pt}

Fix  $n\ge1$.  There exists $\eps_0=\eps_0(n)>0$ and  a real-valued  continuous function, defined on $(0,\eps_0)$:
 \[ \eps\mapsto E^{(\eps)}_{\star,n},\ \ {\rm with}\
\left.E^{(\eps)}_{\star,n}\right|_{\eps=0}=\pi^2(2n+1)^2, \]
  such that
  $(k_\star=\pi,E^{(\eps)}_{\star,n})$ is a Dirac point in the sense of Definition \ref{dirac-pt-gen}
 with symmetry $\mathcal{S}=\mathcal{I}$ and decomposition $H^2_{k_\star}=H^2_{k_\star,\ee}\oplus H^2_{k_\star,\oo}$.
\end{theorem}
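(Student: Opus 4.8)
The plan is to verify the hypotheses of Theorem~\ref{conditions-for-dirac} for $H^{(\eps)} = -\D_x^2 + \eps V$ when $|\eps|$ is small, using analytic perturbation theory starting from the explicit free operator $H^{(0)} = -\D_x^2$ analyzed in Section~\ref{eps_zero}. Fix $n \ge 1$ and recall that at $\eps = 0$ the value $E^{(0)}_{\star,n} = \pi^2(2n+1)^2$ is a double $L^2_{\pi}$-eigenvalue of $-\D_x^2$, with the two-dimensional nullspace \eqref{nullspace-m} splitting as a simple $L^2_{\pi,\ee}$-eigenvalue plus a simple $L^2_{\pi,\oo}$-eigenvalue (by the enumeration following \eqref{nullspace-m}, with the roles of $\Phi^{(0)}_n(x)$ and $\Phi^{(0)}_n(-x)$ depending on the parity of $n$). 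The key structural point, already established in Proposition~\ref{H-action-on-L_e}, is that for all $\eps$ the operator $H^{(\eps)}$ leaves the decomposition $L^2_{\pi} = L^2_{\pi,\ee} \oplus L^2_{\pi,\oo}$ invariant, since $\eps V \in L^2_\ee$. Thus the perturbation analysis can be carried out \emph{separately} on each of the two invariant subspaces, and the only thing one must rule out is that the two perturbed eigenvalue branches, each simple within its own subspace, fail to stay equal.

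First I would set up the Floquet-Bloch problem at $k_\star = \pi$ in the periodic picture \eqref{kFB-evp}: writing $\Phi = e^{i\pi x} p$, one studies $H^{(\eps)}(\pi) p = E p$ with $p$ $1$-periodic, $H^{(\eps)}(\pi) = -(\D_x + i\pi)^2 + \eps V$. On $L^2_{\pi,\ee}$, restrict $H^{(0)}(\pi)$: the eigenvalue $\pi^2(2n+1)^2$ is simple there, so standard Kato-Rellich analytic perturbation theory gives a real-analytic branch $\eps \mapsto E^{\ee}_n(\eps)$ defined near $\eps = 0$ with $E^{\ee}_n(0) = \pi^2(2n+1)^2$, together with an analytic eigenfunction branch $\Phi_1(x;\eps) \in L^2_{\pi,\ee}$. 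On $L^2_{\pi,\oo}$ one gets in the same way a branch $E^{\oo}_n(\eps)$. Now the crucial observation is the symmetry: since $V$ is even, $H^{(\eps)}$ commutes with the inversion $\mathcal{I}$ (Proposition~\ref{inversion}), and by Proposition~\ref{Imaps} $\mathcal{I}$ maps $L^2_{\pi,\ee}$ isomorphically onto $L^2_{\pi,\oo}$; hence $\mathcal{I}$ conjugates $H^{(\eps)}|_{L^2_{\pi,\ee}}$ to $H^{(\eps)}|_{L^2_{\pi,\oo}}$, so these two restrictions are unitarily equivalent \emph{for every} $\eps$. Consequently $E^{\ee}_n(\eps) = E^{\oo}_n(\eps)$ identically, the common value $E^{(\eps)}_{\star,n}$ remains a genuine double $L^2_\pi$-eigenvalue, and $\Phi_2(x;\eps) := \mathcal{I}[\Phi_1(\cdot;\eps)](x) = \Phi_1(-x;\eps)$ spans the odd part. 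This verifies Conditions I and II of Theorem~\ref{conditions-for-dirac} for all small $\eps$; it also shows $E^{(\eps)}_{\star,n}$ is continuous (indeed analytic) in $\eps$.

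It remains to verify Condition III, the non-degeneracy $\lamsharp \ne 0$, where $\lamsharp = 2i\langle \Phi_1, \D_x \Phi_1\rangle = -2\pi\{2\sum_{m \in 2\Z} m |c_1(m)|^2 + 1\}$ in terms of the $L^2_{\pi,\ee}$-Fourier coefficients $c_1(m)$ of $\Phi_1(x;\eps)$. At $\eps = 0$, $\Phi_1(x;0)$ is the single exponential $e^{i\pi x} e^{2\pi i n x}$ (if $n$ even) or $e^{i\pi x} e^{2\pi i(-n-1)x}$ (if $n$ odd), so exactly one coefficient $c_1(m)$ is nonzero, with $|c_1(m)|^2 = 1$ and $m = 2n$ or $m = -2n-2$; in either case $2\sum m|c_1(m)|^2 + 1 = 4n+1 \ne 0$ or $-4n-3 \ne 0$. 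Hence $\lamsharp|_{\eps=0} \ne 0$. Since $\Phi_1(x;\eps)$ depends analytically on $\eps$ (in, say, $H^2_{\pi,\ee}$, hence its Fourier coefficients depend continuously on $\eps$), the map $\eps \mapsto \lamsharp(\eps)$ is continuous, so $\lamsharp(\eps) \ne 0$ for $|\eps| < \eps_0$ after possibly shrinking $\eps_0 = \eps_0(n)$. All hypotheses of Theorem~\ref{conditions-for-dirac} now hold, yielding that $(k_\star = \pi, E^{(\eps)}_{\star,n})$ is a Dirac point with $\mathcal{S} = \mathcal{I}$ and decomposition $H^2_{k_\star} = H^2_{k_\star,\ee} \oplus H^2_{k_\star,\oo}$, which is the assertion of the theorem.

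The main obstacle — or rather the point requiring care rather than difficulty — is establishing that the two perturbed branches coincide rather than split: a priori, the sum of two simple eigenvalues of a self-adjoint family is only Lipschitz, and one could imagine a gap opening. The symmetry argument via $\mathcal{I}$ (the unitary equivalence of the even and odd restrictions for \emph{all} $\eps$, which is special to $k_\star = \pi$) is exactly what forecloses this, and it is the conceptual heart of the persistence of Dirac points; everything else is routine analytic perturbation theory and the explicit $\eps = 0$ computation. One should also be mildly careful that "sufficiently smooth" $V$ guarantees $\eps V$ is a relatively bounded (indeed bounded) perturbation so that the analytic perturbation theory applies on each $L^2_{\pi,\sigma}$, and that smoothness of $V$ transfers to smoothness of the eigenfunctions $\Phi_1(\cdot;\eps)$ by elliptic regularity, uniformly for $\eps$ in a neighborhood of $0$.
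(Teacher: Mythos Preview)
Your proposal is correct and follows essentially the same route as the paper: reduce to Theorem~\ref{conditions-for-dirac}, verify Condition~I by perturbing the simple $L^2_{\pi,\ee}$-eigenvalue from $\eps=0$, obtain Condition~II from Condition~I via the inversion symmetry $\mathcal{I}$ (the paper invokes Proposition~\ref{H-action-on-L_e} rather than phrasing it as unitary equivalence of the two restrictions, but the content is identical), and verify Condition~III by computing $\lamsharp$ at $\eps=0$ and using continuity. The only methodological difference is that the paper carries out the perturbation step by an explicit Lyapunov--Schmidt reduction plus implicit function theorem (Lemma~\ref{property1-proof}), whereas you invoke Kato--Rellich analytic perturbation theory; these are equivalent here since the unperturbed eigenvalue is simple in $L^2_{\pi,\ee}$.

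One minor slip: in your $\lamsharp$ computation, the nonzero Fourier index of $\Phi_1(x;0)=e^{i\pi x}e^{2\pi i n x}$ (for $n$ even) is $m=n$, not $m=2n$, and similarly $m=-n-1$ (not $-2n-2$) in the odd case; this gives $2\sum m|c_1(m)|^2+1=2n+1$ and hence $\lamsharp|_{\eps=0}=-2\pi(2n+1)$, matching the paper's \eqref{lamsharp_small_eps}. The conclusion $\lamsharp|_{\eps=0}\ne 0$ is unaffected.
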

\smallskip

The proof of Theorem \ref{thm:dirac-pt} is given in Appendix \ref{dirac-pt-small-eps}.
\medskip

\begin{theorem}[Dirac points / linear band crossings for generic $\eps$]\label{thm:dirac-pt-gen}

Let $\eps_0(n)$ be as in Theorem \ref{thm:dirac-pt}. Then, for all real $\eps$ except possibly a discrete set outside $(-\eps_0,\eps_0)$,   the
operator $H^{(\eps)}$ has a Dirac point at some $\left(k_\star=\pi,E_\star^{(\eps)}\right)$ in the sense of Definition
\ref{dirac-pt-gen}
 with symmetry $\mathcal{S}=\mathcal{I}$ and decomposition $H^2_{k_\star}=H^2_{k_\star,\ee}\oplus H^2_{k_\star,\oo}$.
\end{theorem}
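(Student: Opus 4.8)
The plan is to upgrade Theorem \ref{thm:dirac-pt}, which produces a Dirac point for $|\eps|<\eps_0$, to a statement valid for all $\eps\in\R$ outside a discrete exceptional set, following the strategy of \cite{FW:12} (the analogous argument for honeycomb lattice potentials). The key point is that the four conditions needed to invoke Theorem \ref{conditions-for-dirac} — namely (I) $E_\star$ is a simple $L^2_{\pi,\ee}$-eigenvalue, (II) $E_\star$ is a simple $L^2_{\pi,\oo}$-eigenvalue, and (III) the non-degeneracy $\lamsharp\ne0$ — can all be recast as non-vanishing of a single real-analytic function of $\eps$. First I would fix $b_\star$ (equivalently, work near the $n$-th unperturbed crossing $E^{(0)}_{\star,n}=\pi^2(2n+1)^2$) and observe that by Proposition \ref{H-action-on-L_e} the operator $H^{(\eps)}(k_\star) = e^{-i\pi x}H^{(\eps)}e^{i\pi x}$ leaves invariant the orthogonal decomposition $L^2_{\pi} = L^2_{\pi,\ee}\oplus L^2_{\pi,\oo}$, so its restrictions $H^{(\eps)}_\ee(\pi)$ and $H^{(\eps)}_\oo(\pi)$ are each self-adjoint with compact resolvent and eigenvalues depending real-analytically on $\eps$ (Rellich / Kato analytic perturbation theory, since $\eps\mapsto \eps V$ is a self-adjoint analytic family of type (A)). Moreover $\mathcal I$ conjugates $H^{(\eps)}_\ee(\pi)$ to $H^{(\eps)}_\oo(\pi)$, so these two operators have \emph{identical} spectra; hence every $L^2_{\pi,\ee}$-eigenvalue is automatically also an $L^2_{\pi,\oo}$-eigenvalue.

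Next I would track the branch. At $\eps=0$ the $n$-th even eigenvalue and $n$-th odd eigenvalue both equal $\pi^2(2n+1)^2$; label by $\eps\mapsto E^{(\eps)}_{\star,n}$ the real-analytic continuation of the $n$-th eigenvalue of $H^{(\eps)}_\ee(\pi)$ (this is globally defined and real-analytic on all of $\R$ by Rellich's theorem for analytic self-adjoint families). By the $\mathcal I$-conjugacy this is simultaneously an eigenvalue of $H^{(\eps)}_\oo(\pi)$. The only way Theorem \ref{conditions-for-dirac} can fail to apply at a given $\eps$ is if one of the following occurs: (a) $E^{(\eps)}_{\star,n}$ is not simple as an $L^2_{\pi,\ee}$-eigenvalue, i.e. it collides with the $(n\pm1)$-st even eigenvalue; (b) the corresponding normalized eigenfunction $\Phi_1(\cdot;\eps)\in L^2_{\pi,\ee}$ has $\lamsharp(\eps)=2i\langle\Phi_1,\D_x\Phi_1\rangle = 0$; or (c) $E^{(\eps)}_{\star,n}$ fails to be an isolated double eigenvalue of the full operator $H^{(\eps)}$ on $L^2_{\pi}$ — but given (a) and (b) and the $\mathcal I$-symmetry, the double eigenvalue structure is forced. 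I would define
\begin{equation}
g_n(\eps)\ \equiv\ \bigl(E^{(\eps)}_{\star,n}-E^{(\eps)}_{\ee,n-1}(\pi)\bigr)\bigl(E^{(\eps)}_{\star,n}-E^{(\eps)}_{\ee,n+1}(\pi)\bigr)\ \cdot\ \lamsharp(\eps),
\nonumber
\end{equation}
where $E^{(\eps)}_{\ee,j}(\pi)$ denotes the $j$-th eigenvalue of $H^{(\eps)}_\ee(\pi)$ (with $E^{(\eps)}_{\ee,-1}(\pi)\equiv-\infty$ when $n=0$, so that factor is dropped). Each factor is real-analytic in $\eps$ — the eigenvalue gaps obviously, and $\lamsharp(\eps)$ because the eigenprojection onto the simple eigenvalue, hence $\Phi_1(\cdot;\eps)$ up to phase and therefore $|\langle\Phi_1,\D_x\Phi_1\rangle|$ (equivalently, via \eqref{lambda-sharp}, the real expression $-2\pi(2\sum_{m\in2\Z}m|c_1(m)|^2+1)$), depends real-analytically on $\eps$ wherever the eigenvalue is simple, and extends to where it is not by continuity of the combined product. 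Then the exceptional set is $\{\eps: g_n(\eps)=0\}$ unioned over $n\ge1$ locally.

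The crucial step — and the main obstacle — is establishing that $g_n$ is not identically zero and hence has only a discrete zero set: an entire real-analytic function on $\R$ either vanishes nowhere or has isolated zeros. This is exactly where Theorem \ref{thm:dirac-pt} enters: it guarantees that for $0<|\eps|<\eps_0$ all three conditions hold, so $g_n(\eps)\ne0$ on the punctured interval $(-\eps_0,\eps_0)\setminus\{0\}$, which forces $g_n\not\equiv0$. Therefore $\{g_n=0\}$ is discrete in $\R$, and since by construction it is contained in $\R\setminus(-\eps_0,\eps_0)$ (except possibly $\eps=0$ itself, where the $\eps=0$ crossing is in fact still a genuine degenerate crossing — one checks directly from \eqref{mu-starm}, \eqref{nullspace-m} and \eqref{lambda-sharp} that $\lamsharp=-2\pi(2(n)\cdot? )\ne0$ at $\eps=0$ too, so $0$ is not exceptional), we get a discrete exceptional set inside $\R\setminus(-\eps_0,\eps_0)$. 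For every $\eps$ outside this set, Theorem \ref{conditions-for-dirac} applies with symmetry $\mathcal S=\mathcal I$ and decomposition $H^2_{k_\star}=H^2_{k_\star,\ee}\oplus H^2_{k_\star,\oo}$, yielding the Dirac point and completing the proof. The technical care required is in (i) justifying the global real-analyticity of the branch $E^{(\eps)}_{\star,n}$ and of $\lamsharp(\eps)$ — including the behavior across points where simplicity is momentarily lost, which is handled by working with the product $g_n$ rather than the factors separately and invoking Kato's theory of analytic families of type (A) together with Rellich's theorem on analytic eigenvalue branches — and (ii) verifying that at an $\eps$ where $g_n(\eps)\ne0$, conditions (I)–(III) of Theorem \ref{conditions-for-dirac} are genuinely met and that $E^{(\eps)}_{\star,n}$ is indeed the $b_\star$-th/$(b_\star+1)$-st band value (a labeling/continuity argument using the monotonicity properties \eqref{Ej-sym-mon}).
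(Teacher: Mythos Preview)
Your approach via Rellich--Kato analytic perturbation theory is a genuinely different and, once repaired, more elementary route than the paper's. The paper (Appendix~\ref{tC-discrete!}) follows the strategy of \cite{FW:12}: it builds an analytic function $\mathcal{E}_\ee(E,\eps)$ on $\C^2$ (a Fredholm-type determinant) whose zeros detect $L^2_{\pi,\ee}$-eigenvalues with multiplicity, then uses Puiseux-series and Weierstrass-preparation lemmas (Lemmas~\ref{lemma1}--\ref{lemma2}) inside a contradiction argument to show the bad set cannot accumulate. That machinery is genuinely needed in the 2D honeycomb setting, where there is no Sturm--Liouville simplicity available; in 1D your idea of tracking a single real-analytic eigenvalue branch and the associated real-analytic scalar $\lamsharp(\eps)$ is the natural shortcut.

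That said, your argument has a gap as written. You assert that condition~(I) can fail only by a collision of $E^{(\eps)}_{\star,n}$ with its \emph{immediate neighbors} $E^{(\eps)}_{\ee,n\pm1}(\pi)$, and your function $g_n$ encodes only those two gap factors. A priori an analytic branch could cross \emph{any} other branch, so $g_n(\eps)\ne0$ does not imply simplicity; your ``extends by continuity of the combined product'' does not rescue this. The clean fix is specific to 1D: observe that $f\in L^2_{\pi,\ee}$ satisfies $f(x+\tfrac12)=i\,f(x)$ (check on Fourier modes, using that $V_\ee$ has minimal period $\tfrac12$), so $L^2_{\pi,\ee}$ is the Floquet space for the period-$\tfrac12$ problem with multiplier $i\ne\pm1$. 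If $f,g$ were independent eigenfunctions at the same energy, their Wronskian $W$ would be a nonzero constant with $W(x+\tfrac12)=i^2W(x)=-W(x)$, forcing $W=0$, a contradiction. Hence every $L^2_{\pi,\ee}$-eigenvalue is simple for \emph{every} $\eps\in\R$; condition~(I) (and by $\mathcal{I}$-conjugacy, condition~(II)) never fails, the gap factors in $g_n$ are superfluous, and the only obstruction is $\lamsharp(\eps)=0$. Since the eigenvalue is always simple, the eigenprojection of the type-(A) analytic family is globally analytic, $\lamsharp(\eps)$ is real-analytic on all of $\R$, and $\lamsharp(0)=-2\pi(2n+1)\ne0$ from \eqref{lamsharp_small_eps}; its zero set is therefore discrete. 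With this Wronskian observation in place, your proof is complete and substantially shorter than the paper's; what the paper's approach buys is a template that transfers verbatim to the honeycomb case (Remark~\ref{honey-discrete}), where no such simplicity is available.
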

\smallskip

\nit Theorem  \ref{thm:dirac-pt-gen} is proved in Appendix \ref{tC-discrete!}.
 The proof in fact yields infinitely many Dirac points.
\medskip

 \chapter{Domain Wall Modulated Periodic Hamiltonian and Formal Derivation of Topologically Protected Bound States}\label{sec:bound_states} 

Let $V_\ee(x)$ denote a potential which has a Dirac point $(E_\star,k_\star=\pi)$
 and consider the smooth perturbed potential:
\begin{equation}
 U_\delta(x) = V_{\ee}(x)+\delta\kappa(\delta x)W_{\oo}(x)\ .
 \label{domain-wall-pot}
\end{equation} 
$V_\ee$ and $W_\oo$ denote, respectively, even- and odd- index Fourier series:
\begin{align*}
 V_\ee(x) &= \sum_{p\in2\mathbb{Z}_{_+}} v_p\cos(2\pi px), \\
 W_{\oo}(x) &= \sum_{p\in2\mathbb{Z}_{_+}+1} w_p\cos(2\pi px),
\end{align*} 
where  $\kappa(X)$ is $C^{\infty}(\R)$ with constant
asymptotic values $\pm\kappa_\infty$:
\begin{equation}
\lim_{X\rightarrow\pm\infty} \kappa(X) = \pm\kappa_{\infty},\ \ \ \   \kappa_{\infty}>0\ .
\label{kappa-type}
\end{equation}

\nit Our next result concerns the essential spectrum of $H_\delta$ and the (formal) asymptotic operators:
\begin{equation}
H_{\delta,+} = -\D_x^2 + V_\ee(x)+\delta\kappa_\infty W_\oo(x)\ \ {\rm and}\ \ 
H_{\delta,-} = -\D_x^2 + V_\ee(x)-\delta\kappa_\infty W_\oo(x).
\label{Hpm-def}
\end{equation}

\begin{proposition}\label{kappa_spec}
Let $V_\ee\in L^2_\ee$ and let $(k_\star,E_\star)$ denote a Dirac point of $-\D_x^2+V_\ee$.
Assume that $\vartheta_\sharp=\inner{\Phi_1,W_0\Phi_2}_{L^2[0,1]}\ne0$; see also Remark \ref{on-theta-sharp}.
Fix $c$ less than but arbitrarily close to $1$ and define the real interval
 \[\mathcal{I}_\delta\equiv
\Big(\ E_\star-c\delta\kappa_\infty|\thetasharp|\ ,\ E_\star+c\delta\kappa_\infty|\thetasharp|\ \Big)\ .
\]
 Then there exists $\delta_0>0$, such that for all
$0<\delta<\delta_0$, we have that
 \begin{align}
&\mathcal{I}_\delta\cap\sigma_{\rm ess}(H_\delta),\ \ \mathcal{I}_\delta\cap\sigma_{\rm ess}(H_{\delta,-})
\ \ {\rm and}\ \ \mathcal{I}_\delta\cap\sigma_{\rm ess}(H_{\delta,+})\ \textrm{ are all empty sets}. \nn
\end{align}
\end{proposition}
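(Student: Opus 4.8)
The plan is to analyze the essential spectrum via Floquet-Bloch theory applied to the two constant-$\kappa$ operators $H_{\delta,\pm}$, and then to transfer the result to $H_\delta$ using a decomposition-of-space argument. First I would treat $H_{\delta,+}= -\D_x^2 + V_\ee(x) + \delta\kappa_\infty W_\oo(x)$, which is a genuine Hill operator with a smooth $1$-periodic potential. Its spectrum is purely essential (in fact absolutely continuous) and given by the union of the Floquet-Bloch bands $\bigcup_{b}\{E_b^{\delta,+}(k): k\in\mathcal{B}\}$. The key point is that near the Dirac point $(k_\star=\pi, E_\star)$ of the unperturbed operator $H_0 = -\D_x^2+V_\ee$, the perturbation $\delta\kappa_\infty W_\oo$ opens a gap. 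This is exactly the degenerate-perturbation-theory computation: restricting $H_{\delta,+}-E_\star$ to the two-dimensional nullspace $\mathrm{span}\{\Phi_1,\Phi_2\}$ and using the Floquet-Bloch expansion of Proposition \ref{flo-blo-dirac}, the $2\times 2$ reduced matrix is, to leading order,
\[
\lambda_\sharp (k-k_\star)\,\sigma_3 + \delta\kappa_\infty \vartheta_\sharp\, \sigma_1 + \mathcal{O}(\delta^2 + |k-k_\star|^2),
\]
using $\langle\Phi_1,W_\oo\Phi_1\rangle = \langle\Phi_2,W_\oo\Phi_2\rangle = 0$ (since $W_\oo$ is odd-index, it maps $L^2_{\pi,\ee}\to L^2_{\pi,\oo}$ and vice versa) and $\vartheta_\sharp = \langle\Phi_1,W_\oo\Phi_2\rangle$. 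The eigenvalues of this matrix are $E_\star \pm \sqrt{\lambda_\sharp^2(k-k_\star)^2 + \delta^2\kappa_\infty^2\vartheta_\sharp^2} + \text{h.o.t.}$, so for every $k$ near $k_\star$ the two relevant bands stay a distance $\geq c\,\delta\kappa_\infty|\vartheta_\sharp|$ from $E_\star$, for any $c<1$ and $\delta$ small. Away from $k_\star$, the bands $E_{b_\star}, E_{b_\star+1}$ are already bounded away from $E_\star$ by a $\delta$-independent amount (since the original crossing was transversal and isolated), and all other bands $E_b$, $b\notin\{b_\star,b_\star+1\}$, are uniformly bounded away from $E_\star$ as well; a $\mathcal{O}(\delta)$ perturbation cannot close these $\mathcal{O}(1)$ gaps. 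Hence $\mathcal{I}_\delta\cap\sigma_{\mathrm{ess}}(H_{\delta,+})=\emptyset$ for $\delta<\delta_0$; the identical argument (replacing $\kappa_\infty$ by $-\kappa_\infty$, which changes nothing since only $|\vartheta_\sharp|$ enters) handles $H_{\delta,-}$. This band-splitting computation is essentially the content flagged to Appendix \ref{band_splitting}.

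Next I would pass from the two constant-coefficient operators to $H_\delta$ itself. Since $H_\delta$ differs from a constant-$\kappa$ operator only through the domain wall profile $\kappa(\delta x)$, which converges to $\pm\kappa_\infty$ at $\pm\infty$, the natural tool is a decomposition of $\R$ combined with a compactness/relative-compactness or Weyl-sequence argument. Concretely, I would show $\sigma_{\mathrm{ess}}(H_\delta) \subseteq \sigma_{\mathrm{ess}}(H_{\delta,-})\cup\sigma_{\mathrm{ess}}(H_{\delta,+})$ via a Zhislin/Persson-type characterization: any Weyl sequence for $H_\delta$ at energy $E$ can, after passing to a subsequence, be translated off to $+\infty$ or $-\infty$ (by a standard "escape to infinity" argument for Weyl sequences, since the resolvent is locally compact), and on each such tail $\kappa(\delta x)$ is uniformly close to $+\kappa_\infty$ resp. $-\kappa_\infty$ on the (growing) support of the localized piece. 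Therefore $E$ must be within $o(1)$, hence (after taking the limit) in, the essential spectrum of $H_{\delta,+}$ or $H_{\delta,-}$. One must be slightly careful that the "$o(1)$" here is with respect to the sequence index at \emph{fixed} $\delta$, not with respect to $\delta$; this is fine because the statement of the proposition fixes $\delta<\delta_0$ and then asserts emptiness of an intersection. Combining with the previous paragraph, $\mathcal{I}_\delta\cap\sigma_{\mathrm{ess}}(H_\delta)\subseteq \mathcal{I}_\delta\cap(\sigma_{\mathrm{ess}}(H_{\delta,-})\cup\sigma_{\mathrm{ess}}(H_{\delta,+}))=\emptyset$.

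The main obstacle is the first step: rigorously justifying the band-splitting estimate \emph{uniformly in $k$ over the whole Brillouin zone}, not just in a fixed neighborhood of $k_\star$. Near $k_\star$ the degenerate perturbation expansion is routine, but one must control the error terms $\mathcal{O}(\delta^2+|k-k_\star|^2)$ uniformly and check that the square-root lower bound genuinely dominates them — i.e., that no band dips into $\mathcal{I}_\delta$ for $k$ at intermediate distance from $k_\star$, where neither "$|k-k_\star|$ large" nor "$\delta$ perturbation small relative to the $\mathcal{O}(1)$ gap of other bands" is obviously the right regime. This is resolved by combining (a) a fixed $\zeta_0$-neighborhood of $k_\star$ where the Lyapunov-Schmidt/perturbation expansion of Proposition \ref{flo-blo-dirac} applies and the square-root lower bound $\sqrt{\lambda_\sharp^2(k-k_\star)^2+\delta^2\kappa_\infty^2\vartheta_\sharp^2}\gtrsim \delta$ holds, with (b) the complementary compact set $\{k\in\mathcal{B}: |k-k_\star|\geq\zeta_0\}$, on which $\min_{b}\,\mathrm{dist}(E_b^{0}(k), E_\star)\geq c_0>0$ by isolatedness of the crossing, so analytic perturbation theory in $\delta$ gives $|E_b^{\delta,\pm}(k)-E_\star|\geq c_0/2$ for $\delta$ small. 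The second step (escape-to-infinity for Weyl sequences) is standard but deserves a careful write-up given the non-compactness of the perturbation, which is exactly the subtlety the authors highlight in the introduction; one can alternatively invoke a known theorem on essential spectra of operators with asymptotically periodic coefficients.
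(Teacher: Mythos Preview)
Your proposal is correct and follows the same overall architecture as the paper's proof: first establish the gap for the periodic operators $H_{\delta,\pm}$ via degenerate perturbation theory near $k_\star$ (this is precisely the content of Appendix~\ref{band_splitting}), then transfer to $H_\delta$ via the asymptotic operators.

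Two points where your route differs from the paper's. First, to extend the gap from a neighborhood of $k_\star$ to the entire Brillouin zone, you split into $\{|k-k_\star|<\zeta_0\}$ (perturbation expansion) and its complement (uniform $\mathcal{O}(1)$ separation from $E_\star$, stable under the $\mathcal{O}(\delta)$ perturbation). The paper instead invokes the one-dimensional fact that each dispersion curve $k\mapsto E_b(k)$ is monotone on $[0,\pi]$ and on $[\pi,2\pi]$ and symmetric about $k_\star=\pi$ (see \eqref{Ej-sym-mon}); hence once the gap is established on a set of the form $\{|k-k_\star|\le\delta/2\}$, monotonicity immediately propagates it to all of $\mathcal{B}$ with no further estimates needed. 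Your argument is more robust (it would transfer to higher dimensions), but the paper's is shorter and exploits the special structure of Hill's equation. Second, for the passage to $H_\delta$, you sketch a Weyl-sequence / escape-to-infinity argument and note that one could alternatively cite a theorem on asymptotically periodic operators; the paper takes the latter option, citing Theorem~6.28 of \cite{RofeBeketov-Kholkin:05} directly.
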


\nit{\it Proof of Proposition \ref{kappa_spec}:} We first prove the assertion for the periodic Hill's operators,
$H_{\delta,+}$ and 
$H_{\delta,-}$. Consider $H_{\delta,+}$. The proof for $H_{\delta,-}$ is identical. By Appendix \ref{band_splitting}
(\eqref{E_k_cond} and \eqref{E_Epm_defn}), for any $c\in(0,1)$, there exists $\delta_0>0$ such that for all
$0<\delta<\delta_0$ 
there are dispersion curves $E_{\delta,-}(k)\le E_{\delta,+}(k)$ such that 
for $k$ satisfying $|k-k_\star|=|k-\pi|<\delta$, we have:
\[\max_{|k-\pi|\le\frac{\delta}{2}} 
E_{\delta,-}(k) = E_\star - c \ \delta \kappa_\infty|\thetasharp|, \qquad
\min_{|k-\pi|\le\frac{\delta}{2}} 
E_{\delta,+}(k) = E_\star + c \ \delta \kappa_\infty|\thetasharp|\ .\]

To show that there is a gap in the spectrum of width $\approx \mathcal{O}(\delta\kappa_\infty)$, it suffices to show
that for $k'\in\mathcal{B}\cap\left\{k:|k-\pi|\ge\frac{\delta}{2}\right\}$, that $E_{\delta,-}(k')<E_\star
-c\delta\kappa_\infty|\thetasharp|$ and $E_{\delta,-}(k')>E_\star + c\delta\kappa_\infty|\thetasharp|$. Indeed,
this follows from
the symmetry and monotonicity properties of $k\mapsto E_{\delta,\pm}(k)$ on $[0,\pi]$ and on $[\pi,2\pi]$; see
\eqref{Ej-sym-mon}.
Therefore, $H_{\delta,\pm}$ has a gap in its spectrum containing an interval $\mathcal{I}_\delta$. 

Finally, the assertion concerning $H_\delta$ follows from  Theorem 6.28 (page 267) of \cite{RofeBeketov-Kholkin:05}
using the above results on the asymptotic operators, $H_{\delta,\pm}$. This completes the proof of Proposition   
\ref{kappa_spec}.
\medskip

\section{Formal multiple scale construction of ``edge states''}\label{subsec:multiscale_analysis}

Consider the eigenvalue problem:
\begin{align}
 \label{perturbed_schro_prob}
H_\delta\Psi(x)&\equiv \left(-\partial_x^2+U_{\delta}(x)\right)\Psi(x)= E\ \Psi(x), ~~~x\in\R,\\
\Psi(x) &\rightarrow 0 ~\text{as}~x\rightarrow \pm\infty,\nn
\end{align} 
where $U_\delta(x)=V_\ee(x) +\delta\kappa(\delta x)W_\oo(x)$; see  \eqref{domain-wall-pot}.
\medskip

In this section we use a formal multiple scale expansion in $\delta$ to study the eigenvalue problem 
\eqref{perturbed_schro_prob}. In particular, we shall see how the hypothesized asymptotic conditions of the domain wall function,  $\kappa(X)$,
$\kappa(X)\to\pm\kappa_\infty$ as $|X|\to\infty$, ensures the bifurcation, from a Dirac point, of a family of  localized
eigenstates of
\eqref{perturbed_schro_prob}, for $\delta\ne0$, .  In Chapter \ref{sec:proof-exists-mode} we prove the
validity of this  expansion. 
\medskip

We seek a solution to \eqref{perturbed_schro_prob} as a two-scale  expansion in the small quantity $\delta$:
\begin{align}
E^{\delta}&=E^{(0)}+\delta E^{(1)}+\delta^2E^{(2)}+\ldots, \label{formal-E}\\
\Psi^{\delta}&= \psi^{(0)}(x,X)+\delta\psi^{(1)}(x,X)+\delta^2\psi^{(2)}(x,X)+\ldots\ \label{formal-psi}.
\end{align} 
where $X=\delta x$ is a slow spatial scale. 
 For each $i=0,1,2,\dots$, we impose the boundary conditions:
\begin{enumerate}
\item   $\psi^{(i)}(x,X)$ is $k_{\star}=\pi$-pseudo-periodic in $x$.
\item  $\psi^{(i)}(x,X)$ is spatially localized in $X$.
\end{enumerate}

Treating $x$ and $X$ as independent variables, equation \eqref{perturbed_schro_prob} may be written as
\begin{align*}
 &-\left(\partial_{x}^2+2\delta\partial_{x}\partial_{X}+\delta^2\partial_{X}^2+\ldots\right)
\left(\psi^{(0)}+\delta\psi^{(1)}+\delta^2\psi^{(2)}+\ldots\right) \\
&~~~+\left(V_{\ee}(x)+\delta\kappa(X)W_{\oo}(x)\right)
\left(\psi^{(0)}+\delta\psi^{(1)}+\delta^2\psi^{(2)}+\ldots\right) \\
&~~~-\left(E^{(0)}+\delta E^{(1)}+\delta^2E^{(2)}+\ldots\right)
\left(\psi^{(0)}+\delta\psi^{(1)}+\delta^2\psi^{(2)} +\ldots\right)=0.
\end{align*}
Equating terms of equal order in $\delta^j,\ j\ge0$ yields a hierarchy of equations governing $\psi^{(j)}(x,X)$.
  
  At order $\delta^0$ we have that $(E^{(0)},\psi^{(0)})$ satisfy
\begin{equation}
 \label{perturbed_schro_delta0}
 \begin{split}
 &\left(-\partial_{x}^2+V_{\ee}(x)-E^{(0)}\right)\psi^{(0)}(x,X)=0, \\
 &\psi^{(0)}(x+1,X) = e^{ik_{\star}}\psi^{(0)}(x,X).
 \end{split}
\end{equation} 
Equation \eqref{perturbed_schro_delta0} has the solution
\begin{equation}
E^{(0)}=E_{\star},\ \  \psi^{(0)}(x,X) = \alpha_1(X)\Phi_1(x)+\alpha_2(X)\Phi_2(x),
\label{psi0-soln}
\end{equation}
 where $\{\Phi_1(x),\Phi_2(x)\}$ is the basis of the $L^2_{k_{\star}}-$ nullspace of $H-E_{\star}$ in Definition
\ref{dirac-pt-gen}.

Proceeding to order $\delta^1$ we find that $(\psi^{(1)},E^{(1)})$ satisfies
\begin{align}
 \label{perturbed_schro_delta1}
 \left(-\partial_{x}^2+V_{\ee}(x)-E_{\star}\right)\psi^{(1)}(x,X)&=G^{(1)}(x, X;\psi^{(0)}) + 
 E^{(1)} \psi^{(0)},\\
 \psi^{(1)}(x+1,X) &= e^{ik_{\star}}\psi^{(1)}(x,X) \nonumber,
 \end{align}
 where
 \begin{align}
 &G^{(1)}(x, X;\psi^{(0)})\ \equiv\ \left(2\partial_{x}\partial_{X}
-\kappa(X)W_{\oo}(x)\right)\psi^{(0)}(x,X)\nonumber\\
&=\sum_{j=1}^2\Big[2\partial_{x}\Phi_j(x)\partial_{X}\alpha_j(X)
-\kappa(X)W_{\oo}(x)\alpha_j(X)\Phi_j(x)\Big]\ .\label{G1def}
\end{align} 
Equation \eqref{perturbed_schro_delta1} has a solution $\psi^{(1)}$ if and only if 
$G^{(1)}(x, X;\psi^{(0)})$, is $L_x^2[0,1]-$ orthogonal to the $L^2_{k_\star}-$ kernel of $(H_0-E_{\star})$,
which is spanned by $\Phi_1$ and $\Phi_2$. Thus we require that 
 $G^{(1)}(x, X;\psi^{(0)})$ be $L^2_x[0,1]-$ orthogonal to the normalized eigenfunctions $\Phi_1(x)$ and
$\Phi_2(x)$:
 \begin{equation}
  \left\langle \Phi_j(\cdot),G^{(1)}(\cdot, X;\psi^{(0)})\right\rangle_{L^2([0,1])}\ =\ 0,\ \ j=1,2\ \ ,
  \label{G1orthog}\end{equation}
where $G^{(1)}(x,X;\psi^{(0)})$ is displayed in \eqref{G1def}. 
The two orthogonality conditions \eqref{G1orthog} reduce to the eigenvalue problem for a 1D Dirac operator:
\begin{equation} 
 \label{dirac_eqn1}
 \left(\mathcal{D}-E^{(1)}\right)\alpha(X) = 0,\ \ \alpha(X)\to0, \ \ X\to\pm\infty.
\end{equation} 
Here,  $\alpha(X)=(\alpha_{1}(X),\alpha_{2}(X))^T$ and $\mathcal{D}$ denotes the one-dimensional Dirac operator:
\begin{equation}
  \label{dirac_op}
\mathcal{D}\equiv i\lamsharp\sigma_3\partial_{X}+\thetasharp\kappa(X)\sigma_1 \ .
\end{equation} 
The constants $\lambda_\sharp,\ \thetasharp$ are real and are given by:
\begin{align}
\lamsharp&=2i\inner{\Phi_1,\partial_x\Phi_1}_{L^2([0,1])}\ =\ -2i\inner{\Phi_2,\partial_x\Phi_2}_{L^2([0,1])},\label{thetasharp_defn}\\
\thetasharp&=\inner{\Phi_1,W_{\oo}\Phi_2}_{L^2([0,1])}\ =\ \inner{\Phi_2,W_{\oo}\Phi_1}_{L^2([0,1])}. \nn
\end{align} 
 Note: $\lamsharp$ is real by self-adjointness of $i\D_x$ and $\thetasharp$ is real since $W_\oo$ is real,
$W_\oo(x)=W_\oo(-x)$ and $\Phi_2(x)=\Phi_1(-x)$. 
\medskip

\begin{remark}\label{lambda_theta_generic}
$\lambda_\sharp$ is generically non-zero; see Theorem \ref{thm:dirac-pt} and Theorem
\ref{thm:dirac-pt-gen}.
We assume that $\thetasharp\ne0$; see Remark \ref{on-theta-sharp}.
\end{remark}
\medskip

In Section \ref{subsec:dirac_bound_state} of Chapter \ref{sec:bound_states} we prove  that for $\kappa(X)$ satisfying 
\[ \lim_{x\rightarrow\pm\infty}
\kappa(\delta x) = \pm\kappa_{\pm\infty},\ \ \kappa_{\pm\infty}>0,\]
 that the eigenvalue problem \eqref{dirac_eqn1} for the Dirac operator has an
  exponentially localized
eigenfunction $\alpha_{\star}(X)$ with corresponding (mid-gap) eigenvalue $E^{(1)}=0$.
 Moreover, this eigenvalue is simple (multiplicity one). We impose the normalization: $\|\alpha_\star\|_{L^2}=1$. 
\medskip

Fix $(E^{(1)},\alpha)=(0,\alpha_\star)$ as above.
Then, the solvability conditions \eqref{G1orthog} are satisfied and we may invert 
 $(H_0-E_{\star})$ on $G^{(1)}(x,X ;\psi^{(0)})$ obtaining:
\begin{align}
 \psi^{(1)}(x,X) &= \left(R(E_{\star})G^{(1)}\right)(x,X) + \psi^{(1)}_h(x,X)
 \equiv \psi^{(1)}_p(x,X) + \psi^{(1)}_h(x,X),\label{psi1p-def}
\end{align} where the resolvent,
\begin{equation}
R(E_{\star})\ \equiv\ (H_0-E_{\star})^{-1}:\ L^2_{k_\star}(\R_x)\to H^2_{k_\star}(\R_x)\ \ \textrm{is bounded.}
\label{REstar}
\end{equation} 
 $\psi^{(1)}_p$ is a
particular solution, and
\begin{equation*}
 \psi^{(1)}_h(x,X) = \alpha^{(1)}_{1}(X)\Phi_1(x)+\alpha^{(1)}_{2}(X)\Phi_2(x)
\end{equation*} is a homogeneous solution with coefficients $\alpha^{(1)}_{j}$ to be determined. 

We now proceed to $\mathcal{O}(\delta^2)$ in the perturbation hierarchy. Equating terms at order $\delta^2$ yields
\begin{align}
 \label{perturbed_schro_delta2}
& \left(-\partial_{x}^2+V_{\ee}(x)-E_{\star}\right)\psi^{(2)}(x,X)\\
  &\quad =\left(2\partial_{x}\partial_{X}
-\kappa(X)W_{\oo}(x)\right)\psi^{(1)}(x,X)\  +\left(\partial_X^2+E^{(2)}\right)\psi^{(0)}(x,X) \nonumber\\
&\quad = \left(2\partial_{x}\partial_{X}
-\kappa(X)W_{\oo}(x)\right)\psi_h^{(1)} + G^{(2)}(x,X;\psi^{(0)},\psi_p^{(1)}) +E^{(2)}\psi^{(0)}\nn\\
&\psi^{(2)}(x+1,X) = e^{ik_{\star}}\psi^{(2)}(x,X). \nonumber
\end{align}
where
\begin{equation}
 G^{(2)}(x,X;\psi^{(0)},\psi_p^{(1)})
 =  \left(2\partial_{x}\partial_{X}
-\kappa(X)W_{\oo}(x)\right)\psi_p^{(1)} + \D_X^2\psi^{(0)}\ . \label{G2def}
\end{equation}

As before, \eqref{perturbed_schro_delta2} has a solution if and only if the right hand side is $L^2_x[0,1]-$ orthogonal
 to the functions $\Phi_j(x)$, $j=1,2$.
Written out, this solvability condition reduces to the inhomogeneous Dirac system:
\begin{align}
 \mathcal{D}\alpha^{(1)}(X) &=
\mathcal{G}^{(2)}\left(X\right)+E^{(2)}\alpha_{\star}(X), \label{solvability_cond}\\
  \alpha^{(1)}(X) &\to0, \ {\rm as}\ X\to\pm\infty,\ \ {\rm where} \nn \\
 \mathcal{G}^{(2)}(X) &= 
 \left( \begin{array}{c}
 \left\langle \Phi_1(\cdot),G^{(2)}(\cdot,X;\psi^{(0)},\psi_p^{(1)}) \right\rangle_{L^2([0,1])} \\ 
  \left\langle \Phi_2(\cdot),G^{(2)}(\cdot,X;\psi^{(0)},\psi_p^{(1)}) \right\rangle_{L^2([0,1])}
 \end{array} \right)\ . \label{ipG}
\end{align} 
Solvability of the system \eqref{solvability_cond} is ensured by imposing $L^2_X-$ orthogonality
of  the right hand side of \eqref{solvability_cond} to $\alpha_\star(X)$. This yields:
\begin{equation}
 \label{solvability_cond_E2}
E^{(2)} =
-\inner{\alpha_{\star},\mathcal{G}^{(2)}}_{L^2(\R_X)}.
\end{equation}

Proceeding as earlier, we obtain $\psi^{(2)}=\psi_p^{(2)}+\psi_h^{(2)}$, where $\psi_p^{(2)}$ is a particular solution
of \eqref{perturbed_schro_delta2} and 
\begin{equation*}
 \psi^{(2)}_h(x,X) = \alpha^{(2)}_{1}(X)\Phi_1(x)+\alpha^{(2)}_{2}(X)\Phi_2(x)
\end{equation*} is a homogeneous solution.

\nit This systematic expansion procedure may be carried out to arbitrary order in $\delta$ yielding
$\psi^{(\ell)}(x,X)$, $E^{(l)}$, $\ell\geq0$ and the formal solution \eqref{formal-E}-\eqref{formal-psi}.

\section{Spectrum of the  1D  Dirac operator and its topologically protected zero
energy eigenstate}\label{subsec:dirac_bound_state}
 The formal multiscale analysis in Section \ref{subsec:multiscale_analysis}, applied to the  Schr\"odinger eigenvalue
problem with domain-wall potential, shows that the leading order in $\delta$ behavior 
 of the eigenvalue problem is governed by a one-dimensional Dirac operator,
 \begin{equation}
\mathcal{D}\equiv i\lamsharp\sigma_3\partial_{X}+\thetasharp\kappa(X)\sigma_1. \label{Ddef}
 \end{equation}
 In this section we characterize the essential spectrum of $\mathcal{D}$ and prove, under mild conditions on
$\kappa(X)$, that $\mathcal{D}$ has a simple zero energy eigenvalue, with corresponding ($L^2(\R_X)$- normalized) eigenstate $\alpha_\star(X)$. This zero energy state is topologically stable in the sense that $\mathcal{D}$  has a zero energy eigenstate  for arbitrary spatially localized perturbations of $\kappa(X)$.

 Substitution of this state, $\alpha_\star(X)$ (see \eqref{dirac-m-soln})  into \eqref{psi0-soln} yields a formal leading order localized eigen-solution of the Schr\"odinger eigenvalue problem, for $\delta\ne0$ and small.
 We establish this rigorously  in Chapter \ref{sec:proof-exists-mode}. \medskip

\begin{theorem}
 \label{thm:dirac_bound_state}[{\bf Spectrum of $\mathcal{D}$}]\smallskip
  
\nit Assume $\vartheta_\sharp\ne0$ and let $\kappa_{\pm\infty}$ denote nonzero real constants. Assume $\kappa(X)$ is
bounded and that 
\[ \kappa-\kappa_{+\infty}\in L^1[0,\infty]\ \textrm{ and}\  \kappa+\kappa_{-\infty}\in L^1[-\infty,0].\]
 Then, the following holds:
\begin{enumerate}
\item $\mathcal{D}$ has essential spectrum equal to $(-\infty, -a]\cup 
[a, \infty)$, \\
where $a=\min\{ \abs{\thetasharp\kappa_\infty} , \abs{\thetasharp\kappa_{-\infty}}\}$.
\item Assume $\kappa_\infty\times\kappa_{-\infty}>0$.  Then, the eigenvalue problem for the one-dimensional Dirac
operator \eqref{Ddef}:
\begin{align}
 \label{dirac_eqn}
& \left(\mathcal{D}-\Omega\right)\alpha(X) = 0,\ \ \alpha(X)\to0\ \ {\rm as}\ \ X\to\pm\infty \ ,
\end{align}
has a simple eigenvalue at $\Omega=0$ with 
 localized eigenstate $\alpha_\star(X)$, which we may take  to be normalized $\|\alpha_\star\|_{L^2(\R)}\ =\ 1$.
\item If $\kappa'(X)\in\mathcal{S}(\R)$ then $\alpha_\star\in
H^{s}(\R)$, for all $s\in\mathbb{N}$. In fact, $\alpha_\star(X)$ and all its derivatives with respect to $X$ are
exponentially decaying functions of $X$.
\item Assume $\kappa_{+\infty}\times\kappa_{-\infty}<0$, {\it i.e.} $\sgn(\kappa(+\infty))\ =\sgn(\kappa(-\infty))$. 
Then $\Omega=0$ is not an eigenvalue of \eqref{dirac_eqn}. 
\end{enumerate}
\end{theorem}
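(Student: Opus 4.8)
The plan is to regard $\mathcal{D}=i\lamsharp\sigma_3\partial_X+\thetasharp\kappa(X)\sigma_1$ as a bounded self-adjoint perturbation of the constant-coefficient operator $\mathcal{D}_0=i\lamsharp\sigma_3\partial_X$ on $L^2(\R;\C^2)$ --- legitimate since $\kappa$ is real and bounded --- so $\mathcal{D}$ is self-adjoint with domain $H^1(\R;\C^2)$ and any $L^2$ solution of $\mathcal{D}\alpha=0$ automatically lies in $H^1$ and is a genuine eigenfunction. Throughout, $\lamsharp\ne0$ and $\thetasharp\ne0$ are used crucially. For part (1) I would introduce the two limiting constant-coefficient operators $\mathcal{D}_{+\infty}=i\lamsharp\sigma_3\partial_X+\thetasharp\kappa_{+\infty}\sigma_1$ and $\mathcal{D}_{-\infty}=i\lamsharp\sigma_3\partial_X-\thetasharp\kappa_{-\infty}\sigma_1$, which capture the behaviour of $\mathcal{D}$ near $X=+\infty$ and $X=-\infty$ (recall $\kappa(+\infty)=\kappa_{+\infty}$, $\kappa(-\infty)=-\kappa_{-\infty}$). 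Via the Fourier transform each $\mathcal{D}_{\pm\infty}$ is unitarily equivalent to multiplication by a Hermitian $2\times2$ matrix with eigenvalues $\pm\sqrt{\lamsharp^2\xi^2+\thetasharp^2\kappa_{\pm\infty}^2}$, so $\sigma(\mathcal{D}_{\pm\infty})=\sigma_{\rm ess}(\mathcal{D}_{\pm\infty})=(-\infty,-|\thetasharp\kappa_{\pm\infty}|]\cup[|\thetasharp\kappa_{\pm\infty}|,\infty)$. Because $\kappa$ is bounded with $\kappa-\kappa_{+\infty}\in L^1[0,\infty]$ and $\kappa+\kappa_{-\infty}\in L^1[-\infty,0]$ (hence $\kappa-\kappa(\pm\infty)\in L^2$ near $\pm\infty$), a decomposition principle yields $\sigma_{\rm ess}(\mathcal{D})=\sigma_{\rm ess}(\mathcal{D}_{+\infty})\cup\sigma_{\rm ess}(\mathcal{D}_{-\infty})=(-\infty,-a]\cup[a,\infty)$ with $a=\min\{|\thetasharp\kappa_{+\infty}|,|\thetasharp\kappa_{-\infty}|\}$: for $\supseteq$ one transports singular sequences of $\mathcal{D}_{\pm\infty}$ out toward $\pm\infty$ and truncates, the truncation error being controlled by $\|(\kappa-\kappa(\pm\infty))\chi\|_{L^2}\to0$ along the supports; for $\subseteq$ one compares resolvents through a smooth partition of unity $J_-^2+J_0^2+J_+^2=1$ with $J_0$ compactly supported, using that multiplication by $J_0'$ and by $\kappa-\kappa(\pm\infty)$, composed with a resolvent, is compact. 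Since $\thetasharp,\kappa_{\pm\infty}\ne0$ we get $a>0$, so $(-a,a)$ is a spectral gap of $\mathcal{D}$ about $0$ and any solution of \eqref{dirac_eqn} with $|\Omega|<a$ is an isolated eigenvalue of finite multiplicity.

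For parts (2) and (4) the key observation is that $\mathcal{D}\alpha=0$ can be integrated explicitly: left-multiplying by $-i\lamsharp^{-1}\sigma_3$ and using $\sigma_3\sigma_1=i\sigma_2$ converts it into the first-order linear system
\[ \alpha'(X)+\frac{\thetasharp}{\lamsharp}\,\kappa(X)\,\sigma_2\,\alpha(X)=0, \]
whose general solution is $\alpha(X)=\exp\!\left(-\frac{\thetasharp}{\lamsharp}\Theta(X)\sigma_2\right)\alpha(0)$, where $\Theta(X)\equiv\int_0^X\kappa(s)\,ds$. Diagonalizing $\sigma_2$ ($\sigma_2 v_\pm=\pm v_\pm$, $v_\pm=(1,\pm i)^T$) gives a basis of solutions $\alpha_\pm(X)=e^{\mp(\thetasharp/\lamsharp)\Theta(X)}v_\pm$. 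The $L^1$ hypotheses give $\Theta(X)=\kappa_{+\infty}X+O(1)$ as $X\to+\infty$ and $\Theta(X)=-\kappa_{-\infty}X+O(1)$ as $X\to-\infty$, so $\alpha_+$ (resp.\ $\alpha_-$) decays exponentially at $+\infty$ iff $\thetasharp\kappa_{+\infty}/\lamsharp>0$ (resp.\ $<0$) and at $-\infty$ iff $\thetasharp\kappa_{-\infty}/\lamsharp>0$ (resp.\ $<0$). Hence a nontrivial $L^2$ null solution exists precisely when $\sgn\kappa_{+\infty}=\sgn\kappa_{-\infty}$, i.e.\ $\kappa_{+\infty}\kappa_{-\infty}>0$ (case (2)); in that regime exactly one of $\alpha_\pm$ is square-integrable, so the $L^2$-kernel of $\mathcal{D}$ is one-dimensional and, by self-adjointness, $\Omega=0$ is a simple eigenvalue with normalized eigenstate $\alpha_\star$. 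If instead $\kappa_{+\infty}\kappa_{-\infty}<0$ (case (4)), $\Theta$ tends to $+\infty$ at one end and to $-\infty$ at the other, so every nonzero solution of $\mathcal{D}\alpha=0$ blows up exponentially at one end and $\Omega=0$ is not an eigenvalue.

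For part (3), from $\alpha_\star(X)=e^{-s(\thetasharp/\lamsharp)\Theta(X)}v_s$ (with $s=\pm1$ the admissible sign) and $\Theta(X)$ equal to its linear leading term up to a bounded correction at each end, $\alpha_\star$ decays exponentially at rates $|\thetasharp\kappa_{+\infty}/\lamsharp|$ at $+\infty$ and $|\thetasharp\kappa_{-\infty}/\lamsharp|$ at $-\infty$. When $\kappa'\in\mathcal{S}(\R)$, $\kappa$ and all its derivatives are bounded, so repeatedly differentiating $\alpha_\star'=-(\thetasharp/\lamsharp)\kappa\sigma_2\alpha_\star$ expresses $\alpha_\star^{(n)}$ as a finite sum of bounded-coefficient multiples of $\alpha_\star,\dots,\alpha_\star^{(n-1)}$; by induction every derivative of $\alpha_\star$ decays exponentially, and in particular $\alpha_\star\in H^s(\R)$ for all $s\in\mathbb{N}$.

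The ODE integration settles (2), (3), (4) essentially by hand, so the one genuinely technical ingredient is the essential-spectrum computation in (1): since $\thetasharp\kappa(X)\sigma_1$ is bounded but non-decaying it is not a relatively compact perturbation of $\mathcal{D}_0$, Weyl's theorem does not apply directly, and the effort goes into the decomposition principle identifying $\sigma_{\rm ess}(\mathcal{D})$ with $\sigma_{\rm ess}(\mathcal{D}_{+\infty})\cup\sigma_{\rm ess}(\mathcal{D}_{-\infty})$, whose essential input is the compactness of multiplication by $\kappa-\kappa(\pm\infty)$ (an $L^2$ function near $\pm\infty$) composed with a resolvent. Alternatively, one may invoke an off-the-shelf decomposition theorem for one-dimensional self-adjoint (Dirac) operators with distinct constant coefficients at $\pm\infty$.
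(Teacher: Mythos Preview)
Your proof is correct and follows essentially the same route as the paper: for (2)--(4) you explicitly integrate the first-order system $\alpha'+(\thetasharp/\lamsharp)\kappa(X)\sigma_2\alpha=0$ by diagonalizing $\sigma_2$, exactly as the paper does (their change of variables $\alpha=S\gamma$ is precisely passage to the $\sigma_2$-eigenbasis), and for (1) you compare with the constant-coefficient limiting operators $\mathcal{D}_{\pm\infty}$. If anything, your treatment of (1) is more careful than the paper's one-line invocation of ``compact perturbation,'' since you correctly note that $\thetasharp(\kappa-\kappa_{\pm\infty})\sigma_1$ is not globally relatively compact and sketch the decomposition/partition-of-unity argument that actually justifies $\sigma_{\rm ess}(\mathcal{D})=\sigma_{\rm ess}(\mathcal{D}_{+\infty})\cup\sigma_{\rm ess}(\mathcal{D}_{-\infty})$.
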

\medskip

\begin{figure}
\includegraphics[width=\textwidth]{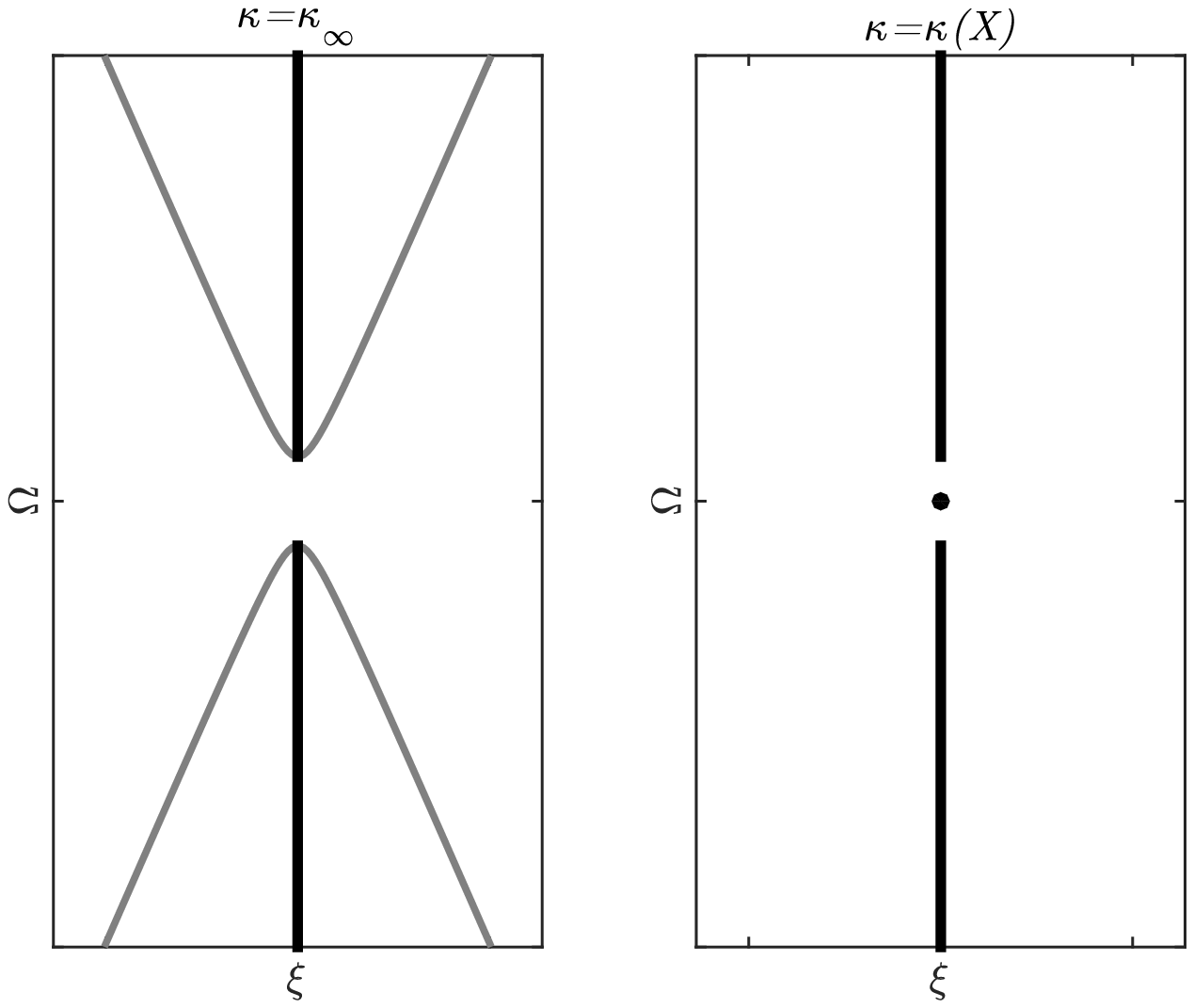}
\caption{{\bf Left panel:} Spectrum of the Dirac operator $\mathcal{D}$ for $\kappa\equiv\pm\kappa_\infty$
constant. Light gray curves are dispersion curves $\pm\Omega(\xi)$.  Continuous spectrum shown in
black. {\bf Right panel:} Spectrum for $\kappa\equiv\kappa(X)=\tanh(X)$, a domain wall. Black dot marks
the mid-gap eigenvalue $\Omega=0$.}
\label{dirac_kappa_spectrum}
\end{figure}

\begin{remark}[Topological Stability]
The eigenpair with eigenvalue zero  (part 2 of Theorem \ref{thm:dirac_bound_state}) is ``topologically stable'' or
``topologically protected''. Indeed,
 the zero-eigenvalue persists when 
perturbing $\kappa(X)$ arbitrarily within a bounded region, while preserving the  asymptotic behavior of $\kappa$:\ 
$\kappa(X)\to\pm\kappa_{\pm\infty}$ as $X\to\pm\infty$.
\end{remark}
\medskip

\begin{proofof} \textit{Proof of Theorem \ref{thm:dirac_bound_state}}.
(1) The essential spectrum of $\mathcal{D}$ can be computed by noting that at $\pm\infty$, $\mathcal{D}$ is a compact
perturbation of the operator $\mathcal{D}_{\pm}=
i\lamsharp\sigma_3\partial_{X}\pm\thetasharp\kappa_{\pm\infty}\sigma_1$.
\\
(2) Without loss of generality, assume that $\kappa_\infty>0$ and $\kappa_{-\infty}>0$. The eigenvalue problem
\eqref{dirac_eqn} with eigenvalue parameter $\Omega=0$ is:
\begin{equation}
\label{bound_state_proof_eqn1}
 \partial_{X}\alpha(X) = \frac{1}{\lamsharp}
 \begin{pmatrix}
  0&i\vartheta_\sharp\kappa(X)\\-i\vartheta_\sharp\kappa(X)&0
 \end{pmatrix}
\alpha(X).
\end{equation} 
Recall that $\lambda_\sharp\ne0$ and that we have assumed $\vartheta_\sharp\ne0$; see \eqref{thetasharp_defn} and Remark \ref{lambda_theta_generic}.

The matrix on the right hand side of \eqref{bound_state_proof_eqn1} has
eigenvalues $\Lambda_{\pm}(X)=$ \\ $\pm\thetasharp\kappa(X)/\lamsharp$ with corresponding \emph{constant} eigenvectors
$
(1, -i)^T~\leftrightarrow~ \Lambda_{+}(X)$ and  $(1,i)^T~\leftrightarrow~ \Lambda_{-}(X)$.
Let 
\[ \alpha(X)=S\gamma(X),\ \ {\rm where }\ \ 
 S = \begin{pmatrix}1&1\\-i&i\end{pmatrix}.\] 
Then,
\begin{equation}
\label{bound_state_proof_eqn2}
 \partial_{X}\gamma(X) = \frac{\thetasharp\kappa(X)}{\lamsharp}
 \begin{pmatrix}
  1&0\\0&-1
 \end{pmatrix}
\gamma(X).
\end{equation}
Equation \eqref{bound_state_proof_eqn2} has solutions
\begin{equation}
\label{gamma_solns}
 \begin{pmatrix}\gamma_1(X)\\ \gamma_2(X)\end{pmatrix}=
\begin{pmatrix}\gamma_{10}e^{(\thetasharp/\lamsharp)\int^{X}_a\kappa(s)ds}\\ 
\gamma_{20}e^{-(\thetasharp/\lamsharp)\int^{X}_a\kappa(s)ds}\end{pmatrix}.
\end{equation} We seek localized solutions. That is, we seek $\gamma(X)\rightarrow0$ as $X\rightarrow\pm\infty$. We
then have two cases to consider: either $\thetasharp/\lamsharp>0$ or $\thetasharp/\lamsharp<0$. We will consider the
case with $\thetasharp/\lamsharp>0$; the case $\thetasharp/\lamsharp<0$  is handled similarly. 

Since $(\thetasharp/\lamsharp)\kappa(s)>0$ for  $s$ positive and sufficiently large, we set $\gamma_{10}=0$. 
By the hypotheses on $\kappa(X)$, we have for $X\to\infty$, 
\begin{align*}
e^{-(\thetasharp/\lamsharp)\int^{X}_a\kappa(s)ds}&=
e^{-(\thetasharp/\lamsharp)\int^{X}_a(\kappa(s)-\kappa_{\infty})ds}e^{-(\thetasharp/\lamsharp)\kappa_{\infty}
(X-a)} =\mathcal{O}\left(e^{-(\thetasharp/\lamsharp)\kappa_{\infty}X}\right).
\end{align*}
Similarly,  for  $X<a$, $e^{-(\thetasharp/\lamsharp)\int^{X}_a\kappa(s)ds}
=\mathcal{O}\left(e^{(\thetasharp/\lamsharp)\kappa_{-\infty}X}\right) ~\text{as}~ X\rightarrow -\infty$.

Thus $\Omega=0$ is a simple eigenvalue with corresponding one-dimensional nullspace spanned by
\begin{align}
 \alpha_\star(X) &\equiv \begin{pmatrix} \alpha_{\star,1}(X)\\ \alpha_{\star,2}(X) \end{pmatrix}\ =\ S\gamma(X)\ =\  \gamma_{0} \begin{pmatrix}1\\i\end{pmatrix}
 e^{-(\thetasharp/\lamsharp)\int^{X}_a\kappa(s)ds},
 \label{dirac-m-soln}
\end{align} 
where the constant $\gamma_{0}\in\C$ may be chosen to satisfy the normalization $\|\alpha_\star\|_{L^2}=1$. 
\\
(3)
If $\kappa\in C^{\infty}(\R)$, clearly $\alpha_\star(X)$ is in $C^{\infty}(\R) \cap H^{s}(\R)$  for all
$s\in\mathbb{N}$.
\\
(4) If $\kappa_{+\infty}\times\kappa_{-\infty}<0$, then it follows from \eqref{gamma_solns} that $\Omega=0$ is not an eigenvalue of \eqref{dirac_eqn}.
\end{proofof}

\chapter{Main Theorem\ - \ Bifurcation of Topologically Protected States}\label{sec:main_result}

%
Let $V_\ee\in L^2_\ee\cap C^\infty$ and $W_\oo\in L^2_\oo\cap C^\infty$. 
 Let $H_0=-\D_x^2+V_\ee$  and assume $(E_\star,k_\star)$ is a Dirac point 
in the sense Definition \ref{dirac-pt-gen}.

 In this section we give precisely state  the main result of this paper on the  bifurcation of topologically
protected bound state solutions of the Schr\"odinger operator $H_\delta=-\D_x^2+U_\delta$:
\begin{align}
 \label{perturbed_schro_prob1}
H_\delta\Psi(x)&\equiv \left(-\partial_x^2+U_{\delta}(x)\right)\Psi(x)= E\ \Psi(x), \quad x\in\R, \quad \Psi \in L^2(\R), \\
U_\delta(x)&=V_\ee(x)+\delta\kappa(\delta x)W_\oo(x) \nn .
\end{align} 
We express the bifurcating family $\delta\mapsto (\Psi^\delta,E^\delta)$ as a two-term truncation of the formal multiscale expansion of Chapter
\ref{sec:bound_states} plus a
corrector:

\begin{equation}
 \label{main_result_ansatz}
 \begin{split}
 \Psi^\delta(x) &= \delta^{1/2}\psi^{(0)}(x,X)+\delta^{3/2}\psi_p^{(1)}(x,X)+\delta^{3/2}\eta(x),\ \ X=\delta x \ , \\
 E^\delta &= E_\star+\delta^2\mu.
 \end{split}
\end{equation} 
Here, from Section \ref{subsec:multiscale_analysis} of Chapter \ref{sec:bound_states} we have
\begin{align}
 \psi^{(0)}(x,X) &= \alpha_{\star,1}(X)\Phi_1(x)+\alpha_{\star,2}(X)\Phi_2(x)\ \ {\rm and}\label{main_result_psi0}\\
  \psi^{(1)}_p(x,X) &= \left(R(E_{\star})G^{(1)}\right)\left(x,X\right), \label{main_result_psi1}
  \end{align} 
  where
  \begin{align}
 &G^{(1)}(x,X;\psi^{(0)},\psi_p^{(1)}) \nonumber \\
 &\qquad= \sum_{j=1}^2\Big[2\partial_{x}\Phi_j(x)\partial_{X} \alpha_{\star,j}(X)\Phi_j(x)
-\kappa(X)W_{\oo}(x)\alpha_{\star,j}(X)\Phi_j(x)\Big]\nn\\
 &\qquad= e^{ik_\star x}\ \Big[ 2(\D_x+ik_\star)\D_X-\kappa(X)W_o(x)\Big]\ \sum_{j=1}^2\alpha_{\star,j}(X)p_j(x;k_\star) .
\label{Fdefn}\end{align}
$\psi^{(1)}_p(x,X) $ is a particular solution of  \eqref{perturbed_schro_delta1}-\eqref{G1def} and
 $\alpha_\star=\left(\alpha_{\star,1},\alpha_{\star,2}\right)$ is a $L^2(\R_X)-$ normalized eigenstate of the Dirac
operator, $\mathcal{D}$ with eigenvalue $\Omega=0$:
\begin{equation}
 \label{main_result_dirac_eqn}
 \mathcal{D}\alpha_{\star}(X) = 0,\ \ \alpha_\star(X)\to0 \ \ {\rm as}\ \ X\to\pm\infty.
\end{equation} 

 We shall solve for $\eta=\eta^\delta(x)$ and
$\mu=\mu(\delta)$ for $0<\delta\ll1$ with appropriate bounds on $\eta^\delta$ and thereby establish our main result:
  \medskip

\begin{theorem}
 \label{thm:validity}
[{\bf Bifurcation of  topologically protected gap modes}]\\
Consider the eigenvalue problem \eqref{perturbed_schro_prob1}.
Assume $-\D_x^2+V_\ee$ has a Dirac point in the sense of Definition \ref{dirac-pt-gen}. It follows, in particular, that
\begin{equation}
\lamsharp=2i\inner{\Phi_1,\partial_x\Phi_1}_{L^2([0,1])}\ne0 . \nn
\end{equation}
Furthermore, assume 
\begin{equation}
\thetasharp=\inner{\Phi_1,W_{\oo}\Phi_2}_{L^2([0,1])}\ne0;
\label{theta-ne0}
\end{equation}
The condition $\lambda_\sharp\times\vartheta_\sharp\ne0$ holds for generic $V_\ee$ and $W_\oo$.

Assume that  $\kappa(X)\rightarrow\pm\kappa_{\infty}$ as $X\to\pm\infty$ and moreover that 
$\kappa^2(X)-\kappa_\infty^2$ and its derivatives decay rapidly at infinity, {\it e.g.} Schwartz class.
\footnote{ From the analysis of Section \ref{analysis-blDirac} of Chapter \ref{sec:proof-exists-mode} we see that the proof goes through under the assumptions
that the functions
$
\Upsilon_1(X)=\kappa^2(X)-\kappa_\infty^2\ \ {\rm and}\ \ \Upsilon_1(X)=\kappa'(X)
$ satisfy:
\begin{equation}
\int_\R (1+|X|)^a |\Upsilon_\ell(X)| dX<\infty\ \ \textrm{for some $a>5/2$ \ and }\ \ \int_\R |\D_X\Upsilon_\ell(X)|
dX<\infty,
 \  \ell=1,2.
\label{kappa-hypotheses}
\end{equation}
These hypotheses are required for the boundedness of {\it wave operators} on $W^{k,2}(\R),\ k=0,2$; see Theorem
\ref{thm7} and the discussion following it. The conditions \eqref{kappa-hypotheses} are easily satisfied if 
$\kappa^2(X)-\kappa^2_\infty$ is a Schwartz class function, {\it e.g.} $\kappa(X)=\tanh(X)$, 
$\kappa^2(X)-\kappa_\infty^2=-{\rm sech}^2(X)$.
}
Then, the following holds:
\begin{enumerate}
\item There exists  $\delta_0>0$ and a branch of solutions:
\[\delta\mapsto (E^\delta,\Psi^\delta)\in\mathcal{I}_\delta\times H^2(\R),\ \ 0<\delta<\delta_0, \]
  of the eigenvalue problem for $H_\delta$. 
\item This branch  bifurcates from the band intersection energy, $E_\star$, at $\delta=0$ into the gap
$\mathcal{I}_\delta$ (Proposition \ref{kappa_spec}). 
\item Furthermore, $\Psi^\delta$ is well-approximated by a slowly varying and spatially decaying modulation of the
degenerate Floquet-Bloch modes $\Phi_1$ and $\Phi_2$:
\begin{align}
&\left\|\ \Psi^\delta(\cdot)\ -\ \delta^{\frac12}\psi^{(0)}(\cdot,\delta\cdot)\ \right\|_{H^2(\R)}\nn\\
&=\left\|\ \Psi^\delta(\cdot)\ -\
\delta^{\frac12}\left[\alpha_{\star,1}(\delta\cdot)\Phi_1(\cdot)+\alpha_{\star,2}(\delta\cdot)\Phi_2(\cdot)\right]\
\right\|_{H^2(\R)}\
\lesssim\ \delta\ ,
\label{Psi-error}\\
&E^\delta = E_\star\ +\ E^{(2)}\delta^2+o(\delta^2). \label{E-error}
\end{align}
\item The amplitude vector, $\alpha_\star(X)=\left(\alpha_{\star,1}(X),\alpha_{\star,2}(X)\right)$, is an  $L^2(\R_X)$- 
normalized topologically protected 
$0$-energy eigenstate of the $1D$ Dirac operator, $\mathcal{D}\equiv i\sigma_3\lambda_\sharp\D_X+
\vartheta_\sharp\kappa(X)\sigma_2$; see \eqref{dirac_eqn}. The $2^{\rm nd}$ order eigenfrequency corrector, $E^{(2)}$, is
given by \eqref{solvability_cond_E2}.
\end{enumerate}
\end{theorem}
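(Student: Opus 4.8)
The plan is to set up the ansatz \eqref{main_result_ansatz} as an exact (not merely formal) equation for the corrector pair $(\eta,\mu)$, and then solve it by a fixed-point argument for $0<\delta\ll1$. Substituting \eqref{main_result_ansatz} into \eqref{perturbed_schro_prob1} and using that $\psi^{(0)}$ solves the order-$\delta^0$ equation exactly while $\psi^{(1)}_p$ solves the order-$\delta^1$ equation with the Dirac solvability condition satisfied (because $\alpha_\star$ is the zero mode of $\mathcal{D}$ and $E^{(1)}=0$), one finds that $\eta$ must satisfy an equation of the schematic form $(H_\delta-E_\star-\delta^2\mu)\eta = \delta^{1/2}\,\mathcal{R}[\delta,\mu]$, where the residual $\mathcal{R}$ collects: the $\mathcal{O}(\delta^2)$ terms from the exact two-scale substitution that are not cancelled by the truncated expansion, the terms coming from replacing $\kappa(X)W_\oo(x)$ acting on $\psi^{(0)}+\delta\psi^{(1)}_p$ by its exact value rather than its multiscale truncation, and the $\mu$-dependent term $-\delta^2\mu(\psi^{(0)}+\delta\psi^{(1)}_p)$. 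The key structural point, exactly as in the formal hierarchy, is that the leading piece of the residual is $\mathcal{O}(\delta^{3/2})$ in $L^2$ after the $\delta^{1/2}$ is factored out, so one hopes to close a contraction in a ball of radius $\mathcal{O}(1)$ (or $\mathcal{O}(\delta)$ with care) in an appropriate space.

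The heart of the argument is inverting $H_\delta - E_\star - \delta^2\mu$ on the relevant function space. The obstacle is that $E_\star$ sits at the band-crossing energy, which is precisely the center of the $\mathcal{O}(\delta)$ essential-spectrum gap of $H_\delta$ (Proposition \ref{kappa_spec}); so although $E_\star+\delta^2\mu$ lies in the resolvent set, the norm of the resolvent blows up like $\delta^{-1}$, and the source term is only $\mathcal{O}(\delta^{3/2})$, which is not enough by a crude bound. This forces a Lyapunov--Schmidt (frequency) decomposition: split $L^2(\R_x)$ into a ``near-energy'' part, spanned by Floquet--Bloch modes with quasimomentum near $k_\star=\pi$ and band index $b_\star,b_\star+1$ (where the dispersion is governed by the linear crossing \eqref{1d-crossing}), and a ``far-energy'' part, its complement. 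On the far part, $H_\delta - E_\star$ is boundedly invertible uniformly in $\delta$ (the spectrum there is bounded away from $E_\star$ by an $\mathcal{O}(1)$ amount), so that block is eliminated by an ordinary implicit-function/contraction step, expressing the far component of $\eta$ as a smooth function of its near component. Substituting back, the near-energy block of the equation, after rescaling $k = \pi + \delta\xi$ (equivalently passing to the slow variable $X$), reduces --- to leading order --- exactly to the inhomogeneous Dirac system \eqref{solvability_cond} for a corrector amplitude $\alpha^{(1)}$, with $\mu = E^{(2)}$ forced by the solvability condition \eqref{solvability_cond_E2}; here one uses that the Dirac operator $\mathcal{D}$ restricted to the orthogonal complement of its (simple) zero mode $\alpha_\star$ is boundedly invertible, which is where the hypotheses on the decay of $\kappa^2-\kappa_\infty^2$ and the boundedness of the associated wave operators on $W^{k,2}(\R)$, $k=0,2$, enter (see the footnote to the theorem).

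Concretely I would proceed in the following steps. First, derive the exact equation for $(\eta,\mu)$ and estimate the residual $\mathcal{R}[\delta,\mu]$ in $L^2(\R)$ and, where needed, in $H^2(\R)$, using Proposition \ref{flo-blo-dirac} to control $\psi^{(1)}_p$ and the rapid decay of $\alpha_\star$ and its derivatives (Theorem \ref{thm:dirac_bound_state}, parts 2--3). Second, introduce the frequency cutoffs $\chi_{a}$, $\overline{\chi}_a$ (with $a\sim\delta$ in rescaled variables), project the equation, and solve the far-frequency equation by contraction, obtaining a Lipschitz map $\eta_{\mathrm{far}} = \mathcal{E}(\eta_{\mathrm{near}},\mu;\delta)$ with small Lipschitz constant. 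Third, substitute into the near-frequency equation, rescale to the slow variable, and identify the limiting operator as $\mathcal{D}$ (with eigenvalue parameter $\mu-E^{(2)}$); invoke invertibility of $\mathcal{D}$ on $\{\alpha_\star\}^\perp$, plus the wave-operator bounds, to invert the near block and pick up one more scalar solvability condition fixing $\mu = E^{(2)} + o(1)$. Fourth, assemble a single fixed-point map on $(\eta,\mu)$ in, say, $H^2(\R)\times\R$, verify it is a contraction on a ball of radius $\lesssim\delta$ for $\delta<\delta_0$, and conclude existence and uniqueness of the branch. Fifth, read off the estimates \eqref{Psi-error} and \eqref{E-error} from the size of $\eta$ and the expansion of $\mu$, and note topological protection is inherited from that of $\alpha_\star$ (the zero mode of $\mathcal{D}$ persists under any localized change of $\kappa$ preserving the asymptotics, by Theorem \ref{thm:dirac_bound_state}). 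The main obstacle, and the step requiring the most care, is the uniform-in-$\delta$ inversion on the near-frequency subspace: one must show that the $\mathcal{O}(\delta)$-small gap does not obstruct solvability, which works only because the source term, once the genuine Dirac solvability conditions have been imposed, lies in the ``good'' subspace where the rescaled operator is $\mathcal{D}|_{\{\alpha_\star\}^\perp}$ with an $\mathcal{O}(1)$ inverse --- and controlling the error between the true near-frequency block of $H_\delta-E_\star$ and this model Dirac operator, uniformly on $H^2$, is exactly what the wave-operator machinery is for.
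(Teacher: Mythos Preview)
Your outline matches the paper's proof: Lyapunov--Schmidt reduction via a near/far Floquet--Bloch decomposition, elimination of the far component by a contraction (uniform resolvent bound away from $E_\star$), reduction of the near block after rescaling $k=k_\star+\delta\xi$ to a band-limited Dirac system which is inverted on $\{\alpha_\star\}^\perp$ using the wave-operator bounds, and a final scalar equation determining $\mu=E^{(2)}+o(1)$. The only refinements you will need when filling in details are (i) placing the near/far cutoff at $|k-k_\star|\le\delta^\tau$ with $0<\tau<\tfrac12$ (not $\sim\delta$), since this intermediate scale is what lets the far-resolvent bound $|E_b(k)-E_\star|^{-1}\lesssim\delta^{-\tau}$ and the near-to-Dirac error terms close simultaneously, and (ii) a Poisson-summation-in-$L^2_{\rm loc}$ argument (the paper's Theorem~\ref{psum-L2} and Lemma~\ref{poisson_exp}) to expand the inner products $\langle\Phi_\pm(\cdot,k_\star+\delta\xi),\kappa(\delta\cdot)W_\oo\,\eta_{\rm near}\rangle_{L^2(\R)}$ and extract the leading Dirac coupling $\vartheta_\sharp\widehat{\kappa\beta}$ with controllable remainder.
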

\medskip

\begin{remark}\label{on-theta-sharp}
As discussed in Section \ref{generic-Dirac} of Chapter \ref{1dperiodic&dirac},  $\lambda_\sharp\ne0$ holds for generic $V_\ee\in L^2_\ee$.
{\it What about condition \eqref{theta-ne0}: $\vartheta_\sharp\ne0$?} Consider the case where 
\[ V_\ee(x)\equiv0\ \  \textrm{and}\ \ W_\oo(x)= \sum_{p\in 2\Z_+1}w_p \cos(2\pi px)=\frac12\sum_{p\in 2\Z+1}w_p
e^{2i\pi px}.\]
Suppose we wish to study the bifurcation of localized states from the Dirac point $(k_\star,E_{\star,n})$, where
$E_{\star,n}=\pi^2(2n+1)^2$. In this case, the associated degenerate subspace is spanned by the functions
$\Phi_1(x)=e^{i\pi x}e^{2\pi inx}$ and $\Phi_2(x)=e^{i\pi x}e^{-2\pi i (n+1)x}$. A simple calculation gives
\begin{equation}
\vartheta_{\sharp,n} = \frac12 w_{2n+1}.\label{theta-sharp-n}
\end{equation}
Therefore, if $ w_{2n+1}\ne0$, there is a bifurcation from $(k_\star,E_{\star,n})$ seeded by the topologically
protected zero mode of the one-dimensional Dirac operator. Furthermore, if all Fourier modes are active in $W_\oo$, {\it
i.e.} $w_{2p+1}\ne0$ for all $p\in\Z$, then there are such bifurcations from all Dirac points of $\D_x^2$.

Figure \ref{perturbed-hills} in the Introduction corresponds to a case where $V_\ee=0$, $w_1=w_3=w_5=1\ne0$ and
$w_{2p+1}=0$ for all $p\ge3$. Thus, $\vartheta_{\sharp,j}\ne0,\ j=1,2,3$ and Theorem \ref{thm:validity} explains the
observed bifurcations.
\medskip

\begin{figure}
\includegraphics[width=\textwidth]{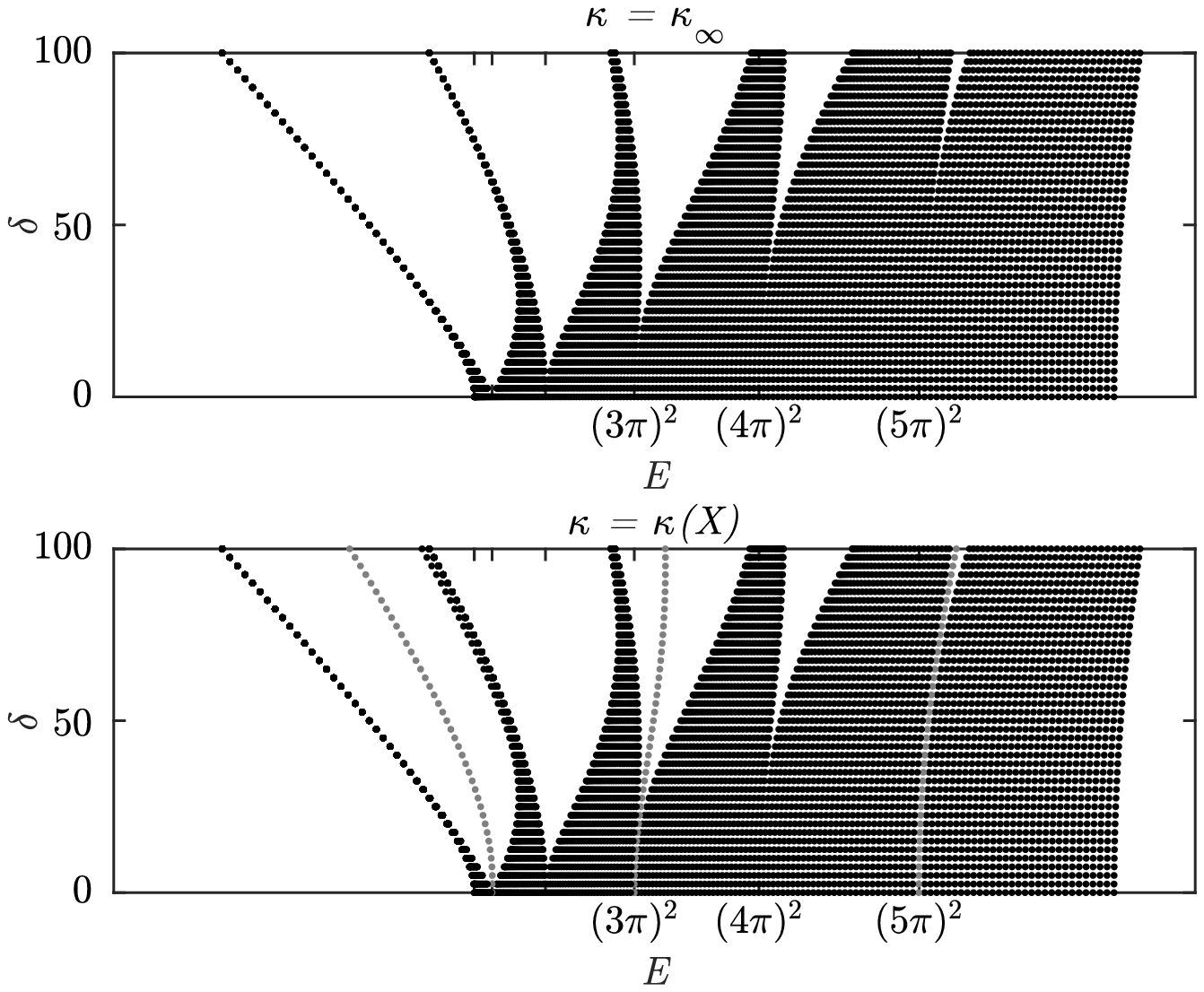}
\caption{{\bf Top panel:} Spectrum of Hill's operator: $-\D_x^2+\delta\kappa_\infty\cos(2\pi
x)$; see, for example
\cites{Meixner-Schaefke:54,magnus2013hill,WK:87}. {\bf Bottom panel:} Spectrum of the domain-wall modulated
periodic potential: $-\D_x^2+\delta\kappa(\delta x) \cos(2\pi
x)$ with $\kappa(X)=\tanh(X)$. Bifurcating branches of edge states are the (gray) curves emanating from
the points: $(E_{\star,m},\delta=0)$ where $(k_\star=\pi,E_{\star,m}\equiv(2m+1)^2\pi^2)$ are Dirac points. Theorem \ref{thm:validity} applies the bifurcation from the $m=0$ Dirac point. For $m\ge1$, $\vartheta_{\sharp,m}=0$ and the bifurcation appears to occur at higher order; see Remark \ref{on-theta-sharp-eq0}.}
\label{perturbed_hills_V0}
\end{figure}
\end{remark}

\begin{remark}\label{on-theta-sharp-eq0} {\it What if $\vartheta_\sharp=0$?} Figure \ref{perturbed_hills_V0} displays
spectra in the case  where $V_\ee=0$ and $W_\oo$ has only one non-zero Fourier cosine mode, the first. Hence, $w_1\ne0$
but $w_{2p+1}=0$ for all $p\ge1$. Therefore, Theorem \ref{thm:validity} explains the first bifurcation from
$E_\star=\pi^2$ at $\delta=0$, but the condition $\vartheta\ne0$ is violated for all higher
energy Dirac points, $E_\star=(3\pi)^2, (5\pi)^2,\dots$. Nevertheless we see
bifurcations. A higher order analysis, not pursued in this article, is necessary. Note that since the $n^{th}$ gap
width is                                                                                                                
   $\mathcal{O}(\delta^{\rho(n)})$ with $ \rho(n)\uparrow$ as $n\uparrow$,
\cites{Meixner-Schaefke:54,magnus2013hill,WK:87}, we expect that an expansion of the
bifurcating eigenvalue from the $n^{th}$ gap  to be of the form: 
 $E_n^\delta\approx E_{n,\star}+o(\delta^{\rho(n)})$.
\end{remark}

\medskip

\begin{remark}[Higher order expansion]
\label{higher-order}
The validity of the multiple scale expansion \eqref{formal-E}-\eqref{formal-psi} to any specified \underline{finite}
order in $\delta$ can be proved by the same methods used to prove Theorem \ref{thm:validity}. We omit this
generalization for ease of presentation.
\end{remark}
\medskip

\begin{remark}\label{H1bounds-psi01} We shall prove Theorem \ref{thm:validity} by showing that $\Psi^\delta$ 
can be expanded as in \eqref{main_result_ansatz}, where
\begin{align*}
&\|\delta^{\frac12}\psi^{(0)}(\cdot,\delta\cdot)\|_{H^1(\R)} =
\|
\delta^{\frac12}\left[\alpha_{\star,1}(\delta\cdot)\Phi_1(\cdot)+\alpha_{\star,2}(\delta\cdot)\Phi_2(\cdot)\right]\|_{
H^1}=
\mathcal{O}(1),\\
& \norm{\delta^{3/2}\psi^{(1)}_p(\cdot,\delta\cdot)}_{H^1(\R)} \leq C\delta,\qquad
  \norm{\delta^{3/2}\eta}_{H^1(\R)} \leq C\delta.
  \end{align*}
\end{remark}

\chapter{Proof of the Main Theorem}\label{sec:proof-exists-mode}

In this section we prove Theorem \ref{thm:validity}. 
 We begin with bounds on the first two terms in the expansion
\eqref{main_result_ansatz}, and then proceed to the study of the equation for the corrector $(\mu,\eta(x))$
 in \eqref{main_result_ansatz}.

\begin{remark}
 \label{regularity}
 We shall make frequent use of the regularity of the solutions $\Phi\equiv(\Phi_1(x),\Phi_2(x))^T$
and $\alpha_{\star}(X)\equiv(\alpha_{\star,1}(X),\alpha_{\star,2}(X))^T$ throughout the proof:
\begin{itemize}
 \item[(a)]  $\Phi\in C^{\infty}(\R)$, which follows from $V_{\ee} \in C^{\infty}(\R)$ and elliptic
regularity, and
 \item[(b)]  $\alpha_\star(X)$ is in Schwartz class, $\mathcal{S}$. Furthermore,  $\alpha_\star(X)$ and its derivatives
with respect to $X$ are all exponentially decaying as $|X|\to\infty$; see Theorem \ref{thm:dirac_bound_state}.          
     
\end{itemize}    
\end{remark}
\medskip

The following lemma, proved in Appendix \ref{psi_bound_proof},  
lists $H^s$ bounds on $\psi^{(0)}$ and $\psi^{(1)}$, which will be used in the proof of Theorem \ref{thm:validity};
see also Remark \ref{H1bounds-psi01}.\medskip

\begin{lemma}[$H^s$ bounds on $\psi^{(0)}(x,X)$ and $\psi^{(1)}(x,X)$] 
 \label{lemma:psi_bounds} For all $s\in\mathbb{N}$, there exists $\delta_0>0$, such that for all $0<\delta<\delta_0$,
the leading order expansion terms $\psi^{(0)}(x,X)$ and $\psi^{(1)}_p(x,X)$ displayed in \eqref{main_result_psi0} and
\eqref{main_result_psi1} satisfy the bounds:
\begin{align}
 \norm{\psi^{(0)}(\cdot,\delta\cdot)}_{H^s(\R)} + \norm{\partial_X^2\psi^{(0)}(x,X)\Big|_{X=\delta
x}}_{L^2(\R_x)}\ &\lesssim \delta^{-1/2}, ~~~ \label{psi0_bdds}\\
 \norm{\psi^{(1)}_p(\cdot,\delta\cdot)}_{H^s(\R)} &\lesssim \delta^{-1/2}, \label{psi_H_bdds} \\
 \norm{\partial_X^2\psi^{(1)}_p(x,X)\Big|_{X=\delta x}}_{L^2(\R_x)}\ +\ 
 \norm{\partial_x\partial_X\psi^{(1)}_p(x,X)\Big|_{X=\delta x}}_{L^2(\R_x)} &\lesssim \delta^{-1/2}.
\label{psi1_specific_bdds}
\end{align}
\end{lemma}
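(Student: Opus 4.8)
\emph{Proof strategy.}\ The plan is to reduce both sets of bounds to one elementary scaling estimate, using only the regularity of the two building blocks. First I would record the structural facts: by elliptic regularity (Remark \ref{regularity}(a)) the Floquet--Bloch modes $\Phi_1,\Phi_2$ are smooth and $k_\star$-pseudo-periodic, hence bounded on $\R$ together with all their $x$-derivatives; and by Theorem \ref{thm:dirac_bound_state} the amplitude vector $\alpha_\star=(\alpha_{\star,1},\alpha_{\star,2})$ and all of its $X$-derivatives lie in Schwartz class (indeed decay exponentially), while the hypotheses on $\kappa$ in Theorem \ref{thm:validity} give that $\kappa$ is bounded and its derivatives $\kappa^{(\ell)}$, $\ell\ge1$, decay rapidly. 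The only analytic input needed is the scaling inequality: if $f\in L^\infty(\R)$ and $g\in L^2(\R)$ then $\|f(\cdot)\,g(\delta\,\cdot)\|_{L^2(\R)}\le\|f\|_{L^\infty}\,\delta^{-1/2}\,\|g\|_{L^2(\R)}$, immediate from the substitution $x\mapsto x/\delta$.

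For $\psi^{(0)}$ (see \eqref{main_result_psi0}) I would write $\psi^{(0)}(x,\delta x)=\sum_{j=1}^{2}\alpha_{\star,j}(\delta x)\,\Phi_j(x)$, expand $\partial_x^{s}$ by Leibniz, and note that every resulting term has the form $\delta^{m}\,\alpha_{\star,j}^{(m)}(\delta x)\,\Phi_j^{(s-m)}(x)$ with $0\le m\le s$; since $\Phi_j^{(s-m)}\in L^\infty$ and $\alpha_{\star,j}^{(m)}\in L^2$, the scaling inequality bounds its $L^2(\R)$-norm by $\delta^{m-1/2}\le\delta^{-1/2}$, which yields \eqref{psi0_bdds}. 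The slow-variable term $\partial_X^2\psi^{(0)}(x,X)\big|_{X=\delta x}=\sum_j\alpha_{\star,j}''(\delta x)\Phi_j(x)$ is bounded the same way.

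For $\psi^{(1)}_p$ (see \eqref{main_result_psi1}--\eqref{G1def}) I would first make its bilinear structure explicit. Since $R(E_\star)$ acts only in $x$ and maps $L^2_{k_\star}(\R_x)$ boundedly into $H^2_{k_\star}(\R_x)$ --- and, by bootstrapping, $H^s_{k_\star}\to H^{s+2}_{k_\star}$ for each $s$ --- and since the choice $(E^{(1)},\alpha)=(0,\alpha_\star)$ renders $G^{(1)}(\cdot,X)$, hence each $\partial_X^{m}G^{(1)}(\cdot,X)$, orthogonal to the $X$-independent nullspace ${\rm span}\{\Phi_1,\Phi_2\}$ for a.e.\ $X$ (this is exactly the solvability identity $\mathcal D\alpha_\star=0$), one obtains, with $A_j:=2\,R(E_\star)\partial_x\Phi_j$ and $B_j:=-R(E_\star)[W_\oo\Phi_j]$,
\[ \psi^{(1)}_p(x,X)=\sum_{j=1}^{2}\Big(\alpha_{\star,j}'(X)\,A_j(x)+\kappa(X)\,\alpha_{\star,j}(X)\,B_j(x)\Big), \]
where $A_j,B_j\in H^{s}_{k_\star}(\R_x)$ for all $s$ and are therefore smooth and bounded with all $x$-derivatives bounded. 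Then \eqref{psi_H_bdds}--\eqref{psi1_specific_bdds} follow exactly as for $\psi^{(0)}$: applying $\partial_x$, $\partial_X$, or $\partial_x\partial_X$ followed by Leibniz, every term is a bounded function of $x$ (an $x$-derivative of $A_j$ or $B_j$) times a function of $\delta x$ assembled from derivatives of $\alpha_{\star,j}$ and of $\kappa$ --- all bounded, and with the factor carrying an $\alpha_\star$-derivative lying in $L^2$ --- times a nonnegative power of $\delta$; the scaling inequality closes each term at $\delta^{-1/2}$.

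The one point that needs a little care --- though it is not a genuine obstacle --- is justifying the bilinear representation of $\psi^{(1)}_p$: one must check that $R(E_\star)$ is legitimately applied and commutes with $\partial_X$ while preserving membership in the range of $H_0-E_\star$, i.e.\ that $\partial_X^{m}G^{(1)}(\cdot,X)\perp{\rm span}\{\Phi_1,\Phi_2\}$ for all $m$ and a.e.\ $X$. This is immediate upon differentiating in $X$ the Fredholm solvability condition, which holds for every $X$ precisely because $\alpha_\star$ solves the Dirac equation on all of $\R$. After that, the whole lemma is two-scale bookkeeping with no further analytic content.
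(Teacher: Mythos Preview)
Your approach is correct and in fact more elementary than the paper's. For $\psi^{(0)}$ the arguments coincide. For $\psi^{(1)}_p$ the paper instead expands in the complete $L^2_{k_\star}$-basis $\{\Phi_b(\cdot;k_\star)\}_{b\ge1}$, writes $\|\psi^{(1)}_p(\cdot,\delta\cdot)\|_{L^2(\R)}^2$ as a double sum over bands $b,b'$, reduces the $\R_x$-integral to $[0,1]$ by periodicity, then applies Poisson summation to the resulting lattice sums $\sum_{m\in\Z} g_i(\delta(x+m))\overline{g_j(\delta(x+m))}$ and invokes the Weyl asymptotics $E_b\approx b^2$ for summability. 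Your separation-of-variables argument bypasses all of this: once $\psi^{(1)}_p$ is written as a finite sum of products (smooth pseudo-periodic in $x$) $\times$ ($L^2$ in $X$), the same Leibniz-plus-scaling estimate used for $\psi^{(0)}$ closes every bound. The paper's route is more in the spirit of the Floquet--Bloch machinery used later, but yours is shorter and self-contained.

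One small correction is needed. As written, your $A_j:=2R(E_\star)\partial_x\Phi_j$ and $B_j:=-R(E_\star)[W_\oo\Phi_j]$ are not well-defined: $\partial_x\Phi_j$ and $W_\oo\Phi_j$ are not individually orthogonal to $\{\Phi_1,\Phi_2\}$ (indeed $\langle\Phi_1,\partial_x\Phi_1\rangle=\lambda_\sharp/(2i)\ne0$ and $\langle\Phi_1,W_\oo\Phi_2\rangle=\vartheta_\sharp\ne0$), so $R(E_\star)$ cannot be applied to them. The fix is immediate: set $A_j:=2R(E_\star)\Pi_\perp\partial_x\Phi_j$ and $B_j:=-R(E_\star)\Pi_\perp[W_\oo\Phi_j]$, where $\Pi_\perp$ projects off $\mathrm{span}\{\Phi_1,\Phi_2\}$. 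Since $\Pi_\perp G^{(1)}(\cdot,X)=G^{(1)}(\cdot,X)$ by your orthogonality observation, the identity $\psi^{(1)}_p(x,X)=\sum_j\big(\alpha_{\star,j}'(X)A_j(x)+\kappa(X)\alpha_{\star,j}(X)B_j(x)\big)$ holds and the rest of your argument goes through unchanged.
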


\section{Rough strategy}\label{rough-strategy}
Substitution of expansion \eqref{main_result_ansatz} into the eigenvalue problem
\eqref{perturbed_schro_prob1}, yields an equation for the corrector $\eta(x)$, which depends on $\mu$:
\begin{align}
\label{eta_eqn}
&\left(-\partial_x^2+V_{\ee}(x)-E_{\star}\right)\eta(x) +\delta\kappa(\delta x)W_{\oo}(x)\eta(x)\\
&~~~=\delta\left(2\partial_x\partial_X-\kappa(X)W_{\oo}(x)\right)\psi^{(1)}_p(x,X)\Big|_{X=\delta x}\nn\\
&+\delta \mu\psi^{(0)}(x,\delta x) + \delta^2\mu\psi^{(1)}(x,\delta x)
+\delta^2\mu\eta(x) +\delta\D_X^2\psi^{(0)}+\delta^2\D_X^2\psi^{(1)}_p. \nonumber
\end{align}
To prove Theorem \ref{thm:validity}, we 
prove that \eqref{eta_eqn} has a solution, 
$(\eta^\delta,\mu^\delta)$, with $\eta^\delta\in H^2(\R)$ satisfying the bound
\begin{equation}
\label{etabound}
 \norm{\delta^{3/2}\eta^\delta}_{H^2(\R)} \leq C\delta\ .
 \end{equation}

Recall that if $f\in L^2(\R)$, then the Floquet-Bloch coefficients 
\[\widetilde{f}_b(k)=\inner{\Phi_b(\cdot,k),f(\cdot)}_{L^2(\R)}
=\lim_{N\to\infty} \int_{-N}^N \overline{\Phi_b(x,k)} f(x) dx\] 
are well-defined as an $L^2(\R)$ limit. Furthermore, we have the completeness relation:\begin{align}
\label{f_exp_floquet_bloch}
f(x) &= \frac{1}{2\pi}\  \sum_{b=1}^{\infty}\int_{\mathcal{B}}
\inner{\Phi_b(\cdot,k),f(\cdot)}_{L^2(\R)} \Phi_b(x,k) dk = \frac{1}{2\pi}\ \sum_{b=1}^{\infty}\int_{\mathcal{B}}\
\widetilde{f}_b(k)\ \Phi_b(x,k) dk \ ,
\end{align}
where equality holds in the sense of the $L^2-$ limit of partial sums.
\medskip

Through a systematic, but unfortunately long, calculation we shall derive a system of equations  
for $\{\widetilde{\eta}_b(k)\}_{b\ge1}$, which is formally equivalent to system
\eqref{eta_eqn}. This is the band-limited Dirac system of Proposition \ref{near_freq_compact}. We then prove
that the latter system has a solution, which is then used to construct a solution to \eqref{eta_eqn}.
We now embark on this derivation.

Recalling that $\widetilde{f}_b(k)=\inner{\Phi_b(\cdot,k),f(\cdot)}$, we take the inner product of \eqref{eta_eqn} with
$\Phi_b(x,k),\ b\ge1$ to obtain
\begin{align}
\label{eta_eqn_floquet_bloch}
&\left(E_b(k)-E_{\star}\right)\widetilde{\eta}_b(k) +\delta\inner{\Phi_b(\cdot,k),\kappa(\delta
\cdot)W_{\oo}(\cdot)\eta(\cdot)}_{L^2(\R)}\nn\\
&\qquad\quad = \delta F_b[\mu,\delta](k) + \delta^2\mu\  \widetilde\eta_b(k), \qquad b\ge1.
\end{align}
Here, for $b\geq1$, 
\begin{align}
& F_b[\mu,\delta](k)\equiv
F^{1,\delta}_b(k)\ + \mu F^{2,\delta}_b(k)\ +\ \delta\mu F^{3,\delta}_b(k)\ +\ F^{4,\delta}_b(k)\ +\ \delta
F^{5,\delta}_b(k), 
\label{Fb-def}
\end{align}
where
\begin{align}
 F^{1,\delta}_b(k) & \equiv
\inner{\Phi_b(x,k),\left(2\partial_x\partial_X-\kappa(X)W_{\oo}(x)\right)\psi^{(1)}_p(x,X)\Big|_{X=\delta
x}}_{L^2(\R_x)},\nn\\
F^{2,\delta}_b(k) & \equiv \inner{\Phi_b(x,k),\psi^{(0)}(x,\delta x)}_{L^2(\R_x)}, \nn\\
F^{3,\delta}_b(k) & \equiv \inner{\Phi_b(x,k),\psi^{(1)}_p(x,\delta x)}_{L^2(\R_x)},\nn\\
F^{4,\delta}_b(k) & \equiv \inner{\Phi_b(x,k),\left.\D_X^2\psi^{(0)}(x,X)\right|_{X=\delta x}}_{L^2(\R_x)},\nn\\
F^{5,\delta}_b(k) & \equiv \inner{\Phi_b(x,k),\left.\D_X^2\psi^{(1)}_p(x,X)\right|_{X=\delta x}}_{L^2(\R_x)}.
 \label{Fdef}
\end{align}
We shall show that the system \eqref{eta_eqn_floquet_bloch} has a solution,
$(\{\widetilde{\eta}^\delta_b(k)\}_{b\ge1},\mu(\delta))$, such that 
\begin{align}
&k\mapsto\widetilde{\eta}_b(k)\Phi(x,k)\ \ \textrm{is $2\pi-$ periodic for $k\in\R$ a.e. for $x\in\R$}\ ,\label{prop2}\\
&\sum_{b\ge1}\int_0^{2\pi} (1+b^2)^2 |\widetilde{\eta}_b(k)|^2 dk<\infty\ .\label{prop3}
\end{align}
It will then follow that 
\begin{equation*}
\eta(x)\ \equiv \frac{1}{2\pi}\  \sum_{b=1}^{\infty}\int_{\mathcal{B}}\
\widetilde{\eta}_b(k)\ \Phi_b(x,k) dk\ \in\ {\rm domain}(H_\delta)=H^2(\R)
\end{equation*}
and $(\eta^\delta,\mu^\delta)$ is an eigenpair for the eigenvalue problem \eqref{eta_eqn}.

\section{Detailed strategy: Decomposition into near and far energy components}\label{subsec:freq_decomp}
Recall now the definitions and notational conventions associated with the smoothed cutoff functions:
$\chi(|\xi|\le\delta^\tau)$ and $\overline{\chi}(|\xi|\le\delta^\tau)$; see \eqref{chi-def}.
We next decompose $\eta$ into its spectral components, $\eta_{\rm near}$ and $\eta_{\rm far}$,  near and away from
the Dirac point $(E_\star,k_\star)$. We write:
\begin{equation}
\label{eta_near+far}
 \eta(x) = \eta_{\rm near}(x) + \eta_{\rm far}(x),
\end{equation} where 
\begin{align}
\label{eta_near_expression}
 \eta_{\rm near}(x) &=
 \frac{1}{2\pi}\  \sum_{b=\pm}\int_{\mathcal{B}}  \widetilde{\eta}_{b,\rm near}(k)\  \Phi_b(x,k)\ dk, 
\end{align} 
(see Proposition \ref{flo-blo-dirac})
and
\begin{equation}
\label{eta_far_expression}
\eta_{\rm far}(x)\ = \frac{1}{2\pi}\ 
\sum_{b\in\mathbb{Z}}\int_{\mathcal{B}} \widetilde{\eta}_{b,\rm far}(k) \Phi_b(x,k)dk\ .
\end{equation} 
Here, 
\begin{align}
k'&\equiv k-k_\star=k-\pi\label{kprime}\\
 \widetilde{\eta}_{\pm,\rm near}(k) &\equiv \chi\left(\abs{k'}\leq\delta^{\tau}\right)
\widetilde{\eta}_{\pm}(k),\label{teta-near}\\
 \widetilde{\eta}_{b,\rm far}(k) &\equiv
\chi\Big(\abs{k'}\geq(\delta_{b,b_{\star}}+\delta_{b,b_{\star}+1})\delta^{\tau}\Big)
\widetilde{\eta}_{b}(k)\ ,\ \  b\ge1. \label{teta-far}
\end{align}
The parameter $\tau$ is presently free, but will be constrained to
satisfy: $0<\tau<1/2$.

We may rewrite the system of equations \eqref{eta_eqn_floquet_bloch} as two coupled subsystems:\ 
 a pair of equations, which governs the {\bf near energy components}:
\begin{align}
&\left(E_{-}(k)-E_{\star}\right)\widetilde{\eta}_{-,\rm near}(k) \label{near_cpt_2} \\
&\qquad+\delta\chi\left(\abs{k'}\leq\delta^{\tau}\right)\inner{\Phi_{-}(\cdot,k),\kappa(\delta
\cdot)W_{\oo}(\cdot)\left[\eta_{\rm near}(\cdot)+\eta_{\rm far}(\cdot)\right]}_{L^2(\R)}
\nonumber\\
&\qquad\qquad=\delta\chi(\abs{k'}\leq\delta^{\tau})
F_{-}[\mu,\delta](k) + \delta^2\mu\ \widetilde{\eta}_{-,{\rm near}}(k), \nn \\
&\left(E_{+}(k)-E_{\star}\right)\widetilde{\eta}_{+,\rm near}(k) \label{near_cpt_1} \\
&\qquad+\delta\chi\left(\abs{k'}\leq\delta^{\tau}\right)\inner{\Phi_{+}(\cdot,k),\kappa(\delta
\cdot)W_{\oo}(\cdot)\left[\eta_{\rm near}(\cdot)+  \eta_{\rm far}(\cdot)\right]}_{L^2(\R)}
\nonumber\\
&\qquad\qquad=\delta\chi(\abs{k'}\leq\delta^{\tau})
F_{+}[\mu,\delta](k) + \delta^2\mu\ \widetilde{\eta}_{+,{\rm near}}(k), \nn
\end{align} 
coupled to an infinite system governing the {\bf far-energy components}:
\begin{align}
&\left(E_b(k)-E_{\star}\right)\widetilde{\eta}_{b,\rm far}(k) \label{far_cpts} \\
&\qquad +\delta\chi(\abs{k'}\geq(\delta_{b,b_{\star}}+\delta_{b,b_{\star}+1})\delta^{\tau})
\left\langle\Phi_{b}(\cdot,k),\kappa(\delta
\cdot)W_{\oo}(\cdot)\left[\eta_{\rm near}(\cdot)+\eta_{\rm far}
(\cdot)\right] \right\rangle_{L^2(\R)} \nonumber\\
&\qquad\qquad=\delta\chi\Big(\abs{k'}\geq(\delta_{b,b_{\star}}+\delta_{b,b_{\star}+1})\delta^{\tau}\Big)
F_{b}[\mu,\delta](k) + \delta^2\mu\ \widetilde{\eta}_{b,\rm far}(k),\ \ \text{for}\  b\ge1. \nn
\end{align} 
\medskip

Conversely, consider the system \eqref{near_cpt_2}-\eqref{far_cpts}, in which we substitute for $\eta_{\rm near}$ and
 $\eta_{\rm far}$  the $\widetilde{\eta}_{\pm,\rm near}$ and $\widetilde{\eta}_{b,\rm far}\ -$dependent
expressions 
 \eqref{eta_near_expression} and \eqref{eta_far_expression}.  View the resulting system as one governing the unknowns 
$\widetilde{\eta}_{+,\rm near}$, $ \widetilde{\eta}_{-,\rm near}$, and $\widetilde{\eta}_{b,\rm far},\ b\ge1$. 
Suppose now that this system has a solution:  $\mu=\mu(\delta)$  and $\widetilde{\eta}^\delta_{\pm,\rm near}$, 
$\widetilde{\eta}_{b,\rm far}, b\ge1$, for   which  properties \eqref{prop2}-\eqref{prop3} hold. 
Define, using this solution,  $\eta^\delta_{\rm near}(x)$ and $\eta^\delta_{\rm far}(x)$ via \eqref{eta_near_expression}
and \eqref{eta_far_expression} and set $\eta^\delta(x)=\eta^\delta_{\rm near}(x)+\eta^\delta_{\rm far}(x)$. Then,
$\eta^\delta\in H^2(\R)$ and $(\eta^\delta(x),\mu(\delta))$ solves the corrector equation \eqref{eta_eqn}. 
%
\section{Analysis of far energy components\label{subsec:far_freq}}
Using \eqref{far_cpts}, we will solve for 
$\eta_{\rm far} = \eta_{\rm far}[\eta_{\rm near},\mu,\delta]$  and then
  substitute this mapping into \eqref{near_cpt_2} and \eqref{near_cpt_1}, to obtain a closed equation for the near energy
components,
$\eta_{\rm near}$. We begin with the following
\begin{lemma}
 \label{lemma:eval_bounds}
 There exists a $\zeta_0>0$ such that for all $0<\delta<\zeta_0$, the following holds: There exist positive
constants $C_0$, $C_1$ and $C_2$, independent of $\delta$, such that
\begin{align}
\abs{E_b(k)-E_{\star}} &\geq C_0(1+b^2),~~~b\notin\{ +,-\},~~~\textrm{for all}\  k\in\mathcal{B},
\label{eval_bdd_proof_1}\\
 \abs{E_b(k)-E_{\star}} &\geq C_1,~~~b\in\{ +,-\},~~~\zeta_0<\abs{k-k_{\star}}\leq k_{\star},
\label{eval_bdd_proof_2}\\
 \abs{E_b(k)-E_{\star}} &\geq C_2\delta^{\tau},~~~b\in\{ +,-\},~~~\delta^{\tau}\leq
\abs{k-k_{\star}}\leq\zeta_0. \label{eval_bdd_proof_3}
\end{align}
\end{lemma}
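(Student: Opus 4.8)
The plan is to prove the three estimates separately, organized by the band indices and the range of quasi-momenta involved, with a single small $\zeta_0>0$ fixed at the end. Throughout I use the symmetry and monotonicity relations \eqref{Ej-sym-mon} for $k\mapsto E_j(k)$, the uniform Weyl asymptotics $E_b(k)\ge c\,b^2$ for $k\in\mathcal{B}$ and $b\gg1$ (as in the proof of Lemma~\ref{lemma11}), and the first-order expansion of the dispersion curves at the Dirac point.

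For \eqref{eval_bdd_proof_1}, the key structural step is to exhibit a uniform gap separating $E_\star$ from every band $E_b$ with $b\notin\{b_\star,b_\star+1\}$. The linear-crossing expansion \eqref{1d-crossing} gives $E_{b_\star}(k)<E_\star<E_{b_\star+1}(k)$ for $0<|k-k_\star|<\zeta_0$, so $k\mapsto E_{b_\star}(k)$ has a strict local maximum at $k_\star=\pi$; being monotone on $[0,\pi]$ and on $[\pi,2\pi]$ and satisfying $E_{b_\star}(k)=E_{b_\star}(2\pi-k)$, it must then obey $E_{b_\star}(k)\le E_\star$ on all of $\mathcal{B}$, with equality only at $k_\star$; symmetrically $E_{b_\star+1}(k)\ge E_\star$, with equality only at $k_\star$. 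Since $E_\star$ is an $L^2_{k_\star}$-eigenvalue of multiplicity exactly $2$, one has $E_{b_\star-1}(k_\star)<E_\star$ (when $b_\star\ge2$) and $E_{b_\star+2}(k_\star)>E_\star$; combining this with the band ordering and the previous sentence gives $E_{b_\star-1}(k)<E_\star$ and $E_{b_\star+2}(k)>E_\star$ for \emph{all} $k$, hence, by continuity and compactness of $\mathcal{B}$, $M_-:=\max_{\mathcal{B}}E_{b_\star-1}<E_\star<\min_{\mathcal{B}}E_{b_\star+2}=:m_+$ (the lower half is vacuous if $b_\star=1$). By monotonicity in $b$ this yields $|E_b(k)-E_\star|\ge c_{\rm gap}:=\min\{E_\star-M_-,\,m_+-E_\star\}>0$ for all $b\notin\{b_\star,b_\star+1\}$ and all $k\in\mathcal{B}$. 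To insert the factor $1+b^2$: for $b\ge N_0$ the Weyl bound gives $|E_b(k)-E_\star|\ge\frac{c}{2}b^2\ge\frac{c}{4}(1+b^2)$, while for the finitely many $b<N_0$ the gap bound gives $|E_b(k)-E_\star|\ge c_{\rm gap}\ge\frac{c_{\rm gap}}{1+N_0^2}(1+b^2)$; so $C_0:=\min\{c/4,\;c_{\rm gap}/(1+N_0^2)\}$ works.

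For \eqref{eval_bdd_proof_2} and \eqref{eval_bdd_proof_3}, which concern the two bands $b\in\{b_\star,b_\star+1\}$ (those that coincide with $E_\pm$ near $k_\star$): estimate \eqref{eval_bdd_proof_2} is immediate from the previous paragraph, since $\{k:\zeta_0\le|k-k_\star|\le\pi\}$ is compact and excludes $k_\star$, so the continuous, strictly positive functions $E_\star-E_{b_\star}(k)$ and $E_{b_\star+1}(k)-E_\star$ attain positive minima there; take $C_1$ to be the smaller. For \eqref{eval_bdd_proof_3}, fix $\zeta_0$ small enough that the expansion \eqref{Epm-expand} holds on $\{k:|k-k_\star|\le\zeta_0\}$ with remainder $|\eta_\pm(k')|\le\frac{1}{2}$, where $k'=k-k_\star$; then for $\delta^\tau\le|k'|\le\zeta_0$,
\[
|E_\pm(k)-E_\star| \;=\; |\lambda_\sharp|\,|k'|\,|1+\eta_\pm(k')| \;\ge\; \frac{1}{2}\,|\lambda_\sharp|\,|k'| \;\ge\; \frac{1}{2}\,|\lambda_\sharp|\,\delta^\tau ,
\]
which gives \eqref{eval_bdd_proof_3} with $C_2=|\lambda_\sharp|/2$, using that $\lambda_\sharp\ne0$ is part of the Dirac-point hypothesis. (The region in \eqref{eval_bdd_proof_3} is nonempty only when $\delta^\tau\le\zeta_0$, i.e. for $\delta$ small, and the estimate is vacuous otherwise.)

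The only genuinely substantive point is the uniform gap in \eqref{eval_bdd_proof_1}: ruling out that some band other than $E_\pm$ meets the level $E_\star$ at a quasi-momentum $k\ne k_\star$. This is precisely where the \emph{exact} multiplicity-two property at $k_\star$ must be combined with the \emph{global} monotonicity of the dispersion curves on $[0,\pi]$ and $[\pi,2\pi]$; absent the monotonicity, a secondary crossing of the level $E_\star$ away from $\pi$ could not be excluded a priori. Everything else — \eqref{eval_bdd_proof_2} and \eqref{eval_bdd_proof_3} — then follows by continuity, compactness, and the first-order expansion at the Dirac point.
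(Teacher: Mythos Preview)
Your proof is correct and complete. The approach differs from the paper's in one notable respect: to rule out that any band other than the two touching bands meets the level $E_\star$, you use the global monotonicity and reflection symmetry \eqref{Ej-sym-mon} of the dispersion maps together with the exact multiplicity-two hypothesis at $k_\star$. The paper instead appeals directly to the elementary ODE fact that a second-order equation $(H-E_\star)f=0$ admits at most two linearly independent solutions: since the Dirac point already supplies two linearly independent $k_\star$-pseudo-periodic solutions, any additional eigenfunction with eigenvalue $E_\star$ at a quasi-momentum $k_\natural\ne k_\star$ (or a third one at $k_\star$) would yield a third independent solution, a contradiction. This single observation handles both \eqref{eval_bdd_proof_1} (for $b\notin\{b_\star,b_\star+1\}$) and \eqref{eval_bdd_proof_2} (for $b\in\{b_\star,b_\star+1\}$, $k\ne k_\star$) in one stroke, without invoking monotonicity. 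Your argument, by contrast, extracts more structural information along the way (the sign of $E_b(k)-E_\star$ and the location of the band extrema), which is not needed here but is not wasted effort either. For \eqref{eval_bdd_proof_3} the two arguments coincide.
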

{\it Proof of Lemma \ref{lemma:eval_bounds}:}\ \ 
The lower bound \eqref{eval_bdd_proof_1} follows by continuity of $k\mapsto E_b(k)$, that $E_\star$ is a double
eigenvalue, that for a second order ordinary differential operator there are at most two linearly independent
solutions,  and the asymptotics of eigenvalues of self-adjoint second order elliptic equations 
(Weyl's law).  
To prove the lower bound in 
\eqref{eval_bdd_proof_2} note, by continuity of $E_\pm$, that if there were no such strictly positive $C_1$, there would
exist $k_\natural\ne k_\star$, such that  $E_b(k_\natural)=E_\star$. But then the second order ODE: $(H_\delta-E)f=0$ would
have three linearly independent solutions, a contradiction.
 Hence, lower bound \eqref{eval_bdd_proof_2} holds.  Finally, the lower bound \eqref{eval_bdd_proof_3}
follows from the expansions in Proposition \ref{flo-blo-dirac} of $E_\pm(k_\star+k')$ for $|k'|<\zeta$.
 This completes the proof of Lemma \ref{lemma:eval_bounds}. 
\medskip

By Lemma \ref{lemma:eval_bounds}, the far energy system \eqref{far_cpts} may be written equivalently as
\begin{align}
\label{far6}
&\widetilde{\eta}_{b,\rm far}(k)
+\frac{\delta\chi(\abs{k'}\geq(\delta_{b,b_{\star}}+\delta_{b,b_{\star}+1})\delta^{\tau})}{E_b(k)-E_{\star}}
\inner{\Phi_{b}(\cdot,k),\kappa(\delta \cdot)W_{\oo}(\cdot)\left[\eta_{\rm near}(\cdot)+ \eta_{\rm far}
(\cdot)\right]}_{L^2(\R)} \nonumber\\
&~~~=\frac{\delta\chi(\abs{k'}\geq(\delta_{b,b_{\star}}+\delta_{b,b_{\star}+1})\delta^{\tau})}{E_b(k)-E_{\star}}
F_{b}[\mu,\delta](k) +  \delta^2\mu\ \frac{\widetilde{\eta}_{b,\rm far}(k)}{{E_b(k)-E_{\star}}},\qquad  b\ge1\ .
\end{align} 
Here, $k'=k-k_\star, k\in[0,2\pi]$ and we recall that $F_b[\mu,\delta]$ is given by \eqref{Fb-def}-\eqref{Fdef}.
We next view \eqref{far6} as a fixed point system for $\widetilde\eta_{\rm
far}=\{\widetilde{\eta}_{b,\rm far}(k)\}_{b\ge1}$:
\begin{equation}
\widetilde{\mathcal{E}}_b[\widetilde{\eta}_{\rm far};\eta_{\rm near},\mu,\delta]\ =\
\widetilde{\eta}_{b,\rm far}\ ,\qquad  b\ge1, 
\label{fixed-pt1}
\end{equation}
where the mapping $\widetilde{\mathcal{E}}_b$ is defined by:
\begin{align}\label{tE-def}
&\widetilde{\mathcal{E}}_b[\phi;\psi,\mu,\delta](k)  \\
&\quad\equiv 
-\frac{\delta\chi(\abs{k'}\geq(\delta_{b,b_{\star}}+\delta_{b,b_{\star}+1})\delta^{\tau})}{E_b(k)-E_{\star}}
\inner{\Phi_{b}(\cdot,k),\kappa(\delta \cdot)W_{\oo}(\cdot)\left[\psi(\cdot)+ \phi
(\cdot)\right]}_{L^2(\R)}\nn\\
&\quad\quad +\frac{\delta\chi(\abs{k'}\geq(\delta_{b,b_{\star}}+\delta_{b,b_{\star}+1})\delta^{\tau})}{E_b(k)-E_{\star}}
F_{b}[\mu,\delta](k) +  \delta^2\mu\ \frac{\widetilde{\phi}_{b,\textrm{far}}(k)}{{E_b(k)-E_{\star}}}\ , \nn
\end{align}
 and where
\begin{align*}
\phi(x)&=\frac{1}{2\pi}\ \sum_{b\in\mathbb{Z}} \int_{\mathcal{B}} \chi\left(\abs{k'}\geq(\delta_{b,b_{\star}}+\delta_{b,b_{\star}+1})\delta^{\tau}\right)\
\widetilde{\phi}_b(k) \Phi_b(x;k)\ dk \\
&=\ \frac{1}{2\pi}\ 
\sum_{b\in\mathbb{Z}} \int_{\mathcal{B}} \widetilde{\phi}_{b,{\rm far}}(k) \Phi_b(x;k)\ dk\ .
\end{align*}

\nit Equation \eqref{fixed-pt1}  
can be expressed as an  equivalent system for $\eta_{\rm far}$:
\begin{equation}
\mathcal{E}[\eta_{\rm far};\eta_{\rm near},\mu,\delta]\ =\ \eta_{\rm far}\ .
\label{fixed-pt-notilde}
\end{equation}

 For fixed $\mu$, $\delta$ and band-limited $\eta_{\rm near}$, such that 
\begin{equation}
\widetilde{\eta}_{\pm,\rm near}(k)\ =\ \chi\left(\abs{k'}\leq \delta^\tau\right)
\widetilde{\eta}_{\pm,\rm near}(k)
 , \label{near-def}\end{equation}
we shall seek a solution $\{\widetilde{\eta}_{b,\rm far}\}_{b\ge1}$, supported at ``far energies'':
\begin{equation}
\widetilde{\eta}_{b,\rm far}(k)\\ =\
\chi\left(\abs{k'}\ge(\delta_{b,b_{\star}}+\delta_{b,b_{\star}+1})\delta^\tau\right) \widetilde{\eta}_{b,\rm far}(k)
 , ~~~b\ge1 .\label{far-def}\end{equation}

 \nit Introduce the Banach spaces of functions supported in  ``far'' and ``near'' energy regimes:
 \begin{align}
  L^2_{\rm near,\delta^\tau}(\R) &\equiv\
  \left\{ f\in L^2(\R) : \widetilde{f}_b(k)\ \textrm{satisfies \eqref{near-def}}\right\}, \label{L2near} \\
  L^2_{\rm far,\delta^\tau}(\R) &\equiv\
  \left\{ f\in L^2(\R) : \widetilde{f}_b(k)\ \textrm{satisfies \eqref{far-def}}\right\}, \label{L2far}
\end{align}
and the corresponding open balls of radius $\rho$: 
 \begin{align}
  B_{\rm near,\delta^\tau}(\rho) &\equiv\
  \left\{ f\in L^2_{\rm near,\delta^\tau} : \|f\|_{L^2}<\rho \right\}, \label{ball_near} \\
 B_{\rm far,\delta^\tau}(\rho) &\equiv\
  \left\{ f\in L^2_{\rm far,\delta^\tau} : \|f\|_{L^2}<\rho \right\}. \label{ball_far}
\end{align}
Near- and far- energy
 Sobolev spaces $H^s_{\rm far,\delta^\tau}(\R) $ and 
$H^s_{\rm near,\delta^\tau}(\R) $ are analogously defined. 

\begin{proposition}\label{tEmap-pre}
Let $0<\tau<1/2$. 

 \begin{enumerate}
 \item  Pick a positive number $M$.  Choose $(\phi,\psi,\mu,\delta)$ such that
 \[ \phi\in L^2_{\rm far,\delta^\tau},\ \  \psi\in L^2_{\rm near,\delta^\tau},\ |\mu|<M,\ {\rm and}\ 
0<\delta\le1\ . \]
Then,  $\mathcal{E}\left[\phi;\psi,\mu,\delta\right]\in H^2_{\rm far,\delta^\tau}(\R)$. Moreover, there is a constant
$C_M$ depending on $M$, and independent of $\phi$ and $\psi$,   such that 
 \begin{equation}
 \left\|\mathcal{E}\left[\phi;\psi,\mu,\delta\right]\right\|^2_{H^2(\R)}\ \le\ C_M\  \left(\ \delta^{1-2\tau}\ +\
\delta^{2(1-\tau)}
 \left[\norm{\psi}^2_{L^2(\R)} +
\norm{\phi}^2_{L^2(\R)} \right] \ \right) .
 \label{map-est}
 \end{equation}
  \item 
Fix constants $M>0$ and $R>0$. Assume $\psi\in B_{{\rm near},\delta^\tau}(R)$ and $|\mu|<M$. There exists 
$\delta_0\in(0,1]$, such that the following holds:\\
 For $0<\delta<\delta_0$, there is a constant  $\rho_\delta=\mathcal{O}(\delta^{1/2-\tau})$, such that
  \begin{align}
  & \phi\in  B_{\rm far,\delta^\tau}(\rho_\delta)\ \implies\ \mathcal{E}\left[\phi;\psi,\mu,\delta\right]\in 
B_{\rm far,\delta^\tau}(\rho_\delta).\ \label{ball2ball}
  \end{align}
Furthermore,  for any $\phi_1, \phi_2\in B_{\rm far,\delta^\tau}(\rho_\delta)$,\ $\psi_1, \psi_2\in
B_{\rm near,\delta^\tau}(\rho_\delta)$, and $|\mu_1|, | \mu_2|<M$, 
  \begin{align}
&  \left\| \mathcal{E}\left[\phi_1;\psi_1,\mu_1,\delta\right]\ -\  \mathcal{E}\left[\phi_2;\psi_2,\mu_2,\delta\right]\
\right\|_{H^2(\R)}\nn\\
&\qquad \le\ C'_{M,R}\ \delta^{1-\tau}\ \Big(\ \left\|\phi_1-\phi_2\right\|_{L^2(\R)}
\ +\ \left\|\psi_1-\psi_2\right\|_{L^2(\R)}\ +\ |\mu_1-\mu_2|\ \Big) .\label{contraction}
  \end{align}
  \end{enumerate}
 \end{proposition}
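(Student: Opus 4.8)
\emph{Proof proposal.}  The plan is to analyze $\mathcal E$ through its Floquet--Bloch symbol $\widetilde{\mathcal E}_b$ in \eqref{tE-def}, treating separately the three summands there: call $(\mathrm A)$ the potential--coupling term built from $\langle\Phi_b(\cdot,k),\kappa(\delta\cdot)W_\oo(\cdot)[\psi+\phi]\rangle$, $(\mathrm B)$ the source term built from $F_b[\mu,\delta](k)$, and $(\mathrm C)$ the term $\delta^2\mu\,\widetilde\phi_{b,\mathrm{far}}(k)/(E_b(k)-E_\star)$.  All three carry the far--energy cutoff $\chi_b(k)\equiv\chi\bigl(|k'|\ge(\delta_{b,b_\star}+\delta_{b,b_\star+1})\delta^\tau\bigr)$, $k'=k-k_\star$ (for $(\mathrm C)$ this is forced by the hypothesis $\phi\in L^2_{\mathrm{far},\delta^\tau}$), so the output automatically lies in $H^2_{\mathrm{far},\delta^\tau}(\R)$ and the only real content is the norm bound; I would estimate it band by band via Lemma \ref{lemma11}, $\|\cdot\|_{H^2(\R)}^2\approx\sum_b\int_{\mathcal B}(1+b^2)^2|\cdot|^2\,dk$.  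The single fact that drives all estimates is the resolvent multiplier bound obtained from Lemma \ref{lemma:eval_bounds}: on the support of $\chi_b$,
\[
\Bigl|\frac{\delta\,\chi_b(k)}{E_b(k)-E_\star}\Bigr|\ \lesssim\ \delta^{1-\tau}\ \ (b\in\{+,-\}),\qquad
(1+b^2)\,\Bigl|\frac{\delta\,\chi_b(k)}{E_b(k)-E_\star}\Bigr|\ \lesssim\ \delta\ \ (b\notin\{+,-\}),
\]
the $(1+b^2)^{-1}$ gain in the second case being exactly what upgrades an $L^2$ input to an $H^2$ output.

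For part (1): in $(\mathrm A)$, boundedness of $\kappa$ and $W_\oo$ gives $\|\kappa(\delta\cdot)W_\oo(\cdot)[\psi+\phi]\|_{L^2(\R)}\lesssim\|\psi\|_{L^2}+\|\phi\|_{L^2}$, so Parseval \eqref{parseval} together with the multiplier bounds contributes $\lesssim\delta^{2(1-\tau)}(\|\psi\|^2_{L^2}+\|\phi\|^2_{L^2})$ to $\|\mathcal E\|_{H^2(\R)}^2$.  In $(\mathrm B)$, Lemma \ref{lemma:psi_bounds} shows each $F^{j,\delta}_b(k)$ is a Floquet--Bloch coefficient of a function of $L^2(\R)$ norm $\lesssim\delta^{-1/2}$, whence $\sum_b\int_{\mathcal B}|F_b[\mu,\delta](k)|^2\,dk\lesssim C_M\,\delta^{-1}$; the $b\notin\{+,-\}$ part then contributes $\lesssim C_M\delta$ and the $b\in\{+,-\}$ part $\lesssim C_M\,\delta^{2-2\tau}\delta^{-1}=C_M\,\delta^{1-2\tau}$.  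In $(\mathrm C)$, $|\mu|<M$ and the multiplier bounds give $\lesssim M^2\,\delta^{2(1-\tau)}\|\phi\|_{L^2}^2$.  Adding the three estimates yields \eqref{map-est}.

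For part (2): with $\rho_\delta\equiv\sqrt{2C_M}\,\delta^{1/2-\tau}=\mathcal O(\delta^{1/2-\tau})$, substituting $\|\psi\|_{L^2}<R$ and $\|\phi\|_{L^2}<\rho_\delta$ into \eqref{map-est} and using $\|\cdot\|_{L^2}\le\|\cdot\|_{H^2}$ and $0<\tau<1/2$, the right side is $\le 2C_M\,\delta^{1-2\tau}=\rho_\delta^2$ for $\delta<\delta_0(M,R,\tau)$, which is \eqref{ball2ball}.  For \eqref{contraction}, $\widetilde{\mathcal E}_b$ is affine in $\psi$ and affine in $(\mu,\phi)$ apart from the bilinear piece $(\mathrm C)$.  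The $\psi$-- and $\phi$--differences enter only through $(\mathrm A)$ and are controlled exactly as in part (1) by $\lesssim\delta^{1-\tau}\bigl(\|\psi_1-\psi_2\|_{L^2}+\|\phi_1-\phi_2\|_{L^2}\bigr)$; the difference of $(\mathrm C)$, namely $\delta^2$ times the far--energy resolvent applied to $(\mu_1-\mu_2)\phi_1+\mu_2(\phi_1-\phi_2)$, is $\lesssim\delta^{2-\tau}\bigl(|\mu_1-\mu_2|+\|\phi_1-\phi_2\|_{L^2}\bigr)$ on the balls (using $\|\phi_1\|_{L^2}<\rho_\delta$, $|\mu_2|<M$), hence absorbed into $C'_{M,R}\delta^{1-\tau}(\cdots)$.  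The delicate contribution is the $\mu$--difference of $(\mathrm B)$, i.e.\ $\frac{\delta\,\chi_b}{E_b-E_\star}(\mu_1-\mu_2)\bigl(F^{2,\delta}_b+\delta F^{3,\delta}_b\bigr)$.  The summand $\delta F^{3,\delta}_b$ is harmless: the extra $\delta$ turns the $(\mathrm B)$--type bound $\delta^{1-2\tau}$ into $\delta^{3-2\tau}$, i.e.\ $\lesssim\delta^{3/2-\tau}\le\delta^{1-\tau}$ in $H^2$.  For $F^{2,\delta}_b(k)=\langle\Phi_b(\cdot,k),\psi^{(0)}(\cdot,\delta\cdot)\rangle_{L^2(\R)}$ a crude bound only yields $\delta^{(1-2\tau)/2}|\mu_1-\mu_2|$, which is too weak; here one must use that $\psi^{(0)}(x,\delta x)=\alpha_{\star,1}(\delta x)\Phi_1(x)+\alpha_{\star,2}(\delta x)\Phi_2(x)$ is sharply concentrated at the Dirac quasimomentum $k_\star$.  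Because $\Phi_1,\Phi_2$ are \emph{exactly} the two band eigenfunctions at $k=k_\star$, the band--$b$ components with $b\notin\{+,-\}$ carry an extra factor $\mathcal O(\delta)$ (an overlap $\mathcal O(k-k_\star)$ against a modulation of width $\mathcal O(\delta)$), so $\sum_{b\notin\{+,-\}}\int_{\mathcal B}|F^{2,\delta}_b|^2\lesssim\delta$; and because $\alpha_\star\in\mathcal S(\R)$ (Theorem \ref{thm:dirac_bound_state}), the far part $\chi(|k'|\ge\delta^\tau)F^{2,\delta}_\pm(k)$ is $\mathcal O(\delta^N)$ for every $N$.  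Both statements are instances of the $L^2_{\mathrm{loc}}$ Poisson summation formula (Theorem \ref{psum-L2}) carried out in Appendix \ref{near_freq_limits}; inserting them, the $\mu$--difference of $(\mathrm B)$ is $\lesssim\delta^{1-\tau}|\mu_1-\mu_2|$ in $H^2$, and \eqref{contraction} follows.

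The main obstacle is precisely this $\mu$--Lipschitz bound: bare resolvent--plus--Parseval estimates produce $\mathcal O(\delta^{(1-2\tau)/2})$, which for $0<\tau<1/2$ is \emph{larger} than the required $\mathcal O(\delta^{1-\tau})$, so one is forced to exploit the concentration of the leading inhomogeneities $\psi^{(0)}(x,\delta x)$ (and $\psi^{(1)}_p(x,\delta x)$) near $(E_\star,k_\star)$—both in band index and in quasimomentum—which is where the Poisson summation machinery of Chapter \ref{floquet-bloch} and the near--frequency computations of Appendix \ref{near_freq_limits} enter.  A minor but pervasive bookkeeping issue is carrying the $(1+b^2)^2$ weights through every step so as to land in $H^2(\R)$ rather than $L^2(\R)$; this is handled throughout by the $(1+b^2)^{-1}$ decay of the resolvent multiplier for $b\notin\{+,-\}$.
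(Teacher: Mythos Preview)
Your treatment of Part~1 and of the $\phi$-- and $\psi$--Lipschitz pieces of Part~2 matches the paper's proof essentially line for line: square $\widetilde{\mathcal E}_b$, invoke the resolvent lower bounds of Lemma~\ref{lemma:eval_bounds}, reassemble via Lemma~\ref{lemma11}/Parseval, and feed in Lemma~\ref{lemma:psi_bounds} for $\sum_b\int|F_b|^2\lesssim C_M\delta^{-1}$.  Your choice $\rho_\delta=\sqrt{2C_M}\,\delta^{1/2-\tau}$ is exactly the paper's.

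Where you diverge is the $\mu$--Lipschitz bound.  The paper's proof is one sentence: it observes that $\widetilde{\mathcal E}_b=\widetilde{\mathcal A}_b+\delta^2\mu\,\widetilde\phi_{b,\mathrm{far}}/(E_b-E_\star)$ with $\widetilde{\mathcal A}_b$ affine in $(\phi,\psi,\mu)$, and says \eqref{contraction} ``follows by estimates similar to those in the proof of part~2.''  Taken literally, those estimates give only
\[
\Bigl\|\tfrac{\delta\chi_b}{E_b-E_\star}(\mu_1-\mu_2)F^{2,\delta}_b\Bigr\|_{H^2}^2\ \lesssim\ \delta^{2(1-\tau)}|\mu_1-\mu_2|^2\cdot\delta^{-1}\ =\ \delta^{1-2\tau}|\mu_1-\mu_2|^2,
\]
i.e.\ a Lipschitz constant $\delta^{1/2-\tau}$ rather than the stated $\delta^{1-\tau}$.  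You spotted this and propose to close the gap by exploiting the quasimomentum concentration of $\psi^{(0)}(x,\delta x)$ near $k_\star$ via Poisson summation (as in Appendix~\ref{near_freq_limits}).  That is a reasonable route, and the ingredients you cite (orthogonality of $\Phi_b(\cdot,k_\star)$ to $\Phi_1,\Phi_2$ for $b\notin\{+,-\}$, and $\widehat\alpha_\star\in\mathcal S$ for the far tail of $b\in\{+,-\}$) are the right ones.

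It is worth noting, however, that the paper does not carry out any such refinement here, and downstream it only ever uses the weaker rate: in Corollary~\ref{fixed-pt} the $\mu$--dependence of $\eta_{\mathrm{far}}$ through the source $F_b$ is isolated in the term $\mu B(\cdot;\delta)$, and \eqref{B_bdd} records precisely $\|B\|_{H^2}\le\delta^{1/2-\tau}$ (while \eqref{A_lip}, which \emph{does} carry $\delta^{1-\tau}$, concerns only the part with $F_b\equiv0$, where $\mu$ enters solely through the $\delta^2\mu$ term).  So the $\delta^{1-\tau}$ in \eqref{contraction} (and in \eqref{far_lipschitz}) appears to be an overstatement relative to what the paper's own argument delivers and to what is subsequently needed; your extra Poisson--summation work would be required to prove \eqref{contraction} exactly as stated, but the rest of the paper goes through with $\delta^{1/2-\tau}$ in the $|\mu_1-\mu_2|$ slot.
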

 \medskip
 
Applying Proposition \ref{tEmap-pre} to equation \eqref{fixed-pt1} (or equivalently \eqref{fixed-pt-notilde}) we 
have:
\medskip

\begin{corollary}\label{fixed-pt} 
Let $0<\tau<1/2$. 
\begin{enumerate}
\item For any fixed $M>0, R>0$, there exists  $\delta_0\in(0,1]$,  such that for all $0<\delta<\delta_0$, 
equation \eqref{fixed-pt-notilde}, or equivalently, the system \eqref{fixed-pt1}, has a unique fixed point solution, $\eta_{\rm far}=\eta_{\rm far}[\eta_{\rm near},\mu,\delta]$, where 
\begin{align*}
&(\eta_{\rm near},\mu,\delta)\mapsto \eta_{\rm far}[\cdot;\eta_{\rm near},\mu,\delta]=
\mathcal{T}^{-1}\widetilde{\eta}_{\rm far}
\end{align*}
maps from $ B_{{\rm near},\delta^\tau}(R)\times\{|\mu|<M\}\times\{0<\delta<\delta_0\}$
 to $ B_{{\rm far},\delta^\tau}(\rho)$, and 
$\rho=\rho_\delta=\mathcal{O}(\delta^{1-2\tau})$, as in Proposition \ref{tEmap-pre}. 
\item The mapping $(\eta_{\rm near},\mu,\delta)\mapsto \eta_{\rm far}(\cdot;\eta_{\rm near},\mu,\delta)$ is 
Lipschitz in $(\eta_{\rm near},\mu)$ with  values in $H^2(\R)$ and satisfies:
\begin{align}
 \label{far_lipschitz}
\left\|\eta_{\rm far}(\psi_1,\mu_1,\delta) -  \eta_{\rm far}(\psi_2,\mu_2,\delta)\right\|_{H^2(\R)}
&\leq C' \delta^{1-\tau} \Big(\norm{\psi_1-\psi_2}_{L^2(\R)} + \abs{\mu_1-\mu_2} \Big),\\
 \norm{\eta_{\rm far}[\eta_{\rm near};\mu,\delta]}_{H^2(\R)} &\le\ C''\left(\ 
\delta^{1-\tau}\norm{\eta_{\rm near}}_{L^2(\R)}+\delta^{1/2-\tau}\ \right). \label{eta-far-bound}
\end{align}
Here, $C'$ and $C''$ are constants which depend  on $M, R$ and $\tau$.
\item 
In greater detail, the mapping $(\eta_{\rm near},\mu,\delta)\mapsto\eta_{\rm far}[\eta_{\rm near},\mu,\delta]$ 
is affine in $\eta_{\rm near}$ and Lipschitz in $\mu$, with values in $H^2(\R)$ and may be expressed as:
\begin{equation}
\label{eta_far_affine}
\eta_{\rm far}[\eta_{\rm near},\mu,\delta](x) = [A\eta_{\rm near}](x;\mu,\delta) + \mu
B(x;\delta) + C(x;\delta),
\end{equation} 
where 
for $\eta_{\rm near}\in B_{{\rm near},\delta^\tau}(R)$ we have:
\begin{align}
&\left\| [A\eta_{\rm near}](\cdot,\mu_1,\delta) - A[\eta_{\rm near}](\cdot,\mu_2,\delta)\right\|_{H^2(\R)}
\le \ C'_{M,R}\ \delta^{1-\tau}\ |\mu_1-\mu_2| \label{A_lip}\\
 &\norm{[A\eta_{\rm near}](\cdot;\mu,\delta)}_{H^2(\R)} \leq \delta^{1-\tau}
\norm{\eta_{\rm near}}_{L^2(\R)},  \label{A_bdd} \\
 &\norm{B(\cdot;\delta)}_{H^2(\R)} \leq \delta^{1/2-\tau}, ~~~ \text{and} ~~~
 \norm{C(\cdot;\delta)}_{H^2(\R)} \leq \delta^{1/2-\tau}.  \label{B_bdd}
\end{align}
\item  Define the extension of  $\eta_{\rm far}[\cdot;\eta_{\rm near},\mu,\delta]$ to  the half-open interval
$\delta\in[0,\delta_0)$
by setting 
$\eta_{\rm far}[\eta_{\rm near},\mu,0]=0$. Then, by \eqref{eta-far-bound}
$\eta_{\rm far}[\eta_{\rm near},\mu,\delta]$ is continuous at $\delta=0$.
\end{enumerate}
\end{corollary}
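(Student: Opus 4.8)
The plan is to obtain Corollary \ref{fixed-pt} as a routine consequence of Proposition \ref{tEmap-pre} via the uniform contraction mapping principle, carried out in the far-energy spectral subspace. First I would note that the Floquet--Bloch transform $\mathcal{T}:f\mapsto\{\widetilde{f}_b(k)\}_{b\ge1}$ is, by the Parseval relation \eqref{parseval}, the completeness relation \eqref{uj-complete}, and Lemma \ref{lemma11}, an isomorphism of $H^2_{\rm far,\delta^\tau}(\R)$ onto the weighted sequence space of far-supported coefficients (with equivalent norms); hence the coefficient system \eqref{fixed-pt1} and the function equation \eqref{fixed-pt-notilde} are equivalent, and it suffices to solve \eqref{fixed-pt-notilde}. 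Fix $M,R>0$, $\eta_{\rm near}\in B_{\rm near,\delta^\tau}(R)$ and $|\mu|<M$, and regard $\phi\mapsto\mathcal{E}[\phi;\eta_{\rm near},\mu,\delta]$ as a self-map candidate on $\overline{B_{\rm far,\delta^\tau}(\rho_\delta)}$, where $\rho_\delta$ is the radius furnished by Proposition \ref{tEmap-pre}(2).

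For part (1): Proposition \ref{tEmap-pre}(2) provides $\delta_0\in(0,1]$ so that for $0<\delta<\delta_0$ this map sends $B_{\rm far,\delta^\tau}(\rho_\delta)$ into itself, and by \eqref{contraction} together with the trivial inequality $\|\cdot\|_{L^2(\R)}\le\|\cdot\|_{H^2(\R)}$ it is Lipschitz in the $L^2$-norm with constant $C'_{M,R}\delta^{1-\tau}$; after possibly shrinking $\delta_0$ this constant is $<1$. Since $L^2_{\rm far,\delta^\tau}(\R)$ is a closed subspace of $L^2(\R)$, hence complete, the Banach fixed point theorem yields a unique fixed point $\eta_{\rm far}=\eta_{\rm far}[\eta_{\rm near},\mu,\delta]$; by Proposition \ref{tEmap-pre}(1) it lies in $H^2_{\rm far,\delta^\tau}(\R)$, and evaluating \eqref{map-est} at $\phi=\eta_{\rm far}$, using $\|\eta_{\rm far}\|_{L^2}\le\rho_\delta$ and absorbing the resulting higher-order contribution $\delta^{2(1-\tau)}\rho_\delta^2$ into $\delta^{1-2\tau}$, gives the bound \eqref{eta-far-bound}. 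Pulling back under $\mathcal{T}$ recovers the coefficient statement $\eta_{\rm far}=\mathcal{T}^{-1}\widetilde{\eta}_{\rm far}$.

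For part (2): if $\phi_i=\eta_{\rm far}[\psi_i,\mu_i,\delta]$, $i=1,2$, then subtracting the identities $\phi_i=\mathcal{E}[\phi_i;\psi_i,\mu_i,\delta]$ and applying \eqref{contraction} gives $\|\phi_1-\phi_2\|_{H^2}\le C'_{M,R}\delta^{1-\tau}\big(\|\phi_1-\phi_2\|_{L^2}+\|\psi_1-\psi_2\|_{L^2}+|\mu_1-\mu_2|\big)$; moving the first term on the right (bounded by $C'_{M,R}\delta^{1-\tau}\|\phi_1-\phi_2\|_{H^2}$) to the left and dividing by $1-C'_{M,R}\delta^{1-\tau}$ yields \eqref{far_lipschitz}, and taking $\psi_2=0,\mu_2=0$ re-derives \eqref{eta-far-bound}. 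For part (3), I would read off from \eqref{tE-def} and \eqref{Fb-def}--\eqref{Fdef} that $\mathcal{E}[\phi;\psi,\mu,\delta]$ is jointly affine in $(\phi,\psi)$ and affine in $\mu$ apart from the single bilinear term $\delta^2\mu\,\widetilde{\phi}_{b,{\rm far}}(k)/(E_b(k)-E_\star)$; thus $\mathcal{E}[\phi;\psi,\mu,\delta]=(L_\delta+\mu N_\delta)\phi+\ell_\delta\psi+\mu m_\delta+n_\delta$ with $L_\delta,N_\delta,\ell_\delta$ bounded linear and $m_\delta,n_\delta$ fixed. Since $\|L_\delta+\mu N_\delta\|_{L^2\to L^2}\lesssim\delta^{1-\tau}<1$ (apply \eqref{contraction} with $\psi_1=\psi_2$, $\mu_1=\mu_2$), the Neumann series gives $\eta_{\rm far}=(I-L_\delta-\mu N_\delta)^{-1}(\ell_\delta\eta_{\rm near}+\mu m_\delta+n_\delta)$, manifestly affine in $\eta_{\rm near}$; collecting the $\eta_{\rm near}$-linear part into $[A\eta_{\rm near}](\cdot;\mu,\delta)$ and the remaining, $\eta_{\rm near}$-independent, $\mu$-Lipschitz part into the form $\mu B(\cdot;\delta)+C(\cdot;\delta)$ of \eqref{eta_far_affine}, the estimates \eqref{A_lip}--\eqref{B_bdd} follow by specializing \eqref{map-est} and \eqref{contraction} ($\mu=0$, $\psi=\eta_{\rm near}$ for \eqref{A_bdd}; $\eta_{\rm near}=0$ for \eqref{B_bdd}; difference in $\mu$ for \eqref{A_lip}). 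Part (4) is immediate from \eqref{eta-far-bound}: since $0<\tau<1/2$, both $1-\tau$ and $1/2-\tau$ are positive, so $\|\eta_{\rm far}[\eta_{\rm near},\mu,\delta]\|_{H^2}\to0$ as $\delta\downarrow0$, and the extension $\eta_{\rm far}[\eta_{\rm near},\mu,0]=0$ is continuous at $\delta=0$.

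The analytic substance --- the self-map property, the contraction constant $O(\delta^{1-\tau})$, and the $H^2$ smoothing in \eqref{map-est}, all of which encode the behaviour of the resolvent factor $(E_b(k)-E_\star)^{-1}$ near the Dirac point together with the mapping properties of multiplication by $\kappa(\delta\cdot)W_\oo$ --- is entirely contained in Proposition \ref{tEmap-pre}, which is taken as given here. Within the present argument the only delicate point is bookkeeping of the $\delta$-powers and, in part (3), cleanly separating the genuinely affine-in-$\eta_{\rm near}$ structure from the $\mu$-dependence introduced by the bilinear term; that is where I would be most careful.
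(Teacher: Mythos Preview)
Your proof is correct and follows essentially the same approach as the paper's: both invoke Proposition \ref{tEmap-pre} and the contraction mapping principle for Part 1, derive the Lipschitz estimate from \eqref{contraction} for Part 2, exploit the affine structure of \eqref{tE-def} in $(\phi,\psi)$ for Part 3, and read off Part 4 from the bound \eqref{eta-far-bound}. Your Neumann-series presentation of Part 3 is a bit more explicit than the paper's decomposition into sub-fixed-point problems (one with $F_b\equiv 0$ and one with $\eta_{\rm near}\equiv 0$), but the content is the same; your own caveat about the bilinear term $\delta^2\mu\,\widetilde{\phi}_{b,{\rm far}}/(E_b-E_\star)$ is exactly the place where both arguments are slightly informal about the ``$\mu B+C$'' form, and is harmless at the level of the stated bounds.
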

We next prove Proposition \ref{tEmap-pre} and Corollary \ref{fixed-pt}.

\nit{\it Proof of Proposition \ref{tEmap-pre}:} Taking absolute values and squaring
\eqref{tE-def} gives
\begin{align*}
&\abs{\widetilde{\mathcal{E}}_b[\phi;\psi,\mu,\delta](k) }^2\ \nn\\
& \qquad \leq
2\delta^2\frac{\chi(\abs{k'}\geq(\delta_{b,b_{\star}}+\delta_{b,b_{\star}+1})\delta^{\tau})}{|E_b(k)-E_{\star}|^2}\\
&\times
\bigg\{\abs{\inner{\Phi_{b}(\cdot,k),\kappa(\delta
\cdot)W_{\oo}(\cdot)\left[(\psi+\phi)(\cdot)\right]}_{L^2(\R)}}^2  + \abs{F_{b}[\mu,\delta](k)}^2
 +\ |\delta\mu|^2\ |\widetilde{\phi}_{b,\rm far}(k)|^2\ \bigg\}.
\end{align*}

\nit By the lower bounds \eqref{eval_bdd_proof_1}-\eqref{eval_bdd_proof_3} we have for
$b=b_{\star},b_{\star}+1$:
\begin{align*}
&\abs{\widetilde{\mathcal{E}}_b[\phi;\psi,\mu,\delta](k) }^2 \nn\\
&\qquad\leq
\frac{2\delta^{2(1-\tau)}}{C_2^2}\bigg\{\abs{\inner{\Phi_{b}(\cdot,k),\kappa(\delta
\cdot)W_{\oo}(\cdot)\left[(\psi+\phi)(\cdot)\right]}_{L^2(\R)}}^2 + \abs{F_{b}[\mu,\delta](k)}^2 \\
&\qquad\qquad+\ |\delta\mu|^2\ |\widetilde{\phi}_{b,\rm far}(k)|^2\ \bigg\}.
\end{align*}
and for $b\neq b_{\star},b_{\star}+1$:
\begin{align*}
&\abs{\widetilde{\mathcal{E}}_b[\phi;\psi,\mu,\delta](k) }^2 \nn\\
&\qquad\le
\frac{2\delta^{2}}{C_0^2(1+b^2)^2}\bigg\{\abs{\inner{\Phi_{b}(\cdot,k),\kappa(\delta
\cdot)W_{\oo}(\cdot)\left[(\psi+\phi)(\cdot)\right]}_{L^2(\R)}}^2 + \abs{F_{b}[\mu,\delta](k)}^2 \\
&\qquad\qquad+\ |\delta\mu|^2\ |\widetilde{\phi}_{b,\rm far}(k)|^2\ \bigg\}.
\end{align*}

Applying Lemma \ref{lemma11} and using that
$\widetilde{\mathcal{E}}_b[\phi;\psi,\mu,\delta](k)=\inner{\Phi_b(\cdot,k),\mathcal{E}[\phi;\psi,\mu,\delta]}$ yields
\begin{align}
  \label{far17}
\norm{\mathcal{E}[\phi;\psi,\mu,\delta]}^2_{H^2(\R)} &\approx
\sum_{b=1}^{\infty} (1+b^2)^2 \int_{\mathcal{B}}
\left|\widetilde{\mathcal{E}}_b[\phi;\psi,\mu,\delta](k)\right|^2\ dk \nonumber\\
&\lesssim \delta^{2(1-\tau)} \bigg[
\sum_{b=1}^{\infty}\int_{\mathcal{B}}
\abs{\inner{\Phi_b(\cdot,k),\kappa(\delta\cdot)W_{\oo}(\cdot)\phi(\cdot)}_{L^2(\R)}}^2 d k
\nonumber\\
&~~~+\sum_{b=1}^{\infty}\int_{\mathcal{B}}
\abs{\inner{\Phi_b(\cdot,k),\kappa(\delta\cdot)W_{\oo}(\cdot)\psi(\cdot)}_{L^2(\R)}}^2
dk 
\nonumber\\
&~~~+ \sum_{b=1}^{\infty}\int_{\mathcal{B}} \abs{F_{b}[\mu,\delta](k)}^2 + \delta^2\mu^2
|\widetilde{\phi}_{b,\rm far}(k)|^2
dk\bigg]
\nonumber\\
&\lesssim \delta^{2(1-\tau)} \bigg[
\norm{\kappa(\delta\cdot)W_{\oo}(\cdot)\psi(\cdot)}^2_{L^2(\R)} +
\norm{\kappa(\delta\cdot)W_{\oo}(\cdot)\phi(\cdot)}^2_{L^2(\R)} \nonumber\\ &~~~+
\delta^2\mu^2  \|\phi\|_{L^2(\R)}^2\ +\  \sum_{b=1}^{\infty}\int_{\mathcal{B}}\abs{F_{b}[\mu,\delta](k)}^2
dk\bigg]\nn\\
&\le\ C_{M}\ \delta^{2(1-\tau)}\ \bigg[  \|\phi\|_{L^2(\R)}^2 + \|\psi\|_{L^2(\R)}^2 + 
 \sum_{b=1}^{\infty}\int_{\mathcal{B}}\abs{F_{b}[\mu,\delta](k)}^2
dk\bigg],
\end{align}
where $C_{M}$ is a constant which depends on $M$.

Furthermore, recalling  that $F_{b}[\mu,\delta]$ is given by \eqref{Fb-def}, \eqref{Fdef}, it is straightforward to show
using Lemma \ref{lemma11} and the boundedness on $\R$ of $W_\oo$ and $\kappa$
  that for $|\mu|\le M$, $|\delta|\le\delta_0\le1$,
  \begin{align*}
\sum_{b=1}^{\infty}\int_{\mathcal{B}} \abs{F_{b}[\mu,\delta](k)}^2\ &\lesssim
\norm{\psi^{(1)}_p(\cdot,\delta\cdot)}^2_{H^2(\R)} +
\norm{\kappa}^2_{L^{\infty}(\R)}\norm{W_{\oo}}^2_{L^{\infty}(\R)}\norm{\psi^{(1)}_p(\cdot,
\delta\cdot)} ^2_ {L^2(\R)} \\ 
&\ \ + 
\norm{\psi^{(0)}(\cdot,\delta\cdot)}^2_{H^2(\R)} +
\delta^2\norm{\psi^{(1)}_p(\cdot,\delta\cdot)}^2_{H^2(\R)}
\le C_{M}\ \delta^{-1}\ .\end{align*} 
Substituting this bound into \eqref{far17} and again using the boundedness of $W_\oo$ and $\kappa$ we have
\begin{align*}
 \norm{\mathcal{E}[\phi;\psi,\mu,\delta]}^2_{H^2(\R)} &\le C'_{M}
\delta^{2(1-\tau)}\left[\norm{\psi}^2_{L^2(\R)} +
\norm{\phi}^2_{L^2(\R)} \right] + C''_{M}\ \delta^{1-2\tau}, \nn
\end{align*} 
which proves part 1 and \eqref{map-est} of Proposition \ref{tEmap-pre}.

To prove \eqref{ball2ball} of part 2, we choose $\tau\in(0,1/2)$ and set 
 \[ \rho_\delta\equiv \sqrt{2 C''_{M}}\ \ \delta^{1/2-\tau}\ {\rm and}\ \  \delta_0\equiv \min\Big\{ 1 ,
\frac{C_M''}{C_M'R^2}\Big\}\ . \]
 Then, for $0<\delta<\delta_0$, if $\phi\in B_{{\rm far},\delta^\tau}(\rho_\delta)$, we have 
$ \mathcal{E}[\phi;\psi,\mu,\delta]\in B_{{\rm far},\delta^\tau}(\rho_\delta)$.

 To prove the Lipschitz estimate \eqref{contraction} (part 3 of the Proposition \ref{tEmap-pre} ), note from
\eqref{tE-def} that for each $\delta>0$,  \[ \tilde{\mathcal E}_b[\phi,\psi,\mu,\delta]  = \tilde{\mathcal
A}_b[\phi,\psi,\mu] + \delta^2\mu\ \frac{\widetilde{\phi}_{b,\textrm{far}}(k)}{{E_b(k)-E_{\star}}}, \]
 where $(\phi,\psi,\mu)\mapsto\tilde{\mathcal A}_b[\phi,\psi,\mu]$ is affine in $(\phi,\psi,\mu)$. Thus,
\eqref{contraction} follows by estimates similar to those in the proof of part 2. This completes the proof of Proposition \ref{tEmap-pre}.
\medskip

\nit{\it Proof of Corollary \ref{fixed-pt}:}
Part 1, the existence of a unique fixed point of solution  of \eqref{fixed-pt-notilde}, is an immediate consequence of \eqref{ball2ball} and \eqref{contraction} of Proposition \ref{tEmap-pre}, and the contraction mapping principle. The Lipschitz estimate \eqref{far_lipschitz} of Part 2 follows from \eqref{contraction} applied to $\phi_1=\phi_2=\eta_{\rm near}$, the fixed point solution of 
\eqref{fixed-pt-notilde}. The bound \eqref{eta-far-bound} of Part 2 follows from \eqref{map-est} with $\phi=\eta_{\rm near}$, the fixed point solution of  \eqref{fixed-pt-notilde}. 

We prove Part 3, using that \eqref{fixed-pt-notilde} is a linear inhomogeneous equation with source terms driven by
$\widetilde{\eta}_{b,\rm near}$ and $F_b[\mu;\delta]$. Let $[\widetilde{A\eta}_{\rm near}]_b(k;\mu,\delta)$ solve
the fixed point system \eqref{fixed-pt1} with $F_b[\mu;\delta](k)$ set equal to zero. Also, let $\mu
\tilde{B}_b(k;\delta)+\tilde{C}_b(k;\delta)$ denote 
the solution of \eqref{fixed-pt1} with $\widetilde{\eta}_{b,\rm near}$ set equal to zero, for all $b$;
recall from \eqref{Fb-def} that $F_b[\mu;\delta](k)$ is affine in $\mu$. Each of the maps
$\{[\widetilde{A\eta}_{\rm near}]_b\}_{b\ge1},\ \{\tilde{B}_b(k;\delta)\}_{b\ge1}$ and
$\{\tilde{C}_b(k;\delta)\}_{b\ge1}$ are fixed points of mappings
comprised of a subset of the terms in the definition of the mapping $\mathcal{E}[\phi,\psi,\mu,\delta]$. Therefore,
the bounds of Proposition \ref{tEmap-pre} apply and the proof of part 3  is complete. Part 4 is
direct consequence of Part 3. This completes the proof of Corollary \ref{fixed-pt}.

\section{Lyapunov-Schmidt reduction to a Dirac system for the near energy components\label{subsec:near_freq}}
Having constructed the mapping 
  $\eta_{\rm far}=\eta_{\rm far}[\eta_{\rm near},\mu,\delta]$ with appropriate bounds, we may view the 
  system \eqref{near_cpt_2}-\eqref{near_cpt_1} as a  closed system for $(\eta_{\rm near},\mu)$, depending on the
parameter $\delta\in(0,\delta_0)$.
Our next step is to rewrite this system  as a perturbed Dirac system. 
\medskip

\nit \textbf{Rescaling the near-energy region}. Set 
\begin{equation*}
 \widetilde{\eta}_{\pm,\rm near}(k) \equiv
\widehat{\eta}_{\pm,\rm near}\left(\frac{k-k_{\star}}{\delta}\right)\  .
\end{equation*}
Recall  that the near energy components  are band limited, \eqref{teta-near}:
\begin{equation*}
\widetilde{\eta}_{\pm,{\rm near}}(k)\ =\ \widehat{\eta}_{\pm,\rm near}\left(\frac{k-k_{\star}}{\delta}\right) =
\chi\left(\frac{\abs{k-k_{\star}}}{\delta}\leq\delta^{\tau-1}\right)
\widehat{\eta}_{\pm,\rm near}\left(\frac{k-k_{\star}}{\delta} \right).
\end{equation*} 
Introduce the recentered and rescaled quasimomentum:
\begin{equation}
\xi\equiv\frac{k-k_{\star}}{\delta}.
\label{xi-def}\end{equation}
Hence, 
\begin{equation}
 \widetilde{\eta}_{\pm,\rm near}(k) = \chi\left(|\xi|\le\delta^{\tau-1}\right)\widehat{\eta}_{\pm,{\rm near}}(\xi).\label{teta-xi}
 \end{equation}

 By
smoothness of $E_{\pm}(k)$ near $k_\star$ (Proposition \ref{flo-blo-dirac}) we have:
\begin{equation*}
E_{\pm}(k)-E_{\star}=\delta E_{\pm}'(k_{\star})\xi+\frac{1}{2}(\delta\xi)^2E_{\pm}''(\widetilde{\xi}_\pm^\delta),\quad\
\end{equation*} 
where $E_{\pm}(k_{\star})=E_{\star}$ and
$\widetilde{\xi}_{\pm}^\delta$ lies between $k_{\star}$ and $k_{\star}+\delta\xi$.
 Thus,
\begin{equation}
 \label{near4}
 (E_{\pm}(k)-E_{\star})\widetilde{\eta}_{\pm,\rm near}(k) = \delta\xi
E_{\pm}'(k_{\star})\widehat{\eta}_{\pm,\rm near}(\xi) + \frac{1}{2}(\delta\xi)^2
E_{\pm}''(\widetilde{\xi}_{\pm}^\delta)\widehat{\eta}_{\pm,\rm near}(\xi).
\end{equation}
 Since  $E_\pm'(k_{\star})=\pm\lamsharp$ (Proposition \ref{flo-blo-dirac}), substituting
\eqref{near4} into the near energy equations \eqref{near_cpt_2} and \eqref{near_cpt_1}, and canceling
a factor of $\delta$ yields:
\begin{align}
&-\lamsharp\ \xi\ \widehat{\eta}_{-,\rm near}(\xi)
+\chi(\abs{\xi}\leq\delta^{\tau-1})\inner{\Phi_{-}(\cdot,k_{\star}+\delta\xi),\kappa(\delta
\cdot)W_{\oo}(\cdot)\eta_{\rm near}(\cdot)}_{L^2(\R)} \label{near6} \\
&\qquad=\chi(\abs{\xi}\leq\delta^{\tau-1}) F_{-}[\mu,\delta](k_\star+\delta\xi) + \delta\mu\ \widehat{\eta}_{-,{\rm
near}}(\xi)\nn\\
&\qquad\qquad - 
\chi(\abs{\xi}\leq\delta^{\tau-1})\inner{\Phi_{-}(\cdot,k_{\star}+\delta\xi),\kappa(\delta
\cdot)W_{\oo}(\cdot)\eta_{\rm far}[\eta_{\rm near},\mu,\delta](\cdot)}_{L^2(\R)} \nn \\
&\qquad\qquad-\frac{1}{2}\delta E_{-}''(\widetilde{\xi}_{-}^\delta)\xi^2\widehat{\eta}_{-,\rm near
}(\xi), \nn \\
&+\lamsharp\ \xi\ \widehat{\eta}_{+,\rm near}(\xi)
+\chi(\abs{\xi}\leq\delta^{\tau-1})\inner{\Phi_{+}(\cdot,k_{\star}+\delta\xi),\kappa(\delta
\cdot)W_{\oo}(\cdot)\eta_{\rm near}(\cdot)}_{L^2(\R)} \label{near5} \\
&\qquad=\chi(\abs{\xi}\leq\delta^{\tau-1}) F_{+}[\mu,\delta](k_\star+\delta\xi) + \delta\mu\ \widehat{\eta}_{+,{\rm
near}}(\xi)\nn\\
&\qquad\qquad - 
\chi(\abs{\xi}\leq\delta^{\tau-1})\inner{\Phi_{+}(\cdot,k_{\star}+\delta\xi),\kappa(\delta
\cdot)W_{\oo}(\cdot)\eta_{\rm far}[\eta_{\rm near},\mu,\delta](\cdot)}_{L^2(\R)}  \nn \\
 &\qquad\qquad-\frac{1}{2}\delta E_{+}''(\widetilde{\xi}_{+}^\delta)\xi^2\widehat{\eta}_{+,\rm near}(\xi). \nn
\end{align}

\nit  We next implement a somewhat lengthy computation resulting in equation \eqref{compacterroreqn} of
Proposition \ref{near_freq_compact}. This equation is a reformulation of \eqref{near6}-\eqref{near5} as a band-limited system
of non-homogeneous Dirac equations, formulated in quasi-momentum space. To this end, we must expand  the inner products in \eqref{near6}-\eqref{near5} for $\delta\ne0$ and small.
\medskip

\noindent \textbf{Simplifying 
$\inner{\Phi_\pm(\cdot,k_{\star}+\delta\xi),\kappa(\delta
\cdot)W_{\oo}(\cdot)\eta_{\rm near}(\cdot)}_{L^2(\R)}$ via Poisson summation}.
 We simplify the inner product 
by expressing $\eta_{\rm near}$  in terms of its spectral components near $E_\star$ ($k'\equiv k-k_\star$ small)
plus  a correction.

By \eqref{eta_near_expression},  using that $\Phi_{\pm}(x,k)=e^{ikx}p_{\pm}(x,k)$,  we have
\begin{align*}
 \eta_{\rm near}(x) &= \frac{1}{2\pi}\ \int_{\abs{k-k_{\star}}\leq\delta^{\tau}} 
\Phi_{+}(x,k)\widetilde{\eta}_{+,\rm near}(k) dk \\
 &\qquad+ \frac{1}{2\pi}\ \int_{\abs{k-k_{\star}}\leq\delta^{\tau}}
 \Phi_{-}(x,k)\widetilde{\eta}_{-,\rm near}(k)dk \\
 &= \frac{1}{2\pi}\ \int_{\abs{k-k_{\star}}\leq\delta^{\tau}}e^{ik_{\star}x}e^{i(k-k_{\star})x}
 p_{+}(x,k)\widehat{\eta}_{+,\rm near}\left(\frac{k-k_{\star}}{\delta}\right)dk \\
&\qquad+ \frac{1}{2\pi}\ \int_{\abs{k-k_{\star}}\leq\delta^{\tau}}e^{ik_{\star}x}e^{i(k-k_{\star})x}
p_{-}(x,k)\widehat{\eta}_{-,\rm near}\left(\frac{k-k_{\star}}{\delta}\right)dk.
\end{align*}
 By Proposition \ref{flo-blo-dirac},  $p_{\pm}(x,k)$ is smooth in $k$, and we may write 
\begin{align}
 \label{pexp}
 p_{\pm}(x,k) &= p_{\pm}(x,k_{\star}) + (k-k_{\star})\partial_kp_{\pm}(x,\widetilde{k}_{\pm}(x,\delta\xi)) \nonumber\\
 &= p_{\pm}(x,k_{\star}) + \delta\xi\partial_k p_{\pm}(x,\widetilde{k}_{\pm}(x;\delta\xi)) \nonumber\\
 &\equiv p_{\pm}(x,k_{\star}) + \Delta p_{\pm}(x;\delta\xi),
\end{align} where $\widetilde{k}_{\pm}(x;\delta\xi)$ lies between $k_{\star}$ and $k_{\star}+\delta\xi$, 
and  
\begin{equation}
 \label{deltapbdd}
  \abs{\Delta p_{\pm}(x,\delta\xi)} \leq \underset{x\in[0,1], ~ \abs{\omega}\leq\delta^{\tau}}{\sup}
\abs{\Delta p_{\pm}(x,\omega)} \leq \delta^{\tau},\ \ \abs{\xi}\leq\delta^{\tau-1}\ .
\end{equation}

\nit By \eqref{pexp} and \eqref{xi-def}-\eqref{teta-xi}
\begin{align}
 \eta_{\rm near}(x) &= e^{ik_{\star}x}p_{+}(x,k_{\star}) \frac{\delta}{2\pi}\ \int_{\abs{\xi}\leq\delta^{\tau-1}}
 e^{i\xi\delta x}\widehat{\eta}_{+,\rm near}(\xi)d\xi 
 + \frac{\delta}{2\pi}\  e^{ik_{\star}x} \rho_{+}(x,\delta x)\nonumber\\
&~~~+ e^{ik_{\star}x}p_{-}(x,k_{\star}) \frac{\delta}{2\pi}\ \int_{\abs{\xi}\leq\delta^{\tau-1}}
 e^{i\xi\delta x}\widehat{\eta}_{-,\rm near}(\xi)d\xi 
 + \frac{\delta}{2\pi}\  e^{ik_{\star}x} \rho_{-}(x,\delta x)\nonumber\\
&= e^{ik_{\star}x} \frac{\delta}{2\pi}\  \left[p_{+}(x,k_{\star})\eta_{+,\rm near}(\delta x) + \rho_{+}(x,\delta x) \right.\nn\\
&\qquad\qquad\qquad \left. + p_{-}(x,k_{\star})\eta_{-,\rm near}(\delta x) 
 + \rho_{-}(x,\delta x)\right], \label{nearinxi}
\end{align} where
\begin{equation}
 \label{rhodefn}
 \rho_{\pm}(x,X) = \int_{\abs{\xi}\leq\delta^{\tau-1}}e^{i\xi X} \Delta p_{\pm}(x,\delta\xi)
\widehat{\eta}_{\pm,\rm near}(\xi)d\xi.
\end{equation} 

\nit We next expand the inner product in \eqref{near5} for small $\delta$; the \eqref{near6} case is treated similarly. Substituting
\eqref{nearinxi} into the inner product in \eqref{near5} yields
\begin{align}
 &\inner{\Phi_{+}(\cdot,k_{\star}+\delta\xi),\kappa(\delta\cdot)W_{\oo}(\cdot)
\eta_{\rm near}(\cdot)}_{L^2(\R)} = \label{full_inner} \\
& \frac{\delta}{2\pi}\  \inner{e^{i\xi\delta \cdot}p_+(\cdot,k_{\star}+\delta\xi), p_{+}(\cdot,k_{\star})
W_{\oo}(\cdot)\kappa(\delta\cdot) {\eta}_{+,\rm near}(\delta \cdot)}_{L^2(\R)} \label{inner1}\\
&+ \frac{\delta}{2\pi}\  \inner{e^{i\xi\delta \cdot}p_{+}(\cdot,k_{\star}+\delta\xi), p_{-}(\cdot,k_{\star})
W_{\oo}(\cdot)\kappa(\delta\cdot) {\eta}_{-,\rm near}(\delta \cdot)}_{L^2(\R)} \label{inner2}\\
&+ \frac{\delta}{2\pi}\  \inner{e^{i\xi\delta \cdot}p_{+}(\cdot,k_{\star}+\delta\xi),
W_{\oo}(\cdot)\kappa(\delta\cdot)\rho_{+}(\cdot,\delta \cdot)}_{L^2(\R)} \label{inner3}\\
&+ \frac{\delta}{2\pi}\ \inner{e^{i\xi\delta \cdot}p_{+}(\cdot,k_{\star}+\delta\xi),
W_{\oo}(\cdot)\kappa(\delta\cdot)\rho_{-}(\cdot,\delta \cdot)}_{L^2(\R)} \ . \label{inner4}
\end{align} 

To obtain a detailed expansion of the inner product terms in \eqref{inner1}-\eqref{inner4} we shall make repeated use
the following lemma, which is  proved using the $L^2_{\rm loc}-$ Poisson summation formula of Theorem \ref{psum-L2}.
{\ }\medskip

\begin{lemma}
 \label{poisson_exp}
 Let $f(x,\xi)$ and $g(x)$ denote smooth functions of $(x,\xi)\in\R\times\R$ that are $1$-periodic in $x$. 
 Let $\Gamma(x,X)$ be defined for $(x,X)\in\R\times\R$, and such that conditions
\eqref{Gamma-cond1}-\eqref{Gamma-cond2} hold:
\begin{align}\label{Gamma-conditions1}
&\Gamma(x+1,X)\ =\ \Gamma(x,X),\\
&\sum_{j=0}^2\ \int_0^1\ \left\|\D_x^j\Gamma(x,X)\right\|_{L^2(\R_X)}^2\ dx\ <\ \infty. 
\label{Gamma-conditions2}\end{align}
Denote by $\widehat{\Gamma}(x,\omega)$ its Fourier transform on $\R$ with respect to the $X$ variable. Then,
\begin{align}
 \label{poisson_app}
 &\frac{\delta}{2\pi}\  \inner{e^{i\xi\delta\cdot}f(\cdot,\delta\xi), g(\cdot)\Gamma(\cdot,\delta\cdot)}_{L^2(\R)} \\
 &\qquad= 
 \sum_{m\in\mathbb{Z}} \int_0^1 e^{2\pi imx} \widehat{\Gamma}\left(x,\frac{2\pi m}{\delta}+\xi\right) g(x)
\overline{f(x,\delta\xi)} dx, \nn
\end{align}
with equality holding in $L^2_{\rm loc}([-\xi_{\rm max},\xi_{max}];d\xi)$, for any fixed $\xi_{\rm max}>0$.
\end{lemma}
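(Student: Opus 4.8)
The plan is to reduce the identity \eqref{poisson_app} to the $L^2_{\rm loc}$ Poisson summation formula of Theorem \ref{psum-L2}, applied not to $\Gamma$ itself but to a $\delta$-rescaled version of it. First I would unfold the inner product over $\R$ into a sum over unit period cells. Writing $\langle e^{i\xi\delta\cdot}f(\cdot,\delta\xi),g\,\Gamma(\cdot,\delta\cdot)\rangle_{L^2(\R)}=\int_\R e^{-i\delta\xi x}\overline{f(x,\delta\xi)}\,g(x)\,\Gamma(x,\delta x)\,dx$ (interpreted in the sense of Remark \ref{welldefinedonL2}) and substituting $x\mapsto x+n$, $n\in\Z$, the $1$-periodicity of $x\mapsto f(x,\cdot)$, of $g$, and of $\Gamma(\cdot,X)$ lets me pull the periodic factors out of the $n$-sum, giving
\[
\frac{\delta}{2\pi}\,\langle e^{i\xi\delta\cdot}f(\cdot,\delta\xi),g\,\Gamma(\cdot,\delta\cdot)\rangle_{L^2(\R)}=\frac{\delta}{2\pi}\int_0^1 \overline{f(x,\delta\xi)}\,g(x)\left[\,\sum_{n\in\Z} e^{-i\delta\xi(x+n)}\Gamma(x,\delta(x+n))\,\right]dx .
\]

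Next I would identify the bracketed series. Set $\Phi(x,Z)\equiv\Gamma(x,\delta Z)$. Since $\delta>0$ is fixed, $\Phi$ lies in the space $\mathbb{H}^2$ of Theorem \ref{psum-L2}: it is $1$-periodic in $x$, and $\|\D_x^j\Phi(x,\cdot)\|_{L^2(\R_Z)}^2=\delta^{-1}\|\D_x^j\Gamma(x,\cdot)\|_{L^2(\R_X)}^2$, so $\|\Phi\|_{\mathbb{H}^2}^2=\delta^{-1}\|\Gamma\|_{\mathbb{H}^2}^2<\infty$. A change of variables in \eqref{Gammahat-def} gives $\widehat{\Phi}(x,\omega)=\delta^{-1}\widehat{\Gamma}(x,\omega/\delta)$. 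Since $\Phi(x,x+n)=\Gamma(x,\delta(x+n))$, the bracketed series is exactly $\sum_{n\in\Z}e^{-i(\delta\xi)(x+n)}\Phi(x,x+n)$, i.e.\ the left-hand side of the Poisson summation formula for $\Phi$ with spectral parameter $\zeta=\delta\xi$.

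I would then apply Theorem \ref{psum-L2} to $\Phi$ with $\zeta=\delta\xi$ and $\zeta_{\rm max}=\delta\,\xi_{\rm max}$. This yields, with equality in $L^2([0,1]\times[-\xi_{\rm max},\xi_{\rm max}];dx\,d\xi)$ (the change of variables $\zeta=\delta\xi$ is a diffeomorphism of the interval and multiplies the measure by the constant $\delta$, hence leaves the $L^2$ topology unchanged),
\[
\sum_{n\in\Z} e^{-i\delta\xi(x+n)}\Phi(x,x+n)=2\pi\sum_{n\in\Z} e^{2\pi inx}\,\widehat{\Phi}\!\left(x,2\pi n+\delta\xi\right)=\frac{2\pi}{\delta}\sum_{n\in\Z} e^{2\pi inx}\,\widehat{\Gamma}\!\left(x,\frac{2\pi n}{\delta}+\xi\right).
\]
Substituting this back cancels the prefactor $\delta/2\pi$ against $2\pi/\delta$ and produces precisely \eqref{poisson_app}, once the $x$-integral over $[0,1]$ and the $n$-sum are interchanged. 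That interchange is legitimate because the partial sums $\sum_{|n|\le N}e^{2\pi inx}\widehat{\Gamma}(x,\tfrac{2\pi n}{\delta}+\xi)$ converge in $L^2([0,1]\times[-\xi_{\rm max},\xi_{\rm max}])$, while $x\mapsto\overline{f(x,\delta\xi)}g(x)$ is bounded uniformly on $[0,1]\times[-\xi_{\rm max},\xi_{\rm max}]$ by smoothness and periodicity, so integration in $x\in[0,1]$ against this factor is a bounded map $L^2([0,1]\times[-\xi_{\rm max},\xi_{\rm max}])\to L^2([-\xi_{\rm max},\xi_{\rm max}])$.

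The main obstacle is bookkeeping rather than analysis: I must keep the $\delta$-rescaling consistent across three coupled objects — the ``fast'' variable in $\Gamma$ versus $\Phi$ ($X=\delta Z$), the Fourier dual variable ($\omega\mapsto\omega/\delta$ in $\widehat{\Phi}$ versus $\widehat{\Gamma}$), and the interval on which $L^2$-equality is asserted ($\zeta=\delta\xi$) — and, since none of the series involved converge absolutely, every manipulation (unfolding $\int_\R$ as $\sum_n\int_0^1$, pulling the periodic factors through the $n$-sum, and the final sum/integral interchange) must be carried out in the $L^2_{\rm loc}(d\xi)$ sense in which Theorem \ref{psum-L2} delivers its conclusion, not pointwise in $\xi$. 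Verifying $\Phi\in\mathbb{H}^2$ and computing $\widehat{\Phi}$ are routine changes of variables, and everything heavier has already been absorbed into Theorem \ref{psum-L2}.
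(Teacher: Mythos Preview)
Your proof is correct and follows essentially the same route as the paper's: unfold the $L^2(\R)$ inner product over unit cells, apply Theorem \ref{psum-L2} to the $\delta$-rescaled function (your $\Phi(x,Z)=\Gamma(x,\delta Z)$, which the paper writes more loosely as ``$\Gamma=\Gamma(x,\delta x)$''), and then interchange the resulting $n$-sum with the $\int_0^1 dx$ integration. The only cosmetic difference is that the paper packages both interchange steps via the auxiliary Lemma \ref{interchange} (a Cauchy--Schwarz argument), whereas you justify the final interchange directly from the uniform boundedness of $(x,\xi)\mapsto\overline{f(x,\delta\xi)}g(x)$; both are valid.
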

\medskip

\begin{remark}\label{applying-poisson_exp}
We shall apply Lemma \ref{poisson_exp} with $f(x,\delta\xi)$ and $g(x)$ arising from $p_\pm(x,k_\star+\delta\xi)$ and
$p_\pm(x;k_\star)$, for $|\delta\xi|\le\delta^\tau,\ \tau>0$. The functions, $\Gamma$, which arise are:
$\Gamma=\Gamma(\delta x)=\kappa(\delta x)\eta_{\pm,{\rm near}}(\delta x)$ and
$\Gamma=\Gamma(x,\delta x)=\kappa(\delta x)\rho_{\pm}(x,\delta x)$, where $\rho_\pm$, which depends on $\eta_{\pm,{\rm
near}}$,  is defined in \eqref{rhodefn}. The function $\eta_{\pm,{\rm near}}$ will be constructed to be band-limited.
Recall also that $p_\pm(x,k)$ is smooth in $x$ and $k$. Hence, 
for these choices of $f, g$ and $\Gamma$, the hypotheses of Lemma \ref{poisson_exp} are easily checked.
\end{remark}
\medskip

To prove Lemma \ref{poisson_exp} we shall use:
\begin{lemma}\label{interchange}
Let $F(x,\zeta)$ and the sequence $F_n(x,\zeta),\ n=1,2,\dots$, belong to
$L^2([0,1]\times[-\zeta_{\rm max},\zeta_{max}];dx d\zeta)$. Assume that
\begin{equation}
\left\| F_n-F \right\|_{L^2([0,1]\times[-\zeta_{\rm max},\zeta_{max}];dxd\zeta)}\ \to\ 0,\ \ {\rm as}\ \ n\to\infty\ .
\label{fntof}\end{equation}
Let $G\in L^2([-\zeta_{\rm max},\zeta_{max}];d\zeta)$. Then, in $L^2([0,1];dx)$
\begin{equation*}
\lim_{n\to\infty}\int^{\zeta_{\rm max}}_{-\zeta_{\rm max}}F_n(x,\zeta)G(\zeta)d\zeta =
\int^{\zeta_{\rm max}}_{-\zeta_{\rm max}}\lim_{n\to\infty}F_n(x,\zeta)G(\zeta)d\zeta = 
\int^{\zeta_{\rm max}}_{-\zeta_{\rm max}} F(x,\zeta)G(\zeta)d\zeta .
\end{equation*}
\end{lemma}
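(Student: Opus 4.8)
The plan is to reduce the entire statement to a single application of the Cauchy--Schwarz inequality in the $\zeta$ variable together with Fubini's theorem, the underlying point being that integration against a fixed $G\in L^2([-\zeta_{\rm max},\zeta_{\rm max}])$ is a bounded linear map from $L^2([0,1]\times[-\zeta_{\rm max},\zeta_{\rm max}];dx\,d\zeta)$ into $L^2([0,1];dx)$.

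First I would check that the integrals appearing in the statement are genuine elements of $L^2([0,1];dx)$. Since $F_n\in L^2([0,1]\times[-\zeta_{\rm max},\zeta_{\rm max}])$, Fubini's theorem gives that for a.e.\ $x\in[0,1]$ the slice $\zeta\mapsto F_n(x,\zeta)$ lies in $L^2([-\zeta_{\rm max},\zeta_{\rm max}])$, and moreover $\int_0^1\|F_n(x,\cdot)\|_{L^2_\zeta}^2\,dx=\|F_n\|_{L^2_{x,\zeta}}^2$. For such $x$, Cauchy--Schwarz in $\zeta$ yields
\[
\Bigl|\int_{-\zeta_{\rm max}}^{\zeta_{\rm max}}F_n(x,\zeta)G(\zeta)\,d\zeta\Bigr|\ \le\ \|F_n(x,\cdot)\|_{L^2_\zeta}\,\|G\|_{L^2_\zeta},
\]
so $\Phi_n(x):=\int_{-\zeta_{\rm max}}^{\zeta_{\rm max}}F_n(x,\zeta)G(\zeta)\,d\zeta$ is a measurable function of $x$ with $\|\Phi_n\|_{L^2_x}\le\|G\|_{L^2_\zeta}\,\|F_n\|_{L^2_{x,\zeta}}<\infty$. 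The same argument applied to $F$ shows $\Phi(x):=\int_{-\zeta_{\rm max}}^{\zeta_{\rm max}}F(x,\zeta)G(\zeta)\,d\zeta\in L^2([0,1];dx)$.

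Next, by linearity of the integral in $\zeta$ one has $\Phi_n(x)-\Phi(x)=\int_{-\zeta_{\rm max}}^{\zeta_{\rm max}}\bigl(F_n-F\bigr)(x,\zeta)\,G(\zeta)\,d\zeta$ for a.e.\ $x$, and applying the same Cauchy--Schwarz/Fubini bound to $F_n-F$ gives
\[
\|\Phi_n-\Phi\|_{L^2_x}\ \le\ \|G\|_{L^2_\zeta}\,\|F_n-F\|_{L^2_{x,\zeta}}\ \longrightarrow\ 0
\]
by hypothesis \eqref{fntof}. This is exactly the asserted equality of the first and last members in $L^2([0,1];dx)$. For the middle member, one reads ``$\lim_{n\to\infty}F_n$'' as the $L^2_{x,\zeta}$-limit of $F_n$, which is $F$ by \eqref{fntof} (equivalently, pass to a subsequence converging a.e.\ and use dominated convergence on each slice); hence $\int_{-\zeta_{\rm max}}^{\zeta_{\rm max}}\bigl(\lim_n F_n\bigr)(x,\zeta)G(\zeta)\,d\zeta=\Phi(x)$, which closes the chain of equalities.

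There is essentially no obstacle here beyond bookkeeping: the only step needing a word of care is the well-definedness of $\Phi_n$ and $\Phi$ for a.e.\ $x$, which is precisely what Fubini supplies; everything else is Cauchy--Schwarz and the hypothesis \eqref{fntof}. I would present the argument in the order above.
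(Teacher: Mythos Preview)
Your proof is correct and is essentially the same argument as the paper's, which reads in its entirety: ``Square the difference, apply Cauchy-Schwarz and then integrate $\int_0^1dx$.'' You have simply spelled out the Fubini/well-definedness details that the paper leaves implicit.
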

{\it Proof of Lemma \ref{interchange}:}\  Square the difference, apply Cauchy-Schwarz and then integrate $\int_0^1dx$.\\
\medskip

\begin{proofof}{\it Proof of Lemma \ref{poisson_exp}:}
Note first that the inner product on the left hand side of \eqref{poisson_app} is well-defined in $L^2([0,1];d\xi)$ 
by Remark \ref{welldefinedonL2}.
Applying Lemma \ref{interchange} and using that $x\mapsto f(x,\delta\xi)$ and $x\mapsto g(x)$ have period one we obtain 
\begin{align*}
&\frac{\delta}{2\pi}\ \inner{e^{i\xi\delta\cdot}f(\cdot,\delta \xi), g(\cdot)\Gamma(\cdot,\delta\cdot)}_{L^2(\R)} \nn \\
&\qquad=
\frac{\delta}{2\pi}\ \int_{-\infty}^{\infty}e^{-i\xi(\delta x)} \overline{f(x,\delta\xi)} g(x) \Gamma(x,\delta x) dx \nn\\
&\qquad= \frac{\delta}{2\pi}\ \lim_{N\to\infty}\int_{-N}^{N+1}e^{-i\xi(\delta x)} \overline{f(x,\delta\xi)} g(x) \Gamma(x,\delta x)
dx \nn \\
&\qquad=\frac{\delta}{2\pi}\ \lim_{N\to\infty}\sum_{n=-N}^N\ \int_n^{n+1}e^{-i\xi\delta x}\Gamma(x,\delta x) \overline{f(x,\delta\xi)} g(x)
dx \nn\\
&\qquad=\frac{\delta}{2\pi} \lim_{N\to\infty} \int_0^1\ \left[ \sum_{n=-N}^N e^{-i\xi\delta (x+n)}\Gamma\left(x,\delta (x+n)\right)\right]
\overline{f(x,\delta\xi)} g(x)
dx .
\end{align*}
Next, we apply Theorem \ref{psum-L2} with $\Gamma=\Gamma(x,\delta x)$ and $\zeta=\delta\xi$. We obtain
\begin{align*}
 \sum_{m\in\mathbb{Z}}e^{-i\xi\delta(x+m)}\Gamma\left(x,\delta(x+m)\right) &=
\frac{2\pi}{\delta}\sum_{m\in\mathbb{Z}}e^{2\pi imx}\widehat{\Gamma}\left(x,\frac{2\pi m+\delta\xi}{\delta}\right) \\
&=\frac{2\pi}{\delta}\sum_{m\in\mathbb{Z}}e^{2\pi imx}\widehat{\Gamma}\left(x,\frac{2\pi m}{\delta}+\xi\right),
\end{align*}
with equality holding in $L^2([0,1]\times[-\xi_{\rm max},\xi_{\rm max}];dx d\xi)$. 
Again applying Lemma \ref{interchange},
we may interchange limit and integral:
\begin{align*}
&=\frac{\delta}{2\pi}\ \int_0^1\left[\frac{2\pi}{\delta} \lim_{N\to\infty}\sum_{m=-N}^N e^{2\pi imx}\ \widehat{\Gamma}\left(x,\frac{2\pi
m}{\delta}+\xi\right)\right]
\overline{f(x,\delta\xi)} g(x) dx \\
&=  \lim_{N\to\infty} \sum_{m=-N}^N \int_0^1  e^{2\pi imx}\widehat{\Gamma}\left(x,\frac{2\pi m}{\delta}+\xi\right)
\overline{f(x,\delta\xi)} g(x) dx, 
\end{align*} 
This completes the proof of Lemma \ref{poisson_exp}.
\end{proofof}
\nit We next apply Lemma \ref{poisson_exp} to each of the inner products \eqref{inner1}-\eqref{inner4}.\medskip
 
 \nit\underline{\it Expansion of inner product  \eqref{inner1}:}
 Let  $f(x,\delta\xi)=p_+(x,k_{\star}+\delta\xi)$, \\ $g(x)=p_+(x,k_{\star})W_{\oo}(x)$ and
$\Gamma(x,\delta x) = \kappa(\delta x)\eta_{+,\rm near}(\delta x)$. By Lemma \ref{poisson_exp}, 
\begin{align*}
&\frac{\delta}{2\pi}\ \inner{e^{i\xi\delta \cdot}p_+(\cdot,k_{\star}+\delta\xi), p_{+}(\cdot,k_{\star})
W_{\oo}(\cdot)\kappa(\delta\cdot) {\eta}_{+,\rm near}(\delta \cdot)}_{L^2(\R)} \\ &=
\sum_{m\in\mathbb{Z}}\int_0^1e^{2\pi imx}\mathcal{F}_X[\kappa\eta_{+,\rm near}]\left(\frac{2\pi m}{\delta}+\xi
\right) 
\overline{p_{+}(x,k_{\star}+\delta\xi)}p_{+}(x,k_{\star}) W_{\oo}(x)dx.
\end{align*} Using the expansion of $p_{+}(x,k_{\star}+\delta\xi)$ displayed in \eqref{pexp}, we may rewrite this as
\begin{align*}
&\frac{\delta}{2\pi}\ \inner{e^{i\xi\delta \cdot}p_+(\cdot,k_{\star}+\delta\xi), p_{+}(\cdot,k_{\star})
W_{\oo}(\cdot)\kappa(\delta\cdot) {\eta}_{+,\rm near}(\delta \cdot)}_{L^2(\R)} \\
&\qquad \equiv I_+^1(\xi;\eta_{+,\rm near}) + I_+^2(\xi;\eta_{+,\rm near}),
\end{align*}
where
\begin{align}
& I_+^1(\xi;\eta_{+,\rm near}) \label{I_1} \\ 
\qquad &=
\sum_{m\in\mathbb{Z}} \mathcal{F}_X[\kappa\eta_{+,\rm near}]\left(\frac{2\pi m}{\delta}+\xi\right)
\int_0^1e^{2\pi imx} \abs{p_{+}(x,k_{\star})}^2 W_{\oo}(x)dx,  \nn \\
& I_+^2(\xi;\eta_{+,\rm near}) \label{I_2} \\
\qquad &= 
\sum_{m\in\mathbb{Z}}
\mathcal{F}_X[\kappa\eta_{+,\rm near}]\left(\frac{2\pi m}{\delta}+\xi\right)  \int_0^1
e^{2\pi imx} \overline{\Delta p_{+}(x,\delta\xi)}p_{+}(x,k_{\star}) W_{\oo}(x)dx .  \nn
\end{align}
The $m=0$ term in the summation of $I_+^1(\xi;\eta_{+,\rm near})$ in \eqref{I_1} vanishes:
\[\int_0^1\abs{p_{+}(x,k_{\star})}^2W_{\oo}(x)dx=0,\]
 since the integrand is the product of even and odd index
Fourier series. Thus
\begin{equation}
I_+^1(\xi;\eta_{+,\rm near}) =\
\sum_{\abs{m}\geq1} \mathcal{F}_X[\kappa\eta_{+,\rm near}]\left(\frac{2\pi m}{\delta}+\xi\right) 
\int_0^1e^{2\pi imx} \abs{p_{+}(x,k_{\star})}^2W_{\oo}(x)dx. \label{ibstar1}
\end{equation} 

\nit\underline{\it Expansion of the inner product \eqref{inner2}:} By Lemma  \ref{poisson_exp} with 
$f(x,\delta\xi)=p_+(x;k_\star+\delta\xi)$, 
$g(x)=p_{-}(x,k_{\star})W_\oo(x)$ and $\Gamma(\delta x)=
\kappa(\delta x) {\eta}_{-,\rm near}(\delta x)$, 
we have 
\begin{align}
&\frac{\delta}{2\pi}\ \inner{e^{i\xi\delta \cdot}p_+(\cdot,k_{\star}+\delta\xi), p_{-}(\cdot,k_{\star})
W_{\oo}(\cdot)\kappa(\delta\cdot) {\eta}_{-,\rm near}(\delta \cdot)}_{L^2(\R)}\nn\\
&= \sum_{m\in\mathbb{Z}} \mathcal{F}_X[\kappa\eta_{-,\rm near}]\left(\frac{2\pi m}{\delta}+\xi\right) 
\int_0^1e^{2\pi imx}\overline{p_{+}(x,k_{\star})}p_{-}(x,k_{\star}) W_{\oo}(x)dx \nn \\
&\qquad+ \sum_{m\in\mathbb{Z}}
\mathcal{F}_X[\kappa\eta_{-,\rm near}]\left(\frac{2\pi m}{\delta}+\xi\right) \int_0^1 e^{2\pi imx}
\overline{\Delta 
p_{+}(x,\delta\xi)}p_{-}(x,k_{\star}) W_{\oo}(x)dx \nn \\
&\equiv \mathcal{F}_X[\kappa\eta_{-,\rm near}]\left(\xi\right)
\inner{\Phi_1,W_{\oo}\Phi_2}_{L^2([0,1])} + I_+^3(\xi;\eta_{-,\rm near}) +
I_+^4(\xi;\eta_{-,\rm near}),\nn
\end{align}
where
\begin{align}
& I_+^3(\xi;\eta_{-,\rm near})  \label{I_3} \\
& \qquad \equiv \sum_{\abs{m}\geq1} \mathcal{F}_X[\kappa\eta_{-,\rm near}]\left(\frac{2\pi m}{\delta}+\xi\right) 
\int_0^1e^{2\pi imx}
\overline{p_{+}(x,k_{\star})}p_{-}(x,k_{\star}) W_{\oo}(x)dx, \nn \\
& I_+^4(\xi;\eta_{-,\rm near}) \label{I_4} \\
& \qquad \equiv \sum_{m\in\mathbb{Z}}
\mathcal{F}_X[\kappa\eta_{-,\rm near}]\left(\frac{2\pi m}{\delta}+\xi\right) \int_0^1 e^{2\pi imx}
\overline{\Delta
p_{+}(x,\delta\xi)}p_{-}(x,k_{\star}) W_{\oo}(x)dx.  \nn
\end{align} 
The $m=0$ contribution  is nonzero in this case, provided $\left\langle\Phi_1,W_\oo\Phi_2\right\rangle\ne0$.
\medskip

\nit \underline{\it Expansion of  inner products \eqref{inner3} and \eqref{inner4}: }  Note the form of the
dependence of $\rho_{+}(x,\delta x)$ on $x$ and  recall the expansion \eqref{pexp} of $p_{+}(x,k_{\star}+\delta\xi)$.
Applying Lemma \ref{poisson_exp} to  \eqref{inner3}   with 
$f(x,\delta\xi)=p_+(x;k_\star+\delta\xi)$, $g(x)=W_\oo(x)$ and $\Gamma(x,\delta x)=
\kappa(\delta x) \rho_+(x,\delta x)$ we have
\begin{align}
&\frac{\delta}{2\pi}\ \inner{e^{i\xi\delta \cdot}p_+(\cdot,k_{\star}+\delta\xi),
W_{\oo}(\cdot)\kappa(\delta\cdot) {\rho}_{+}(\cdot,\delta\cdot)}_{L^2(\R)}  
\equiv I_+^5(\xi;\eta_{+,\rm near}) + I_+^6(\xi;\eta_{+,\rm near}) \ , \nn\\
&\textrm{where}\nn\\
&I_+^5(\xi;\eta_{+,\rm near}) \equiv \sum_{m\in\mathbb{Z}} \int_0^1e^{2\pi imx}
 \mathcal{F}_X[\kappa\rho_{+}]\left(x,\frac{2\pi m}{\delta}+\xi\right)
\overline{p_{+}(x,k_{\star})} W_{\oo}(x)dx \ , \label{I_5} \\
&I_+^6(\xi;\eta_{+,\rm near}) \equiv \sum_{m\in\mathbb{Z}} \int_0^1 e^{2\pi imx} 
\mathcal{F}_X[\kappa\rho_{+}]\left(x,\frac{2\pi m}{\delta}+\xi\right)
\overline{\Delta p_{+}(x,\delta\xi)} W_{\oo}(x)dx \ . \label{I_6}
\end{align} 

\nit  Further, by  Lemma \ref{poisson_exp} applied to   \eqref{inner4} with
 $f(x,\delta\xi)=p_+(x;k_\star+\delta\xi)$, $g(x)=W_\oo(x)$ and $\Gamma(x,\delta x)=
\kappa(\delta x) \rho_-(x,\delta x)$ we have
\begin{align}
&\frac{\delta}{2\pi}\ \inner{e^{i\xi\delta \cdot}p_+(\cdot,k_{\star}+\delta\xi),
W_{\oo}(\cdot) \kappa(\delta\cdot) {\rho}_{-}(\cdot,\delta\cdot)}_{L^2(\R)} \equiv
 I_+^7(\xi;\eta_{-,\rm near}) + I_+^8(\xi;\eta_{-,\rm near}),\nn\\
 &\textrm{where}\nn\\
& I_+^7(\xi;\eta_{-,\rm near}) \equiv \sum_{m\in\mathbb{Z}} \int_0^1 e^{2\pi imx} 
\mathcal{F}_X[\kappa\rho_{-}]\left(x,\frac{2\pi m}{\delta}+\xi\right)
\overline{p_{+}(x,k_{\star})} W_{\oo}(x)dx \ , \label{I_7}\\
& I_+^8(\xi;\eta_{-,\rm near}) \equiv
 \sum_{m\in\mathbb{Z}} \int_0^1 e^{2\pi imx} 
 \mathcal{F}_X[\kappa\rho_{-}]\left(x,\frac{2\pi m}{\delta}+\xi\right)
\overline{\Delta p_{+}(x,\delta\xi)} W_{\oo}(x)dx \ . \label{I_8}
\end{align}

\nit Assembling the above expansions, we find that the full inner product, \eqref{full_inner}, may be expressed as:
\begin{equation}
 \inner{\Phi_{+}(\cdot,k_{\star}+\delta\xi),\kappa(\delta\cdot)W_{\oo}(\cdot)
\eta_{\rm near}(\cdot)}_{L^2(\R)} = \thetasharp\widehat{\kappa\eta}_{{\rm near},-}(\xi) +
\sum_{j=1}^8I_{+}^j(\xi;\eta_{\rm near})\ ,\label{full-ipplus}
\end{equation} 
where by assumption \eqref{theta-ne0} in Theorem \ref{thm:validity}:
\begin{equation}
\thetasharp=\inner{\Phi_1,W_{\oo}\Phi_2}_{L^2([0,1])}\ne0 \ .
\label{thetasharp-def}
\end{equation}
A very similar calculation yields:
\begin{equation}
 \inner{\Phi_{-}(\cdot,k_{\star}+\delta\xi),\kappa(\delta\cdot)W_{\oo}(\cdot)
\eta_{\rm near}(\cdot)}_{L^2(\R)} = \thetasharp\widehat{\kappa\eta}_{{\rm near},+}(\xi) +
\sum_{j=1}^8I_{-}^j(\xi;\eta_{\rm near}),\ \label{full-ipminus}
\end{equation} 
where the terms $I_{-}^j(\xi;\eta_{\rm near})$ are defined analogously to $I_{+}^j(\xi;\eta_{\rm near})$. 
Note that each term $I_{\pm}^j(\xi;\eta_{\rm near})$ is linear in $\eta_{+,\rm near}$
 or $\eta_{-,\rm near}$.   
\medskip

Noting that $F_\pm[\mu,\delta]$, displayed in \eqref{Fb-def}-\eqref{Fdef}, is an affine function of $\mu$, that
$I^j_\pm(\xi; \widehat{\eta}_{\rm near})$ are linear in $\widehat{\eta}_{+,\rm near}$ and $\eta_{-,\rm near}$,
and that $\eta_{\rm far}$, given by \eqref{eta_far_affine}, is affine in both $\mu$ and $\eta_{\rm near}$, we
may summarize the above calculations in the following:

\begin{proposition}
\label{near_freq_compact}
Let
 \begin{equation}
 \widehat{\beta}(\xi) =
 \begin{pmatrix}
  \widehat{\eta}_{-,\rm near}(\xi) \\ \widehat{\eta}_{+,\rm near}(\xi)\
 \end{pmatrix}.\label{beta-def}
\end{equation}
  The near-energy system \eqref{near6}-\eqref{near5} for the corrector $(\eta,\mu)$ (see \eqref{eta_near+far}) may be written compactly in the form:
\begin{equation}
 \label{compacterroreqn}
 \left(\widehat{\mathcal{D}}^{\delta}+\widehat{\mathcal{L}}^{\delta}(\mu) -\delta \mu\right)\widehat{\beta}(\xi) =
\mu\widehat{\mathcal{M}}(\xi;\delta) + \widehat{\mathcal{N}}(\xi;\delta).
\end{equation} 
Here, $ \widehat{\mathcal{D}}^{\delta}$ denotes the Fourier transform of a ``band-limited Dirac operator'' defined by:
\begin{equation}
 \widehat{\mathcal{D}}^{\delta}\widehat{\beta}(\xi) \equiv -\lamsharp\sigma_3 \xi\widehat{\beta}(\xi)
+ \thetasharp\chi\left(\abs{\xi}\leq\delta^{\tau-1}\right)\sigma_1\widehat{\kappa\beta}(\xi),\ \ 0<\tau<1/2.
\label{bl-dirac-op}
\end{equation} 
Furthermore, $\widehat{\mathcal{L}}^{\delta}(\mu)$ is a linear operator acting on $\widehat{\beta}$ given by
\begin{align}
\widehat{\mathcal{L}}^{\delta}(\mu)\widehat{\beta}(\xi) &\equiv \frac{1}{2}\delta
\chi\left(\abs{\xi}\leq\delta^{\tau-1}\right)
\begin{pmatrix}
 E''_{-}(\widetilde{\xi}^\delta_{-})\\E''_{+}(\widetilde{\xi}^\delta_{+})
\end{pmatrix} \xi^2\widehat{\beta}(\xi) \nn \\
& \quad + \chi\left(\abs{\xi}\leq\delta^{\tau-1}\right)
\sum_{j=1}^8
\begin{pmatrix}
 I_{-}^j(\xi;\widehat{\eta}_{\pm,\rm near}(\xi))\\ I_{+}^j(\xi;\widehat{\eta}_{\pm,\rm near}(\xi))
\end{pmatrix} \label{L_op_1}\\
&\quad\ -\ \chi\left(\abs{\xi}\leq\delta^{\tau-1}\right)
\begin{pmatrix}
\inner{\Phi_{-}(\cdot,k_{\star}+\delta\xi),\kappa(\delta\cdot)W_{\oo}(\cdot)
[A\eta_{\rm near}](\cdot;\mu,\delta)}_{L^2(\R)} \\
\inner{\Phi_{+}(\cdot,k_{\star}+\delta\xi),
\kappa(\delta\cdot)W_{\oo}(\cdot)[A\eta_{\rm near}](\cdot;\mu,\delta)}_{L^2(\R)}
\end{pmatrix}, \label{L_op_2} \\
\widehat{\mathcal{M}}(\xi;\delta) &\equiv \sum_{j=1}^3\widehat{\mathcal{M}}_j(\xi;\delta)\nn\\
&=\chi\left(\abs{\xi}\leq\delta^{\tau-1}\right)
\begin{pmatrix}
 \inner{\Phi_{-}(\cdot,k_{\star}+\delta\xi),\psi^{(0)}(\cdot,\delta\cdot)}_{L^2(\R)} \\
 \inner{\Phi_{+}(\cdot,k_{\star}+\delta\xi),\psi^{(0)}(\cdot,\delta\cdot)}_{L^2(\R)}
\end{pmatrix} \label{M_op_1} \\
&~~~+ \delta \chi\left(\abs{\xi}\leq\delta^{\tau-1}\right)
\begin{pmatrix}
 \inner{\Phi_{-}(\cdot,k_{\star}+\delta\xi),\psi^{(1)}_p(\cdot,\delta\cdot)}_{L^2(\R)} \\
 \inner{\Phi_{+}(\cdot,k_{\star}+\delta\xi),\psi^{(1)}_p(\cdot,\delta\cdot)}_{L^2(\R)}
\end{pmatrix}\label{M_op_2} \\
&~~~+ \chi\left(\abs{\xi}\leq\delta^{\tau-1}\right)
\begin{pmatrix}
 \inner{\Phi_{-}(\cdot,k_{\star}+\delta\xi),\kappa(\delta\cdot)W_{\oo}(\cdot)B(\cdot;\delta)}_{L^2(\R)} \\
 \inner{\Phi_{+}(\cdot,k_{\star}+\delta\xi),\kappa(\delta\cdot)W_{\oo}(\cdot)B(\cdot;\delta)}_{L^2(\R)}
\end{pmatrix},\label{M_op_3}
\end{align}
and $\widehat{\mathcal{N}}(\xi;\delta)$ is independent of $\mu$, given by
\begin{align}
 \widehat{\mathcal{N}}(\xi;\delta)&\equiv \sum_{j=1}^4\widehat{\mathcal{N}}_j(\xi;\delta)\nn\\
 &=\chi\left(\abs{\xi}\leq\delta^{\tau-1}\right) \times \label{N_op_1} \\
&\qquad \begin{pmatrix}
\inner{\Phi_{-}(x,k_{\star}+\delta\xi), \left(2\partial_x\partial_X-\kappa(X)W_{\oo}(x)\right)\psi^{(1)}_p(x,X)
\Big|_{X=\delta x}}_{L^2(\R_x)} \\
\inner{\Phi_{+}(x,k_{\star}+\delta\xi), \left(2\partial_x\partial_X-\kappa(X)W_{\oo}(x)\right)\psi^{(1)}_p(x,X)
\Big|_{X=\delta x}}_{L^2(\R_x)}
\end{pmatrix} \nn \\
&~~~+ \chi\left(\abs{\xi}\leq\delta^{\tau-1}\right)
\begin{pmatrix}
\inner{\Phi_{-}(x,k_{\star}+\delta\xi),\partial_X^2\psi^{(0)}(x,X)\Big|_{X=\delta x}}_{L^2(\R_x)} \\
\inner{\Phi_{+}(x,k_{\star}+\delta\xi),\partial_X^2\psi^{(0)}(x,X)\Big|_{X=\delta x}}_{L^2(\R_x)}
\end{pmatrix} \label{N_op_2}\\
&~~~+ \delta\chi\left(\abs{\xi}\leq\delta^{\tau-1}\right)
\begin{pmatrix}
\inner{\Phi_{-}(x,k_{\star}+\delta\xi),\partial_X^2\psi^{(1)}_p(x,X)\Big|_{X=\delta x}}_{L^2(\R_x)} \\
\inner{\Phi_{+}(x,k_{\star}+\delta\xi),\partial_X^2\psi^{(1)}_p(x,X)\Big|_{X=\delta x}}_{L^2(\R_x)}
\end{pmatrix}\label{N_op_3}\\
&~~~+ \chi\left(\abs{\xi}\leq\delta^{\tau-1}\right)
\begin{pmatrix}
\inner{\Phi_{-}(\cdot,k_{\star}+\delta\xi),\kappa(\delta\cdot)W_{\oo}(\cdot)C(\cdot;\delta)}_{L^2(\R)} \\
\inner{\Phi_{+}(\cdot,k_{\star}+\delta\xi),\kappa(\delta\cdot)W_{\oo}(\cdot)C(\cdot;\delta)}_{L^2(\R)}
\end{pmatrix}.\label{N_op_4}
\end{align}
\end{proposition}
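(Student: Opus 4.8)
The plan is to assemble Proposition \ref{near_freq_compact} by collecting the expansions carried out in this section and reorganizing the near-energy system \eqref{near6}--\eqref{near5} term by term; no new analytic input is required. First I would record the left-hand side: by the smoothness of $k\mapsto E_\pm(k)$ near $k_\star$ (Proposition \ref{flo-blo-dirac}) and the Taylor expansion \eqref{near4}, the terms $(E_\pm(k)-E_\star)\widetilde{\eta}_{\pm,\rm near}(k)$ split, after recentering by $\xi=(k-k_\star)/\delta$ and canceling one power of $\delta$, into the leading Dirac term $\pm\lamsharp\,\xi\,\widehat{\eta}_{\pm,\rm near}(\xi)$ --- which in the variable $\widehat{\beta}$ of \eqref{beta-def} is exactly $-\lamsharp\sigma_3\xi\widehat{\beta}$ --- plus the remainder $\frac{1}{2}\delta\,\xi^2 E_\pm''(\widetilde{\xi}_\pm^\delta)\widehat{\eta}_{\pm,\rm near}(\xi)$, which is placed in the first line \eqref{L_op_1} of $\widehat{\mathcal{L}}^{\delta}(\mu)$.

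Next I would treat the inner products $\inner{\Phi_\pm(\cdot,k_\star+\delta\xi),\kappa(\delta\cdot)W_\oo(\cdot)\eta_{\rm near}(\cdot)}_{L^2(\R)}$ via the computation culminating in \eqref{full-ipplus}--\eqref{full-ipminus}: writing $\eta_{\rm near}$ through \eqref{nearinxi}--\eqref{rhodefn} and invoking the $L^2_{\rm loc}$ Poisson summation Lemma \ref{poisson_exp}, one isolates the $m=0$ resonant contribution, which --- since $\inner{\Phi_1,W_\oo\Phi_2}_{L^2[0,1]}=\thetasharp\ne0$ while $\inner{\Phi_j,W_\oo\Phi_j}_{L^2[0,1]}=0$ by the even/odd parity of the Fourier series --- produces the off-diagonal term $\thetasharp\,\chi(\abs{\xi}\le\delta^{\tau-1})\sigma_1\widehat{\kappa\beta}(\xi)$ completing $\widehat{\mathcal{D}}^{\delta}$, while the remaining contributions $\sum_{j=1}^8 I_\pm^j(\xi;\eta_{\rm near})$, each linear in $\widehat{\eta}_{\pm,\rm near}$ hence in $\widehat{\beta}$, enter the second line of \eqref{L_op_1}.

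Then I would eliminate $\eta_{\rm far}$ from \eqref{near6}--\eqref{near5} using the affine representation $\eta_{\rm far}[\eta_{\rm near},\mu,\delta]=[A\eta_{\rm near}](\cdot;\mu,\delta)+\mu B(\cdot;\delta)+C(\cdot;\delta)$ of Corollary \ref{fixed-pt}(3). The term $\inner{\Phi_\pm(\cdot,k_\star+\delta\xi),\kappa(\delta\cdot)W_\oo(\cdot)\eta_{\rm far}(\cdot)}_{L^2(\R)}$ then splits into: the $A\eta_{\rm near}$ piece (linear in $\eta_{\rm near}$, $\mu$-dependent), which becomes \eqref{L_op_2}; the $\mu B$ piece, which becomes $\widehat{\mathcal{M}}_3$ in \eqref{M_op_3}; and the $C$ piece, which becomes $\widehat{\mathcal{N}}_4$ in \eqref{N_op_4}. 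Finally I would sort the forcing. Since $F_\pm[\mu,\delta]=F_\pm^{1,\delta}+\mu F_\pm^{2,\delta}+\delta\mu F_\pm^{3,\delta}+F_\pm^{4,\delta}+\delta F_\pm^{5,\delta}$ by \eqref{Fb-def}--\eqref{Fdef} is affine in $\mu$, its $\mu$-independent parts $F^{1,\delta},F^{4,\delta},\delta F^{5,\delta}$ (the inner products against $(2\partial_x\partial_X-\kappa W_\oo)\psi^{(1)}_p$, $\partial_X^2\psi^{(0)}$, $\delta\partial_X^2\psi^{(1)}_p$) give $\widehat{\mathcal{N}}_1,\widehat{\mathcal{N}}_2,\widehat{\mathcal{N}}_3$ in \eqref{N_op_1}--\eqref{N_op_3}; its $\mu$-linear parts $\mu F^{2,\delta},\delta\mu F^{3,\delta}$ (against $\psi^{(0)}$ and $\delta\psi^{(1)}_p$) give $\widehat{\mathcal{M}}_1,\widehat{\mathcal{M}}_2$ in \eqref{M_op_1}--\eqref{M_op_2}; and the explicit $\delta\mu\,\widehat{\eta}_{\pm,\rm near}(\xi)$ terms of \eqref{near6}--\eqref{near5} are precisely the $-\delta\mu$ term on the left of \eqref{compacterroreqn}. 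Writing all diagonal-versus-off-diagonal placements in the $2\times2$ notation keyed to \eqref{beta-def} yields \eqref{compacterroreqn} with the stated $\widehat{\mathcal{D}}^{\delta},\widehat{\mathcal{L}}^{\delta}(\mu),\widehat{\mathcal{M}},\widehat{\mathcal{N}}$.

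There is no genuine obstacle here: the statement is a bookkeeping reformulation, and all of its nontrivial ingredients --- the dispersion expansion near $k_\star$, the $L^2_{\rm loc}$ Poisson summation, and the affine dependence of $\eta_{\rm far}$ on $(\eta_{\rm near},\mu)$ --- have already been established. The only real care needed is combinatorial: keeping the $2\pi m/\delta$ frequency shifts straight, confirming that exactly one $m=0$ resonance survives (in the $p_+$--$p_-$ cross term) and that it is absent from the $\abs{p_\pm(\cdot,k_\star)}^2$ terms by parity, tracking the lone surviving power of $\delta$ through every inner product, and respecting the affine-in-$(\mu,\eta_{\rm near})$ structure so that the operator $\widehat{\mathcal{L}}^{\delta}(\mu)$, the $\mu$-multiplier $\widehat{\mathcal{M}}$, and the pure source $\widehat{\mathcal{N}}$ are cleanly separated.
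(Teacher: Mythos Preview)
Your proposal is correct and follows essentially the same approach as the paper: the proposition is presented there as a summary of the computations already carried out in Section \ref{subsec:near_freq} (the paper's ``proof'' is the single sentence preceding the proposition, noting that $F_\pm[\mu,\delta]$ is affine in $\mu$, that the $I^j_\pm$ are linear in $\widehat{\eta}_{\pm,\rm near}$, and that $\eta_{\rm far}$ is affine in $(\eta_{\rm near},\mu)$ via Corollary \ref{fixed-pt}(3)). Your four-step assembly --- dispersion expansion \eqref{near4} $\to$ $\sigma_3$ term plus $E''$ remainder, Poisson-summation expansion \eqref{full-ipplus}--\eqref{full-ipminus} $\to$ $\sigma_1$ term plus $\sum I^j_\pm$, affine splitting of $\eta_{\rm far}$ $\to$ \eqref{L_op_2}, \eqref{M_op_3}, \eqref{N_op_4}, and affine splitting of $F_\pm[\mu,\delta]$ $\to$ the remaining $\widehat{\mathcal{M}}_j,\widehat{\mathcal{N}}_j$ --- is exactly the bookkeeping the paper performs implicitly.
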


\medskip

We conclude this section with a proposition  asserting that from an appropriate solution:
$( \widehat{\beta}^\delta$, $\mu(\delta) )\in L^{2,1}(\R_\xi)\times\R$
of \eqref{compacterroreqn} one can construct a bound state $\left(\Psi^\delta,E^\delta\right)$ of the eigenvalue
problem.\medskip

\nit{\bf Therefore we can henceforth focus on the solving and estimating the solution of the band-limited Dirac system
\eqref{compacterroreqn}.}
\medskip

\begin{proposition}\label{needtoshow}
Suppose, for $0<\delta<\delta_0$, the band-limited Dirac system \eqref{compacterroreqn} has a solution 
$\left(\widehat{\beta}^\delta(\xi),\mu(\delta)\right)$, where 
$\widehat{\beta}^\delta=(\widehat{\beta}^\delta_-,\widehat{\beta}^\delta_+)^T$,  satisfying:
\begin{align}
 \label{betabound2a1}
 &\norm{\widehat{\beta}(\cdot;\mu,\delta)}_{L^{2,1}(\mathbb{R})} \lesssim \delta^{-1},\ 0<\delta<\delta_0
 \quad ({\rm Proposition~\ref{solve4beta}}), \\
 &\mu(\delta) \text{ bounded and } \mu(\delta) - \mu_0 \to 0 \text{ as } \delta\to0 \quad ({\rm Proposition~\ref{proposition3}}).  \label{betabound2a2}
 \end{align} 
 
Recall from Remark \ref{bandlimited} that $\widehat{\beta}(\xi)$ is supported 
on the set where $|\xi|\le\delta^{\tau-1}$ and define 
\begin{equation}
\widehat{\eta}_{\rm near,+}(\xi)\ =\ \widehat{\beta}_+(\xi),\ \ \widehat{\eta}_{\rm near,-}(\xi)=\widehat{\beta}_-(\xi).
\label{eta-hat-def}
\end{equation} Let
\begin{align}
\eta^\delta_{\rm near}(x)&=\ \frac{1}{2\pi}\  \sum_{b=\pm}\int_{|k-k_\star |\le \delta^\tau} \widehat{\eta}^\delta_{{\rm
near},b}\left(\frac{k-k_\star}{\delta}\right)\ \Phi_b(x;k)\ dk \ , \label{eta_def_beta1} \\
\tilde{\eta}^\delta_{{\rm far},b}(k)&=\ 
\tilde{\eta}_{\rm far,b}[\eta_{\rm near},\mu,\delta](k),\ \ b\ge1\ ;
\quad \textrm{(see Corollary \ref{fixed-pt})} \ , \label{eta_def_beta2} \\
\eta^\delta_{\rm far}(x)&=\ \frac{1}{2\pi}\ \sum_{b=\pm} \int_{\delta^\tau \le |k-k_\star |\le \pi } 
\widetilde{\eta}^\delta_{{\rm far},b}\left(k\right)\ \Phi_b(x;k)\ dk \label{eta_def_beta3} \\
&\qquad +\ \frac{1}{2\pi}\ \sum_{b\ne\pm} \int_\mathcal{B}
\widetilde{\eta}^\delta_{{\rm far},b}\left(k\right)\ \Phi_b(x;k)\ dk \ . \nn
\end{align}
Finally, define
\begin{align}
\eta^\delta(x)&\equiv \eta^\delta_{\rm near}(x) + \eta^\delta_{\rm far}(x),\ \ 
E^\delta\equiv E_\star+\delta^2\mu(\delta),\ \
0<\delta<\delta_0.
\end{align}
Then, for $0<\delta<\delta_0$,\\
\begin{enumerate}
\item[(a)]\ $ \eta^\delta(x)\in H^{2}(\R)$.\\
\item[(b)]\ $\left(\eta^\delta,\mu(\delta)\right)$ solves the corrector equation \eqref{eta_eqn}.\\
\item[(c)]\ Theorem \ref{thm:validity} holds. The pair $(\Psi^\delta,E^\delta)$, defined by (see also
\eqref{main_result_ansatz})
\begin{equation}
 \label{main_result_ansatz1}
 \begin{split}
 \Psi^\delta(x) &= \delta^{1/2}\psi^{(0)}(x,X)+\delta^{3/2}\psi^{(1)}_p(x,X)+\delta^{3/2}\eta^\delta(x),\ \ X=\delta x,\\
 E^\delta &= E_\star+\delta^2E^{(2)}+o(\delta^2),
 \end{split}
\end{equation}
is a solution of the eigenvalue problem \eqref{perturbed_schro_prob1} with corrector estimates asserted in the statement
of Theorem \ref{thm:validity}.
\end{enumerate}
\end{proposition}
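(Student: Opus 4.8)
The plan is to verify that the three desired conclusions (a)--(c) of Proposition~\ref{needtoshow} follow essentially by bookkeeping from the hypotheses \eqref{betabound2a1}--\eqref{betabound2a2}, together with the work already done in the earlier sections (Corollary~\ref{fixed-pt} on the far-energy mapping, Lemma~\ref{lemma:psi_bounds} on the bounds for $\psi^{(0)},\psi^{(1)}_p$, and Lemma~\ref{lemma11}/Remark~\ref{L2H2} relating Sobolev norms to Floquet--Bloch coefficients). The only genuinely analytic point is the regularity statement (a); (b) is a matter of reversing the formal reductions that led from \eqref{eta_eqn} to \eqref{compacterroreqn}; and (c) is then immediate by substitution, once the error estimates \eqref{Psi-error}--\eqref{E-error} of Theorem~\ref{thm:validity} are read off from the bounds on the pieces of $\Psi^\delta$.

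For part (a), I would argue as follows. First I would control $\eta^\delta_{\rm near}$. By \eqref{eta_def_beta1} and the change of variables $\xi=(k-k_\star)/\delta$, the Floquet--Bloch coefficients of $\eta^\delta_{\rm near}$ on the bands $b=\pm$ are $\widehat{\eta}^\delta_{{\rm near},b}((k-k_\star)/\delta)=\widehat{\beta}^\delta_b((k-k_\star)/\delta)$, supported on $|k-k_\star|\le\delta^\tau$, i.e.\ $|\xi|\le\delta^{\tau-1}$; since only the two bands $b=\pm$ appear, Remark~\ref{L2H2} reduces the $H^2(\R_x)$ norm of $\eta^\delta_{\rm near}$ to its $L^2(\R_x)$ norm, which by the Parseval relation \eqref{parseval} and the substitution is $\approx\delta^{1/2}\,\|\widehat{\beta}^\delta\|_{L^2(\R_\xi)}\le\delta^{1/2}\|\widehat\beta^\delta\|_{L^{2,1}(\R_\xi)}\lesssim\delta^{-1/2}$, hence finite; in particular $\eta^\delta_{\rm near}\in H^2(\R)$. (More care is actually needed: $\widetilde{\eta}_{b,\rm near}(k)\Phi_b(x,k)$ need only be made $2\pi$-periodic in $k$, which is property \eqref{prop2}; I would verify \eqref{prop2}--\eqref{prop3} hold, the former because the cutoff $\chi(|\xi|\le\delta^{\tau-1})$ is supported near $k_\star=\pi$, an interior point of $\mathcal{B}$, so no wraparound occurs.) Next, $\eta^\delta_{\rm far}=\eta_{\rm far}[\eta^\delta_{\rm near},\mu(\delta),\delta]$ lies in $H^2(\R)$ with norm $\lesssim\delta^{1-\tau}\|\eta^\delta_{\rm near}\|_{L^2}+\delta^{1/2-\tau}\lesssim\delta^{1/2-\tau}$, directly from \eqref{eta-far-bound} of Corollary~\ref{fixed-pt} once $\eta^\delta_{\rm near}\in B_{{\rm near},\delta^\tau}(R)$ is checked (it is, for $\delta$ small, by the $\lesssim\delta^{-1/2}$ bound above applied with the rescaled radius). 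Summing, $\eta^\delta=\eta^\delta_{\rm near}+\eta^\delta_{\rm far}\in H^2(\R)={\rm domain}(H_\delta)$.

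For part (b), I would run the Lyapunov--Schmidt reduction in reverse. The pair $(\widehat{\beta}^\delta,\mu(\delta))$ solving \eqref{compacterroreqn} is, by Proposition~\ref{near_freq_compact}, precisely a solution of the near-energy subsystem \eqref{near6}--\eqref{near5}, which in turn (undoing the rescaling \eqref{xi-def}, multiplying back the factor $\delta$, and using \eqref{near4}) is the pair of equations \eqref{near_cpt_2}--\eqref{near_cpt_1} for $\widetilde{\eta}^\delta_{\pm,\rm near}$, with $\eta^\delta_{\rm far}$ substituted by the fixed-point map $\eta_{\rm far}[\eta^\delta_{\rm near},\mu,\delta]$. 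But by construction \eqref{eta_def_beta2}--\eqref{eta_def_beta3}, $\eta^\delta_{\rm far}$ \emph{is} that fixed point, so it satisfies the far-energy system \eqref{far_cpts} as well. Hence $\{\widetilde{\eta}^\delta_b(k)\}_{b\ge1}$, assembled from $\widetilde{\eta}^\delta_{\pm,\rm near}$ and $\widetilde{\eta}^\delta_{b,\rm far}$, satisfies the full system \eqref{eta_eqn_floquet_bloch} for all $b\ge1$. Taking $\sum_b\frac{1}{2\pi}\int_\mathcal{B}(\cdot)\Phi_b(x,k)\,dk$ of \eqref{eta_eqn_floquet_bloch} and invoking the completeness relation \eqref{f_exp_floquet_bloch} (legitimate because of \eqref{prop3}, which makes the series converge in $H^2$), I recover \eqref{eta_eqn}: $(\eta^\delta,\mu(\delta))$ solves the corrector equation. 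I would note the one subtlety here is justifying the termwise inner products and the interchange of summation with the operator $-\D_x^2+U_\delta(x)-E_\star$; this is handled by \eqref{prop3} and the fact that $H_\delta$ is closed on $H^2(\R)$.

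For part (c), given (a) and (b), substitute $\eta^\delta$ into \eqref{main_result_ansatz1}: then $H_\delta\Psi^\delta=E^\delta\Psi^\delta$ holds by the construction of the multiscale hierarchy in Section~\ref{subsec:multiscale_analysis} (orders $\delta^0,\delta^1,\delta^2$ matched by the choices of $E_\star$, $\alpha_\star$, $\psi^{(1)}_p$, and the corrector equation), so $(\Psi^\delta,E^\delta)$ is a genuine $H^2(\R)$ eigenpair; exponential decay of $\Psi^\delta$ follows since $E^\delta\in\mathcal{I}_\delta$ lies in a spectral gap of the asymptotic operators $H_{\delta,\pm}$ (Proposition~\ref{kappa_spec}) and a Combes--Thomas/Agmon argument then gives exponential decay of the eigenfunction. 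The error estimate \eqref{Psi-error} is read off from $\|\delta^{3/2}\psi^{(1)}_p(\cdot,\delta\cdot)\|_{H^2}\lesssim\delta^{3/2}\cdot\delta^{-1/2}=\delta$ (Lemma~\ref{lemma:psi_bounds}) and $\|\delta^{3/2}\eta^\delta\|_{H^2}\lesssim\delta^{3/2}\cdot\delta^{1/2-\tau}=\delta^{2-\tau}\lesssim\delta$ for $\tau<1$, so $\Psi^\delta-\delta^{1/2}\psi^{(0)}(\cdot,\delta\cdot)=O(\delta)$ in $H^2$; the expansion \eqref{E-error}, $E^\delta=E_\star+E^{(2)}\delta^2+o(\delta^2)$, follows from $E^\delta=E_\star+\delta^2\mu(\delta)$ together with $\mu(\delta)\to\mu_0=E^{(2)}$ as $\delta\to0$ (the hypothesis \eqref{betabound2a2} with $\mu_0$ identified as the solvability constant $E^{(2)}$ from \eqref{solvability_cond_E2}). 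Part~(4) of Theorem~\ref{thm:validity} is the defining property of $\alpha_\star$ in \eqref{main_result_dirac_eqn} together with Theorem~\ref{thm:dirac_bound_state}.

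The main obstacle, and the only place requiring real care rather than bookkeeping, is ensuring that the Floquet--Bloch series defining $\eta^\delta_{\rm near}$ and $\eta^\delta_{\rm far}$ genuinely converge in $H^2(\R)$ and that the termwise manipulations in reversing the reduction are valid --- concretely, verifying properties \eqref{prop2}--\eqref{prop3}. Property \eqref{prop3} for the near part is the finiteness computed above (using that only bands $b=\pm$ contribute, so no high-$b$ weights appear), and for the far part it is exactly what Corollary~\ref{fixed-pt} delivers, since $\eta^\delta_{\rm far}\in H^2(\R)$ means $\sum_b\int_\mathcal{B}(1+b^2)^2|\widetilde{\eta}^\delta_{b,\rm far}(k)|^2\,dk<\infty$ by Lemma~\ref{lemma11}. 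Property \eqref{prop2}, $2\pi$-periodicity of $k\mapsto\widetilde{\eta}_b(k)\Phi_b(x,k)$, holds for the far components because they inherit it from the $\Phi_b(x,k)$ (which are defined on all of $\R$ via the pseudo-periodicity and band structure, $2\pi$-periodic in $k$), and for the near components because the cutoff $\chi(|k-\pi|\le\delta^\tau)$ is supported strictly inside $\mathcal{B}=[0,2\pi]$ for small $\delta$, so extending by $2\pi$-periodicity causes no clash at the endpoints $0,2\pi$.
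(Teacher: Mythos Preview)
Your proof is correct and follows essentially the same route as the paper's: bound $\eta^\delta_{\rm near}$ in $H^2$ (the paper does this via Lemma~\ref{beta_vs_eta}, whose content is exactly your Parseval + Remark~\ref{L2H2} argument), invoke Corollary~\ref{fixed-pt} for $\eta^\delta_{\rm far}$, and note that the derivation of \eqref{compacterroreqn} from \eqref{eta_eqn} is reversible. Your explicit verification of \eqref{prop2}--\eqref{prop3} and of the error bounds for part~(c) is more detailed than the paper's one-sentence appeal to reversibility.

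One small slip: in part~(c) you write $\|\delta^{3/2}\eta^\delta\|_{H^2}\lesssim\delta^{3/2}\cdot\delta^{1/2-\tau}$, but this is only the $\eta^\delta_{\rm far}$ contribution; the dominant term is $\|\delta^{3/2}\eta^\delta_{\rm near}\|_{H^2}\lesssim\delta^{3/2}\cdot\delta^{-1/2}=\delta$. The conclusion $\lesssim\delta$ is unchanged (cf.\ Remark~\ref{H1bounds-psi01}).
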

\medskip

\nit To prove Proposition \ref{needtoshow} we use the following

\begin{lemma}
 \label{beta_vs_eta}
There exists a $\delta_0>0$ sufficiently small such that, for all $0<\delta<\delta_0$, the following holds:
 Assume $\beta \in L^2(\R)$ and let $\eta^\delta_{\rm near}(x)$ be defined by \eqref{eta_def_beta1}-\eqref{eta_def_beta3}.
 Then, 
 \begin{equation}
 \label{beta_eta_bdd}
 \norm{\eta_{\rm near}}_{H^2(\R)} \lesssim \delta^{1/2} \norm{\beta}_{L^2(\R)}.
 \end{equation}
\end{lemma}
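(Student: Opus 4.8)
The plan is to deduce the $H^2$ bound from an $L^2$ bound, exploiting that $\eta_{\rm near}$ is band-limited in the Floquet--Bloch sense, and then to evaluate the $L^2$ norm by rescaling the quasi-momentum; the rescaling $\xi=(k-k_\star)/\delta$ is what produces the gain $\delta^{1/2}$. No genuinely hard step is anticipated here --- the lemma is essentially a bookkeeping statement --- and the only point requiring care is the interplay of this rescaling with the Fourier-transform normalization \eqref{FT-def} and the fixed constants appearing in \eqref{parseval}.

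\emph{Step 1 (reduction to an $L^2$ bound).} First I would observe that, by the definition \eqref{eta_def_beta1}, the only nonvanishing Floquet--Bloch coefficients of $\eta_{\rm near}$ are $\widetilde{\eta}_{b,\rm near}(k)$ with $b\in\{b_\star,b_\star+1\}$ (the bands ``$\pm$'' of Proposition \ref{flo-blo-dirac}). In particular $\eta_{\rm near}\in L^2(\R)$ (its Parseval sum is finite once $\widehat\beta\in L^2$; see Step 2), and by Remark \ref{L2H2} --- a consequence of Lemma \ref{lemma11} --- together with \eqref{parseval},
\[
\|\eta_{\rm near}\|_{H^2(\R)}^2 \ \approx\ \sum_{b=\pm}\int_{\mathcal B}(1+b^2)^2\,|\widetilde{\eta}_{b,\rm near}(k)|^2\,dk \ \le\ \bigl(1+(b_\star+1)^2\bigr)^2\sum_{b=\pm}\int_{\mathcal B}|\widetilde{\eta}_{b,\rm near}(k)|^2\,dk \ =\ (2\pi)^2\bigl(1+(b_\star+1)^2\bigr)^2\,\|\eta_{\rm near}\|_{L^2(\R)}^2 .
\]
Since $b_\star$ is fixed, this yields $\|\eta_{\rm near}\|_{H^2(\R)}\lesssim\|\eta_{\rm near}\|_{L^2(\R)}$ with a constant independent of $\delta$ and $\beta$.

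\emph{Step 2 (the $L^2$ bound and the $\delta$-gain).} Next I would compute $\|\eta_{\rm near}\|_{L^2(\R)}$ using \eqref{parseval} once more, the relation $\widetilde{\eta}_{\pm,\rm near}(k)=\widehat{\eta}_{\pm,\rm near}\!\left((k-k_\star)/\delta\right)$ from \eqref{teta-xi}, and the change of variables $\xi=(k-k_\star)/\delta$, $dk=\delta\,d\xi$:
\[
\|\eta_{\rm near}\|_{L^2(\R)}^2 \ =\ \frac{1}{(2\pi)^2}\sum_{b=\pm}\int_{\mathcal B}|\widetilde{\eta}_{b,\rm near}(k)|^2\,dk \ =\ \frac{\delta}{(2\pi)^2}\sum_{b=\pm}\int_{\R}|\widehat{\eta}_{b,\rm near}(\xi)|^2\,d\xi ,
\]
where extending the $\xi$-integral to all of $\R$ is harmless because $\widehat{\eta}_{b,\rm near}$ is supported in $\{|\xi|\le\delta^{\tau-1}\}$, i.e.\ $\{|k-k_\star|\le\delta^\tau\}\subset\mathcal B$ once $\delta$ is small (so $\delta_0$ is chosen accordingly). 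Finally, by \eqref{eta-hat-def} one has $\widehat{\eta}_{\pm,\rm near}=\widehat{\beta}_\pm$, so the remaining sum is $\|\widehat{\beta}\|_{L^2(\R_\xi)}^2$, which is comparable to $\|\beta\|_{L^2(\R)}^2$ by the Plancherel relation \eqref{plancherel}. Combining this with Step 1 gives $\|\eta_{\rm near}\|_{H^2(\R)}^2\lesssim\delta\,\|\beta\|_{L^2(\R)}^2$, which is precisely \eqref{beta_eta_bdd}.

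The only obstacle, and it is a mild one, is to keep the constants honest: every implied constant above depends solely on $b_\star$ and on the fixed normalizing factors in \eqref{parseval}, \eqref{plancherel}, and \eqref{FT-def}, and in particular none depends on $\delta$ or on $\beta$; this is exactly what is needed when the lemma is invoked (with $\|\widehat\beta\|_{L^{2,1}}\lesssim\delta^{-1}$) to obtain the bound $\|\eta_{\rm near}\|_{H^2}\lesssim\delta^{-1/2}$ used in Proposition \ref{needtoshow}.
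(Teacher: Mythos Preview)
Your argument is correct, and it is genuinely simpler than the paper's. You work entirely on the Floquet--Bloch side: since $\eta_{\rm near}$ has nonzero coefficients only for bands $b_\star,b_\star+1$ (equivalently $\pm$) and only for $|k-k_\star|\le\delta^\tau$, the Parseval relation \eqref{parseval} together with the substitution $k=k_\star+\delta\xi$ immediately gives $\|\eta_{\rm near}\|_{L^2}^2=\frac{\delta}{(2\pi)^2}\|\widehat\beta\|_{L^2}^2$, and the band-limited remark handles $H^2$. The only point worth making explicit is that $\{\Phi_+(\cdot;k),\Phi_-(\cdot;k)\}$ is, for each $k$ near $k_\star$, an orthonormal basis of the same two-dimensional space as $\{\Phi_{b_\star}(\cdot;k),\Phi_{b_\star+1}(\cdot;k)\}$ (see \eqref{p_pm-def} and \eqref{E_b_star_defn}--\eqref{E_b1_star_defn}), so the Parseval sum is unchanged under the relabeling; you used this implicitly.

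By contrast, the paper's proof works in physical space via the representation \eqref{nearinxi}, which Taylor-expands $p_\pm(x,k)$ about $k_\star$ and splits $\eta_{\rm near}$ into a leading part $e^{ik_\star x}p_\pm(x,k_\star)\eta_{\pm,\rm near}(\delta x)$ plus a remainder $\rho_\pm(x,\delta x)$. Each piece is then estimated separately, with the $\rho_\pm$ bound requiring a Fourier-series argument in the periodic variable. This yields the same $\delta^{1/2}$ factor but through a longer route. Your approach buys brevity and avoids the $\rho_\pm$ analysis altogether for this lemma; the paper's approach has the advantage that the physical-space decomposition \eqref{nearinxi} and the $\Delta p_\pm$ machinery are already in place from the inner-product expansions earlier in Section \ref{subsec:near_freq}, so reusing them here is natural in context even if not strictly necessary.
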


\begin{proofof}{\it Proof of Proposition \ref{needtoshow}:} From $\widehat{\beta}$ we first construct
$\eta^\delta_{\rm near}$, which satisfies the bound:
$ \norm{\eta_{\rm near}}_{H^2(\R)} \lesssim \delta^{1/2} \norm{\beta}_{L^2(\R)}$ (Lemma
\ref{beta_vs_eta}).
Next, part 2 of Corollary \ref{fixed-pt}, \eqref{eta-far-bound}, gives a bound on 
 $\eta_{\rm far}$:  \\
 $\norm{\eta_{\rm far}[\eta_{\rm near};\mu,\delta]}_{H^2(\R)} \le\ C''\left(\ 
\delta^{1-\tau}\norm{\eta_{\rm near}}_{L^2(\R)}+\delta^{1/2-\tau}\ \right)$.
Note that all steps in our derivation of the band-limited Dirac
system 
\eqref{compacterroreqn} are reversible, in particular our application of the Poisson summation formula in
$L^2_{\rm loc}$. Therefore, $(\Psi^\delta,E^\delta)$, given by \eqref{main_result_ansatz1} is an $H^2(\R)$ eigenpair.
\end{proofof}

\medskip

It remains to prove Lemma \ref{beta_vs_eta}.

\begin{proofof}{\it Proof of Lemma \ref{beta_vs_eta}:} Since $\eta_{\rm near}$ is band-limited,  by Remark
\ref{L2H2} it suffices to bound $\eta_{\rm near}$
in $L^2(\R)$.
By \eqref{nearinxi},
\begin{align}
\norm{\eta_{\rm near}}_{L^2(\R)} &\leq \delta \sum_{j\in\{+,-\}} 
\norm{p_j(\cdot,k_{\star})\eta_{j,\rm near}(\delta\cdot)}_{L^2(\R)} + 
\delta\sum_{j\in\{+,-\}} \norm{\rho_j(\cdot,\delta\cdot)}_{L^2(\R)}  \nn \\
&\lesssim \delta^{1/2}\sum_{j\in\{+,-\}} \norm{\eta_{j,\rm near}}_{L^2(\R)} + 
\delta\sum_{j\in\{+,-\}}\norm{\rho_j(\cdot,\delta\cdot)}_{L^2(\R)} \ .
\label{eta_near_bdd}
\end{align} 
We next estimate $\|\rho_j(\cdot,\delta\cdot)\|_{L^2(\R)}$, displayed in
\eqref{rhodefn}:
\begin{equation}
\norm{\rho_{\pm}(\cdot,\delta\cdot)}_{L^2(\R)} = \norm{
\int_{\abs{\xi}\leq\delta^{\tau-1}} e^{i\xi X} \Delta p_{\pm}(x,\delta\xi) \widehat{\eta}_{\pm,\rm near}(\xi)
d\xi \bigg|_{X=\delta x} }_{L^2(\R_x)}. \label{rho_norm} \\
\end{equation} 
By periodicity,  $\Delta p_{\pm}(x,\delta\xi)$ may be expanded as a Fourier series:
\begin{equation}
\Delta p_{\pm}(x,\delta\xi) = \sum_{m\in\mathbb{Z}}(\widehat{\Delta p}_{\pm})_m(\delta \xi) e^{2\pi i mx}\ ,
\label{deltap_fourier}
\end{equation} 
and by smoothness in $x$ and $k$:  
$\left\| (\widehat{\Delta p}_{\pm})_m \right\|_{L^\infty[0,2\pi]}\ \lesssim\ \delta^\tau(1+m^2)^{-1}.$\\
Let
\begin{equation}\label{Jx}
J^\delta_m(\xi)\ \equiv\ (1+m^2)\ \chi\left(\abs{\xi}\leq\delta^{\tau-1}\right)\ (\widehat{\Delta p}_{\pm})_m(\delta
\xi)
\widehat{\eta}_{\pm,\rm near}(\xi)\ . 
\end{equation}
Then
\begin{align*}
\norm{\rho_{\pm}(\cdot,\delta\cdot)}_{L^2(\R)}^2 
&=  \int_{\R_x} \abs{\sum_{m\in\mathbb{Z}}
e^{2\pi imx}\ \int_{\abs{\xi}\leq\delta^{\tau-1}} e^{i\xi \delta x} (\widehat{\Delta p}_{\pm})_m(\delta \xi)
\widehat{\eta}_{\pm,\rm near}(\xi) d\xi }^2dx \\
&= \int_{\R_x} \abs{\sum_{m\in\mathbb{Z}}\ \frac{e^{2\pi imx}}{1+m^2}\ \int_{\R_\xi} e^{i\xi \delta
x}J^\delta_m(\xi)d\xi }^2dx\\
&=\  \int_{\R_x} \abs{\sum_{m\in\mathbb{Z}}\ \frac{e^{2\pi imx}}{1+m^2}\ \widehat{J_m^\delta}(\delta x)}^2 dx \\
& \le\
 \int_{\R_x} \sum_{m\in\mathbb{Z}}\ \frac{1}{1+m^2}\ \left|\widehat{J_m^\delta}(\delta x)\right|^2 dx \\
& \lesssim\delta^{-1}\ \sum_{m\in\mathbb{Z}}\ \frac{1}{1+m^2} \int_\R \left|J_m^\delta(X)\right|^2 dX\ 
\lesssim\delta^{-1+2\tau}\ \int_{\R} \left|\widehat{\eta}_{\pm,\rm near}(\xi)\right|^2 d\xi \ ,
\end{align*}
or 
\begin{align}
\norm{\rho_{\pm}(\cdot,\delta\cdot)}_{L^2(\R)}\ 
\lesssim\delta^{-\frac12+\tau}\ \|\widehat{\eta}_{\pm,\rm near}\|_{L^2(\R)} \ .
\label{rho_pm-bound}\end{align}

\nit Returning to \eqref{eta_near_bdd} we have, for $\delta$ sufficiently small,
\begin{align}
 \label{eta_beta_bdd}
\norm{\eta_{\rm near}}_{L^2(\R)} &\lesssim \delta^{1/2}\sum_{j\in\{+,-\}}
(1+\delta^{\tau})\norm{\widehat{\eta}_{j,\rm near}}_{L^2(\R)}
\lesssim \delta^{1/2} \norm{\widehat{\beta}}_{L^2(\R)}\ .
\end{align} 
This completes the proof of the Lemma \ref{beta_vs_eta}.
\end{proofof}

\section{Analysis of the band-limited Dirac system}\label{analysis-blDirac}

The formal $\delta\downarrow0$ limit of $\widehat{\mathcal{D}}^\delta$, displayed in \eqref{bl-dirac-op}, is a 1D Dirac operator, $\widehat{\mathcal{D}}$, given by
\begin{equation}
 \label{diraclimit}
 \widehat{\mathcal{D}}\widehat{\beta}(\xi) \equiv -\lamsharp\sigma_3\xi \widehat{\beta}(\xi) + \thetasharp\sigma_1\widehat{\kappa\beta}(\xi).
\end{equation} 
Our goal is to solve the  system \eqref{compacterroreqn}.
We therefore  rewrite the linear operator in equation \eqref{compacterroreqn} as a perturbation of   $\widehat{\mathcal{D}}$, and seek $\widehat{\beta}$ as a solution to:
\begin{equation}
 \label{erroreqnfactored}
 \widehat{\mathcal{D}}\widehat{\beta}(\xi) + \left(\widehat{\mathcal{D}}^{\delta}-\widehat{\mathcal{D}} +
\widehat{\mathcal{L}}^{\delta}(\mu) -\delta \mu\right)\widehat{\beta}(\xi) =
\mu\widehat{\mathcal{M}}(\xi;\delta)
+ \widehat{\mathcal{N}}(\xi;\delta).
\end{equation} 

\begin{remark}\label{bandlimited}
We seek a band-limited solution,  $\beta$, {\it i.e.} $\widehat{\beta}=\chi(|\xi|\le\delta^{\tau-1})\widehat{\beta}$.
Note however that our reformulation of \eqref{compacterroreqn} as \eqref{erroreqnfactored} is perturbative about 
$\mathcal{D}$, which does not preserve band-limited functions; $
\widehat{\mathcal{D}}\chi(|\xi|\le\delta^{\tau-1})\ne\chi(|\xi|\le\delta^{\tau-1})\widehat{\mathcal{D}}$. Our strategy
is to first solve \eqref{erroreqnfactored} for $\widehat{\beta}$ in  $L^{2,1}(\R;d\xi)$. We then note that this solution is
band-limited. Indeed, clearly equation \eqref{erroreqnfactored} for $\widehat{\beta}$ may be rewritten as
\eqref{compacterroreqn}. Now applying the projection 
 $\chi(|\xi|\ge\delta^{\tau-1})$ to  \eqref{compacterroreqn}  yields
$\lambda\sigma_3\xi\chi(|\xi|\ge\delta^{\tau-1})\widehat{\beta}(\xi)=0$, and so clearly $\widehat{\beta}(\xi)$ is
supported on the set $\{|\xi|\le\delta^{\tau-1}\}$.
\end{remark}
\medskip

We shall now solve \eqref{erroreqnfactored} using a Lyapunov-Schmidt reduction argument. 
By Theorem \ref{thm:dirac_bound_state}, the null space of $\widehat{\mathcal{D}}$ is spanned by $\widehat{\alpha}_{\star}(\xi)$,  the Fourier transform of the zero energy eigenstate $\alpha_{\star}(X)$ of $\mathcal{D}$; see \eqref{dirac-m-soln}.
Since  $\alpha_{\star}(X)$ is Schwartz class, so too is $\widehat{\alpha}_{\star}(\xi)$ and 
$\widehat{\alpha}_{\star}(\xi)\in H^s(\R)$ for $s\ge1$.
For any $f\in L^2{(\R)}$ introduce the orthogonal projection operators,
\begin{equation}
 \label{projops}
 \widehat{P}_{\parallel}f =
\inner{\widehat{\alpha}_{\star},f}\widehat{\alpha}_{\star},~~~\text{and}~~~\widehat{P}_{\perp}f =
(I-\widehat{P}_{\parallel})f.
\end{equation} 
Since $\widehat{P}_{\parallel}\ \widehat{\mathcal{D}}\ \widehat{\beta}(\xi)=0$ and
$\widehat{P}_{\perp}\widehat{\mathcal{D}}\widehat{\beta}(\xi)=\widehat{\mathcal{D}}\widehat{\beta}(\xi)$, 
equation \eqref{erroreqnfactored} is equivalent to  the system
\begin{align}
&\widehat{P}_{\parallel}\left\{\left(\widehat{\mathcal{D}}^{\delta}-\widehat{\mathcal{D}} +
\widehat{\mathcal{L}}^{\delta}(\mu) -\delta
\mu\right)\widehat{\beta}(\xi) - \mu\widehat{\mathcal{M}}(\xi;\delta) -
\widehat{\mathcal{N}}(\xi;\delta)\right\} = 0, \label{pplleqn}\\
&\widehat{\mathcal{D}}\widehat{\beta}(\xi) +
\widehat{P}_{\perp}\left\{\left(\widehat{\mathcal{D}}^{\delta}-\widehat{\mathcal{D}}
+ \widehat{\mathcal{L}}^{\delta}(\mu) -\delta \mu\right)\widehat{\beta}(\xi)\right\} =
\widehat{P}_{\perp}\left\{\mu\widehat{\mathcal{M}}(\xi;\delta) +
\widehat{\mathcal{N}}(\xi;\delta)\right\}.
\label{pperpeqn}
\end{align} Our strategy will be to first solve \eqref{pperpeqn}  for $\widehat{\beta}=
\widehat{\beta}[\mu,\delta]$, for $\delta>0$ and sufficiently small. This is carried out in Section \ref{proof-of-solve4beta}.
   We then  substitute
 $\widehat{\beta}[\mu,\delta]$ into \eqref{pplleqn} and obtain a closed scalar equation involving  $\mu$ and $\delta$. 
 In Section \ref{final-reduction} below, this equation is solved
   for $\mu=\mu(\delta)$  for $\delta$ small. 

The first step in this strategy is accomplished in
\begin{proposition}\label{solve4beta}
Fix $M>0$. There exists $\delta_0>0$ and a  mapping 
\[(\mu,\delta)\in R_{M,\delta_0}\equiv \{|\mu|<M\}\times (0,\delta_0)\mapsto \widehat{\beta}(\cdot;\mu,\delta)\in
L^{2,1}(\R),\]
which is Lipschitz in $\mu$, such that $\widehat{\beta}(\cdot;\mu,\delta)$ solves
\eqref{pperpeqn} for $(\mu,\delta)\in R_{M,\delta_0}$. Furthermore, we have the bound
\begin{equation}
 \label{betabound2a}
 \norm{\widehat{\beta}(\cdot;\mu,\delta)}_{L^{2,1}(\R)} \lesssim \delta^{-1},\ 0<\delta<\delta_0.
 \end{equation}
\end{proposition}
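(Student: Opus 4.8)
The plan is to solve the Lyapunov--Schmidt equation \eqref{pperpeqn} by a Neumann--series (contraction) argument in the weighted space $L^{2,1}(\R_\xi)$, treating every term except $\widehat{\mathcal{D}}$ as a small perturbation. First I would record the key structural fact: by part~1 of Theorem~\ref{thm:dirac_bound_state}, $0$ is an isolated, simple eigenvalue of the self-adjoint first-order operator $\mathcal{D}$, with eigenfunction $\alpha_\star$, the remaining spectrum lying in $\{|\Omega|\ge a\}$ with $a=|\thetasharp\kappa_\infty|>0$. Hence $\widehat{\mathcal{D}}$ restricted to $\{\widehat\alpha_\star\}^\perp$ is boundedly invertible on $L^2$, and --- solving $i\lamsharp\sigma_3\partial_X g+\thetasharp\kappa\sigma_1 g=f$ for $\partial_X g$ --- the inverse gains one derivative: $\widehat{\mathcal{D}}_\perp^{-1}\colon L^2(\R_\xi)\to L^{2,1}(\R_\xi)$ (equivalently $L^2(\R_X)\to H^1(\R_X)$) is bounded. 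Since $\psi^{(0)}$ already carries the kernel direction $\alpha_\star$ (see \eqref{main_result_psi0}), I would look for $\widehat\beta$ with $\widehat{P}_\parallel\widehat\beta=0$, which removes the one-dimensional ambiguity in \eqref{pperpeqn}; applying $\widehat{\mathcal{D}}_\perp^{-1}$ then converts \eqref{pperpeqn} into the fixed-point equation $\bigl(I+\widehat{\mathcal{D}}_\perp^{-1}\widehat{P}_\perp\mathcal{A}^\delta(\mu)\bigr)\widehat\beta=\widehat{\mathcal{D}}_\perp^{-1}\widehat{P}_\perp\bigl(\mu\widehat{\mathcal{M}}+\widehat{\mathcal{N}}\bigr)$, where $\mathcal{A}^\delta(\mu)\equiv\widehat{\mathcal{D}}^\delta-\widehat{\mathcal{D}}+\widehat{\mathcal{L}}^\delta(\mu)-\delta\mu$ is linear in $\widehat\beta$ and Lipschitz in $\mu$.

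Next I would estimate the source. Each entry of $\widehat{\mathcal{M}}$ and $\widehat{\mathcal{N}}$ is, after the cutoff $\chi(|\xi|\le\delta^{\tau-1})$, of the form $\langle\Phi_\pm(\cdot,k_\star+\delta\xi),g\rangle_{L^2(\R_x)}$ with $g$ built from $\psi^{(0)},\psi^{(1)}_p$ (and their $\partial_x\partial_X$, $\partial_X^2$ evaluations at $X=\delta x$) and from $\kappa W_\oo$ times the functions $B(\cdot;\delta),C(\cdot;\delta)$ of Corollary~\ref{fixed-pt}. The substitution $k=k_\star+\delta\xi$ together with the Parseval identity \eqref{parseval} gives $\|\langle\Phi_\pm(\cdot,k_\star+\delta\cdot),g\rangle\|_{L^2(\R_\xi)}\lesssim\delta^{-1/2}\|g\|_{L^2(\R_x)}$, and Lemma~\ref{lemma:psi_bounds} together with \eqref{B_bdd} control the relevant $\|g\|_{L^2}$ by $O(\delta^{-1/2})$; hence $\|\mu\widehat{\mathcal{M}}+\widehat{\mathcal{N}}\|_{L^2}\lesssim\delta^{-1}$ uniformly for $|\mu|<M$, which is exactly the size behind the claimed bound \eqref{betabound2a}.

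The heart of the proof is to show $\|\widehat{\mathcal{D}}_\perp^{-1}\widehat{P}_\perp\mathcal{A}^\delta(\mu)\|_{L^{2,1}\to L^{2,1}}\lesssim\delta^{\tau}\to0$, uniformly in $|\mu|<M$, term by term. For $\widehat{\mathcal{D}}^\delta-\widehat{\mathcal{D}}=-\thetasharp\,\overline\chi(|\xi|\le\delta^{\tau-1})\sigma_1\widehat{\kappa\beta}$ I would use that $\kappa\in C^\infty$ has bounded derivatives, so $\|\kappa\beta\|_{H^1}\lesssim\|\widehat\beta\|_{L^{2,1}}$, and restricting $\widehat{\kappa\beta}$ to $|\xi|\ge\delta^{\tau-1}$ costs a factor $\delta^{1-\tau}$. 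For the curvature term $\delta\,\chi(|\xi|\le\delta^{\tau-1})\xi^2\widehat\beta$ I would write $\delta\xi^2=\delta^\tau(\delta^{1-\tau}|\xi|)|\xi|$ and use $\delta^{1-\tau}|\xi|\le1$ on the support, getting $O(\delta^\tau\|\,|\xi|\widehat\beta\|_{L^2})$. For the eight terms $I^j_\pm$ produced by Lemma~\ref{poisson_exp}: the $m=0$ pieces carry the factor $\Delta p_\pm(x,\delta\xi)=O(\delta^\tau)$ (by \eqref{deltapbdd}; the lone $\Delta p$-free $m=0$ piece vanishes by even/odd parity, as noted after \eqref{I_2}), while the $|m|\ge1$ pieces sample $\mathcal{F}_X[\kappa\,\eta_{\pm,\rm near}]$ at frequencies $\frac{2\pi m}{\delta}+\xi$ of size $\gtrsim\delta^{-1}$, where, since $\kappa\,\eta_{\pm,\rm near}\in H^1$ with norm $\lesssim\|\widehat\beta\|_{L^{2,1}}$ and the $x$-integral coefficients decay rapidly in $m$ by smoothness of $p_\pm,W_\oo$, the sum is $O(\delta)$. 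For the far-field coupling \eqref{L_op_2} I would use \eqref{A_bdd} and Lemma~\ref{beta_vs_eta} to get $\|[A\eta_{\rm near}](\cdot;\mu,\delta)\|_{L^2}\lesssim\delta^{3/2-\tau}\|\widehat\beta\|_{L^2}$, hence a contribution $\lesssim\delta^{1-\tau}\|\widehat\beta\|_{L^{2,1}}$; and $-\delta\mu\widehat\beta$ is trivially $O(\delta)$. Since $\tau\in(0,1/2)$ every exponent is positive, so for $\delta<\delta_0(M)$ this operator has norm $<1/2$; the Neumann series then yields the unique $\widehat\beta(\cdot;\mu,\delta)\in L^{2,1}(\R)$ solving \eqref{pperpeqn}, with $\|\widehat\beta\|_{L^{2,1}}\lesssim\delta^{-1}$, and Lipschitz dependence on $\mu$ follows from the Neumann representation since the source is affine in $\mu$ and $\mathcal{A}^\delta(\mu)$ is Lipschitz in $\mu$ by \eqref{A_lip}. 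That $\widehat\beta$ is automatically supported in $|\xi|\le\delta^{\tau-1}$ is the a posteriori observation of Remark~\ref{bandlimited}.

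The hard part is precisely this last, bookkeeping-heavy step: verifying that each of the many components of $\widehat{\mathcal{L}}^\delta(\mu)$ and of $\widehat{\mathcal{D}}^\delta-\widehat{\mathcal{D}}$ --- above all the Poisson-summation tails with $|m|\ge1$, which probe the Fourier transform of $\kappa\,\eta_{\pm,\rm near}$ near frequency $\delta^{-1}$, and the curvature term $\delta\xi^2$ on the band $|\xi|\le\delta^{\tau-1}$ --- contributes an operator norm that is a \emph{strictly positive} power of $\delta$ for \emph{every} admissible $\tau\in(0,1/2)$, so that the perturbation can be inverted by Neumann series and the source bound $\delta^{-1}$ propagates to $\widehat\beta$. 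The remaining ingredients (the regularity gain $\widehat{\mathcal{D}}_\perp^{-1}\colon L^2\to L^{2,1}$, the $\delta^{-1/2}$ loss in passing between $x$-space and $\xi$-space $L^2$ norms, and the contraction argument itself) are routine.
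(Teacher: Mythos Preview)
Your approach is correct and follows the same architecture as the paper's proof in Section~\ref{proof-of-solve4beta}: rewrite \eqref{pperpeqn} as a fixed-point equation $(I+\widehat{\mathcal{C}}^\delta(\mu))\widehat\beta=\widehat{\mathcal{D}}^{-1}\widehat{P}_\perp(\mu\widehat{\mathcal{M}}+\widehat{\mathcal{N}})$, show $\|\widehat{\mathcal{C}}^\delta(\mu)\|_{L^{2,1}\to L^{2,1}}=O(\delta^\tau+\delta^{1-\tau})<1$ via the same term-by-term estimates (your treatment of the $I^j_\pm$ terms, the curvature term, and the far-field coupling mirrors Proposition~\ref{lemma12} and Lemma~\ref{I_bdds}), bound the source by $\delta^{-1}$ via the $k\mapsto k_\star+\delta\xi$ rescaling of Parseval (Lemma~\ref{parseval_bdd} and Proposition~\ref{lemma13}), and conclude by Neumann series.

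The one genuine difference is your handling of the regularity gain for $\mathcal{D}^{-1}P_\perp$. You obtain $\mathcal{D}^{-1}P_\perp\colon L^2(\R_X)\to H^1(\R_X)$ by the elementary observation that if $g\in L^2$ solves $i\lamsharp\sigma_3\partial_X g+\thetasharp\kappa\sigma_1 g=f\in L^2$ then $\partial_X g=(i\lamsharp)^{-1}\sigma_3(f-\thetasharp\kappa\sigma_1 g)\in L^2$, combined with the spectral-gap bound $\|\mathcal{D}^{-1}P_\perp\|_{L^2\to L^2}<\infty$. The paper instead writes $\mathcal{D}^{-1}P_\perp=\mathcal{D}\cdot\mathcal{D}^{-2}P_\perp$, diagonalizes $\mathcal{D}^2$ into a pair of scalar Schr\"odinger operators (see \eqref{diracsquared}--\eqref{d2_of_H}), and invokes the $W^{k,p}$ boundedness of wave operators (Theorem~\ref{thm7}, \cite{Weder:99}) to transfer the bound $H_0^{-1}\colon L^2\to H^2$ to $\mathcal{D}^{-2}P_\perp$. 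Your route is shorter and needs only that $\kappa$ is bounded, whereas the wave-operator argument requires the decay hypotheses \eqref{kappa-hypotheses} on $\kappa^2-\kappa_\infty^2$ and $\kappa'$; on the other hand, the paper's framework would immediately yield $W^{k,p}$ bounds on more general functions $f(\mathcal{D})$ should they be needed.
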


\section{Proof of Proposition \ref{solve4beta}}\label{proof-of-solve4beta}
  Since $\widehat{\mathcal{D}}$ is
invertible on the range of $\widehat{P}_{\perp}$, equation \eqref{pperpeqn} may be
rewritten as
\begin{equation}
\label{betafactored}
\left(I+\widehat{\mathcal{C}}^{\delta}(\mu)\right)\widehat{\beta}(\xi) =
\widehat{\mathcal{D}}^{-1}\widehat{P}_{\perp}\left\{\mu\widehat{\mathcal{M}}(\xi;\delta) +
\widehat{\mathcal{N}}(\xi;\delta)\right\},
\end{equation} 
where $\widehat{\mathcal{C}}^{\delta}(\mu)$ is the linear operator given by
\begin{equation}
\label{copp}
\widehat{\mathcal{C}}^{\delta}(\mu) =
\widehat{\mathcal{D}}^{-1}\widehat{P}_{\perp}\left(\widehat{\mathcal{D}}^{\delta}-\widehat{\mathcal{D}}+
\widehat{\mathcal{L}}^{\delta}(\mu)-\delta \mu\right).
\end{equation} 

We study $\widehat{\mathcal{C}}^{\delta}(\mu)$ as a operator from 
$L^{2,1}(\R)$ to $L^{2,1}(\R)$. Our immediate goal is to show that  
$\| \widehat{\mathcal{C}}^{\delta}(\mu)\|_{L^{2,1}(\R)\to L^{2,1}(\R)}<1$  and that the right hand side
of \eqref{betafactored} is in $L^{2,1}(\R)$. This will imply the invertibility of
$I+\widehat{\mathcal{C}}^{\delta}(\mu)$ and that we may solve for $\widehat{\beta}\in L^{2,1}(\R;d\xi)$. We
first obtain $L^{2,1}\to L^2$  bounds on the operators $(\widehat{\mathcal{D}}^{\delta}-\widehat{\mathcal{D}})$ and
$\widehat{\mathcal{L}}^{\delta}(\mu)$, and bounds on $L^2$ norms of $\widehat{\mathcal{M}}(\xi;\delta)$ and
$\widehat{\mathcal{N}}(\xi;\delta)$.
To obtain the required $L^2$ bounds, we employ Lemma \ref{beta_vs_eta} along with the following two lemmata. The first lemma is used to bound contributions arising from the
$I^j_{\pm}(\xi)$ terms.
\medskip

\begin{lemma}
 \label{I_bdds}
 Let $f(x,\omega)$, $g(x)$ denote functions satisfying:
 \begin{align}
C_f\ \equiv\ \underset{0\le x\le1,~\abs{\omega}\leq\delta^{\tau}}{\sup} \abs{f(x,\omega)}\ <\ \infty,\ \ 
C_g'\ \equiv\ \|g\|_{L^\infty[0,1]}<\infty \ . \nn
\end{align}
Let $\Gamma:(x,X)\in[0,1]\times\R\mapsto\Gamma(x,X)$ and denote by $\widehat{\Gamma}(x;\zeta)$ the Fourier transform
of $\Gamma(x,X)$ with respect to the variable $X$. Assume further that
 \begin{equation}
   \left\|\ \sup_{0\le x\le1} \widehat{\Gamma}(x,\zeta)\ \right\|_{L^{2,1}(\R_\zeta)}\ <\infty.
   \label{Gammahat-bound}\end{equation}
Define 
\begin{equation*}
 I_m(\xi;\delta) \equiv \int_0^1 e^{2\pi imx} \widehat{\Gamma} \left(x,\frac{2\pi m}{\delta}+\xi\right)
\overline{f(x,\delta\xi)} g(x) dx. \label{I_m_def}
\end{equation*} 
Then, summing over all $m\in\Z$ with $m\ne0$ we have the bound:
\begin{equation}
 \label{I_bdds_1}
  \norm{\chi\left(\abs{\xi}\leq\delta^{\tau-1}\right) \sum_{\abs{m}\geq1} I_m(\xi;\delta) }_{L^2(\R_{\xi})}
\lesssim C_f\ C_g'\ \delta \norm{\sup_{0\le x\le1}\widehat{\Gamma}(x,\zeta)}_{L^{2,1}(\R_\zeta)},
\end{equation} 
and summing over all $m\in\Z$ we have the bound:
\begin{equation}
 \label{I_bdds_2}
 \norm{\chi\left(\abs{\xi}\leq\delta^{\tau-1}\right) \sum_{m\in\mathbb{Z}} I_m(\xi;\delta)}_{L^2(\R_{\xi})}
\lesssim \ C_f\ C_g'\ 
\norm{\sup_{0\le x\le1}\widehat{\Gamma}(x,\zeta)}_{L^{2,1}(\R_\zeta)}.
\end{equation} 
\end{lemma}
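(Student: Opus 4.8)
\textbf{Proof plan for Lemma \ref{I_bdds}.}
The plan is to estimate the $L^2(\R_\xi)$ norm of the band-limited sum $\chi(|\xi|\le\delta^{\tau-1})\sum_m I_m(\xi;\delta)$ by first bounding each term $I_m(\xi;\delta)$ pointwise in $\xi$, then summing a convergent series in $m$ and integrating over the band $|\xi|\le\delta^{\tau-1}$, keeping careful track of the argument shift $\xi\mapsto 2\pi m/\delta+\xi$ inside $\widehat\Gamma$. First I would record the trivial pointwise bound
\[
|I_m(\xi;\delta)|\ \le\ C_f\,C_g'\ \sup_{0\le x\le1}\Bigl|\widehat\Gamma\Bigl(x,\tfrac{2\pi m}{\delta}+\xi\Bigr)\Bigr|,
\]
which follows by pulling $|f(x,\delta\xi)|\le C_f$ and $|g(x)|\le C_g'$ out of the $x$-integral over $[0,1]$. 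Write $G(\zeta)\equiv\sup_{0\le x\le1}|\widehat\Gamma(x,\zeta)|$, so $G\in L^{2,1}(\R)$ by hypothesis \eqref{Gammahat-bound}; in particular $G\in L^2(\R)\cap L^1(\R)$, since $\|G\|_{L^1}\lesssim\|G\|_{L^{2,1}}$ by Cauchy--Schwarz.

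For \eqref{I_bdds_2} (sum over all $m\in\Z$), I would use the triangle inequality in $L^2(\R_\xi)$ together with the fact that the maps $\xi\mapsto G(2\pi m/\delta+\xi)$, restricted to $|\xi|\le\delta^{\tau-1}$, have pairwise ``almost disjoint'' supports once $|m|$ is large: indeed the $m$-th shift moves the relevant window by $2\pi m/\delta$, while the window half-width is only $\delta^{\tau-1}\ll\delta^{-1}$ for $\delta$ small (since $\tau>0$), so there is $O(1)$ overlap multiplicity. Concretely, by Minkowski's inequality and the substitution $\zeta=2\pi m/\delta+\xi$,
\[
\Bigl\|\chi(|\xi|\le\delta^{\tau-1})\sum_{m\in\Z}I_m(\cdot;\delta)\Bigr\|_{L^2(\R_\xi)}
\ \le\ C_f\,C_g'\ \sum_{m\in\Z}\Bigl(\int_{|2\pi m/\delta-\zeta|\le\delta^{\tau-1}}G(\zeta)^2\,d\zeta\Bigr)^{1/2},
\]
and since the intervals $\{|\zeta-2\pi m/\delta|\le\delta^{\tau-1}\}$ have bounded overlap, a Cauchy--Schwarz argument in $m$ (weighting against a summable sequence and using $\|G\|_{L^{2,1}}$ to control the weighted tail) closes the bound by $\|G\|_{L^{2,1}(\R_\zeta)}$ up to a constant; alternatively one simply bounds the sum over $m$ with $|2\pi m/\delta|\le 2\delta^{\tau-1}$ directly (finitely many terms, each $\le\|G\|_{L^2}$) and the remaining terms using the $(1+|\zeta|)$-weight, which on $|\zeta|\ge\delta^{\tau-1}$ produces the decay needed for summability in $m$. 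This yields \eqref{I_bdds_2}.

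For \eqref{I_bdds_1}, where the $m=0$ term is dropped, the extra factor of $\delta$ must come from the $x$-integration, not just from size estimates: I would \emph{not} take absolute values inside the $x$-integral but instead integrate by parts in $x$, using that $e^{2\pi i mx}=\tfrac{1}{2\pi i m}\partial_x e^{2\pi i mx}$ for $m\ne0$, to gain a factor $1/|m|$; combined with the smoothness of $f$ and $g$ (so that $\partial_x(\overline{f}g)$ is again bounded by a constant times $C_f C_g'$ on the relevant range, after absorbing the fixed smooth data into the implied constant) this turns the sum $\sum_{|m|\ge1}$ into an absolutely convergent one with an extra $\sum_{|m|\ge1}1/m^2<\infty$. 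The factor $\delta$ itself is extracted by the change of variables $\zeta=2\pi m/\delta+\xi$ combined with the weight: on the band $|\xi|\le\delta^{\tau-1}$ and $|m|\ge1$ we have $|2\pi m/\delta+\xi|\gtrsim 1/\delta$, so $(1+|2\pi m/\delta+\xi|)^{-1}\lesssim\delta$, and pulling one power of this weight out of $G(\zeta)=(1+|\zeta|)^{-1}\cdot(1+|\zeta|)G(\zeta)$ produces exactly the claimed $\delta\,\|G\|_{L^{2,1}}$; the remaining $\sum_{|m|\ge1}\int G_{\mathrm{weighted}}^2$ is bounded by $\|G\|_{L^{2,1}}^2$ via the bounded-overlap property as before.

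The main obstacle is the bookkeeping in the second estimate: one must simultaneously (i) extract a genuine power of $\delta$ (which forces either the integration by parts in $x$ or exploiting the $(1+|\zeta|)$-weight on the shifted arguments, and really both, to be safe) and (ii) sum a doubly-indexed family over $m\in\Z$ without losing logarithmic factors, which requires the near-disjointness of the shifted windows $|\xi-(-2\pi m/\delta)|\le\delta^{\tau-1}$ — valid precisely because $\tau>0$ makes the window width $\delta^{\tau-1}$ negligible compared to the spacing $2\pi/\delta$. Everything else is routine Fubini, Cauchy--Schwarz, and use of the hypotheses $C_f,C_g'<\infty$ and $G\in L^{2,1}$.
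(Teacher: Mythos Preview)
Your overall plan — bound $|I_m(\xi;\delta)|$ by $C_fC_g'\,G(2\pi m/\delta+\xi)$ with $G(\zeta)=\sup_x|\widehat\Gamma(x,\zeta)|$, then exploit the weight $(1+|\zeta|)$ together with the near-disjointness of the shifted windows and Cauchy--Schwarz in $m$ — is exactly the paper's approach. The disjoint-interval observation and the estimate $|2\pi m/\delta+\xi|\gtrsim|m|/\delta$ on the band $|\xi|\le\delta^{\tau-1}$ are the heart of the argument in both.

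There is, however, a genuine gap in your treatment of \eqref{I_bdds_1}. You propose to integrate by parts in $x$ against $e^{2\pi imx}$ to gain $1/|m|$, asserting that $\partial_x(\overline f\,g)$ is again controlled by $C_fC_g'$. The lemma assumes only $L^\infty$ bounds on $f$ and $g$, not differentiability, and integration by parts would also land a derivative on $\widehat\Gamma(x,\cdot)$, which \eqref{Gammahat-bound} does not control. So this step is unjustified under the stated hypotheses. More importantly, it is unnecessary: the paper simply takes absolute values inside the $x$-integral from the start and extracts \emph{both} the factor $\delta$ \emph{and} the summability in $m$ from the single observation $|2\pi m/\delta+\xi|^{-1}\lesssim\delta/|m|$ for $|m|\ge1$. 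After multiplying and dividing by $|2\pi m/\delta+\xi|$ inside the squared sum and applying Cauchy--Schwarz in $m$ with weights $1/m^2$, the remaining $\sum_{|m|\ge1}\int_{J_m}|\zeta|^2G(\zeta)^2\,d\zeta$ collapses by disjointness to $\|G\|_{L^{2,1}}^2$. Your own weight-extraction paragraph already contains this mechanism; drop the integration by parts entirely and the proof of \eqref{I_bdds_1} goes through exactly as you describe for \eqref{I_bdds_2}, now with the $m=0$ term absent and the extra $\delta$ coming for free from the lower bound on $|\zeta|$.
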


\nit {\it Proof:} We first prove \eqref{I_bdds_1}. 
With $I_m(\xi;\delta)$ as defined in \eqref{I_m_def}, 
\begin{align*}
& \norm{\chi\left(\abs{\xi}\leq\delta^{\tau-1}\right) \sum_{\abs{m}\geq1}
I_m(\xi;\delta)}_{L^2(\R_{\xi})}^2\nn\\
&= \norm{\chi\left(\abs{\xi}\leq\delta^{\tau-1}\right) \sum_{\abs{m}\geq1} \int_0^1 e^{2\pi imx}
\widehat{\Gamma}\left(x,\frac{2\pi m}{\delta}+\xi\right) \overline{f(x,\delta\xi)} g(x)
dx}_{L^2(\R_{\xi})}^2 \\
& \le C_f^2\ C_g'^2\ \norm{\chi\left(\abs{\xi}\leq\delta^{\tau-1}\right) \sum_{\abs{m}\geq1}
\sup_{0\le x\le1}\abs{\widehat{\Gamma}\left(x,\frac{2\pi m}{\delta}+\xi\right)}}_{L^2(\R_{\xi})}^2 \\
&= C_f^2\ C_g'^2\ \int_{\R}\chi\left(\abs{\xi}\leq\delta^{\tau-1}\right)
 \sum_{\abs{m}\geq1} \frac{ \abs{ \frac{2\pi m}{\delta}+\xi } }{ \abs{ \frac{2\pi m}{\delta}+\xi }  } 
\sup_{0\le x\le1}\abs{\widehat{\Gamma}\left(x, \frac{2\pi m}{\delta}+\xi \right)}^2 d\xi.
\end{align*} 
Note that $\abs{\xi}\leq\delta^{\tau-1}$ implies that $\abs{ \frac{2\pi m}{\delta}+\xi }\gtrsim
\frac{|m|}{\delta}$ for
$\abs{m}\geq1$. Thus, by the Cauchy-Schwarz inequality,

\begin{align}
 \label{ibdd14}
& \norm{\chi\left(\abs{\xi}\leq\delta^{\tau-1}\right) \sum_{\abs{m}\geq1} I_m(\xi;\delta)}_{L^2(\R_{\xi})}^2
\nn\\
&\lesssim C_f^2\ C_g'^2\ \delta^2 \int_{\abs{\xi}\leq\delta^{\tau-1}} \abs{\sum_{\abs{m}\geq1} \frac{1}{\abs{m}} 
\abs{\frac{2\pi m}{\delta}+\xi} \sup_{0\le x\le1}\abs{\widehat{\Gamma}\left(x,\frac{2\pi m}{\delta}+\xi\right)}}^2
d\xi
\nonumber\\
&\leq C_f^2\ C_g'^2\ \delta^2 \times  \\
&\quad \int_{\abs{\xi}\leq\delta^{\tau-1}} \left(\sum_{\abs{m}\geq1}
\frac{1}{m^2}\right) \left(\sum_{\abs{m}\geq1} \abs{\frac{2\pi m}{\delta}+\xi}^2
\sup_{0\le x\le1}\abs{\widehat{\Gamma}\left(x,\frac{2\pi m}{\delta}+\xi\right)}^2 \right) d\xi. \nn
\end{align} 
Since  $\delta$ is taken to be  small,  for each $m$, the
integrals in \eqref{ibdd14} are over disjoint intervals.  Therefore, 
\begin{align*}
\norm{\chi\left(\abs{\xi}\leq\delta^{\tau-1}\right) \sum_{\abs{m}\geq1} I_m(\xi;\delta)}_{L^2(\R_{\xi})}^2
&\lesssim\ C_f^2\ C_g'^2\ \delta^2 \int_{\R} |\zeta|^2\sup_{0\le x\le1}|\widehat{\Gamma}(x,\zeta)|^2
d\zeta\nn\\
& \leq\ C_f^2\ C_g'^2\ \delta^2 \norm{\sup_{0\le x\le1}\widehat{\Gamma}(x,\cdot)}_{L^{2,1}(\R)}^2.
\end{align*} 
This completes the proof of the first bound, \eqref{I_bdds_1}.

The second bound, \eqref{I_bdds_2}, is for an expression for a sum over $\Z$, which includes the $m=0$ term, is
similarly proved:
\begin{align*}
&\norm{\chi\left(\abs{\xi}\leq\delta^{\tau-1}\right) \sum_{m\in\mathbb{Z}} I_m(\xi;\delta)}_{L^2(\R_{\xi})}^2 \\
&= \norm{\chi\left(\abs{\xi}\leq\delta^{\tau-1}\right) 
 \int_0^1 \sum_{m\in\mathbb{Z}} 
\widehat{\Gamma}\left(x,\frac{2\pi m}{\delta}+\xi\right)\ e^{2\pi imx} \overline{f(x,\delta\xi)}
g(x)dx}_{L^2(\R_{\xi})}^2  \\
& \lesssim\ C_f^2\ C_g'^2\ 
\norm{\chi\left(\abs{\xi}\leq\delta^{\tau-1}\right) \sum_{m\in\mathbb{Z}} 
\sup_{0\le x\le1}\abs{\widehat{\Gamma}\left(x,\frac{2\pi m}{\delta}+\xi\right)}}_{L^2(\R_{\xi})}^2 \\
&\lesssim\  C_f^2\ C_g'^2\ 
\int_{\R}\chi\left(\abs{\xi}\leq\delta^{\tau-1}\right) \abs{ \sum_{m\in\mathbb{Z}} 
\frac{1+\abs{\frac{2\pi m}{\delta}+\xi}}{1+\abs{\frac{2\pi m}{\delta}+\xi}}
\sup_{0\le x\le1}\abs{\widehat{\Gamma}\left(x,\frac{2\pi m}{\delta}+\xi\right)}}^2 d\xi\ .
\end{align*} 
 Noting that $1+\abs{\frac{2\pi m}{\delta}+\xi}^2\gtrsim
1+\abs{m}^2$ for all $m$, because $\abs{\xi}\leq\delta^{\tau}/\delta$, for $\delta>0$ and small, we proceed as in the
proof of \eqref{I_bdds_1} and 
obtain:
\begin{equation*}
 \norm{\chi\left(\abs{\xi}\leq\delta^{\tau-1}\right) \sum_{m\in\mathbb{Z}} I_m(\xi;\delta) }_{L^2(\R)} \lesssim
 C_f\ C_g'
\norm{\sup_{0\le x\le1}\widehat{\Gamma}(x,\cdot)}_{L^{2,1}(\R)},
\end{equation*} completing the proof of \eqref{I_bdds_2} and therewith the lemma.

\medskip

The next lemma offers a formula for bounding a commonly occurring norm.

\begin{lemma}
 \label{parseval_bdd}
 For all $f\in L^2(\R)$ and $b\in\mathbb{N}$,
 \begin{equation}
 \label{mnbddformula}
 \norm{\chi\left(\abs{\xi}\leq\delta^{\tau-1}\right)
\inner{\Phi_b(\cdot,k_\star+\delta\xi),f(\cdot)}_{L^2(\R)}}_{L^2(\R_\xi)}
 \lesssim \delta^{-1/2}\norm{f}_{L^2(\R)}.
\end{equation}
\end{lemma}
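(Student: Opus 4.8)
The plan is to reduce the estimate, by an elementary change of variables, to Bessel's inequality for the Floquet--Bloch transform at a single band. Write $\widetilde f_b(k)\equiv\inner{\Phi_b(\cdot,k),f(\cdot)}_{L^2(\R)}$, which by Remark \ref{welldefinedonL2} is well-defined for $f\in L^2(\R)$ and which, by the Parseval relation \eqref{parseval}, satisfies
\[
\int_{\mathcal B}\abs{\widetilde f_b(k)}^2\,dk\ \le\ (2\pi)^2\,\norm{f}_{L^2(\R)}^2
\]
for each fixed $b$, since the single-band contribution is dominated by the full sum over bands.

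First I would expand the left-hand side norm-squared directly as
\[
\norm{\chi(\abs{\xi}\le\delta^{\tau-1})\,\widetilde f_b(k_\star+\delta\xi)}_{L^2(\R_\xi)}^2
\ =\ \int_{\abs{\xi}\le\delta^{\tau-1}}\abs{\widetilde f_b(k_\star+\delta\xi)}^2\,d\xi ,
\]
and then substitute $k=k_\star+\delta\xi$, so that $d\xi=\delta^{-1}\,dk$ and the window $\abs{\xi}\le\delta^{\tau-1}$ becomes $\abs{k-k_\star}\le\delta^{\tau}$. This gives
\[
\norm{\chi(\abs{\xi}\le\delta^{\tau-1})\,\widetilde f_b(k_\star+\delta\xi)}_{L^2(\R_\xi)}^2
\ =\ \delta^{-1}\int_{\abs{k-k_\star}\le\delta^{\tau}}\abs{\widetilde f_b(k)}^2\,dk .
\]
For $\delta$ sufficiently small, the interval $[k_\star-\delta^{\tau},k_\star+\delta^{\tau}]$ lies inside the Brillouin zone $\mathcal B=[0,2\pi]$, because $k_\star=\pi$ is an interior point of $\mathcal B$ and $\tau>0$; hence the last integral is bounded by $\int_{\mathcal B}\abs{\widetilde f_b(k)}^2\,dk\le(2\pi)^2\norm{f}_{L^2(\R)}^2$. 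Taking square roots yields $\norm{\chi(\abs{\xi}\le\delta^{\tau-1})\inner{\Phi_b(\cdot,k_\star+\delta\xi),f}}_{L^2(\R_\xi)}\lesssim\delta^{-1/2}\norm{f}_{L^2(\R)}$, as claimed.

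There is essentially no obstacle: the proof is a one-line rescaling plus the Parseval bound. The only point deserving a word of care is the trivial verification that the rescaled frequency window $\abs{k-k_\star}\le\delta^\tau$ sits inside $\mathcal B$ once $\delta$ is small, which is exactly why the Brillouin zone was centered so that $k_\star=\pi$ is an interior point (cf.\ the remark after \eqref{BZ-def}); this makes the single-band Bessel inequality applicable without any boundary or periodicity bookkeeping in $k$.
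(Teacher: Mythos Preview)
Your proof is correct and follows essentially the same approach as the paper: restrict Parseval's identity \eqref{parseval} to a single band to get a Bessel-type bound on $\int_{|k-k_\star|\le\delta^\tau}|\widetilde f_b(k)|^2\,dk$, then rescale $k=k_\star+\delta\xi$ to pick up the factor $\delta^{-1}$. The paper's version is the same two steps in the opposite order.
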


{\it Proof.}
 By Parseval's identity, \eqref{parseval},
\begin{equation*}
 \frac{1}{(2\pi)^2}\int_{\abs{k-k_{\star}}\leq\delta^{\tau}}\abs{\inner{\Phi_b(\cdot,k),f(\cdot)}_{L^2(\R)}}^2
dk \leq \norm{f}_{L^2(\R)}^2,
\end{equation*} and therefore, under the change of variables $k-k_{\star}=\delta\xi$, we obtain
\begin{equation*}
 \frac{1}{(2\pi)^2} \int_{\abs{\xi}\leq\delta^{\tau-1}}\abs{\inner{\Phi_b(\cdot,k_\star+\delta\xi),f(\cdot)}_{L^2(\R)}}^2
d\xi \leq \delta^{-1}\norm{f}_{L^2(\R)}^2. \qquad \Box
\end{equation*}

We may now proceed to obtain the $L^2$ bounds on the operators $\widehat{\mathcal{D}}^{\delta}-\widehat{\mathcal{D}}$
and $\widehat{\mathcal{L}}^{\delta}(\mu)$.

\begin{proposition}
 \label{lemma12}
 There exists a $\delta_0>0$ such that for all $0<\delta<\delta_0$ and $0<\tau<1$:
 \begin{equation}
 \label{diracdifflemma}
 \norm{(\widehat{\mathcal{D}}^{\delta}-\widehat{\mathcal{D}})\widehat{\beta}}_{L^{2}(\R)} \lesssim
\delta^{1-\tau}\norm{\widehat{\beta}}_{L^{2,1}(\R)},
\end{equation} and
\begin{equation}
 \label{llemma}
 \norm{\widehat{\mathcal{L}}^{\delta}(\mu)\widehat{\beta}}_{L^{2}(\R)} \lesssim
\delta^{\tau}\norm{\widehat{\beta}}_{L^{2,1}(\R)}\ .
\end{equation} 
\end{proposition}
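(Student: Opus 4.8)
The plan is to prove the two bounds independently, in each case reducing to the elementary estimates of Lemma \ref{I_bdds} and Lemma \ref{parseval_bdd}, together with the smoothness of $k\mapsto E_\pm(k)$ and $k\mapsto p_\pm(\cdot,k)$ near $k_\star$ (Proposition \ref{flo-blo-dirac}), the uniform bounds $\|\kappa\|_{L^\infty(\R)},\|\kappa'\|_{L^\infty(\R)}<\infty$ (consequences of the Schwartz hypotheses on $\kappa^2-\kappa_\infty^2$ and $\kappa'$), and the two trivial inequalities $1\le\delta^{1-\tau}|\xi|$ on $\{|\xi|\ge\delta^{\tau-1}\}$ and $|\xi|\le\delta^{\tau-1}$ on the band-limited set. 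For \eqref{diracdifflemma}, comparing \eqref{bl-dirac-op} with \eqref{diraclimit} gives the exact identity
\[
(\widehat{\mathcal{D}}^\delta-\widehat{\mathcal{D}})\widehat\beta(\xi)\;=\;-\,\thetasharp\,\overline{\chi}\left(|\xi|\le\delta^{\tau-1}\right)\sigma_1\,\widehat{\kappa\beta}(\xi).
\]
On the support of $\overline{\chi}(|\xi|\le\delta^{\tau-1})$ we have $1\le\delta^{1-\tau}|\xi|$, so, using unitarity of $\sigma_1$ and the Plancherel relation \eqref{plancherel},
\[
\|(\widehat{\mathcal{D}}^\delta-\widehat{\mathcal{D}})\widehat\beta\|_{L^2(\R)}\le|\thetasharp|\,\delta^{1-\tau}\,\|\xi\,\widehat{\kappa\beta}\|_{L^2(\R)}\lesssim\delta^{1-\tau}\,\|\partial_X(\kappa\beta)\|_{L^2(\R)}.
\]
Since $\partial_X(\kappa\beta)=\kappa'\beta+\kappa\beta'$ with $\kappa,\kappa'$ bounded, $\|\partial_X(\kappa\beta)\|_{L^2}\lesssim\|\beta\|_{L^2}+\|\beta'\|_{L^2}\approx\|\widehat\beta\|_{L^{2,1}(\R)}$, which is \eqref{diracdifflemma}; this argument works for all $0<\tau<1$ and is essentially immediate.

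For \eqref{llemma} I split $\widehat{\mathcal{L}}^\delta(\mu)\widehat\beta$ into the three groups of terms in \eqref{L_op_1}--\eqref{L_op_2} and bound each by $\delta^\tau\|\widehat\beta\|_{L^{2,1}}$ (several contribute a strictly smaller power of $\delta$); here the restriction $0<\tau<1/2$, already in force through Corollary \ref{fixed-pt}, is used. \emph{Curvature term:} $E_\pm''(\widetilde\xi_\pm^\delta)$ is bounded near $k_\star$ by Proposition \ref{flo-blo-dirac}, and $\|\chi(|\xi|\le\delta^{\tau-1})\xi^2\widehat\beta\|_{L^2}\le\delta^{\tau-1}\|\xi\widehat\beta\|_{L^2}\le\delta^{\tau-1}\|\widehat\beta\|_{L^{2,1}}$, so the $\frac12\delta(\cdots)\xi^2\widehat\beta$ term is $\lesssim\delta^\tau\|\widehat\beta\|_{L^{2,1}}$. \emph{Feedback term:} Lemma \ref{parseval_bdd} applied with $f=\kappa(\delta\cdot)W_\oo(\cdot)[A\eta_{\rm near}](\cdot;\mu,\delta)$ bounds this term by $\delta^{-1/2}\|[A\eta_{\rm near}]\|_{L^2}$; then \eqref{A_bdd} gives $\|[A\eta_{\rm near}]\|_{H^2}\le\delta^{1-\tau}\|\eta_{\rm near}\|_{L^2}$ and Lemma \ref{beta_vs_eta} gives $\|\eta_{\rm near}\|_{L^2}\lesssim\delta^{1/2}\|\widehat\beta\|_{L^2}$, whence the term is $\lesssim\delta^{1-\tau}\|\widehat\beta\|_{L^{2,1}}\le\delta^\tau\|\widehat\beta\|_{L^{2,1}}$ since $\tau<1/2$.

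The remaining group consists of the sixteen terms $I_\pm^j$, $j=1,\dots,8$, and these are exactly of the form $\sum_m\int_0^1 e^{2\pi imx}\widehat\Gamma(x,2\pi m/\delta+\xi)\overline{f(x,\delta\xi)}g(x)\,dx$ treated by Lemma \ref{I_bdds}. For $j\le4$ one takes $\Gamma=\kappa\,\eta_{\pm,\rm near}$ and for $j\ge5$ one takes $\Gamma=\kappa\,\rho_\pm$; in each case $g$ is a product of periodic factors ($|p_\pm|^2W_\oo$, $\overline{p_\pm}p_\mp W_\oo$, or $W_\oo$) and $f$ is either $p_\pm(x,k_\star+\delta\xi)$ or $\Delta p_\pm(x,\delta\xi)$. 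The gain $\delta^\tau$ is extracted in two ways: for $j=1,3$ the $m=0$ term vanishes (product of an even-index and an odd-index Fourier series), so the refined bound \eqref{I_bdds_1} applies and gains an extra $\delta$; for every term in which $f=\Delta p_\pm$ one has $C_f=\|\Delta p_\pm\|_{L^\infty}\lesssim\delta^\tau$ by \eqref{deltapbdd}. In all cases one is reduced to controlling $\|\sup_x\widehat\Gamma(x,\cdot)\|_{L^{2,1}(\R)}$: when $\Gamma=\kappa\,\eta_{\pm,\rm near}$ this equals $\|\mathcal{F}_X[\kappa\,\eta_{\pm,\rm near}]\|_{L^{2,1}}\approx\|\kappa\,\eta_{\pm,\rm near}\|_{H^1}\lesssim\|\eta_{\pm,\rm near}\|_{H^1}\approx\|\widehat\eta_{\pm,\rm near}\|_{L^{2,1}}\le\|\widehat\beta\|_{L^{2,1}}$; when $\Gamma=\kappa\,\rho_\pm$ one substitutes the Fourier series \eqref{deltap_fourier} of $\Delta p_\pm$ in $x$, uses $\|(\widehat{\Delta p}_\pm)_m\|_{L^\infty}\lesssim\delta^\tau(1+m^2)^{-1}$ to write $\rho_\pm=\sum_m e^{2\pi imx}g_m$ with $\|g_m\|_{H^1}\lesssim\delta^\tau(1+m^2)^{-1}\|\widehat\eta_{\pm,\rm near}\|_{L^{2,1}}$, and sums over $m$. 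Collecting the powers of $\delta$ shows every $I_\pm^j$ is $\lesssim\delta^\tau\|\widehat\beta\|_{L^{2,1}}$; adding this to the curvature and feedback contributions gives \eqref{llemma}.

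The main obstacle is the bookkeeping for the $I_\pm^j$ terms: for each of the sixteen terms one must correctly identify the triple $(f,g,\Gamma)$, decide between \eqref{I_bdds_1} and \eqref{I_bdds_2}, and track precisely where the gain $\delta^\tau$ originates — the vanishing of the $m=0$ mode for $I^1,I^3$, or the pointwise smallness $\|\Delta p_\pm\|_{L^\infty}\lesssim\delta^\tau$ otherwise — while verifying the mildly delicate estimate $\|\sup_x\mathcal{F}_X[\kappa\,\rho_\pm](x,\cdot)\|_{L^{2,1}}\lesssim\delta^\tau\|\widehat\beta\|_{L^{2,1}}$ through the Fourier expansion of $\Delta p_\pm$. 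Everything else — the Dirac-difference estimate and the curvature and feedback terms — is short once Lemmata \ref{I_bdds}, \ref{parseval_bdd}, \ref{beta_vs_eta}, the bound \eqref{A_bdd}, and Proposition \ref{flo-blo-dirac} are in hand.
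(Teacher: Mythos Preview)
Your proposal is correct and follows essentially the same route as the paper's proof: the same identity for $\widehat{\mathcal{D}}^\delta-\widehat{\mathcal{D}}$ and the same high-frequency gain for \eqref{diracdifflemma}; and for \eqref{llemma} the same three-way split into the curvature term (handled via $|\xi|\le\delta^{\tau-1}$), the feedback term \eqref{L_op_2} (handled via Lemma~\ref{parseval_bdd}, bound \eqref{A_bdd} and Lemma~\ref{beta_vs_eta}), and the $I_\pm^j$ terms (handled via Lemma~\ref{I_bdds}, with the $\delta^\tau$ gain coming either from the vanishing of the $m=0$ term for $j=1,3$ or from $\|\Delta p_\pm\|_{L^\infty}\lesssim\delta^\tau$). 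The only minor variation is that for $I^5$--$I^8$ you control $\|\sup_x\mathcal{F}_X[\kappa\rho_\pm](x,\cdot)\|_{L^{2,1}}$ by expanding $\Delta p_\pm$ as a Fourier series in $x$, whereas the paper computes $\mathcal{F}_X[\rho_\pm](x,\zeta)=\chi(|\zeta|\le\delta^{\tau-1})\Delta p_\pm(x,\delta\zeta)\,\widehat\eta_{\pm,\rm near}(\zeta)$ directly and takes the supremum in $x$; both yield the same $\delta^\tau$ gain.
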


\begin{proofof} \textit{Proof of Proposition \ref{lemma12}:}
Note, from \eqref{bl-dirac-op} and \eqref{diraclimit} that 
\begin{equation}
(\widehat{\mathcal{D}}^{\delta}-\widehat{\mathcal{D}})\widehat{\beta} = 
\vartheta_\sharp\chi\left(|\xi|>\delta^{\tau-1}\right)\sigma_1\widehat{\kappa\beta} . 
\label{dirac-diff}
\end{equation}
To prove bound \eqref{diracdifflemma} we use that $\widehat{\mathcal{D}}^{\delta}-\widehat{\mathcal{D}}$ is only
supported at high frequencies:
\begin{align*}
& \norm{(\widehat{\mathcal{D}}^{\delta}-\widehat{\mathcal{D}})\widehat{\beta}}_{L^2(\R)}^2 
\approx
 \norm{\chi\left(\abs{\xi}>\delta^{\tau-1}\right)\widehat{\kappa\beta}(\xi)}_{L^2(\R_\xi)}^2 
 \ = \int_{\abs{\xi}>\delta^{\tau-1}}\norm{\widehat{\kappa\beta}(\xi)}_{\mathbb{C}^2}^2 d\xi
\\
 &= \int_{\abs{\xi}>\delta^{\tau-1}}\abs{\xi}^{-2}\norm{\xi\widehat{\kappa\beta}(\xi)}_{\mathbb{C}^2}^2
d\xi \ \leq \delta^{2(1-\tau)}\int_{\R}\norm{\xi\widehat{\kappa\beta}(\xi)}_{\mathbb{C}^2}^2
d\xi \\
 &= \frac{\delta^{2(1-\tau)}}{2\pi}\int_{\R}\norm{\partial_X(\kappa(X)\beta(X))}_{\mathbb{C}^2}^2 dX \\
 & \lesssim \delta^{2(1-\tau)}\int_{\R} \left(\norm{\beta(X)}_{\mathbb{C}^2}^2+
\norm{\partial_X\beta(X)}_{\mathbb{C}^2}^2 \right) dX \ \approx \delta^{2(1-\tau)}
\norm{\widehat{\beta}}_{L^{2,1}(\R)}^2.
\end{align*} 
Here, we have used that $\kappa$ and $\kappa'$ are bounded functions on $\R$.

We now embark of the proof of the bound \eqref{llemma} for the operator 
$\widehat{\beta}\mapsto \widehat{\mathcal{L}}^{\delta}(\mu)\widehat{\beta}$.
 We have from \eqref{L_op_1} and \eqref{L_op_2},
\begin{align}
& \norm{\widehat{\mathcal{L}}^{\delta}(\mu)\widehat{\beta}(\cdot)}_{L^2(\R)}\nn\\
  &\lesssim
 \delta \norm{\chi\left(\abs{\xi}\leq\delta^{\tau-1}\right)\xi^2\widehat{\beta}(\xi)}_{L^2(\R)} + 
 \norm{\chi\left(\abs{\xi}\leq\delta^{\tau-1}\right)\sum_{j=1}^8 
 \begin{pmatrix}
 I^j_{-}(\xi)\\
 I^j_{+}(\xi)
 \end{pmatrix}}_{L^2(\R)} \label{op1_bdd}\\
 & + \norm{\chi\left(\abs{\xi}\leq\delta^{\tau-1}\right)
\begin{pmatrix}
\inner{\Phi_{-}(\cdot,k_{\star}+\delta\xi),\kappa(\delta\cdot)W_{\oo}(\cdot)
[A\eta_{\rm near}](\cdot;\mu,\delta)}_{L^2(\R)} \\
\inner{\Phi_{+}(\cdot,k_{\star}+\delta\xi),
\kappa(\delta\cdot)W_{\oo}(\cdot)[A\eta_{\rm near}](\cdot;\mu,\delta)}_{L^2(\R)}
\end{pmatrix}}_{L^2(\R_{\xi})}. \label{Ldef_bdd}
\end{align}
The first term on the right hand side of \eqref{op1_bdd}-\eqref{Ldef_bdd} is clearly bounded by 
$\delta^{\tau}\|\widehat{\beta}\|_{L^{2,1}(\R)}$. The next term involves a sum over 
 terms $I^j_\pm,\ j=1,\dots,8$, which we bound using Lemma \ref{I_bdds}. We shall explicitly estimate
only the $I_+^j$ terms. The $I_-^j$ terms are similarly controlled.\medskip

\nit{\it  Bound on $I^1_{+}$:} \ Applying the bound \eqref{I_bdds_1} to $I^1_{+}(\xi)$ in \eqref{ibstar1} with 
\[f(x)=p_+(x,k_{\star}),\ g(x)=p_+(x,k_{\star})W_{\oo}(x),\ \textrm{ and}\  \widehat{\Gamma}=
\widehat{\Gamma}(\xi) = \mathcal{F}_X[\kappa\eta_{+,\rm near}]\left(\xi\right)\ , \]  
and using Parseval's identity and the boundedness of $\kappa$ and $\kappa'$ we obtain:
\begin{align}
\norm{\chi\left(\abs{\xi}\leq\delta^{\tau-1}\right)I^1_{+}(\xi)}_{L^2(\R_\xi)}^2 
&\lesssim \delta^2 \norm{\mathcal{F}_X[\kappa\eta_{+,\rm near}](\xi)}_{L^{2,1}(\R_{\xi})}^2 \nn\\
&= \frac{\delta^2}{2\pi} \int_{\R} \abs{\partial_X(\kappa(X)\eta_{+,\rm near}(X))}^2 dX \nn\\
&\lesssim \delta^2\int_{\R} \left( \abs{\eta_{+,\rm near}(X)}^2+
\abs{\partial_X\eta_{+,\rm near}(X)}^2 \right) dX \nn\\
 &\approx \delta^2 \norm{\widehat{\eta}_{+,\rm near}}_{L^{2,1}(\R)}^2\nn \ ,
\end{align}
implying
\begin{equation}
\norm{\chi\left(\abs{\xi}\leq\delta^{\tau-1}\right)I^1_{+}(\xi)}_{L^2(\R_\xi)}\lesssim
\delta \norm{\widehat{\eta}_{+,\rm near}}_{L^{2,1}(\R)}\ .
\label{I1bound}\end{equation}

\nit{\it Bound on $I^3_+$:} We similarly apply \eqref{I_bdds_1} to $I^3_+$, the sum over $|m|\ge1$, given in
\eqref{I_3}, and obtain
 \begin{equation}
\norm{\chi\left(\abs{\xi}\leq\delta^{\tau-1}\right)I^3_{+}(\xi)}_{L^2(\R)} \lesssim
\delta \norm{\widehat{\eta}_{-,\rm near}}_{L^{2,1}(\R)}. \label{I3bound}
\end{equation}
\medskip

\nit{\it  Bound on $I^2_{+}$:} Set
$f=f(x,\delta\xi)=\Delta p_+(x,\delta\xi),\ g(x)=p_+(x,k_{\star})W_{\oo}(x)$, and
$\widehat{\Gamma}=\widehat{\Gamma}(\xi) =
\mathcal{F}_X[\kappa\eta_{+,\rm near}]\left(\xi\right).$
Using  \eqref{I_bdds_2}, the boundedness of $\kappa$ and $\kappa'$,
  and that by \eqref{deltapbdd} $ C_{\Delta p_+}\lesssim\delta^\tau$, we have
 \begin{align}
\norm{\chi\left(\abs{\xi}\leq\delta^{\tau-1}\right)I^2_{+}(\xi)}_{L^2(\R_\xi)} &\lesssim
 C_{\Delta p_+} C_{p_+W_{\rm o}}'\ 
\norm{\mathcal{F}_X[\kappa\eta_{+,\rm near}]}_{L^{2,1}(\R_\xi)} \nn \\
& \lesssim \delta^{\tau} \norm{\widehat{\eta}_{+,\rm near}}_{L^{2,1}(\R)}\ .
\label{I2bound}\end{align} 

 \nit{\it  Bound on $I^4_{+}$:}\ The bound on $I^4_{+}$ is similar to that of $I^2_{+}$:
\begin{equation}
 \norm{\chi\left(\abs{\xi}\leq\delta^{\tau-1}\right)I^4_{+}(\xi)}_{L^2(\R)} \lesssim
 \delta^{\tau} \norm{\widehat{\eta}_{-,\rm near}}_{L^{2,1}(\R)}.
 \label{I4bound}
\end{equation}
 \nit{\it  Bound on $I^5_{+}$:}\ The expression $I^5_{+}$ in \eqref{I_5} can be estimated using \eqref{I_bdds_2}
  of Lemma \ref{I_bdds}, where we set:
  \[ \widehat{\Gamma}(x,\zeta)=\mathcal{F}_{X}[\kappa\rho_{+}](x,\zeta),\ f(x)=p_+(x;k_\star),\ \text{and }
g(x)=W_\oo(x) \ .\]
By \eqref{I_bdds_2} and the boundedness of $\kappa$ and $\kappa'$:
\begin{align*}
 \norm{\chi\left(\abs{\xi}\leq\delta^{\tau-1}\right)I^5_{+}(\xi)}_{L^2(\R)}^2 &\lesssim
\int_{\R} \underset{0\le x\le1}{\sup}\left(
 (1+\abs{\zeta}^2)\abs{\mathcal{F}_{X}[\kappa\rho_{+}](x,\zeta)}^2\right) d\zeta \\
 &\lesssim \int_{\R}\ \underset{0\le x\le1}{\sup}\left(\abs{\rho_{+}(x,X)}^2
 +\abs{\partial_X\rho_{+}(x,X)}^2\right) dX\\
 &\approx \int_{\R_\zeta}\ \underset{0\le x\le1}{\sup}\ (1+|\zeta|^2)\abs{\mathcal{F}_X[\rho_{+}](x,\zeta)}^2
 \ d\zeta \ .
\end{align*} 
From \eqref{rhodefn} we have
\begin{equation*}
 \mathcal{F}_X[\rho_{+}](x,\zeta) = \chi\left(\abs{\zeta}\leq\delta^{\tau-1}\right) \Delta
p_{+}(x,\delta\zeta) \widehat{\eta}_{+,\rm near}(\zeta)\ .
\end{equation*}
Therefore, 
\begin{align*}
 &\norm{\chi\left(\abs{\xi}\leq\delta^{\tau-1}\right)I^5_{+}(\xi)}_{L^2(\R)}^2\nn\\
 &\quad \lesssim\int_{\R}\  \underset{0\le x\le1}{\sup}\ \abs{\Delta p_{+}(x,\delta\zeta)}^2\ 
  \chi\left(\abs{\zeta}\leq\delta^{\tau-1}\right) \ 
 (1+|\zeta|^2)|\widehat{\eta}_{+,\rm near}(\zeta)|^2  d\zeta \\
 &\quad \lesssim  ( C_{\Delta p_{+}} )^2  \|\widehat{\eta}_{+,\rm near}\|_{L^{2,1}(\R)}^2\ \approx 
 \delta^{2\tau}\norm{\widehat{\eta}_{+,\rm near}}_{L^{2,1}(\R)}^2,
\end{align*} where  we have used the bound on $\Delta p_{+}$ in \eqref{deltapbdd}. 
Thus,
\begin{equation}
\norm{\chi\left(\abs{\xi}\leq\delta^{\tau-1}\right)I^5_{+}(\xi)}_{L^2(\R)}\lesssim
\delta^{\tau}\norm{\widehat{\eta}_{+,\rm near}}_{L^{2,1}(\R)} \ .
\label{I5bound}\end{equation}
Similar estimates yield the
bounds: 
\begin{align}
\norm{\chi\left(\abs{\xi}\leq\delta^{\tau-1}\right)I^6_{+}(\xi)}_{L^2(\R)} &\lesssim
\delta^{\tau}\norm{\widehat{\eta}_{+,\rm near}}_{L^{2,1}(\R)}\label{I6bound} \ ,\\
\norm{\chi\left(\abs{\xi}\leq\delta^{\tau-1}\right)I^7_{+}(\xi)}_{L^2(\R)} &\lesssim
\delta^{\tau}\norm{\widehat{\eta}_{-,\rm near}}_{L^{2,1}(\R)}\label{I7bound}\ ,\\
\norm{\chi\left(\abs{\xi}\leq\delta^{\tau-1}\right)I^8_{+}(\xi)}_{L^2(\R)} &\lesssim
\delta^{\tau}\norm{\widehat{\eta}_{-,\rm near}}_{L^{2,1}(\R)}\ .\label{I8bound}
\end{align}

Finally, to complete our bound of $\widehat{\mathcal{L}}(\mu)$, we bound the term displayed in \eqref{Ldef_bdd} using
Lemmas \ref{parseval_bdd} and \ref{beta_vs_eta}. With $f(x)=\kappa(\delta
x)W_{\oo}(x)[A\eta_{\rm near}](x;\mu,\delta)$ apply \eqref{mnbddformula} and the bound \eqref{A_bdd} on the mapping
$\eta_{\rm near}\mapsto
[A\eta_{\rm near}](x;\mu,\delta)$  to obtain
\begin{align*}  
&\norm{\chi\left(\abs{\xi}\leq\delta^{\tau-1}\right)
\begin{pmatrix}
\inner{\Phi_{-}(\cdot,k_{\star}+\delta\xi),\kappa(\delta\cdot)W_{\oo}(\cdot)
[A\eta_{\rm near}](\cdot;\mu,\delta)}_{L^2(\R)} \\
\inner{\Phi_{+}(\cdot,k_{\star}+\delta\xi),
\kappa(\delta\cdot)W_{\oo}(\cdot)[A\eta_{\rm near}](\cdot;\mu,\delta)}_{L^2(\R)}
\end{pmatrix}}_{L^2(\R_{\xi})} \\
&~~~\lesssim \delta^{-1/2}\norm{\kappa(\delta\cdot)W_{\oo}(\cdot)
[A\eta_{\rm near}](\cdot;\mu,\delta)}_{L^2(\R)} \\
&~~~\lesssim \delta^{-1/2} \norm{[A\eta_{\rm near}](\cdot;\mu,\delta)}_{L^2(\R)}
\lesssim \delta^{-1/2} \delta^{1-\tau} \norm{\eta_{\rm near}}_{L^2(\R)}.
\end{align*} 

 Employing \eqref{beta_eta_bdd} and 
the Plancherel theorem then gives
\begin{align}
 &\norm{\chi\left(\abs{\xi}\leq\delta^{\tau-1}\right)
\begin{pmatrix}
\inner{\Phi_{-}(\cdot,k_{\star}+\delta\xi),\kappa(\delta\cdot)W_{\oo}(\cdot)
[A\eta_{\rm near}](\cdot;\mu,\delta)}_{L^2(\R)} \\
\inner{\Phi_{+}(\cdot,k_{\star}+\delta\xi),
\kappa(\delta\cdot)W_{\oo}(\cdot)[A\eta_{\rm near}](\cdot;\mu,\delta)}_{L^2(\R)}
\end{pmatrix}}_{L^2(\R_{\xi})} \nn\\
&~~~ \lesssim \delta^{1-\tau} \norm{\beta}_{L^2(\R)} = 
\delta^{1-\tau} \norm{\widehat{\beta}}_{L^2(\R)}. \label{near_bdd}
\end{align}

Finally, substituting \eqref{near_bdd} and the bounds on $I^j_\pm$ into \eqref{Ldef_bdd} and \eqref{op1_bdd} therefore
yields
\begin{align*}
 \norm{\widehat{\mathcal{L}}^{\delta}(\mu)\widehat{\beta}}_{L^2} &\lesssim
 \delta \delta^{\tau-1} \norm{\xi\widehat{\beta}(\xi)}_{L^2(\R_\xi)} +
\delta^{\tau}\norm{\widehat{\beta}}_{L^{2,1}(\R)} + 
\delta^{1-\tau} \norm{\widehat{\beta}}_{L^2(\R)} \\
& \lesssim \delta^{\tau}\norm{\widehat{\beta}}_{L^{2,1}(\R)},
\end{align*} because $0<\tau<1/2$.
\end{proofof}
\medskip

Next, we seek bounds on the inhomogeneous terms $\widehat{\mathcal{M}}(\xi;\delta)$ and
$\widehat{\mathcal{N}}(\xi;\delta)$ on the right hand side of \eqref{betafactored}.

\begin{proposition}
 \label{lemma13}
$\widehat{\mathcal{M}}(\xi;\delta)$ and $\widehat{\mathcal{N}}(\xi;\delta)$ are bounded in $L^{2}(\R)$,
with bounds
\begin{equation*}
\norm{\widehat{\mathcal{M}}(\cdot;\delta)}_{L^2(\R)} \lesssim \delta^{-1}, ~~~\text{and} ~~~
\norm{\widehat{\mathcal{N}}(\cdot;\delta)}_{L^2(\R)} \lesssim \delta^{-1}.
\end{equation*}
\end{proposition}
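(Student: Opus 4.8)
The plan is to estimate $\widehat{\mathcal{M}}(\xi;\delta)$ and $\widehat{\mathcal{N}}(\xi;\delta)$ term-by-term, using their explicit definitions in \eqref{M_op_1}--\eqref{M_op_3} and \eqref{N_op_1}--\eqref{N_op_4}. The key observation is that every component of $\widehat{\mathcal{M}}$ and $\widehat{\mathcal{N}}$ is of the form $\chi(|\xi|\le\delta^{\tau-1})\inner{\Phi_{\pm}(\cdot,k_{\star}+\delta\xi),f(\cdot)}_{L^2(\R)}$ (possibly multiplied by a power of $\delta$), where $f$ is one of a handful of $L^2(\R)$ functions: $\psi^{(0)}(\cdot,\delta\cdot)$, $\psi^{(1)}_p(\cdot,\delta\cdot)$, $\D_X^2\psi^{(0)}(x,X)\big|_{X=\delta x}$, $\D_X^2\psi^{(1)}_p(x,X)\big|_{X=\delta x}$, $(2\D_x\D_X-\kappa W_\oo)\psi^{(1)}_p(x,X)\big|_{X=\delta x}$, $\kappa(\delta\cdot)W_\oo(\cdot)B(\cdot;\delta)$, and $\kappa(\delta\cdot)W_\oo(\cdot)C(\cdot;\delta)$. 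So the proof reduces to: (i) a single abstract estimate, namely Lemma \ref{parseval_bdd}, which gives $\norm{\chi(|\xi|\le\delta^{\tau-1})\inner{\Phi_b(\cdot,k_\star+\delta\xi),f(\cdot)}}_{L^2(\R_\xi)}\lesssim \delta^{-1/2}\norm{f}_{L^2(\R)}$; and (ii) $L^2(\R)$ bounds on each of the functions $f$ appearing above.

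First I would dispatch $\widehat{\mathcal{M}}$. For \eqref{M_op_1}, apply Lemma \ref{parseval_bdd} with $f=\psi^{(0)}(\cdot,\delta\cdot)$ and invoke \eqref{psi0_bdds} of Lemma \ref{lemma:psi_bounds}, giving $\norm{\psi^{(0)}(\cdot,\delta\cdot)}_{L^2}\lesssim\delta^{-1/2}$, hence this term is $\lesssim\delta^{-1/2}\cdot\delta^{-1/2}=\delta^{-1}$. For \eqref{M_op_2}, the extra factor $\delta$ combines with $\norm{\psi^{(1)}_p(\cdot,\delta\cdot)}_{L^2}\lesssim\delta^{-1/2}$ from \eqref{psi_H_bdds} and the $\delta^{-1/2}$ from Lemma \ref{parseval_bdd} to yield $\delta\cdot\delta^{-1/2}\cdot\delta^{-1/2}=\delta^{0}$, which is $\lesssim\delta^{-1}$. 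For \eqref{M_op_3}, use boundedness of $\kappa$ and $W_\oo$ on $\R$ together with $\norm{B(\cdot;\delta)}_{H^2(\R)}\le\delta^{1/2-\tau}$ from \eqref{B_bdd} of Corollary \ref{fixed-pt}, so $\norm{\kappa(\delta\cdot)W_\oo(\cdot)B(\cdot;\delta)}_{L^2}\lesssim\delta^{1/2-\tau}$, and the term is $\lesssim\delta^{-1/2}\cdot\delta^{1/2-\tau}=\delta^{-\tau}$, which for $0<\tau<1/2$ is $\lesssim\delta^{-1}$. Summing the three contributions gives $\norm{\widehat{\mathcal{M}}(\cdot;\delta)}_{L^2(\R)}\lesssim\delta^{-1}$.

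Next $\widehat{\mathcal{N}}$. For \eqref{N_op_1}, Lemma \ref{parseval_bdd} with $f=(2\D_x\D_X-\kappa W_\oo)\psi^{(1)}_p(x,X)\big|_{X=\delta x}$; bound the $L^2$ norm of $f$ using \eqref{psi1_specific_bdds} for the $\D_x\D_X$ piece, the $L^\infty$ bounds on $\kappa,W_\oo$ together with \eqref{psi_H_bdds} for the $\kappa W_\oo$ piece, obtaining $\norm{f}_{L^2}\lesssim\delta^{-1/2}$, hence this term is $\lesssim\delta^{-1}$. For \eqref{N_op_2}, use the bound $\norm{\D_X^2\psi^{(0)}(x,X)\big|_{X=\delta x}}_{L^2(\R_x)}\lesssim\delta^{-1/2}$ from \eqref{psi0_bdds}, giving $\lesssim\delta^{-1}$. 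For \eqref{N_op_3}, the factor $\delta$ times $\norm{\D_X^2\psi^{(1)}_p(x,X)\big|_{X=\delta x}}_{L^2(\R_x)}\lesssim\delta^{-1/2}$ from \eqref{psi1_specific_bdds} times $\delta^{-1/2}$ gives $\delta^{0}\lesssim\delta^{-1}$. For \eqref{N_op_4}, exactly as in \eqref{M_op_3} but with $\norm{C(\cdot;\delta)}_{H^2(\R)}\le\delta^{1/2-\tau}$ from \eqref{B_bdd}, yielding $\lesssim\delta^{-\tau}\lesssim\delta^{-1}$. Summing gives $\norm{\widehat{\mathcal{N}}(\cdot;\delta)}_{L^2(\R)}\lesssim\delta^{-1}$, completing the proof.

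I do not expect a genuine obstacle here: the proposition is a bookkeeping exercise once Lemma \ref{parseval_bdd} and Lemma \ref{lemma:psi_bounds} are in hand, and the only mild subtlety is keeping track of the explicit powers of $\delta$ multiplying certain components (the $\delta$ in front of \eqref{M_op_2}, \eqref{N_op_3}) and confirming that each contribution is indeed no worse than $\delta^{-1}$; the tightest of these are the $\delta^{-\tau}$ contributions from the $B$- and $C$-terms, which are controlled precisely because $\tau<1/2<1$. One should also note in passing that although the statement claims an $L^{2}$ bound, the functions $B(\cdot;\delta)$ and $C(\cdot;\delta)$ are only controlled in $H^2(\R)$ by Corollary \ref{fixed-pt}; this is harmless since $H^2(\R)\hookrightarrow L^2(\R)$ with the same bound, and in fact the $\widehat{\mathcal{M}},\widehat{\mathcal{N}}$ bounds feed into \eqref{betafactored} only through their $L^2$ norms.
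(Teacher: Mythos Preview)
Your proposal is correct and follows essentially the same approach as the paper: both reduce the estimate to repeated applications of Lemma~\ref{parseval_bdd} (the $\delta^{-1/2}\|f\|_{L^2}$ bound) together with the $L^2$ bounds on $\psi^{(0)}$, $\psi^{(1)}_p$ and their derivatives from Lemma~\ref{lemma:psi_bounds} and the bounds on $B,C$ from Corollary~\ref{fixed-pt}. The bookkeeping of powers of $\delta$ is identical.
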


\begin{proof}
Proving Proposition \ref{lemma13} reduces to applications of \eqref{mnbddformula} from Lemma \ref{parseval_bdd}.
Applying \eqref{mnbddformula} to $f(x)=\psi^{(0)}(x,\delta x)$, $f(x)=\delta\psi^{(1)}_p(x,\delta x)$  and
$f(x)=\kappa(\delta x)W_{\oo}(x)B(x;\delta)$, respectively, gives
\begin{align*}
\norm{\widehat{\mathcal{M}}(\cdot;\delta)}_{L^2(\R)}&\lesssim \delta^{-1/2}\norm{\psi^{(0)}(\cdot,\delta
\cdot)}_{L^2(\R)} +\delta^{-1/2}\delta \norm{\psi^{(1)}_p(\cdot,\delta \cdot)}_{L^2(\R)} \\
 &\qquad+\delta^{-1/2} \norm{\kappa(\delta \cdot)W_{\oo}(\cdot)B(\cdot;\delta\cdot)}_{L^2(\R)}  \\
&\lesssim \delta^{-1/2}\delta^{-1/2} + \delta^{1/2}\delta^{-1/2} + \delta^{-1/2}\delta^{1/2-\tau}\lesssim \delta^{-1},
\end{align*} using Lemma \ref{lemma:psi_bounds}, bound \eqref{B_bdd} and that $0<\tau<1/2$.

And with $f(x)=(2\partial_x\partial_X-\kappa(X)W_{\oo}(x))\psi^{(1)}_p(x,X)$, 
$f(x)=\kappa(\delta x)W_{\oo}(x)C(x;\delta)$, 
$f(x)=\partial_X^2\psi^{(0)}(x,X)$, and
$f(x)=\delta\partial_X^2\psi^{(1)}_p(x,X)$, respectively, \eqref{mnbddformula} implies that
\begin{align*}
\norm{\widehat{\mathcal{N}}(\cdot;\delta)}_{L^2(\R)} &\lesssim
\delta^{-1/2}\norm{(2\partial_x\partial_X-\kappa(X)W_{\oo}(x))\psi^{(1)}_p(x,X)\Big|_{X=\delta x}}_{L^2(\R_x)} \\
&\qquad + 
\delta^{-1/2}\norm{\kappa(\delta \cdot)W_{\oo}(\cdot)C(\cdot;\delta\cdot)}_{L^2(\R)} \\
&\qquad+ \delta^{-1/2}\norm{\partial_X^2\psi^{(0)}(x,X)\Big|_{X=\delta x}}_{L^2(\R_x)} \\
& \qquad+ 
\delta\delta^{-1/2}\norm{\partial_X^2\psi^{(1)}_p(x,X)\Big|_{X=\delta x}}_{L^2(\R_x)} \\
&\leq \delta^{-1/2} \left[ \norm{2\partial_x\partial_X\psi^{(1)}_p(x,X)\Big|_{X=\delta x}}_{L^2(\R_x)}  \right. \\
&\qquad\qquad+ \norm{\kappa(\delta\cdot)W_{\oo}(\cdot)\psi^{(1)}_p(\cdot,\delta\cdot)}_{L^2(\R)} \\
&\qquad\qquad \left.
+\norm{\kappa(\delta \cdot)W_{\oo}(\cdot)C(\cdot;\delta\cdot)}_{L^2(\R)}
+\norm{\partial_X^2\psi^{(0)}(x,X)\Big|_{X=\delta x}}_{L^2(\R_x)} \right. \\
&\qquad\qquad \left. +\delta\norm{\partial_X^2\psi^{(1)}_p(x,X)\Big|_{X=\delta x}}_{L^2(\R_x)}\right].
\end{align*}
Employing Lemma \ref{lemma:psi_bounds} and bound \eqref{B_bdd} then gives
\begin{equation}
\norm{\widehat{\mathcal{N}}(\cdot;\delta)}_{L^2(\R)} \lesssim \delta^{-1/2} \left[ \delta^{-1/2} +
\delta^{-1/2} +\delta^{1/2-\tau}+\delta^{-1/2} +\delta \delta^{-1/2} \right] \lesssim \delta^{-1}, \label{N_bdd}
\end{equation} because $0<\tau<1/2$.
\end{proof}

\medskip

\nit \textbf{Bounding $\widehat{\mathcal{C}}^{\delta}(\mu)$ on $L^{2,1}(\R_\xi)$ using  the boundedness of wave
operators}. 
We first outline the strategy. Note
that we may express the inverse Dirac operator as $\widehat{\mathcal{D}}^{-1}\widehat{P}_{\perp} =
\widehat{\mathcal{D}}\widehat{\mathcal{D}}^{-2}\widehat{P}_{\perp}$. 
  Therefore, operator bounds on $\mathcal{D}^{-1}{P}_\perp$ can then be reduced to bounds on  $\mathcal{D}^{-2}{P}_\perp$. To prove the latter, we first note (in physical space) that
\begin{align}\label{diracsquared}
 \mathcal{D}^2 &=\left(i\lamsharp\sigma_3\partial_X + \thetasharp\kappa(X)\sigma_1\right)^2 \nonumber\\
 &= -\sigma_3^2\lamsharp^2\partial_X^2 +i\lamsharp\thetasharp\sigma_1\sigma_3\kappa(X)\partial_X
+i\thetasharp\lamsharp\sigma_3\sigma_1\partial_X\kappa(X) +\thetasharp^2\sigma_1^2\kappa^2(X)
\nonumber\\
 &= I(-\lamsharp^2\partial_X^2+\thetasharp^2\kappa^2(X)) +
i\lamsharp\thetasharp(\sigma_1\sigma_3+\sigma_3\sigma_1)\kappa(X)\partial_X
+ i\lamsharp\thetasharp\sigma_3\sigma_1\kappa'(X)
\nonumber\\
 &= I(-\lamsharp^2\partial_X^2+\thetasharp^2\kappa^2(X)) +\lamsharp\thetasharp\sigma_2\kappa'(X) \ .
 \end{align} 
Here, $I$ denotes the $2\times2$ identity matrix, prime denotes differentiation with respect to $X$ and the Pauli
matrices $\sigma_j,\ j=1,2,3$ are displayed in \eqref{Pauli-sigma}. Thus
$\mathcal{D}^2$ is a localized perturbation of a diagonal matrix Schr\"odinger operator. Furthermore,
since $[I,\sigma_2]=0$, $\mathcal{D}^2$ can be conjugated to a diagonal matrix of Schr\"odinger operators, which, in
turn,  can be
conjugated on the range of $P_{\perp}$ to a diagonal matrix of constant coefficient Schr\"odinger operators using
{\it wave operators}, introduced below. The resulting constant coefficient Schr\"odinger operators are then  bounded
using the boundedness properties of wave operators.

Let $V(X)$ denote a function which decays as $X\to\infty$. Wave operators can be used to extend  bounds for functions of
a  constant coefficient operator $H_0$, to bounds for a variable coefficient operator $H=H_0+V(X)$ on its continuous spectral part. The
wave operator $W_+$, and its adjoint, $W_+^{\ast}$, associated with the constant and variable coefficient
Hamiltonians $H_0$ and $H$ are defined by
\begin{align*}
 W_+ &\equiv s- \lim_{t\rightarrow\infty}e^{itH}e^{-itH_0} \\ 
 W_+^{\ast} &\equiv s- \lim_{t\rightarrow\infty}e^{itH_0}e^{-itH}P_{\perp}, 
\end{align*} where $s-$ denotes the strong limit and $P_{\perp}$ is the continuous spectral projection defined
in \eqref{projops}.

For any $f$ Borel on $\R$,
\begin{equation}
 \label{intertwine}
 f(H)\ {P}_{\perp} = W_+f(H_0)W_+^{\ast},~~~f(H_0) = W_+^{\ast}f(H)W_+.
\end{equation} Thus, any bounds on $f(H)\ {P}_{\perp}$ acting between the Sobolev spaces $W^{k_1,p_1}(\R)$
and $W^{k_2,p_2}(\R)$, can be derived from bounds on $f(H_0)$ between these spaces if the wave operator
$W_+$ is bounded between $W^{k_1,p_1}(\R)$ and $W^{k_2,p_2}(\R)$ for $k_j\geq0$ and $p_j\geq1$, $j=1,2$. Boundedness of  $W_+$ is a consequence of the following 
\begin{theorem}
 \label{thm7} {\bf \cite{Weder:99}}.
 Consider the Schr\"odinger operator $H=-\D_y^2+V(y)$,  with a potential, $V$, satisfying
$
  \norm{V}_{L^1_a(\R)}\equiv \int_{\R}(1+\abs{y})^a\abs{V(y)}dy <
\infty,
$
with $a>5/2$. Furthermore, let $k\ge1$, and assume that $\partial^{\ell}_{x}V\in L^{1}(\R)$, for
$\ell=0,1,2,\ldots,k-1$.  Then, the wave operators $W_+$, $W_+^{\ast}$, originally defined on $W^{k,p}\cap L^2$,
$1\leq p\leq\infty$, have extensions to a bounded operators on $W^{k,p}$, $1<p<\infty$. Moreover, there exist positive
constants $c_p$, such that for all  $f\in W^{k,p}(\R)\cap L^2$ ($1<p<\infty$) 
\begin{equation}
 \label{waveopbdd12}
 \norm{W_+f}_{W^{k,p}(\R)}\leq c_p\norm{f}_{W^{k,p}(\R)}, ~~
 \norm{W_+^{\ast}f}_{W^{k,p}(\R)}\leq c_p\norm{f}_{W^{k,p}(\R)} \ .
\end{equation} \end{theorem}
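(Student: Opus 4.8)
The plan is to prove Theorem \ref{thm7} following the approach of \cite{Weder:99}: build an explicit representation of the one-dimensional wave operators in terms of the Jost solutions of $H=-\D_y^2+V(y)$, and then estimate the resulting integral kernels. First I would set up stationary scattering theory. Under $\norm{V}_{L^1_a(\R)}<\infty$ with $a>5/2$ the Jost solutions $f_\pm(y,\lambda)$, characterized by $Hf_\pm=\lambda^2 f_\pm$ together with $f_\pm(y,\lambda)\sim e^{\pm i\lambda y}$ as $y\to\pm\infty$, exist, are analytic in $\lambda$ in the open upper half plane and continuous up to $\R$; the transmission coefficient $t(\lambda)$ and reflection coefficients $r_\pm(\lambda)$ are then defined through the usual connection formulas, and $H$ has purely absolutely continuous spectrum $[0,\infty)$ together with finitely many negative eigenvalues, $P_\perp$ being the projection onto the a.c. subspace. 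The quantitative hypotheses enter precisely here: the polynomial weight $a>5/2$, together with $\D_x^\ell V\in L^1(\R)$ for $\ell=0,\dots,k-1$, forces $\lambda\mapsto t(\lambda),r_\pm(\lambda)$ and the modified Jost functions $m_\pm(y,\lambda):=e^{\mp i\lambda y}f_\pm(y,\lambda)$ to be $C^k$ in $\lambda$, with $\lambda$-derivatives bounded uniformly in $y$ and with the right decay as $\lambda\to\infty$ and integrability near $\lambda=0$.

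Next I would write down the kernel of $W_+$. Starting from $W_+$ as the strong limit of $e^{itH}e^{-itH_0}$ and passing to the stationary (limiting-absorption) representation, one finds that $W_+-I$ is a superposition of operators whose Schwartz kernels are of oscillatory-integral type, $\int_\R e^{i\lambda(y-y')}\,g(y,y',\lambda)\,d\lambda$, the amplitudes $g$ being products of the $m_\pm$, the scattering data $t-1$ and $r_\pm$, and smooth cutoffs separating the left and right incoming channels. Schematically $W_+=I+W_+^{\rm tr}+W_+^{\rm refl}$, where $W_+^{\rm tr}$ carries a symbol built from $t(\lambda)-1$ and $W_+^{\rm refl}$ carries symbols built from $r_\pm(\lambda)$.

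The core of the argument is to bound each of these pieces on $W^{k,p}(\R)$ for $1<p<\infty$. I would insert a smooth partition of unity in $\lambda$ splitting each symbol into a high-energy part ($|\lambda|\gtrsim1$) and a low-energy part ($|\lambda|\lesssim1$). At high energy one expands $m_\pm$ and the scattering data in a Born series in $V$, convergent in the relevant norms by the $L^1$-decay of $V$; each term is an operator whose symbol obeys Mihlin--H\"ormander bounds $|\D_\lambda^j\sigma(\lambda)|\lesssim|\lambda|^{-j}$ for $0\le j\le k$, so Mihlin's multiplier theorem---applied after undoing the Fourier conjugation, and then bootstrapped through the derivatives to pass from $L^p$ to $W^{k,p}$---yields the bound for $1<p<\infty$. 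At low energy one distinguishes the generic case $t(0)=0$ (no zero-energy resonance), where the symbols vanish to the correct order at $\lambda=0$ and one is again inside the multiplier framework, from the exceptional case $t(0)\ne0$ (zero-energy resonance / half-bound state), where one isolates the non-decaying contribution of the resonance, recognizes it as a bounded perturbation of a Hilbert-transform-type operator, and invokes the $L^p$-boundedness of the latter for $1<p<\infty$. The leftover integral operators, whose kernels decay in $|y-y'|$ and are locally integrable, are dispatched by the Schur test / Young's inequality. The adjoint $W_+^\ast$ is handled by the mirror-image version of the same kernel analysis.

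The step I expect to be the main obstacle is the low-energy analysis in the exceptional resonant case: extracting the precise singular behavior of the Jost functions and scattering data as $\lambda\to0$ in the presence of a half-bound state, and verifying that the operator it produces---not a smooth Fourier multiplier---is nonetheless bounded on all of $W^{k,p}(\R)$ for $1<p<\infty$ (it typically degenerates at the endpoints $p=1$ and $p=\infty$, which is why the statement is restricted to the open range). A close second is establishing the uniform-in-$y$ estimates on the $\lambda$-derivatives of $m_\pm(y,\lambda)$ up to order $k$; this is exactly where the quantitative hypotheses $a>5/2$ and $\D_x^\ell V\in L^1$ are consumed.
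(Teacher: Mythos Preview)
Your proposal is a reasonable sketch of Weder's argument, but the paper does not prove this theorem at all: it is quoted as a result from \cite{Weder:99} (with a pointer also to \cites{Yajima:95,DF:06,DMW:11}) and used as a black box. So there is nothing to compare your proof against in the paper itself; the authors simply invoke the statement to obtain the $W^{k,p}$-boundedness of $W_+$ and $W_+^\ast$ needed in the estimate of $\widehat{\mathcal{D}}^{-1}\widehat{P}_\perp$.
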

See also \cites{Yajima:95,DF:06,DMW:11}.

Continuing, we next diagonalize $\mathcal{D}^2$, displayed in  \eqref{diracsquared}. It is easy to check that $\sigma_2$
has the
eigenpairs
 $(1,i)^T ~\leftrightarrow~ \Lambda =1$  and 
$(1,-i)^T  ~\leftrightarrow~ \Lambda =-1$.
Therefore,  for any
$g(X)\in\mathbb{C}^2$,
\begin{align}
 \mathcal{D}^2g(X) &= \frac12\mathcal{D}^2\left[g_1(X)\begin{pmatrix}1\\ i\end{pmatrix} 
 +g_{-1}(X)\begin{pmatrix}1\\-i\end{pmatrix}\right] \nonumber\\
 &=\frac12\begin{pmatrix}1\\ i\end{pmatrix}
\left[-\lamsharp^2\partial_X^2+\thetasharp^2\kappa^2(X) + \lamsharp\thetasharp\kappa'(X)\right]g_1(X) \nn \\
 &~~~+\frac12\begin{pmatrix}1\\-i\end{pmatrix}
\left[-\lamsharp^2\partial_X^2+\thetasharp^2\kappa^2(X) - \lamsharp\thetasharp\kappa'(X)\right]g_{-1}(X),
\label{d2_g}
\end{align} where $g_1(X)=\inner{(1,i)^T,g(X)}_{\mathbb{C}^2}$ and $g_{-1}(X)=\inner{(1,-i)^T,g(X)}_{\mathbb{C}^2}$.
Defining
\begin{align*}
 H_0 &\equiv-\lamsharp^2\partial_X^2+\thetasharp^2\kappa_{\infty}^2\ {\rm and}\\\
   H_{\pm} &\equiv -\lambda_\sharp\D_X^2+\thetasharp^2\kappa^2(X)\pm\lamsharp\thetasharp\kappa'(X)\\
   &= H_0+\thetasharp^2(\kappa^2(X)-\kappa_{\infty}^2)\pm\lamsharp\thetasharp\kappa'(X) \ ,\ 
\end{align*}
we may then write \eqref{d2_g} as
\begin{equation}
 \label{d2_of_H}
 \mathcal{D}^2g(X) = 
 \frac12\begin{pmatrix}1\\ i\end{pmatrix} H_+ g_1(X)
+\frac12\begin{pmatrix}1\\-i\end{pmatrix} H_- g_{-1}(X) \ .
\end{equation}

Since $\kappa^2(X)-\kappa_{\infty}^2$ and $\kappa'(X)$ are assumed sufficiently smooth and  rapidly decaying as
$X\to\pm\infty$, we can apply Theorem \ref{thm7} to the wave operators associated with $H_\pm$ to obtain bounds on
functions of 
$\mathcal{D}^2$ in terms of bounds on functions of the constant coefficient operator $H_0$.
 \medskip

Let $W_{+,\kappa'}$ and $W_{+,-\kappa'}$, respectively, denote wave operators associated with the decaying potentials
$\Upsilon_+(X)\equiv \kappa^2(X)-\kappa_{\infty}^2+\kappa'(X)$ and
$\Upsilon_-(X)\equiv\kappa^2(X)-\kappa_{\infty}^2-\kappa'(X)$. Recall that $\Upsilon_\pm(X)$ satisfy the hypotheses of
Theorem \ref{thm7} with $k=0,2$ and $p=2$;
 see \eqref{kappa-hypotheses}.
 By the intertwining property  \eqref{intertwine}, for any $f$ Borel on $\R$,
\begin{align}
 \label{waveopbdd15}
& f(\mathcal{D}^2)P_{\perp}g\nn\\
 &=\frac12\begin{pmatrix}1\\ i\end{pmatrix}
f(H_+)P_{\perp}g_1(X) 
 +\frac12\begin{pmatrix}1\\-i\end{pmatrix}
f(H_-) P_{\perp}g_{-1}(X) \nonumber\\
 &=\frac12\begin{pmatrix}1\\ i\end{pmatrix}
W_{+,+\kappa'}f(H_0 )W_{+,+\kappa'}^{\ast} g_1(X)
+\frac12\begin{pmatrix}1\\-i\end{pmatrix}
 W_{+,-\kappa'} f(H_0) W_{+,-\kappa'}^{\ast} g_{-1}(X)
\nonumber\\ &= 
\frac12\begin{pmatrix} 
W_{+,+\kappa'} f(H_0)
W_{+,+\kappa'}^{\ast}&
 W_{+,-\kappa'} f(H_0) W_{+,-\kappa'}^{\ast}\\
 iW_{+,+\kappa'} f(H_0) W_{+,+\kappa'}^{\ast}&
 -iW_{+,-\kappa'} f(H_0)
W_{+,-\kappa'}^{\ast}\end{pmatrix}
 \begin{pmatrix}g_1\\g_{-1}\end{pmatrix}.
\end{align} 
We shall apply
\eqref{waveopbdd15} below with the choice $f(x)=x^{-1}$.

Returning to the operator $\widehat{\mathcal{C}}^{\delta}(\mu)$, displayed in \eqref{copp}, we separate the bounds as
\begin{align} 
 \label{waveopbdd16}
 \norm{\widehat{\mathcal{C}}^{\delta}(\mu)\widehat{\beta}(\cdot)}_{L^{2,1}(\R)} &\leq
\norm{\widehat{\mathcal{D}}^{-1}\widehat{P}_{\perp}
(\widehat{\mathcal{D}}-\widehat{\mathcal{D}}^{\delta})\widehat{\beta}(\cdot)}_{L^{2,1}(\R)}
+\norm{\widehat{\mathcal{D}}^{-1}\widehat{P}_{\perp}
\widehat{\mathcal{L}}^{\delta}(\mu)\widehat{\beta}(\cdot)}_{L^{2,1}(\R)} \nonumber\\
&~~~+\delta\abs{\mu}\norm{\widehat{\mathcal{D}}^{-1}\widehat{P}_{\perp}
\widehat{\beta}(\cdot)}_{L^{2,1}(\R)},
\end{align} and study each term separately.
From \eqref{waveopbdd16} it is clear that we need to study terms of the form  
$\norm{\widehat{\mathcal{D}}^{-1}\widehat{P}_{\perp} \widehat{g}}_{L^{2,1}(\R)}$.
By Parseval's identity 
\begin{align}
 \norm{\widehat{\mathcal{D}}^{-1}\widehat{P}_{\perp} \widehat{g}}_{L^{2,1}(\R)} &=
 \norm{\widehat{\mathcal{D}}\widehat{\mathcal{D}}^{-2}\widehat{P}_{\perp}
\widehat{g}}_{L^{2,1}(\R)} 
\approx  \norm{\mathcal{D}\mathcal{D}^{-2}P_{\perp} g}_{H^{1}(\R)}\ \lesssim \norm{\mathcal{D}^{-2}P_{\perp}
g}_{H^{2}(\R)}.
\label{wopp_parseval}
\end{align} 
We obtain then from \eqref{wopp_parseval}, \eqref{waveopbdd15} with $f(H_0)=H_0^{-1}$, and the wave operator
bounds of Theorem \ref{thm7}:
\begin{align}
&\norm{\widehat{\mathcal{D}}^{-1}\widehat{P}_{\perp} \widehat{g}}_{L^{2,1}(\R)} \nn \\
&\qquad \lesssim \norm{ 
\begin{pmatrix} W_{+,+\kappa'} H_0^{-1}
W_{+,+\kappa'}^{\ast}&   W_{+,-\kappa'} H_0^{-1} W_{+,-\kappa'}^{\ast}\\
  iW_{+,+\kappa'} H_0^{-1} W_{+,+\kappa'}^{\ast} &   -iW_{+,-\kappa'} H_0^{-1} W_{+,-\kappa'}^{\ast}\end{pmatrix}
\begin{pmatrix}g_1\\g_{-1}\end{pmatrix}}_{H^{2}(\R)} \nn \\
&\qquad \lesssim\ 
\norm{\begin{pmatrix}g_1\\g_{-1}\end{pmatrix}}_{L^{2}(\R)}\ .\label{Dinvgbound}
\end{align}
For the choice $\widehat{g}(\xi) =
(\widehat{\mathcal{D}}-\widehat{\mathcal{D}}^{\delta}) \widehat{\beta}(\xi)$, bound
\eqref{Dinvgbound} gives
\begin{align*}
 \norm{\widehat{\mathcal{D}}^{-1}\widehat{P}_{\perp}
(\widehat{\mathcal{D}}-\widehat{\mathcal{D}}^{\delta})\widehat{\beta}}_{L^{2,1}(\R)}
&= \norm{\widehat{\mathcal{D}}^{-1}\widehat{P}_{\perp} \widehat{g}}_{L^{2,1}(\R)} \\
&\lesssim \norm{(\mathcal{D}-\mathcal{D}^{\delta})\beta}_{L^{2}(\R)}
\lesssim \delta^{1-\tau}\norm{\widehat\beta}_{L^{2,1}(\R)},
\end{align*} 
where the final inequality follows from Proposition \ref{lemma12}.
 Similarly,
\begin{align*}
& \norm{\widehat{\mathcal{D}}^{-1}\widehat{P}_{\perp}
\widehat{\mathcal{L}}^{\delta}(\mu)\widehat{\beta}(\cdot)}_{L^{2,1}(\R)} \lesssim
\delta^{\tau}\norm{\widehat{\beta}}_{L^{2,1}(\R)},\ {\rm and}\ 
\norm{\widehat{\mathcal{D}}^{-1}\widehat{P}_{\perp}\widehat{\beta}}_{L^{2,1}(\R)} \leq
\norm{\widehat{\beta}}_{L^{2,1}(\R)}.
\end{align*} 
Substituting these bounds into \eqref{waveopbdd16}, we have, for $0<\tau<1$, 
\begin{equation*}
 \norm{\widehat{\mathcal{C}}^{\delta}(\mu)\widehat{\beta}(\cdot)}_{L^{2,1}(\R)} \lesssim
 \left(\delta^{1-\tau}+\delta^{\tau}+\delta\abs{\mu}\right)\norm{\widehat{\beta}}_{L^{2,1}(\R)}\ .
\end{equation*} 

\medskip

It follows that, for $\abs{\mu}\le M$ with $M$ fixed, and $0<\delta<\delta_0$ with $\delta_0$ sufficiently small and depending
on $M$,
\begin{equation}
\label{C_op_bdd}
 \norm{\widehat{\mathcal{C}}^{\delta}(\mu)}_{L^{2,1}(\R)\rightarrow L^{2,1}(\R)} < 1 \ .
\end{equation}
 Furthermore, using the same wave operator methods along with the bounds
in Proposition \ref{lemma13} we have that the right hand side of \eqref{betafactored} is in $L^{2,1}(\R)$
and satisfies the bound:
\begin{align}
 \label{rhsl21} 
 \norm{\widehat{\mathcal{D}}^{-1}\widehat{P}_{\perp} \left\{ \mu\widehat{\mathcal{M}}(\cdot;\delta) +
\widehat{\mathcal{N}}(\cdot;\delta)\right\}}_{L^{2,1}(\R)} &\lesssim
\abs{\mu}\norm{\widehat{\mathcal{M}}(\cdot;\delta)}_{L^2(\R)} +
\norm{\widehat{\mathcal{N}}(\cdot;\delta)}_{L^2(\R)} \nonumber\\
&\lesssim \abs{\mu}\delta^{-1} + \delta^{-1} \lesssim \delta^{-1}.
\end{align} 
Thus, for $0<\delta<\delta_0$ we may solve \eqref{betafactored} for $\widehat{\beta}$:
\begin{equation}
 \label{betasoln}
 \widehat{\beta}(\xi;\mu,\delta) =
\left(I+\widehat{\mathcal{C}}^{\delta}(\mu)\right)^{-1}\widehat{\mathcal{D}}^{-1}\widehat{P}_{\perp}\left\{
\mu\widehat{\mathcal{M}}(\xi;\delta) + \widehat{\mathcal{N}}(\xi;\delta)\right\}\ ,
\end{equation} 
which, from \eqref{rhsl21}, satisfies the bound
\begin{equation}
\label{betabound2}
\norm{\widehat{\beta}(\cdot;\mu,\delta)}_{L^{2,1}(\R)} \lesssim
\norm{\widehat{\mathcal{D}}^{-1}\widehat{P}_{\perp}\left\{\mu\widehat{\mathcal{M}}(\xi;\delta)
+\widehat{\mathcal{N}}(\xi;\delta)\right\}}_{L^{2,1}(\R)} \lesssim \delta^{-1}.
\end{equation}

We complete the proof of Proposition \ref{solve4beta} by verifying that $\widehat{\beta}(\xi;\mu,\delta)$ is Lipschitz
in $\mu$. To ease notation, let
\begin{equation*}
\widehat{S}^{\delta}(\mu)=\left(I+\widehat{\mathcal{C}}^{\delta}(\mu)\right)^{-1}\ {\rm and}\ \ \ 
\widehat{T}(\xi;\mu,\delta)= \widehat{\mathcal{D}}^{-1}\widehat{P}_{\perp}\left\{ \mu\widehat{\mathcal{M}}(\xi;\delta) +
\widehat{\mathcal{N}}(\xi;\delta)\right\}.
\end{equation*} 
From \eqref{C_op_bdd} and \eqref{rhsl21}, $\widehat{S}^{\delta}(\mu)$ and
$\widehat{T}(\xi;\mu,\delta)$ are bounded in their respective norms:
\begin{equation}
 \label{S_T_bdd}
 \norm{\widehat{S}^{\delta}(\mu)}_{L^{2,1}(\R)\rightarrow L^{2,1}(\R)} \lesssim 1, ~~~ \text{and}
~~~ \norm{\widehat{T}(\cdot;\mu,\delta)}_{L^{2,1}(\R)} \lesssim \delta^{-1},
\end{equation} and therefore $\widehat{S}^{\delta}(\mu)$
is a mapping from $L^{2,1}(\R)$ to $L^{2,1}(\R)$ and $\widehat{T}(\xi;\mu,\delta)\in
L^{2,1}(\R)$.

Let $\abs{\mu_1}$, $\abs{\mu_2} < M$. Then
\begin{align}
 &\norm{\widehat{\beta}(\cdot;\mu_2,\delta) - \widehat{\beta}(\cdot;\mu_1,\delta)}_{L^{2,1}(\R)} \nn \\
&\qquad = \norm{\widehat{S}^{\delta}(\mu_2)\widehat{T}(\cdot;\mu_2,\delta)-
\widehat{S}^{\delta}(\mu_1)\widehat{T}(\cdot;\mu_1,\delta)}_{L^{2,1}(\R)} \nn \\
&\qquad = \norm{\left(\widehat{S}^{\delta}(\mu_2)-\widehat{S}^{\delta}(\mu_1)\right)\widehat{T}(\cdot;\mu_2,\delta) -
\widehat{S}^{\delta}(\mu_1) \left(\widehat{T}(\cdot;\mu_2,\delta)-\widehat{T}(\cdot;\mu_1,\delta)\right)}_{L^{2,1}(\R)}
\nn \\
&\qquad = \norm{\left(\widehat{S}^{\delta}(\mu_2)-\widehat{S}^{\delta}(\mu_1)\right)
\widehat{T}(\cdot;\mu_2,\delta)}_{L^{2,1}(\R)} \label{beta_lip} \\
&\qquad\qquad+ \norm{
\widehat{S}^{\delta}(\mu_1) \left(\widehat{T}(\cdot;\mu_2,\delta)-\widehat{T}(\cdot;\mu_1,\delta)\right)}_{L^{2,1}(\R)} . \nn
\end{align}
We proceed to bound the various terms occurring in \eqref{beta_lip}, beginning with
$\widehat{S}^{\delta}(\mu_2)-\widehat{S}^{\delta}(\mu_1)$. Since $\widehat{\mathcal{C}}^{\delta}(\mu_1)$ is affine in
$\mu$ (see \eqref{copp}), we have, for $0<\delta<\delta_0$ with $\delta_0$ sufficiently small,
\begin{align}
& \norm{\widehat{S}^{\delta}(\mu_2)-\widehat{S}^{\delta}(\mu_1)}_{L^{2,1}(\R)\rightarrow L^{2,1}(\R)}
\nn \\
  & \ =
\norm{\left(I+\widehat{\mathcal{C}}^{\delta}(\mu_2)\right)^{-1} \left(\widehat{\mathcal{C}}^{\delta}(\mu_1) -
\widehat{\mathcal{C}}^{\delta}(\mu_2)\right) \left(I+\widehat{\mathcal{C}}^{\delta}(\mu_1)\right)^{-1}}_{
L^{2,1}(\R)\rightarrow L^{2,1}(\R)} \nn \\
&\ \lesssim \norm{\widehat{\mathcal{C}}^{\delta}(\mu_1) -
\widehat{\mathcal{C}}^{\delta}(\mu_2)}_{L^{2,1}(\R)\rightarrow L^{2,1}(\R)}\nn\\ 
&\ \lesssim \delta\norm{\widehat{\mathcal{D}}^{-1}\widehat{P}_{\perp}\left(\mu_2-\mu_1\right)}_{L^{2,1}(\R)\rightarrow
L^{2,1}(\R)} \label{S_bdd} \\
& \qquad + \norm{\widehat{\mathcal{D}}^{-1}\widehat{P}_{\perp} \left(\widehat{\mathcal{L}}^{\delta}(\mu_2)
-\widehat{\mathcal{L}}^{\delta}(\mu_1)\right)}_{L^{2,1}(\R)\rightarrow L^{2,1}(\R)}. \nn
\end{align} 
The $\mu$ dependence in $\widehat{\mathcal{L}}^{\delta}(\mu)\widehat{\beta}(\xi)$ arises through
$[A\eta_{\rm near}](\xi;\mu,\delta)$; see \eqref{L_op_2}. Applying the wave operator bounds in \eqref{Dinvgbound} and bound
\eqref{A_lip} from Corollary \ref{fixed-pt} gives 
\begin{align}
& \norm{\widehat{\mathcal{D}}^{-1}\widehat{P}_{\perp} \left(\widehat{\mathcal{L}}^{\delta}(\mu_2)
-\widehat{\mathcal{L}}^{\delta}(\mu_1)\right)\widehat{\beta}(\cdot)}_{L^{2,1}(\R)} \\
&\qquad \lesssim \delta^{-1/2}
\norm{[A\eta_{\rm near}](\cdot;\mu_2,\delta)-[A\eta_{\rm near}](\cdot;\mu_1,\delta)}_{L^{2,1}(\R)} \nn \\
&\qquad \leq C' \delta^{-1/2}\delta^{1-\tau} \abs{\mu_2-\mu_1} = C' \delta^{1/2-\tau} \abs{\mu_2-\mu_1} .
\label{L_bdd}
\end{align}
Similarly, from \eqref{Dinvgbound} we obtain
\begin{equation}
 \label{mu_bdd}
 \delta\norm{\widehat{\mathcal{D}}^{-1}\widehat{P}_{\perp}\left(\mu_2-\mu_1\right)}_{L^{2,1}(\R)\rightarrow
L^{2,1}(\R)} \lesssim \delta \abs{\mu_2-\mu_1}.
\end{equation} Substituting \eqref{L_bdd} and \eqref{mu_bdd} into \eqref{S_bdd} gives the desired bound on $\mathcal{S}^\delta(\mu)$:
\begin{equation}
 \label{S_full_bdd}
 \norm{\widehat{S}^{\delta}(\mu_2)-\widehat{S}^{\delta}(\mu_1)}_{L^{2,1}(\R)\rightarrow L^{2,1}(\R)}
\lesssim \delta^{1/2-\tau} \abs{\mu_2-\mu_1}.
\end{equation}

Next, the wave operator bound \eqref{Dinvgbound} and the $L^2-$ bound for $\widehat{\mathcal{M}}(\xi;\delta)$ in
Proposition \ref{lemma13} imply that
{
\begin{align}
 \label{T_full_bdd}
 \norm{\widehat{T}(\cdot;\mu_2,\delta)-\widehat{T}(\cdot;\mu_1,\delta)}_{L^{2,1}(\R)} & \leq \norm{
\mathcal{D}^{-1}\widehat{P}_{\perp}\widehat{\mathcal{M}}(\cdot;\delta) \left(\mu_2-\mu_1\right)}_{L^{2,1}(\R)}
 \lesssim \abs{\mu_2-\mu_1} \ .
\end{align} 
}

Finally, putting the above together and substituting bounds \eqref{S_full_bdd}, \eqref{T_full_bdd} and \eqref{S_T_bdd} into \eqref{beta_lip} yields
\begin{align}
 \norm{\widehat{\beta}(\cdot;\mu_2,\delta) - \widehat{\beta}(\cdot;\mu_1,\delta)}_{L^{2,1}(\R)} & \lesssim
\delta^{1/2-\tau}\abs{\mu_2-\mu_1}
\norm{\widehat{T}(\cdot;\mu,\delta)}_{L^{2,1}(\R)} \nn \\
&\qquad + \abs{\mu_2-\mu_1}
\norm{\widehat{S}^{\delta}(\mu)}_{L^{2,1}(\R)\rightarrow L^{2,1}(\R)} \nn \\
&\leq C_M \delta^{-1/2-\tau}\abs{\mu_2-\mu_1} \ .\label{beta_lip_mu}
\end{align} This proves, for $(\mu,\delta)\in R_{M,\delta_0}\equiv \{|\mu|<M\}\times (0,\delta_0)$, that
$\widehat{\beta}(\xi;\mu,\delta)$ is Lipschitz in $\mu$, completing the proof of Proposition \ref{solve4beta}.

\section{Final reduction to an equation for $\mu=\mu(\delta)$ and its solution}\label{final-reduction}
 Substituting expression \eqref{betasoln} for $\widehat{\beta}$ into \eqref{pplleqn}, gives an equation  
 relating $\mu$ and $\delta$:
\begin{equation*}
\mathcal{J}_+[\mu,\delta]=0,
\end{equation*} 
where
\begin{align*}
\mathcal{J}_{+}[\mu;\delta] &\equiv
\mu\ \delta\inner{\widehat{\alpha}_{\star}(\cdot),\widehat{\mathcal{M}}(\cdot;\delta)}_{L^2(\R)}
+ \delta\inner{\widehat{\alpha}_{\star}(\cdot),\widehat{\mathcal{N}}(\cdot;\delta)}_{L^2(\R)} \\
&~~~ -\delta\inner{\widehat{\alpha}_{\star}(\cdot),\left(\widehat{\mathcal{D}}^{\delta}-\widehat{\mathcal{D}}\right)
\widehat{\beta}(\cdot;\mu,\delta)}_{L^2(\R)}
-\delta\inner{\widehat{\alpha}_{\star}(\cdot),\widehat{\mathcal{L}}^{\delta}(\mu)
\widehat{\beta}(\cdot;\mu,\delta)}_{L^2(\R)}\nn\\
&~~~+\delta^2 \mu\inner{\widehat{\alpha}_{\star}(\cdot),\widehat{\beta}(\cdot;\mu,\delta)}_{L^2(\R)}. \nonumber
\end{align*}
The function $\mathcal{J}_{+}[\mu,\delta]$ is well defined for  $\abs{\mu}<M$ and $0<\delta<\delta_0$. 
It is also Lipschitz with respect to $\mu$. In the following proposition, we note that $\mathcal{J}_{+}[\mu,\delta]$ can
be
extended to the half-open interval $[0,\delta_0)$ to be continuous at $\delta=0$.
\medskip

\begin{proposition}
 \label{proposition3}
Let $\delta_0>0$ be as above. Define
 \begin{equation}
 \label{J_defn}
  \mathcal{J}[\mu,\delta] \equiv \left\{
\begin{array}{cl}
\mathcal{J}_{+}[\mu,\delta]  & ~~~\text{for}~~ 0<\delta<\delta_0,\\
\mu-\mu_0 & ~~~\text{for}~~ \delta=0\ .
\end{array} \right.\
 \end{equation} 
Here,
\begin{equation}
 \mu_0 \equiv -\inner{\alpha_{\star},\mathcal{G}^{(2)}}_{L^2(\R)} = E^{(2)}\ ,
\label{mu0-def}
\end{equation}
 where $\mathcal{G}^{(2)}$ is given in \eqref{ipG}; see also \eqref{solvability_cond_E2}.
 Fix $M\geq 2 \abs{\mu_0}$. Then,  
 \[ (\mu,\delta)\in\{|\mu|<M,\ 0\le\delta<\delta_0\}\mapsto\mathcal{J}(\mu,\delta)\]
  is well-defined and continuous. 
 \end{proposition}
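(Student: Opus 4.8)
The goal is to verify that $\mathcal{J}(\mu,\delta)$, defined piecewise in \eqref{J_defn}, is well-defined and jointly continuous on $\{|\mu|<M\}\times[0,\delta_0)$, with $M\ge 2|\mu_0|$. For $0<\delta<\delta_0$ this is essentially already in hand: each term in $\mathcal{J}_+[\mu,\delta]$ is built from inner products of the Schwartz-class function $\widehat\alpha_\star$ against the $L^2$-bounded quantities $\widehat{\mathcal M}(\cdot;\delta)$, $\widehat{\mathcal N}(\cdot;\delta)$ (Proposition \ref{lemma13}), the operators $\widehat{\mathcal D}^\delta-\widehat{\mathcal D}$ and $\widehat{\mathcal L}^\delta(\mu)$ applied to $\widehat\beta(\cdot;\mu,\delta)$ (Proposition \ref{lemma12}), and $\widehat\beta$ itself (Proposition \ref{solve4beta}, with the bound $\|\widehat\beta\|_{L^{2,1}}\lesssim\delta^{-1}$). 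Cauchy--Schwarz against the fixed Schwartz function $\widehat\alpha_\star\in L^{2,1}$ shows each pairing is finite, so $\mathcal{J}_+[\mu,\delta]$ is well-defined; Lipschitz dependence on $\mu$ (hence continuity in $\mu$) follows from the Lipschitz-in-$\mu$ property of $\widehat\beta$ established at the end of the proof of Proposition \ref{solve4beta} together with the affine-in-$\mu$ structure of $\widehat{\mathcal D}^\delta-\widehat{\mathcal D}$, $\widehat{\mathcal L}^\delta(\mu)$, $\widehat{\mathcal M}$, $\widehat{\mathcal N}$.

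First I would record continuity in $\delta$ on the open interval $(0,\delta_0)$. Here one tracks the $\delta$-dependence through: (i) the cutoff $\chi(|\xi|\le\delta^{\tau-1})$ appearing in $\widehat{\mathcal D}^\delta$, $\widehat{\mathcal L}^\delta$, $\widehat{\mathcal M}$, $\widehat{\mathcal N}$ --- this moves continuously in $L^2$-operator norm as $\delta$ varies over a compact subinterval; (ii) the rescaled inner products $\langle\Phi_\pm(\cdot,k_\star+\delta\xi),\cdot\rangle$ and the slowly-varying envelopes $\psi^{(0)}(x,\delta x)$, $\psi^{(1)}_p(x,\delta x)$, which depend continuously on $\delta$ by smoothness of $p_\pm(x,k)$ in $k$ (Proposition \ref{flo-blo-dirac}) and dominated convergence; (iii) the affine pieces $A,B,C$ from Corollary \ref{fixed-pt}, which are continuous at interior $\delta$ by the contraction-mapping construction; and (iv) $\widehat\beta(\cdot;\mu,\delta)$ via the Neumann series \eqref{betasoln}, continuous because $\widehat{\mathcal C}^\delta(\mu)$ and the right-hand side vary continuously. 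Assembling these gives joint continuity of $\mathcal{J}_+$ on $\{|\mu|<M\}\times(0,\delta_0)$.

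The real content is the limit $\delta\downarrow 0$: one must show $\mathcal{J}_+[\mu,\delta]\to\mu-\mu_0$. Factor out the overall $\delta$ (every term of $\mathcal{J}_+$ carries a prefactor $\delta$ or $\delta^2$, so it is natural to study $\delta^{-1}\mathcal{J}_+$). The term $\mu\,\langle\widehat\alpha_\star,\widehat{\mathcal M}(\cdot;\delta)\rangle$ converges: by the rescaling $k-k_\star=\delta\xi$ and the Poisson-summation expansions of Section \ref{subsec:near_freq}, the $m=0$ contribution of $\widehat{\mathcal M}_1$ reproduces $\langle\widehat\alpha_\star,\widehat\alpha_\star\rangle=1$ (using $\|\alpha_\star\|_{L^2}=1$ and the structure of $\psi^{(0)}=\alpha_{\star,1}\Phi_1+\alpha_{\star,2}\Phi_2$), while the $\delta$-suppressed pieces $\widehat{\mathcal M}_2$, $\widehat{\mathcal M}_3$ and all $m\ne0$ terms vanish as $\delta\to0$ by the decay afforded by Lemma \ref{I_bdds} and the bounds $\|B\|_{H^2}\lesssim\delta^{1/2-\tau}$; so $\mu\langle\widehat\alpha_\star,\widehat{\mathcal M}\rangle\to\mu$. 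The term $\langle\widehat\alpha_\star,\widehat{\mathcal N}(\cdot;\delta)\rangle$ converges to $-\mu_0=-E^{(2)}$: this is precisely the content of the solvability condition \eqref{solvability_cond}--\eqref{solvability_cond_E2}, re-derived here in Fourier variables, where the limiting pairing against $\widehat\alpha_\star$ picks out $\langle\alpha_\star,\mathcal G^{(2)}\rangle$ --- the explicit near-frequency-limit computations of Appendix \ref{near_freq_limits} are what make this rigorous. The remaining three terms all vanish: $\delta^{-1}\cdot\delta\langle\widehat\alpha_\star,(\widehat{\mathcal D}^\delta-\widehat{\mathcal D})\widehat\beta\rangle\lesssim\delta^{1-\tau}\|\widehat\beta\|_{L^{2,1}}\lesssim\delta^{-\tau}$, which does \emph{not} obviously go to zero --- so here one must instead use that $\widehat\alpha_\star$ is itself band-concentrated (Schwartz, hence rapidly decaying) so that $\langle\widehat\alpha_\star,\chi(|\xi|>\delta^{\tau-1})(\cdots)\rangle$ is super-polynomially small in $\delta$; similarly $\delta^{-1}\cdot\delta\langle\widehat\alpha_\star,\widehat{\mathcal L}^\delta(\mu)\widehat\beta\rangle\lesssim\delta^\tau\|\widehat\beta\|_{L^{2,1}}\lesssim\delta^{\tau-1}$, which again requires exploiting cancellation/decay rather than the crude operator bound, and $\delta^{-1}\cdot\delta^2\mu\langle\widehat\alpha_\star,\widehat\beta\rangle\lesssim\delta\|\widehat\beta\|_{L^{2,1}}\lesssim 1$ --- bounded but not small, so once more one leans on the Schwartz decay of $\widehat\alpha_\star$ against $\widehat\beta$. \textbf{I expect this last point --- showing the "error" pairings against $\widehat\alpha_\star$ genuinely tend to zero, not merely stay bounded, which forces one to use the fine structure of $\widehat\beta$ from \eqref{betasoln} and the rapid decay of $\widehat\alpha_\star$ rather than the $L^{2,1}$ norm bound alone --- to be the main obstacle.} Once the limit $\delta^{-1}\mathcal{J}_+[\mu,\delta]\to\mu-\mu_0$ is established uniformly for $|\mu|<M$, continuity of $\mathcal{J}$ at $\delta=0$ follows, and combined with the open-interval continuity above this completes the proof. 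Finally, the choice $M\ge 2|\mu_0|$ guarantees $\mu_0$ lies strictly inside the admissible range, which is what is needed for the subsequent fixed-point/degree argument solving $\mathcal{J}[\mu,\delta]=0$ for $\mu=\mu(\delta)$.
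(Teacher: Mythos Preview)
Your overall strategy matches the paper's --- two limiting inner products plus three error bounds --- but you have made a bookkeeping error that has led you to misidentify the main difficulty. You write that one should ``factor out the overall $\delta$'' and study $\delta^{-1}\mathcal{J}_+$. This is wrong: it is $\mathcal{J}_+[\mu,\delta]$ \emph{itself}, not $\delta^{-1}\mathcal{J}_+$, that converges to $\mu-\mu_0$. The inner products $\inner{\widehat\alpha_\star,\widehat{\mathcal M}}$ and $\inner{\widehat\alpha_\star,\widehat{\mathcal N}}$ are each of order $\delta^{-1}$ (consistent with Proposition~\ref{lemma13}), so the prefactor $\delta$ in the definition of $\mathcal{J}_+$ is precisely what is needed to make the first two terms converge to $\mu$ and $-\mu_0$; it is not an overall factor to be divided out.

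Once this is corrected, the ``main obstacle'' you flagged disappears entirely. The three error terms in $\mathcal{J}_+$ are bounded directly:
\[
\big|\delta\inner{\widehat\alpha_\star,\widehat{\mathcal L}^\delta(\mu)\widehat\beta}\big|\le\delta\,\|\widehat\alpha_\star\|_{L^2}\,\|\widehat{\mathcal L}^\delta(\mu)\widehat\beta\|_{L^2}\lesssim\delta\cdot\delta^\tau\|\widehat\beta\|_{L^{2,1}}\lesssim\delta^{1+\tau}\cdot\delta^{-1}=\delta^\tau,
\]
\[
\big|\delta^2\mu\inner{\widehat\alpha_\star,\widehat\beta}\big|\le\delta^2|\mu|\,\|\widehat\alpha_\star\|_{L^2}\|\widehat\beta\|_{L^2}\lesssim\delta^2\cdot\delta^{-1}=\delta,
\]
using only Cauchy--Schwarz, Proposition~\ref{lemma12}, and the bound \eqref{betabound2a}. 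For the $(\widehat{\mathcal D}^\delta-\widehat{\mathcal D})$ term the same crude estimate already gives $\delta\cdot\delta^{1-\tau}\|\widehat\beta\|_{L^{2,1}}\lesssim\delta^{1-\tau}$; the paper happens to re-derive this using the high-frequency support of $\widehat{\mathcal D}^\delta-\widehat{\mathcal D}$ against the $L^{2,1}$ control of $\widehat\alpha_\star$, but no ``super-polynomial'' decay, and no fine structure of $\widehat\beta$ from \eqref{betasoln}, is required. All three error terms are $o(1)$ by the operator bounds already in hand, and the substantive work lies exactly where you first placed it: in the Poisson-summation computations (Appendix~\ref{near_freq_limits}) establishing $\delta\inner{\widehat\alpha_\star,\widehat{\mathcal M}}\to 1$ and $\delta\inner{\widehat\alpha_\star,\widehat{\mathcal N}}\to -\mu_0$.
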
 
\medskip

\begin{proof}  In Appendix \ref{near_freq_limits} it is verified that, for all $0<\delta<\delta_0$ with $\delta_0$
sufficiently small and $0<\tau<1/2$, the following hold for some constant $C_M$:
\begin{align}
 \lim_{\delta\rightarrow0}\delta\inner{\widehat{\alpha}_{\star}(\cdot),
\widehat{\mathcal{M}}(\cdot;\delta)}_{L^2(\R)} &=1; \label{sketch_limit1} \\
 \lim_{\delta\rightarrow0}\delta\inner{\widehat{\alpha}_{\star}(\cdot),
\widehat{\mathcal{N}}(\cdot;\delta)}_{L^2(\R)} &= -\mu_0; \label{sketch_limit2} \\
 \abs{\delta\inner{\widehat{\alpha}_{\star}(\cdot), \left(\widehat{\mathcal{D}}^{\delta}-\widehat{\mathcal{D}}\right)
\widehat{\beta}(\cdot;\mu,\delta)}_{L^2(\R)}} &\le C_M \delta^{1-\tau}; \label{sketch_bound1} \\
 \abs{\delta\inner{\widehat{\alpha}_{\star}(\cdot), \widehat{\mathcal{L}}^{\delta}(\mu)
\widehat{\beta}(\cdot;\mu,\delta)}_{L^2(\R)}} &\le C_M \delta^{\tau}; \label{sketch_bound2} \\
 \abs{\delta^2 \mu\inner{\widehat{\alpha}_{\star}(\cdot), \widehat{\beta}(\cdot;\mu,\delta)}_{L^2(\R)}} 
&\le C_M \delta. \label{sketch_bound3}
\end{align} It follows from limits \eqref{sketch_limit1} and \eqref{sketch_limit2} and bounds \eqref{sketch_bound1} -
\eqref{sketch_bound3} that 
\begin{equation*}
 \mathcal{J}_+(\mu;\delta)=\mu-\mu_0+o(1) ~~~\text{as}~ \delta\rightarrow0, ~\text{uniformly for}~
|\mu|\leq M,
\end{equation*}
and therefore that $\mathcal{J}[\mu,\delta]$, defined in \eqref{J_defn}, is well-defined for
$(\mu,\delta)\in  \{(\mu,\delta)\ :\ |\mu|<M,\ 0\leq\delta<\delta_0\}$ and continuous at $\delta=0$.
\end{proof}
\smallskip

Given $\widehat{\beta}(\cdot,\mu,\delta)$, constructed in Proposition \ref{solve4beta}, to complete the solution of 
the Lyapunov-Schmidt reduced system \eqref{pplleqn}-\eqref{pperpeqn}, it suffices to solve \eqref{pplleqn} for
$\mu=\mu(\delta)$. Moreover, note that \eqref{pplleqn} holds if and only if $\mathcal{J}[\mu;\delta]=0$.

\medskip

\begin{proposition}\label{solveJeq0}
There exists $\delta_0>0$, and a function $\delta\mapsto\mu(\delta)$, defined for $0\le\delta<\delta_0$ such that:
 $|\mu(\delta)|\le M$, $\lim_{\delta\to0}\mu(\delta)=\mu(0)=\mu_0 \equiv E^{(2)}$ and 
 $\mathcal{J}[\mu(\delta),\delta]=0$ for all $0\le\delta<\delta_0$.
\end{proposition}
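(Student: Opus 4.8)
The plan is to treat this as a one-dimensional continuity (implicit-function) argument resting on Proposition \ref{proposition3}. First I would record, from Proposition \ref{proposition3} and the limits and bounds \eqref{sketch_limit1}--\eqref{sketch_bound3} quoted in its proof, that after shrinking $\delta_0$ the map $(\mu,\delta)\mapsto\mathcal J[\mu,\delta]$ is well-defined and continuous on $\{|\mu|<M\}\times[0,\delta_0)$, satisfies $\mathcal J[\mu,0]=\mu-\mu_0$, and admits the decomposition
\[
\mathcal J_+[\mu,\delta]=(\mu-\mu_0)+\rho(\mu,\delta),\qquad \rho_0(\delta):=\sup_{|\mu|\le M}|\rho(\mu,\delta)|,
\]
where $\rho_0(\delta)\to 0$ as $\delta\downarrow 0$: indeed $\mu\,\delta\langle\widehat\alpha_\star,\widehat{\mathcal M}(\cdot;\delta)\rangle\to\mu$ by \eqref{sketch_limit1}, $\delta\langle\widehat\alpha_\star,\widehat{\mathcal N}(\cdot;\delta)\rangle\to-\mu_0$ by \eqref{sketch_limit2}, and the remaining three contributions to $\mathcal J_+$ are $O_M(\delta^{1-\tau}+\delta^{\tau}+\delta)$ uniformly for $|\mu|\le M$ by \eqref{sketch_bound1}--\eqref{sketch_bound3}. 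Here I would stress that all the error terms are \emph{uniform in $\mu$}; this is the one point that must be checked with care, and it is precisely what Propositions \ref{solve4beta} and \ref{proposition3} provide.

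\textbf{Producing the root.} With this decomposition in hand, since $M\ge 2|\mu_0|$ I would shrink $\delta_0$ once more so that $\rho_0(\delta)<M/4$ for $0<\delta<\delta_0$. Fixing such a $\delta$, set $\mu_\pm:=\mu_0\pm\rho_0(\delta)$; these lie in $(-M,M)$ because $|\mu_0|+\rho_0(\delta)<M/2+M/4<M$. From the decomposition, $\mathcal J[\mu_+,\delta]=\rho_0(\delta)+\rho(\mu_+,\delta)\ge 0$ and $\mathcal J[\mu_-,\delta]=-\rho_0(\delta)+\rho(\mu_-,\delta)\le 0$. Since $\mu\mapsto\mathcal J[\mu,\delta]$ is continuous --- in fact Lipschitz in $\mu$, as noted just after the definition of $\mathcal J_+$ and ultimately a consequence of the Lipschitz-in-$\mu$ bound \eqref{beta_lip_mu} on $\widehat\beta(\cdot;\mu,\delta)$ --- the intermediate value theorem yields $\mu(\delta)\in[\mu_-,\mu_+]$ with $\mathcal J[\mu(\delta),\delta]=0$; recall that by Proposition \ref{proposition3} this is equivalent to the reduced equation \eqref{pplleqn}. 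Setting $\mu(0):=\mu_0=E^{(2)}$, the bounds $|\mu(\delta)-\mu_0|\le\rho_0(\delta)$ and $|\mu(\delta)|\le|\mu_0|+\rho_0(\delta)<M$ follow at once, and $\rho_0(\delta)\to0$ gives $\lim_{\delta\downarrow0}\mu(\delta)=\mu_0=\mu(0)$. This is all that is asserted.

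\textbf{Refinement and main difficulty.} If one additionally wants uniqueness of the bifurcating branch and continuity of $\delta\mapsto\mu(\delta)$ on all of $[0,\delta_0)$, I would instead recast $\mathcal J[\mu,\delta]=0$ as the fixed-point equation $\mu=\mu_0-\rho(\mu,\delta)$ and verify that $\mu\mapsto\mu_0-\rho(\mu,\delta)$ is a contraction on a fixed small ball about $\mu_0$: using \eqref{beta_lip_mu} for $\widehat\beta(\cdot;\mu,\delta)$ and the explicit $\delta$-prefactors in the terms of $\mathcal J_+$, together with $0<\tau<1/2$, its Lipschitz constant is $O(\delta^{1/2-\tau})\to0$, so the contraction mapping principle produces a unique $\mu(\delta)$ depending Lipschitz-continuously on $\delta$. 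In either approach the substantive work has already been carried out --- constructing and estimating $\widehat\beta$ in Proposition \ref{solve4beta} and evaluating the limits \eqref{sketch_limit1}--\eqref{sketch_bound3} in Appendix \ref{near_freq_limits} --- so the proof of the proposition itself is short and essentially soft; I do not anticipate a genuine obstacle beyond propagating the uniformity in $\mu$ of those estimates.
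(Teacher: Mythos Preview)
Your proposal is correct and follows essentially the same route as the paper: an intermediate value theorem argument in the variable $\mu$, resting on the uniform-in-$\mu$ asymptotics $\mathcal J_+[\mu,\delta]=\mu-\mu_0+o(1)$ established in Proposition~\ref{proposition3}. The paper's version is even shorter: it fixes $\delta$-independent bracketing points $\mu''<\mu_0<\mu'$ and uses continuity at $\delta=0$ to preserve the signs of $\mathcal J[\mu'',\delta]$ and $\mathcal J[\mu',\delta]$ for small $\delta$, then applies IVT; your choice of $\delta$-dependent bracketing points $\mu_\pm=\mu_0\pm\rho_0(\delta)$ yields the limit $\mu(\delta)\to\mu_0$ more directly, and your contraction-mapping refinement for uniqueness and continuity in $\delta$ goes beyond what the paper records.
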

\medskip

{\it Proof of Proposition \ref{solveJeq0}:}\      Choose
$\mu'\in(\mu_0,M)$. Then, $\mathcal{J}(\mu',0)=\mu'-\mu_0>0$. By continuity at $\delta=0$, there exists
$\delta'\le\delta_0$ such that for all $\delta\in(0,\delta')$, we have $\mathcal{J}(\mu',\delta)>0$. Similarly, let
$\mu''\in(-M,\mu_0)$. Then,  we have $\mathcal{J}(\mu'',0)=\mu''-\mu_0<0$ and for all $\delta''\le\delta'$:
\[ \mathcal{J}(\mu'',\delta)<0\ \ {\rm and}\ \ \mathcal{J}(\mu',\delta)>0\ \ \textrm{for all}\ \  \delta\le\delta''.\]
It follows that there exists some $\mu(\delta)\in(\mu'',\mu')$ such that
$\mathcal{J}\left(\mu(\delta),\delta\right)=0$.

This completes the construction of a solution pair $\left(\widehat{\beta}^\delta(\xi),\mu(\delta)\right)$, with 
$\widehat{\beta}^\delta\in L^{2,1}(\R_\xi)$, which solves the band-limited Dirac system \eqref{compacterroreqn}. Our
main result, Theorem \ref{thm:validity}, now follows directly from  Proposition \ref{needtoshow}.
\bigskip

\appendix

\chapter{A Variant of Poisson Summation}\label{PSF}

\begin{theorem}\label{psum-L2a}
Let $\Gamma(x,X)$ be a function defined for $(x,X)\in\R\times\R$.
 Assume that the mapping  $x\mapsto\Gamma(x,X)$ is $H^2_{\rm per}([0,1])$ with respect to $x$ with values in
$L^2(\R_X)$, {\it i.e.}
\begin{align}\label{Gamma-conditions1a}
&\Gamma(x+1,X)\ =\ \Gamma(x,X) \ ,\\
&\sum_{j=0}^2\ \int_0^1\ \left\|\D_x^j\Gamma(x,\cdot)\right\|_{L^2(\R_X)}^2\ dx\ <\ \infty .
\label{Gamma-conditions2a}\end{align}
We denote this Hilbert space of functions by $\mathbb{H}^2$ with norm-squared, $\|\cdot\|_{\mathbb{H}^2}^2$, given in
\eqref{Gamma-conditions1a}-\eqref{Gamma-conditions2a}.
Denote by $\widehat{\Gamma}(x,\omega)$ the Fourier transform of $\Gamma(x,X)$ with respect to $X$
given by
\begin{equation}
\widehat{\Gamma}(x,\omega)\ =_{\rm def}\ \lim_{N\uparrow\infty}\ \frac{1}{2\pi}\int_{|x|\le N}e^{-i\omega
X}\Gamma(x,X)dX\ ,
\label{Ghat-def}
\end{equation}
where the limit is taken in $L^2([0,1]_x\times\R_\omega)$. \ Then, for any $\zeta_{\rm max}>0$ 
\begin{equation}\label{psum-GofxX} 
\sum_{n\in\Z} e^{- i\zeta(x+n)}\Gamma(x,x+n) = 
2\pi \sum_{n\in\Z} e^{2\pi i n x}\widehat{\Gamma}\left(x,2\pi n+\zeta\right)
\end{equation}
in $L^2([0,1]\times[-\zeta_{\rm max},\zeta_{\rm max}];dxd\zeta)$.
\end{theorem}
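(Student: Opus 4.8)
The plan is to pass the whole identity through the $x$-Fourier series of $\Gamma$, so that \eqref{psum-GofxX} becomes a statement about the $x$-Fourier coefficients of two honest $L^2$ functions. Write $\Gamma(x,X)=\sum_{k\in\Z}c_k(X)e^{2\pi ikx}$ with $c_k(X)=\int_0^1\Gamma(x,X)e^{-2\pi ikx}dx\in L^2(\R_X)$; hypothesis \eqref{Gamma-conditions2a} is equivalent to $\sum_k(1+k^2)^2\|c_k\|_{L^2(\R_X)}^2<\infty$, and by Plancherel the same holds for the transforms $\widehat c_k$. Since $\mathcal{F}_X$ and the projection onto the $k$-th $x$-harmonic act on different variables, $\widehat\Gamma(x,\omega)=\sum_k\widehat c_k(\omega)e^{2\pi ikx}$ in $L^2([0,1]_x\times\R_\omega)$. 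Every series written below converges in $L^2$ of the indicated variables, each time by the same Cauchy--Schwarz-in-$k$ estimate using $C_0:=\sum_{k\in\Z}(1+k^2)^{-1}<\infty$ together with $1$-periodicity; I will not repeat this verification. For a general $\zeta_{\rm max}$ I record once that the translates $2\pi m+[-\zeta_{\rm max},\zeta_{\rm max}]$, $m\in\Z$, cover $\R$ with bounded overlap multiplicity $N(\zeta_{\rm max})<\infty$; this is the only place $\zeta_{\rm max}>\pi$ enters.

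First I give meaning to the left-hand side. For $n\in\Z$ set $\gamma_n(x):=\sum_k c_k(x+n)e^{2\pi ikx}\in L^2([0,1]_x)$; this is the rigorous reading of ``$\Gamma(x,x+n)$'', and one checks $\sum_n\|\gamma_n\|_{L^2([0,1])}^2\lesssim\sum_k(1+k^2)\|c_k\|_{L^2(\R_X)}^2<\infty$. The partial sum $S_N(x,\zeta)=\sum_{|n|\le N}e^{-i\zeta(x+n)}\gamma_n(x)$ is then Cauchy in $L^2([0,1]_x\times[-\zeta_{\rm max},\zeta_{\rm max}]_\zeta)$: for fixed $x$, $\zeta\mapsto\sum_n e^{-i\zeta n}\gamma_n(x)$ is $2\pi$-periodic with $\ell^2$ Fourier coefficients $\{\gamma_n(x)\}$, so Parseval on the circle gives $\int_{-\zeta_{\rm max}}^{\zeta_{\rm max}}|\sum_{N<|n|\le N'}e^{-i\zeta n}\gamma_n(x)|^2\,d\zeta\le 2\pi N(\zeta_{\rm max})\sum_{N<|n|\le N'}|\gamma_n(x)|^2$, and integrating $dx$ forces the right side to $0$. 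Call the $L^2$-limit $\Lambda$.

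Next, the right-hand side. Inserting $\widehat\Gamma(x,2\pi m+\zeta)=\sum_k\widehat c_k(2\pi m+\zeta)e^{2\pi ikx}$ and regrouping the double sum by the \emph{output} frequency $\ell=m+k$, the $M$-th partial sum is $R_M(x,\zeta)=2\pi\sum_\ell e^{2\pi i\ell x}\sum_{|m|\le M}\widehat c_{\ell-m}(2\pi m+\zeta)$; the coefficients now depend only on $\zeta$, so Parseval in $x$ gives $\|R_{M'}-R_M\|_{L^2(dxd\zeta)}^2=(2\pi)^2\int_{-\zeta_{\rm max}}^{\zeta_{\rm max}}\sum_\ell|\sum_{M<|m|\le M'}\widehat c_{\ell-m}(2\pi m+\zeta)|^2\,d\zeta$. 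Cauchy--Schwarz in $m$ with weight $(1+(\ell-m)^2)$ bounds this by $(2\pi)^2C_0$ times $\int_{-\zeta_{\rm max}}^{\zeta_{\rm max}}\sum_\ell\sum_{M<|m|\le M'}(1+(\ell-m)^2)|\widehat c_{\ell-m}(2\pi m+\zeta)|^2\,d\zeta$, a tail of the nonnegative double series $\int_{-\zeta_{\rm max}}^{\zeta_{\rm max}}\sum_{\ell,m}(1+(\ell-m)^2)|\widehat c_{\ell-m}(2\pi m+\zeta)|^2\,d\zeta$; substituting $k=\ell-m$ and summing over $\ell$ (equivalently over $m$ with $k$ fixed) turns the $\zeta$-integral into $\sum_m\int_{2\pi m-\zeta_{\rm max}}^{2\pi m+\zeta_{\rm max}}|\widehat c_k(\omega)|^2\,d\omega\le N(\zeta_{\rm max})\|\widehat c_k\|_{L^2(\R)}^2$, so the double series is $\lesssim\sum_k(1+k^2)\|\widehat c_k\|_{L^2(\R)}^2<\infty$, hence its tail tends to $0$ and $\{R_M\}$ is Cauchy. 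Call the limit $P$. This step is where I expect the real work: the naive bound appears to need $X$-Sobolev control of $\Gamma$ which the hypotheses do not supply, and the resolution is exactly the regrouping by $\ell=m+k$, which pairs the Fourier-shifts $2\pi m$ with a summation that reassembles $\int_\R$ at finite overlap, so that only the available $x$-regularity is used.

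Finally I identify $\Lambda=P$ by comparing $x$-Fourier coefficients (viewed as $L^2([-\zeta_{\rm max},\zeta_{\rm max}]_\zeta)$-valued). For $R_M$ the $\ell$-th coefficient is $2\pi\sum_{|m|\le M}\widehat c_{\ell-m}(2\pi m+\zeta)$, with $L^2_\zeta$-limit $2\pi\sum_m\widehat c_{\ell-m}(2\pi m+\zeta)$. For $S_N$, substituting $\Gamma(x,x+n)=\sum_k c_k(x+n)e^{2\pi ikx}$, then $y=x+n$ in each $\int_0^1$, and using $e^{2\pi i(\ell-k)n}=1$, the $\ell$-th coefficient equals $\sum_k\sum_{|n|\le N}\int_n^{n+1}e^{-i(2\pi(\ell-k)+\zeta)y}c_k(y)\,dy$; letting $N\to\infty$ (the inner expression has $L^2_\zeta$-norm $\lesssim\|c_k\|_{L^2(\R)}$ uniformly in $N,\ell$, and $\sum_k\|c_k\|_{L^2(\R)}<\infty$, so limit and $k$-sum interchange) gives $2\pi\sum_k\widehat c_k(2\pi(\ell-k)+\zeta)=2\pi\sum_m\widehat c_{\ell-m}(2\pi m+\zeta)$, the same function. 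Thus $\Lambda$ and $P$ have equal $x$-Fourier coefficients, so $\Lambda=P$ in $L^2([0,1]\times[-\zeta_{\rm max},\zeta_{\rm max}])$, which is \eqref{psum-GofxX}. Apart from the crux above, the remaining effort is bookkeeping: making each rearrangement an honest statement about $L^2$-limits and tracking the harmless constant $N(\zeta_{\rm max})$.
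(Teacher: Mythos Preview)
Your argument is correct and takes a genuinely different route from the paper. The paper proceeds by density: it first establishes \eqref{psum-GofxX} pointwise for $\Gamma\in C^\infty_{\rm per}([0,1]_x;\mathcal{S}(\R_X))$ using the classical Poisson summation formula, and then shows that the two maps $\Gamma\mapsto L[\Gamma]$ and $\Gamma\mapsto R[\Gamma]$ extend to bounded linear operators from $\mathbb{H}^2$ into $L^2([0,1]\times[-\zeta_{\rm max},\zeta_{\rm max}])$; the bound on $R$ is obtained by expanding $|\Delta_I|^2$ into diagonal and off-diagonal terms and integrating by parts twice in $x$ on the latter. You instead work directly at the level of the $x$-Fourier series $\Gamma=\sum_k c_k(X)e^{2\pi ikx}$, avoiding any density step. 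Your key device---regrouping the right-hand double sum by the output frequency $\ell=m+k$---cleanly separates the $x$- and $\zeta$-dependence so that Parseval in $x$ reduces the Cauchy estimate to a tail of a convergent $(k,m)$-series controlled by $\sum_k(1+k^2)\|\widehat c_k\|_{L^2}^2$; this is where you correctly identify that only $x$-regularity is needed, not $X$-regularity. The identification of the two limits via their $x$-Fourier coefficients then essentially reproves classical Poisson summation coefficient-by-coefficient, rather than invoking it as a black box. The paper's approach is more modular (isolating the classical identity from the functional-analytic boundedness), while yours is more self-contained and makes the mechanism by which the $H^2_x$-hypothesis enters---namely as the summable weight $(1+k^2)^{-1}$ in Cauchy--Schwarz---fully explicit.
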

\medskip

{\bf Proof of Theorem \ref{psum-L2a}.}\  We begin by recalling the classical Poisson summation formula, applied to
functions $f(y)\in\mathcal{S}(\R)$:
\begin{equation}
\sum_{n\in\Z}f(y+n)\ = \ 2\pi\ \sum_{n\in\Z} \widehat{f}(2\pi n)e^{2\pi i n y}\  .
\label{psum-classic}
\end{equation}
Here, we use the choice of Fourier transform given in \eqref{FT-def}.
Now let $\Gamma(x,X)$ be $C^\infty([0,1])$ and one-periodic with respect to $x$ with values in  Schwartz class,
$\mathcal{S}(\R_X)$. This subspace is dense in $\mathbb{H}^2$ with respect to the norm $\|\cdot\|_{\mathbb{H}^2}$.  
For such $\Gamma$ we can expand $x\mapsto\Gamma(x,X)$ in a rapidly convergent Fourier series:
\begin{equation}
\Gamma(x,X)=\sum_{m\in\Z}\Gamma_m(X)e^{2\pi imx}\ ,
\nn\end{equation}
where $\Gamma_m(X)=\int_0^1e^{-2\pi i m x}\Gamma(x,X)dx$ and  $\Gamma_m(X)\in \mathcal{S}(\R_X)$.
For each fixed $m$, we apply \eqref{psum-classic} to $\Gamma_m(\cdot)$. We have that the Fourier transform of
$\Gamma_m(X)$ is $\widehat{\Gamma}_m(\xi)$ and therefore:
\begin{equation}
\sum_{n\in\Z}\Gamma_m(X+n)\ = \ 2\pi\ \sum_{n\in\Z} \widehat{\Gamma_m}(2\pi n)e^{2\pi i n X} \ .
\label{psum-Gmam}
\end{equation}
Next, multiplying both sides of \eqref{psum-Gmam} by $e^{2\pi i mx}$ and  summing over $m\in\Z$
 (the rapid convergence in $n$ and $m$ ensures that we may interchange summations) yields:
 \begin{equation}
 \sum_{n\in\Z}\Gamma(x,X+n)\ =\ 2\pi \sum_{n\in\Z}e^{2\pi i nX}\widehat{\Gamma}(x,2\pi n)\ .
 \label{interim}\end{equation}
 Next, let's apply \eqref{interim} where $\Gamma(x,X)$ is replaced by $e^{-i\zeta X} \Gamma(x,X)$
 and whose Fourier transform is $\widehat{\Gamma}\left(x,\xi+\zeta\right)$.
 We obtain for every $(x,X)\in\R\times\R$:
 \begin{equation}
 \sum_{n\in\Z}e^{-i\zeta(X+n)}\Gamma(x,X+n)\ =\ 
 2\pi \sum_{n\in\Z}e^{2\pi i nX}\widehat{\Gamma}\left(x,2\pi n+\zeta\right)\ .
 \label{theresult}\end{equation}
 Setting $X=x$ in \eqref{theresult} we obtain that for $\Gamma(x,X)$ in a dense subspace of $\mathbb{H}^2$ we have:
 \begin{equation}
 \sum_{n\in\Z}e^{-i\zeta(x+n)}\Gamma(x,x+n)\ =\ 2\pi\ \sum_{n\in\Z}e^{2\pi i nx}\widehat{\Gamma}\left(x,2\pi
n+\zeta\right)\ ,\ \ x\in\R.
 \label{theresult1}\end{equation}

 We now seek to extend \eqref{theresult1} to all $\mathbb{H}^2$ in the sense of  \eqref{psum-GofxX}.
It will suffice to prove the following two claims:

\nit{\bf Claim 1:} Assume $\Gamma\in\mathbb{H}^2$ and fix $\zeta_{\rm max}>0$.  Then, the  linear mappings defined by
the left and right hand sides
of \eqref{psum-GofxX}:
\begin{align}
\Gamma &\mapsto L[\Gamma](x,\zeta)\equiv \lim_{N\to\infty}\sum_{n=-N}^{N} e^{- i\zeta(x+n)}\Gamma(x,x+n)\
\label{lhs-psum}\\
\Gamma &\mapsto R[\Gamma](x,\zeta)\equiv \lim_{N\to\infty}\ 2\pi\ \sum_{n=-N}^{N} e^{2\pi i n
x}\widehat{\Gamma}\left(x,2\pi n+\zeta\right)\label{rhs-psum}
\end{align}
are well-defined as limits in $L^2([0,1]\times[-\zeta_{\rm max},\zeta_{\rm max}];dxd\zeta)$.
\medskip

\nit{\bf Claim 2:} $\Gamma \mapsto L[\Gamma](x,\zeta)$ and $\Gamma \mapsto R[\Gamma](x,\zeta)$ 
are bounded linear transformations from $\mathbb{H}^2$ to $L^2([0,1]\times[-\zeta_{\rm max},\zeta_{\rm max}];dxd\zeta)$.
That is, there exist
constants $C_L, C_R>0$, depending on $\zeta_{\rm max}$,   such that for all 
$\Gamma\in\mathbb{H}^2$ 
\begin{align}
\int\int_{[0,1]\times[-\zeta_{\rm max},\zeta_{\rm max}];}\left|L[\Gamma](x,\zeta)\right|^2dxd\zeta\le
&C_L\|\Gamma\|_{\mathbb{H}^2}^2\ ,
\label{Lbound}\\
\int\int_{[0,1]\times[-\zeta_{\rm max},\zeta_{\rm max}];}\left|R[\Gamma](x,\zeta)\right|^2dxd\zeta\le
&C_R\|\Gamma\|_{\mathbb{H}^2}^2\ .\label{Rbound}
\end{align}

 We first show that Claim 1 and Claim 2  imply Theorem \ref{psum-L2a} and then give the proofs of these claims.
  Fix a $\Gamma\in \mathbb{H}^2$. Then, there exists a sequence $\Gamma_l(x;X),\ l\ge1$ with  
  $\Gamma\in C^\infty(S^1;\mathcal{S}(\R))$, such that $\|\Gamma_l-\Gamma\|_{\mathbb{H}^2}\to0$ as $l\to\infty$. In the
norm of 
$L^2([0,1]\times[-\zeta_{\rm max},\zeta_{\rm max}];dxd\zeta)$ we have
\begin{align}
&\Big\|L[\Gamma]-R[\Gamma]\Big\| \nn\\ 
&\le\ \  \Big\|L[\Gamma]-L[\Gamma_l]\Big\|+\Big\|L[\Gamma_l]-R[\Gamma_l]\Big\|
+\Big\|R[\Gamma_l]-R[\Gamma]\Big\|
\label{LminusR}
\end{align}
The middle term in \eqref{LminusR} is identically zero since we have proved \eqref{theresult1} for $\Gamma\in
C^\infty(S^1;\mathcal{S}(\R))$. 
Also, by linearity and the bounds \eqref{Lbound} and \ref{Rbound} we have, as $l\to\infty$, 
\[\Big\|L[\Gamma]-R[\Gamma]\Big\|_{L^2([0,1]\times[-\zeta_{\rm max},\zeta_{\rm max}];; dxd\zeta)} \le
(C_L+C_R)\|\Gamma_l-\Gamma\|_{\mathbb{H}^2}\to0
.\]
 Thus, $L[\Gamma](x,\zeta)=R[\Gamma](x,\zeta)$ in $L^2([0,1]\times[-\zeta_{\rm max},\zeta_{\rm max}];; dxd\zeta)$, as
asserted in Theorem
\ref{psum-L2a}. \medskip

\nit{\bf Proof of Claim 1:}\  Pick $K\in\Z_+$ such that $2\pi K>\zeta_{\rm max}$. We first prove that
$L[\Gamma](x,\zeta)$, the limit  in \eqref{lhs-psum} exists. For any
positive integer $N$ define:
\[L_{_{N}}[\Gamma](x,\zeta)=\sum_{n=-N}^{N} e^{- i\zeta(x+n)}\Gamma(x,x+n).\]
Let $N_1, N_2$ be positive integers with $N_1<N_2$. Then
\begin{align*}
&\int\int_{[0,1]\times[-\zeta_{\rm max},\zeta_{\rm max}]}
\left|L_{_{N_1}}[\Gamma](x,\zeta)-L_{_{N_2}}[\Gamma](x,\zeta)\right|^2dxd\zeta\nn\\
&\quad = \int_0^1 dx\int_{-\zeta_{\rm max}}^{\zeta_{\rm max}}d\zeta \Big| \sum_{N_1< |n|\le N_2} e^{-
i\zeta(x+n)}\Gamma(x,x+n) \Big|^2\nn\\
&\quad \le \int_0^1 dx\int_{-2\pi K}^{2\pi K}d\zeta \Big| \sum_{N_1 <|n|\le N_2} e^{- i\zeta n}\Gamma(x,x+n)
\Big|^2\nn\\
&\quad = 4\pi K\  \int_0^1 dx  \sum_{N_1 < |n|\le N_2} \Big|\Gamma(x,x+n) \Big|^2
 \end{align*}
 The last equality follows from Parseval's relation, since \\
 $\zeta\mapsto\sum_{N_1<|n|\le N_2} e^{- i\zeta
n}\Gamma(x,x+n)$ is the Fourier series of a $2\pi$ periodic function.
Note now that 
\begin{equation}
\Big|\Gamma(x,x+n)\Big|^2\le \sup_{0\le y\le 1} \Big|\Gamma(y,x+n)\Big|^2\lesssim 
 \sum_{j=0}^2 \int_0^1\Big|\D^j_y\Gamma(y,x+n))|^2 dy.\label{traceGamma}\end{equation}
 Therefore,
 \begin{align}
&\int\int_{[0,1]\times[-\zeta_{\rm max},\zeta_{\rm max}]}
\left|L_{_{N_1}}[\Gamma](x,\zeta)-L_{_{N_2}}[\Gamma](x,\zeta)\right|^2dxd\zeta\nn\\
&\quad \lesssim 
4\pi K\ \sum_{j=0}^2 \int_0^1dy \int_0^1 dx  \sum_{N_1<|n|\le N_2} \Big|\D^j_y\Gamma(y,x+n))|^2 dy\nn\\
&\quad \lesssim 4\pi K\  
\sum_{j=0}^2 \int_0^1dy \int_{[-N_2,N_2+1]\setminus[-N_1,N_1+1]}  \Big|\D^j_y\Gamma(y,X)|^2 dX\ \to0
\nn\end{align}
as $ N_2>N_1\to\infty$ by the assumptions on $\Gamma(x,X)$: \eqref{Gamma-conditions1a}-\eqref{Gamma-conditions2a}. It
follows that $\{ L_{_{N}} \}_{N\in\mathbb{N}}$ is a Cauchy sequence in $L^2([0,1]\times[-\zeta_{\rm max},\zeta_{\rm
max}];dxd\zeta)$. We denote this
limit by $L[\Gamma](x,\zeta)$ and also write:
\begin{equation}
\label{Ldef}
  L[\Gamma](x,\zeta) = \sum_{n\in\Z} e^{- i\zeta(x+n)}\Gamma(x,x+n)\ .
\end{equation}

 Now we turn to the proof that the limit $R[\Gamma](x,\zeta)$, the limit  in \eqref{rhs-psum} exists. For a positive
integer $N$  define: 
 \begin{equation}
 R_{_{N}}[\Gamma](x,\zeta)=
2\pi\ \sum_{n=-N}^{N} e^{2\pi i n x}\widehat{\Gamma}\left(x,2\pi n+\zeta\right).\label{RN-def}
\end{equation} 
For  $N_2>N_1$, we have
\begin{align}
&\int\int_{[0,1]\times[-\zeta_{\rm max},\zeta_{\rm max}]}
\left|R_{_{N_1}}[\Gamma](x,\zeta)-R_{_{N_2}}[\Gamma](x,\zeta)\right|^2dxd\zeta\nn\\
&\quad = (2\pi)^2\ \int_{-\zeta_{\rm max}}^{\zeta_{\rm max}} d\zeta\int_0^1dx \Big|  
\sum_{N_1<|n|\le N_2}  e^{2\pi i n x}\widehat{\Gamma}\left(x,2\pi n+\zeta\right)\Big|^2 \ .
\label{Rint}
\end{align}

We view the $dx-$ integral in  \eqref{Rint} as an expression of the general form:
\begin{equation}
\int_0^1 |\Delta_I(x,\zeta)|^2 dx,\ \textrm{where}\ \  
\Delta_I(x,\zeta)\ \equiv\ 
\sum_{n\in I}  e^{2\pi i n x}\widehat{\Gamma}\left(x,2\pi n+\zeta\right) \ ,
\label{RintI}
\end{equation}
and $I$ denotes a finite subset of $\Z$. 

\begin{proposition}\label{psum-I}
For $\zeta\in[-\zeta_{\rm max},\zeta_{\rm max}]$ a.e.,
\begin{equation}
\int_0^1 \left|\Delta_I[\Gamma](x,\zeta)\right|^2\ dx\lesssim\ \sum_{j=0}^2 \sum_{m\in I}\int_0^1
\Big|\D_x^j\widehat{\Gamma}\left(x,2\pi m+\zeta\right)\Big|^2 dx .\label{almost1I}
\end{equation}
\end{proposition}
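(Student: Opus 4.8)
The plan is to fix $\zeta$, Fourier-expand in the $x$-variable, and reduce \eqref{almost1I} to an elementary weighted Cauchy--Schwarz bound for a discrete convolution. First I would settle the $x$-regularity of $\widehat{\Gamma}$. Since $\Gamma\in\mathbb{H}^2$ and the partial Fourier transform $\mathcal{F}_X$ is bounded on $L^2(\R_X)$, it commutes with $\D_x^j$, so $\D_x^j\widehat{\Gamma}(x,\omega)=\widehat{\D_x^j\Gamma}(x,\omega)$ for $j=0,1,2$, and the Plancherel relation \eqref{plancherel} in the $X$-variable gives
\[
\sum_{j=0}^2\int_0^1\int_\R\big|\D_x^j\widehat{\Gamma}(x,\omega)\big|^2\,d\omega\,dx\ \lesssim\ \|\Gamma\|_{\mathbb{H}^2}^2\ <\ \infty .
\]
By Tonelli, $\omega\mapsto\sum_{j=0}^2\int_0^1|\D_x^j\widehat{\Gamma}(x,\omega)|^2\,dx$ is finite for a.e.\ $\omega$; since $I$ is a \emph{finite} subset of $\Z$, it follows that for a.e.\ $\zeta\in[-\zeta_{\rm max},\zeta_{\rm max}]$ the function $g_m^\zeta(x):=\widehat{\Gamma}(x,2\pi m+\zeta)$ lies in $H^2_{\rm per}([0,1])$ for every $m\in I$, and in particular the right-hand side of \eqref{almost1I} is finite. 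Fix such a $\zeta$ from now on.

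For this $\zeta$, expand $g_m^\zeta(x)=\sum_{\ell\in\Z}c_{m,\ell}\,e^{2\pi i\ell x}$, so that $\sum_{\ell\in\Z}(1+\ell^2)\,|c_{m,\ell}|^2\ \approx\ \sum_{j=0}^2\int_0^1|\D_x^j g_m^\zeta(x)|^2\,dx$ by standard Fourier analysis on the circle. Then
\[
\Delta_I[\Gamma](x,\zeta)=\sum_{m\in I}e^{2\pi imx}g_m^\zeta(x)=\sum_{k\in\Z}\Big(\sum_{m\in I}c_{m,k-m}\Big)e^{2\pi ikx},
\]
the regrouping being legitimate because $m$ ranges over the finite set $I$ and $\Delta_I[\Gamma](\cdot,\zeta)$ is a finite sum of $L^2([0,1])$ functions, hence in $L^2([0,1])$. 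By Parseval's identity on $[0,1]$,
\[
\int_0^1\big|\Delta_I[\Gamma](x,\zeta)\big|^2\,dx\ =\ \sum_{k\in\Z}\Big|\sum_{m\in I}c_{m,k-m}\Big|^2 .
\]

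Finally I would estimate the last sum by Cauchy--Schwarz with the weight $(1+(k-m)^2)^{-1/2}$: setting $C_0:=\sum_{n\in\Z}(1+n^2)^{-1}<\infty$, uniformly in $k$ one gets $\big|\sum_{m\in I}c_{m,k-m}\big|^2\ \le\ C_0\sum_{m\in I}(1+(k-m)^2)\,|c_{m,k-m}|^2$. Summing over $k\in\Z$, interchanging the two nonnegative sums, and substituting $\ell=k-m$ yields
\[
\int_0^1\big|\Delta_I[\Gamma](x,\zeta)\big|^2\,dx\ \le\ C_0\sum_{m\in I}\sum_{\ell\in\Z}(1+\ell^2)\,|c_{m,\ell}|^2\ \lesssim\ \sum_{m\in I}\sum_{j=0}^2\int_0^1\big|\D_x^j\widehat{\Gamma}(x,2\pi m+\zeta)\big|^2\,dx ,
\]
which is \eqref{almost1I}. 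There is no genuine obstacle here: the only points requiring attention are the a.e.-in-$\zeta$ statement (handled by Tonelli together with the finiteness of $I$, which also guarantees the right-hand side is finite a.e.) and the term-by-term rearrangement of the Fourier series of the finite sum $\Delta_I$; the equivalence between the $(1+\ell^2)$-weighted $\ell^2(\Z)$ norm of $\{c_{m,\ell}\}_\ell$ and $\|g_m^\zeta\|_{H^1([0,1])}$ (which is dominated by the $H^2$ quantity on the right of \eqref{almost1I}) is routine.
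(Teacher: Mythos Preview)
Your proof is correct, and it takes a genuinely different route from the paper's. The paper expands $\int_0^1|\Delta_I|^2\,dx$ as a double sum over $m,m'\in I$, then on the off-diagonal terms $m\neq m'$ integrates by parts \emph{twice} in $x$ against the oscillatory factor $e^{2\pi i(m-m')x}$ to manufacture a $(m-m')^{-2}$ decay; the resulting expression $\sum_{m,m'}(1+(m-m')^2)^{-1}z_m(\zeta)z_{m'}(\zeta)$ is then controlled by Cauchy--Schwarz in $m'$ followed by Young's inequality for the $\ell^1$ convolution kernel $(1+n^2)^{-1}$. You instead Fourier-expand each $g_m^\zeta$ in $x$, rewrite $\Delta_I$ as a single Fourier series with coefficients $\sum_{m\in I}c_{m,k-m}$, and bound $\sum_k|\sum_{m}c_{m,k-m}|^2$ by one weighted Cauchy--Schwarz. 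Your argument is a bit shorter and, notably, only consumes $H^1_x$ regularity of $\widehat{\Gamma}(\cdot,2\pi m+\zeta)$ (the weight $1+\ell^2$), whereas the paper's two integrations by parts genuinely use $H^2_x$; you then simply dominate $H^1$ by the $H^2$ quantity appearing in \eqref{almost1I}. One cosmetic slip: the displayed line $\sum_\ell(1+\ell^2)|c_{m,\ell}|^2\approx\sum_{j=0}^2\int_0^1|\D_x^j g_m^\zeta|^2\,dx$ is not an equivalence (the right-hand side is the $H^2$ norm, not $H^1$); you only need and only use the $\lesssim$ direction, as you correctly note at the end.
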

\medskip

\noindent{\it Proof of Proposition \ref{psum-I}:}\  Note first that
 
 \begin{align}
 \int_0^1dx \left|\Delta_I[\Gamma](x,\zeta)\right|^2\ &=\  \int_0^1dx\
\Delta_I[\Gamma](x,\zeta)\overline{\Delta_I[\Gamma](x,\zeta)}\nn\\
 & =\  \int_0^1dx \sum_{m,m'\in I} e^{2\pi i (m-m')x}\ \widehat{\Gamma}\left(x,2\pi m +\zeta\right)\
\overline{\widehat{\Gamma}\left(x,2\pi m' +\zeta\right)}\ .
 \end{align}

 Consider the contribution from oscillatory-in-$x$ terms, {\it i.e.} the sum over all $m\in I$ and $m'\in I$ for which 
$m\ne m'$. Integration by parts yields:
 \begin{align*}
& \sum_{m\ne m'} \int_0^1\ e^{2\pi i(m-m')x}\ \widehat{\Gamma}\left(x,2\pi m +\zeta\right)\
\overline{\widehat{\Gamma}\left(x,2\pi m' +\zeta\right)}\ dx\\
 &\ =\sum_{m\ne m'}\left(\frac{1}{2\pi i (m-m')}\right)^2\nn\\
 &\qquad\qquad\times \int_0^1e^{2\pi i(m-m')x} 
 \frac{\D^2}{\D x^2}\left[\widehat{\Gamma}\left(x,2\pi m +\zeta\right)
  \overline{\widehat{\Gamma}\left(x,2\pi m' +\zeta\right)}\right] dx\\
  &=\sum_{j+j'=2}\sum_{m\ne m'}\left(\frac{1}{2\pi i (m-m')}\right)^2\nn\\
 &\qquad\qquad\times \int_0^1e^{2\pi i(m-m')x} 
{\rm coeff}_{_{jj'}}\D_x^j\widehat{\Gamma}\left(x,2\pi m +\zeta\right)
  \overline{\D_x^{j'}\widehat{\Gamma}\left(x,2\pi m' +\zeta\right)} dx,\\
  &\qquad\qquad\qquad\qquad (0\le j,j'\le2)\\
  &\lesssim \sum_{m\ne m'}\left(\frac{1}{2\pi (m-m')}\right)^2\nn\\
 &\ \ \ \times 
 \sum_{j=0}^2\left[\int_0^1
\left| \D_x^j\widehat{\Gamma}\left(x,2\pi m +\zeta\right)\right|^2 dx
\right]^{\frac12}\ 
\sum_{j'=0}^2\left[\int_0^1\left|\D_x^{j'}\widehat{\Gamma}\left(x,2\pi m' +\zeta\right)\right|^2 dx\right]^{\frac12} \ .
 \end{align*}
 The contribution from $m,m'\in I$ with $m=m'$ is:
 \[ \sum_{m\in I}
 \int_0^1\  \left|\widehat{\Gamma}\left(x,2\pi m +\zeta\right)\right|^2 \ dx\ .\]
 It follows that
\begin{align}
  \int_0^1 \left|\Delta_I[\Gamma](x,\zeta)\right|^2 dx &\lesssim\ 
  \sum_{m'\in I}\sum_{m\in I}\frac{1}{1+(m'-m)^2} z_m(\zeta)\ z_{m'}(\zeta) ,\nn\\
 &= \sum_{m'\in\Z}{\bf 1}_I(m')\sum_{m\in\Z}\frac{1}{1+(m'-m)^2}\ {\bf 1}_{I}(m)z_m(\zeta) {\bf
1}_I(m')z_{m'}(\zeta) ,\label{zz-sum}
 \end{align}
  where ${\bf 1}_I(m)=1$ for $m\in I$ and zero otherwise, and 
\begin{equation}  z_m(\zeta) \equiv \sum_{j=0}^2\left[\int_0^1
\left| \D_x^j\widehat{\Gamma}\left(x,2\pi m +\zeta\right)\right|^2 dx
\right]^{\frac12}\ .
\label{zm-def}
\end{equation}
To bound the sum in \eqref{zz-sum}, first apply the Cauchy-Schwarz inequality to the sum over $m'$:
\begin{align}
&\int_0^1 \left|\Delta_I[\Gamma](x,\zeta)\right|^2 dx\nn\\
&\qquad \lesssim\ \left( \sum_{m'\in\Z}{\bf 1}_I(m')\left|\sum_{m\in \Z}\frac{1}{1+(m'-m)^2}
 {\bf 1}_I(m) z_m(\zeta)\right|^2\right)^{\frac12} \times \\
 &\qquad\qquad \left(\sum_{m'\in \Z} |{\bf
1}_I(m')z_{m'}(\zeta)|^2\right)^{\frac12}.\nn
\end{align}
Next, by Young's inequality ($K\in l^1(\Z)\ \implies \|K\star z\|_{l^2(\Z)}\le \|K\|_{l^1(\Z)}\ \|z\|_{l^2(\Z)}$)
\begin{equation}
\int_0^1 \left|\Delta_I[\Gamma](x,\zeta)\right|^2\ dx\lesssim\ \sum_{m'\in I} |z_{m'}(\zeta)|^2 .\label{almost}
\end{equation}
Therefore by \eqref{almost} and \eqref{zm-def} we have
\begin{equation}
\int_0^1 \left|\Delta_I[\Gamma](x,\zeta)\right|^2\ dx\lesssim\ \sum_{j=0}^2 \sum_{m\in I}\int_0^1
\Big|\D_x^j\widehat{\Gamma}\left(x,2\pi m+\zeta\right)\Big|^2 dx .
\end{equation}
This completes the proof of Proposition \ref{psum-I}.
\medskip

We next apply Proposition \ref{psum-I} for the choice of $I=\{n\in\Z: N_1< |n|\le N_2\}$. 
For this choice of $I$, we integrate \eqref{almost1I} with respect to $\zeta$ and obtain:
\begin{align}
 &\int_{-\zeta_{\rm max}}^{\zeta_{\rm max}}d\zeta \int_0^1 dx \left|\Delta_I[\Gamma](x,\zeta)\right|^2 \\
 &\qquad \lesssim\
\sum_{j=0}^2 \int_0^1\ dx \sum_{N_1<|m|\le
N_2} \int_{-\zeta_{\rm max}}^{\zeta_{\rm max}} d\zeta  \Big|\D_x^j\widehat{\Gamma}\left(x,2\pi m+\zeta\right)\Big|^2 \nn
\\
&\qquad \le \sum_{j=0}^2\int_0^1 dx \int_{-\infty}^\infty 
\Big[\sum_{N_1<|m|\le N_2} {\bf 1}_{_{[2\pi m-\zeta_{\rm max},2\pi m+\zeta_{\rm max} ]}}(\xi)
|\D_x^j\widehat{\Gamma}\left(x,\xi\right)|^2\Big]d\xi 
\label{almost2}
\end{align}
which tends to zero as $N_2>N_1\to\infty$ by \eqref{Gamma-conditions2a} and the Plancherel identity.
Since, $\Delta_I(x,\zeta)\ =\ R_{N_2}[\Gamma](x,\zeta)-R_{N_1}[\Gamma](x,\zeta)$, the sequence
$\{R_{N}[\Gamma]\}_{N\ge1}$ is Cauchy in $L^2([0,1]\times[-\zeta_{\rm max},\zeta_{\rm max}])$ and has a limit which we
denote by
\begin{equation}
R[\Gamma](x,\zeta) = 2\pi\ \sum_{n\in\Z} e^{2\pi i n x}\widehat{\Gamma}\left(x,2\pi n+\zeta\right) .\label{Rdef}
\end{equation}
This completes the proof of Claim 1.
\medskip

\nit{\bf Proof of Claim 2:}\  Pick $K\in\Z_+$ such that $2\pi K>\zeta_{\rm max}$. 
 From the definition of $L[\Gamma](x,\zeta)$ \eqref{Ldef}, the fact that
 $\zeta\mapsto \sum_{n\in\Z} e^{- i\zeta n}\Gamma(x,x+n)$ is the Fourier series of a $2\pi-$ periodic function
 and the Parseval relation, we have:
 \begin{align*}
 \int_{-\zeta_{\rm max}}^{\zeta_{\rm max}}d\zeta | L_N[\Gamma](x,\zeta) |^2 &= 
 \int_{-\zeta_{\rm max}}^{\zeta_{\rm max}}d\zeta \Big| e^{- i\zeta x}\sum_{|n|\le N} e^{- i\zeta n}\Gamma(x,x+n)
\Big|^2\nn\\
  & \le  \int_{-2\pi K}^{2\pi K}d\zeta \Big| \sum_{|n|\le N} e^{- i\zeta n}\Gamma(x,x+n) \Big|^2 \\
  & \le 4\pi K\
\sum_{n\in\Z}
\Big|\Gamma(x,x+n) \Big|^2.
 \end{align*}
  Therefore, by \eqref{traceGamma} 
\begin{align} \int_{-\zeta_{\rm max}}^{\zeta_{\rm max}}d\zeta | L_N[\Gamma](x,\zeta) |^2\
  &\lesssim 4\pi K\sum_{j=0}^2\sum_{n\in\Z}\int_0^1 \left|\D_y^j\Gamma(y,x+n)\right|^2 dy.
    \nn\end{align}
    Integration over $x$ implies the bound \eqref{Lbound}. Namely,
    \begin{align} \int_0^1dx \int_{-\zeta_{\rm max}}^{\zeta_{\rm max}}d\zeta  | L_N[\Gamma](x,\zeta) |^2\
  &\lesssim 4\pi K\ \sum_{j=0}^2\sum_{n\in\Z}\int_0^1 \int_0^1dx\left|\D_y^j\Gamma(y,x+n)\right|^2 dy\nn\\
  &= 4\pi K\ \sum_{j=0}^2\sum_{n\in\Z}\int_n^{n+1}\left|\D_y^j\Gamma(y,X)\right|^2dX dy\nn\\
  &=4\pi K\  \sum_{j=0}^2\int_\R\left|\D_y^j\Gamma(y,X)\right|^2dX dy\ =\ 4\pi K\  \|\Gamma\|_{\mathbb{H}^2}^2.
    \nn\end{align}
    Since $L_N[\Gamma]\to L[\Gamma]$ in $L^2([0,1]\times[-\zeta_{\rm max},\zeta_{\rm max}])$ it follows that 
   $L:\Gamma\mapsto L[\Gamma]$ is bounded from 
    $\mathbb{H}^2$ to $L^2([0,1]\times[-\zeta_{\rm max},\zeta_{\rm max}]);dxd\zeta)$.
   \medskip
   
   Finally, we turn to the bound on $\Gamma\mapsto R[\Gamma]$. By \eqref{Rdef}, \eqref{RintI} and Proposition
\ref{psum-I}, with $I=\{-N\le |m|\le N\}$, we have
   \begin{align}
   \int_0^1 |R_N[\Gamma](x,\zeta)|^2 dx\ &=\ \int_0^1 |\Delta_I(x,\zeta)|^2 dx\nn\\
   &\lesssim \sum_{j=0}^2\sum_{|m|\le N} \int_0^1dx |\D_x^j\widehat{\Gamma}(x,2\pi m+\zeta)|^2\ .
   \nn\end{align}
Integration over $[-\zeta_{\rm max},\zeta_{\rm max}]$ with respect to $\zeta$ and applying the Plancherel identity gives
for every $N\in\Z_+$:
   \begin{align*}
&\int_{-\zeta_{\rm max}}^{\zeta_{\rm max}}d\zeta \int_0^1 dx \left|R_N[\Gamma](x,\zeta)\right|^2 \\
&\qquad\lesssim \sum_{j=0}^2\ \int_0^1 dx \int_\R {\bf 1}_{_{[2\pi m-\zeta_{\rm max},2\pi m+\zeta_{\rm max}]}}(\xi)\
\Big|\D_x^j\widehat{\Gamma}\left(x,\xi\right)\Big|^2 d\xi \\
&\qquad\lesssim C(K)\ \sum_{j=0}^2\int_0^1 dx  \int_\R  \Big|\D_x^j \Gamma\left(x,X\right)\Big|^2 dX 
= C(K)\ \|\Gamma\|_{\mathbb{H}^2}^2.
\end{align*}
Since $R_N[\Gamma]\to R[\Gamma]$ in $L^2([0,1]\times[-\zeta_{\rm max},\zeta_{\rm max}]);dxd\zeta)$ we obtain the desired
bound \eqref{Rbound}. 
This completes the proof of   Theorem \ref{psum-L2a} (Theorem \ref{psum-L2}), the Poisson summation formula in $L^2_{\rm
loc}$.

\chapter{1D Dirac points and Floquet-Bloch Eigenfunctions}
\label{linear-band-crossing}

\section{Conditions ensuring a 1D Dirac point; proof of Theorem \ref{conditions-for-dirac}}
\label{conds4dirac-appendix}

Let $V_\ee\in L^2_\ee$. We consider the Floquet-Bloch eigenvalue problem \eqref{FB-evp}
\begin{align}
H_{V_\ee}(k) p\ &=\ E p,\quad  p(x+1;k)=  p(x;k),\ \ {\rm where}\label{fb-p}\\
H_{V_\ee}(k)\ &\equiv\ -(\D_x+ik)^2+V_\ee(x) \ , \label{HVofk-def}
\end{align}
for $k$ near $k_\star=\pi$. We write
 \[ k=k_{\star}+{k'}\ ,\]
 where $ \abs{{k'}}$  will be taken to be small and rewrite the Floquet-Bloch eigenvalue problem as\begin{align}
 \label{dis_rel_prob}
& \left(-(\partial_x+i(k_\star+k'))^2+  V_{\ee}(x)\right)p(x;k_{\star}+{k'})\ =\ E(k_\star+k')\ p(x;k_\star+k'),\\
& p(x+1;k_{\star}+{k'}) = p(x;k_{\star}+{k'}).\nn
\end{align} 
\medskip

Recall that $E_\star$ is a double eigenvalue with corresponding eigenspace spanned by $\{\Phi_1,\Phi_2\}$. We study
\eqref{dis_rel_prob} for $E$ near $E_\star$ and  $k'$ small as a problem in the perturbation theory of a degenerate
eigenvalue. Let 
\begin{equation}
p_j(x)=e^{-ik_\star x}\Phi_j(x)
\label{pj-def1}\end{equation}
and seek a solution  of the form:
\begin{align}
E(k_\star+k')&=E_\star+E^{(1)},\label{E-ansatz}\\
 p(x;k_{\star}+{k'})&= p^{(0)} +\ p^{(1)}, \qquad p^{(0)}\equiv\alpha p_1+\beta p_2\ , \label{p-ansatz}\end{align}
where
\begin{equation}
\label{p1-orthog}
\left\langle p_j,p^{(1)}\right\rangle_{L^2[0,1]}=0,\quad j=1,2 .
\end{equation}

Substituting \eqref{E-ansatz}-\eqref{p-ansatz} into \eqref{dis_rel_prob} we obtain the inhomogeneous problem for $
p^{(1)}\in L^2[0,1]$:
\begin{align}
\left(H_{V_\ee}(k_\star)-E_\star\right)p^{(1)} &= \left(2ik'(\D_x+ik_\star)-(k')^2+E^{(1)}\right)p^{(1)}\nn\\
&\qquad +\left(2ik'(\D_x+ik_\star)-(k')^2+E^{(1)}\right)p^{(0)} \nn\\
&\equiv J(\alpha,\beta,{k'},E^{(1)},p^{(1)})\ .
\label{p1-eqn}\end{align}

Introduce the orthogonal projections $P_{\parallel}$ and $P_{\perp}$ defined by:
\begin{align*}
 P_{\parallel}f(x)&=\inner{p_1,f}_{L^2([0,1])}p_1(x)+\inner{p_2,f}_{L^2([0,1])}
p_2(x),  \\ 
 P_{\perp}f(x) &= (I-P_{\parallel})f(x).
\end{align*} 
Equation \eqref{p1-eqn} may be rewritten as  the following system for the unknowns $p^{(1)}=p^{(1)}(x;{k'})$ and
$E^{(1)}=E^{(1)}({k'})$:
\begin{align}
\left(H_{V_\ee}(k_\star)-E_{\star}\right) p^{(1)} &= P_{\perp}J(\alpha,\beta,{k'},E^{(1)},p^{(1)}),\label{p1-perp}\\
0 &= P_{\parallel}J(\alpha,\beta,{k'},E^{(1)},p^{(1)}). \label{p1-para}
\end{align} 
\medskip

In detail, system \eqref{p1-perp}-\eqref{p1-para} reads
\begin{align}
&\left(H_{V_\ee}(k_\star)-E_{\star}\right)p^{(1)} =
P_{\perp}\left(2i{k'}(\D_x+ik_\star)-(k')^2+E^{(1)}\right)p^{(1)}\nn\\
&\qquad\qquad\qquad \qquad+ P_\perp\ \left(2ik'(\D_x+ik_\star) \right)p^{(0)},\label{p1-perp1}\\
 &P_{\parallel}\left(2i{k'}(\partial_x+ik_\star) -{k'}^2 +E^{(1)} )\right)p^{(0)}
+P_{\parallel}\left(2i{k'}(\partial_x+ik_\star\right)p^{(1)}\ = 0 \label{p1-para1}.
\end{align} 

Introduce the resolvent
\begin{equation*}
 R_{k_\star}(E_{\star}) = \left(H_{V_\ee}(k_\star) - E_{\star}\right)^{-1},
\end{equation*} defined as a bounded map from $P_{\perp}L^2[0,1]$ to $P_{\perp}H^2[0,1]$.
 Equation \eqref{p1-perp1} for $p^{(1)}$ can be rewritten as
 \begin{equation}
 \left( I + A \right) p^{(1)}\ =\ R_{k_\star}(E_\star)\ P_\perp\ \left(2ik'(\D_x+ik_\star \right)p^{(0)},
 \label{p1-perp2}
 \end{equation}
 where
 \begin{equation}\label{A-def}
 f\mapsto Af\ \equiv\ R_{k_\star}(E_\star)\ P_{\perp}\left(-2i{k'}(\D_x+ik_\star)+(k')^2-E^{(1)}\right)\ f
 \end{equation}
 is a bounded operator on $H^2_{\rm per}[0,1]$ for any $s$. Furthermore, for $|k'|+|E^{(1)}|$ sufficiently small, the
operator norm of $A$ is less than one, $(I+A)^{-1}$ exists, and hence \eqref{p1-perp2} is uniquely solvable in $P_\perp
H^2_{\rm per}[0,1]$:
 \begin{align}
 p^{(1)}\ &=\ \left( I\ +\ R_{k_\star}(E_\star)\ P_{\perp}\left(-2i{k'}(\D_x+ik_\star)+(k')^2-E^{(1)}\right)\
\right)^{-1}\nn\\
 &\qquad \circ\ R_{k_\star}(E_\star)\ P_\perp\ \left(2ik'(\D_x+ik_\star) \right)p^{(0)}.
 \label{p1-perp3}
 \end{align}
 
Recall that $p^{(0)}=\alpha p_1(x)+\beta p_2(x)$ (equation \eqref{p-ansatz}) and therefore that $p^{(1)}$ is linear
in $\alpha$ and $\beta$. We may therefore write
\begin{equation}
\label{p1-perp4}
 p^{(1)} = p^{(1)}(x;k',E^{(1)}) = k'\ g^{(1)}[{k'},E^{(1)}](x)\alpha +  k'\ g^{(2)}[{k'},E^{(1)}](x)\beta,
\end{equation} 
where $(k',E^{(1)})\mapsto g^{(j)}(k',E^{(1)})$ is a smooth mapping from a neighborhood of
$(0,0)\in\R\times\mathbb{C}$ into $H^2_{\rm per}([0,1])$, which satisfies the bound
\begin{equation*}
 \norm{g^{(j)}(k',E^{(1)})}_{H^2([0,1])} \lesssim\ 1+|k'| + |E^{(1)}|\ ,~~~j=1,2.
\end{equation*}
Note also that 
\begin{equation}
P_\parallel g^{(j)}(k',E^{(1)})=0,\ j=1,2.
\label{Ppar}
\end{equation}

Thus we can substitute \eqref{p1-perp4} into \eqref{p1-para1} and obtain a system of two homogeneous linear equations
for $\alpha$ and $\beta$. To express this system in a compact form we note the following relations:
\begin{align}
\left(\D_x+ik_\star\right)p_j &= e^{-ik_\star x}\ \D_x\ e^{ik_\star x}p_j= e^{-ik_\star x}\D_x\Phi_j\nn,\\
\left\langle p_i,p_j\right\rangle_{L^2[0,1]} &=\left\langle \Phi_i,\Phi_j\right\rangle_{L^2[0,1]}=\delta_{ij},\ j=1,2,
\end{align}
and define
\begin{equation}
G^{(j)}[k',E^{(1)}](x)= e^{ik_\star x}\ g^{(j)}[k',E^{(1)}](x),\ \textrm{ and note}\  \left\langle
\Phi_i,G^{(j)}\right\rangle=0,\ i,j=1,2
\nn\end{equation}
by \eqref{Ppar}. Furthermore, since 
\begin{align}
&\D_x: H^s_{k_\star,\sigma}\to H^{s-1}_{k_\star,\sigma},\ \ \sigma=\ee,\oo\nn,\\
&R_{k_\star}(E_\star): P_\perp H^s_{k_\star,\sigma}\to P_\perp H^{s+2}_{k_\star,\sigma},\ \ \sigma=\ee,\oo,
\label{relations}\end{align}
we have that 
\[ G^{(1)}[k',E^{(1)}]\in H^2_\ee,\ \ \text{and } \ \ G^{(2)}[k',E^{(1)}]\in H^2_\oo\ .\]

Equation \eqref{p1-para1} for $(\alpha, \beta)$ can now be written in the form
\begin{equation}
\mathcal{M}(E^{(1)},{k'})
\begin{pmatrix}
 \alpha \\ \beta
\end{pmatrix} =0,
\label{Mab0}\end{equation}
 with (inner products in \eqref{M-def} are over $L^2([0,1])$)
\begin{align}
\mathcal{M}(E^{(1)},{k'}) &\equiv
\begin{pmatrix}
 E^{(1)}+2i{k'}\inner{\Phi_1,\partial_x\Phi_1}-{k'}^2 &
2i{k'}\inner{\Phi_1,\partial_x\Phi_2} \\
2i{k'}\inner{\Phi_2,\partial_x\Phi_1}&
E^{(1)}+2i{k'}\inner{\Phi_2,\partial_x\Phi_2}-{k'}^2
\end{pmatrix} &\nn\\
&~~+ 
2i({k'})^2\begin{pmatrix}
\inner{\Phi_1,\partial_x G^{(1)}[k',E^{(1)}]} &
\inner{\Phi_1,\partial_x G^{(2)}[k',E^{(1)}]} \\
\inner{\Phi_2,\partial_x G^{(1)}[k',E^{(1)}]} &
\inner{\Phi_2,\partial_x G^{(2)}[k',E^{(1)}]}
\end{pmatrix},
\label{M-def}
\end{align} where $(k',E^{(1)}) \mapsto G^{(j)}[k',E^{(1)}], \D_xG^{(j)}[k',E^{(1)}]$  are  smooth functions of
$(k',E^{(1)})$ in a neighborhood  of $(0,0)$ and
\begin{align}
 &  \|G^{(j)}[k',E^{(1)}]\|_{L^2[0,1]}\ +\
\|\D_xG^{(j)}[k',E^{(1)}]\|_{L^2[0,1]}=\mathcal{O}\left(1+|k'|+|E^{(1)}|\right).
 \label{Gest}\end{align}

\medskip

Thus, $E=E_{\star}+E^{(1)}(k')$ is an eigenvalue for the spectral problem \eqref{dis_rel_prob} if and
only if $E^{(1)}=E^{(1)}(k')$ solves
\begin{equation}
 \text{det}\mathcal{M}\left(E^{(1)},{k'}\right)=0.
 \label{detM0}
\end{equation}

Symmetries can be used to simplify \eqref{M-def}. Indeed, by the relations \eqref{relations} and the fact
that
\begin{equation}
f\in L^2_{k_\star,\ee},\ g\in L^2_{k_\star,\oo}\ \ \implies\ \ \left\langle f,g\right\rangle=0,
\label{oo-ee-orthog}
\end{equation}
 the matrix $\mathcal{M}_{ij}\left(E^{(1)},{k'}\right)$, displayed in \eqref{M-def}, has vanishing off-diagonal entries:
\begin{equation}
\textrm{ $\mathcal{M}_{ij}\left(E^{(1)},{k'}\right)=0$ for $i, j=1,2$ and $i\ne j$}.
\label{diff-ij0}
\end{equation} 

Therefore, $ \text{det}\ \mathcal{M}\left(E^{(1)},{k'}\right)=0$, \eqref{detM0},  implies that either
\begin{equation}
E^{(1)}+2i{k'}\inner{\Phi_1,\partial_x\Phi_1}-{k'}^2+2i{k'}^2\inner{\Phi_1,\partial_x G^{(1)}[k',E^{(1)}]}
=0,\label{M110}
\end{equation}
or 
\begin{equation}
E^{(1)}+2i{k'}\inner{\Phi_2,\partial_x\Phi_2}-{k'}^2+2i{k'}^2\inner{\Phi_2,\partial_x G^{(2)}[k',E^{(1)}]}=0.
\label{M220}
\end{equation}

Furthermore, since $\Phi_1(x)= \mathcal{I}\left[\Phi_2\right](x)=\Phi_2(-x)$, it is easily seen that
\begin{equation}
\left\langle\Phi_1,\partial_x\Phi_1\right\rangle= -{\inner{\Phi_2,\partial_x\Phi_2}}.
\label{11m22}\end{equation} 

Thus, using \eqref{Gest}, we have the two branches of the dispersion locus:
\begin{align}
&\mathcal{F}_-(E^{(1)},k')\ \equiv\ E^{(1)}+\lambda_\sharp\ {k'}\ -{k'}^2+{k'}^2\gamma_1(k',E^{(1)}) =0\label{M110-A},\\
&\mathcal{F}_+(E^{(1)},k')\ \equiv\ E^{(1)}-\lambda_\sharp\ {k'}\ -{k'}^2+{k'}^2\gamma_2(k',E^{(1)})=0,
\label{M220-A}
\end{align}
where $\gamma_j(k',E^{(1)}),\ j=1,2$, are smooth and
\[ \gamma_j(k',E^{(1)})\equiv i\inner{\Phi_j,\partial_x G^{(j)}[k',E^{(1)}]}=\mathcal{O}\left(1+|k'|+|E^{(1)}|\right),\
j=1,2,\ \]
 for $|k'|+|E^{(1)}|$ small.

The constant (``Fermi velocity''), $\lambda_\sharp$, may be expressed in terms of the Fourier coefficients of
$\Phi_1\in L^2_{k_\star,\ee}$:
\begin{align} 
\lamsharp &\equiv 2i\inner{\Phi_1,\partial_x\Phi_1}_{L^2([0,1])}\label{lamsharp_defn}\\
&= 2i\int^{1}_{0}\sum_{m\in2\mathbb{Z}}\overline{c(m)}e^{-ik_{\star} x}e^{-2\pi imx} 
\sum_{p\in2\mathbb{Z}}i(2p\pi+k_{\star}){c}(p)e^{ik_{\star} x}e^{2\pi ipx}dx  \nonumber\\
&=-2\pi\left\{2\sum_{m\in2\mathbb{Z}}m\abs{c_1(m)}^2+1\right\} \ \ \ ({\rm Recall}\ k_\star=\pi).
\label{lamsharp_fourier_exp}
\end{align} 
Thus,  $\lambda_{\sharp}$ is real.

\nit {\bf Nondegeneracy Hypothesis:} $\lambda_\sharp$, given by \eqref{lamsharp_defn}, \eqref{lamsharp_fourier_exp}\ is
non-zero. In Lemma \ref{lam_nonzero_eps_small} below, it is shown that $\lambda_\sharp$ is always nonzero for $\eps$
sufficiently small.
\medskip

Since $\lambda_\sharp\ne0$, the implicit function theorem can be applied to show the existence of a positive constant,
$\zeta_0$, and smooth maps $k'\mapsto\eta_\pm(k')=\mathcal{O}(k')$, defined for $|k'|<\zeta_0$,   such that
\begin{align}
\mathcal{F}_-(E^{(1)}_-(k'),k') = 0\ \ {\rm and}\ \ \mathcal{F}_+(E^{(1)}_+(k'),k') &= 0,\nn
\end{align}
where
\begin{align}
E^{(1)}_-(k)\ &=\ -\lambda_\sharp\ k'\ \left(1+\eta_-(k')\right),\label{E+def}\\
E^{(1)}_+(k)\ &=\ +\lambda_\sharp\ k'\ \left(1+\eta_+(k')\right),\label{E-def}
\end{align}
are defined for $|k'|<\zeta_0$. Note that $\pm\lambda_{\sharp}$ is the linear order term in the expansion of
$E_{\pm}(k)$ around $k_{\star}$ - that is, $E'_{\pm}(k_{\star}) = \pm\lambda_{\sharp}$.

This completes the proof of Theorem \ref{conditions-for-dirac}.
\medskip

\section{Expansion of Floquet-Bloch modes near a 1D Dirac; proof of Proposition
\ref{flo-blo-dirac}\label{flo-blo-appendix}}

Corresponding to the Floquet-Bloch eigenvalue $E_-(k')$, given by \eqref{E-def}, we have the corresponding null-vector,
$(\alpha_-,\beta_-)=(1,0)$ of $\mathcal{M}(E_-(k'),k')$. Via \eqref{p-ansatz} this generates the Floquet-Bloch
eigenstate:
\begin{equation}
\Phi_-(x;k') = c_{-}(k')\left(\ \Phi_1(x) + \xi_-(x;k')\ \right)\ \in\ L^2_{k_\star,\ee},
\label{Phi-}
\end{equation}
where $c_{-}(k')=1+\mathcal{O}_\pm(k')$ is a normalization constant
and $k'\mapsto\xi_-(x;k'),\ c_-(k')$ are smooth for $x\in[0,1],\ |k'|<\zeta_0$.
 Similarly, the Floquet-Bloch eigenvalue $E_+(k')$, given by \eqref{E+def},
and corresponding null vector $(\alpha_+,\beta_+)=(0,1)$ of $\mathcal{M}(E_+(k'),k')$,
 generates, via \eqref{p-ansatz}, the Floquet-Bloch eigenstate:
\begin{align}
\Phi_+(x;k') &= c_+(k')\left(\ \Phi_2(x) + \xi_+(x;k')\ \right)\ \in\ L^2_{k_\star,\oo},
\label{Phi+}
\end{align}
where $c_{+}(k')=1+\mathcal{O}_\pm(k')$ is a normalization constant
and $k'\mapsto\xi_+(x;k'),\ c_{+}(k')$ are smooth for $x\in[0,1],\ |k'|<\zeta_0$.

\chapter{Dirac Points for Small Amplitude Potentials}\label{dirac-pt-small-eps}
Consider the Floquet-Bloch eigenvalue problem \eqref{FB-evp} 
\begin{equation}
\label{eps_nonzero_prob}
 \begin{split}
 &H^{(\eps)}\Phi = E\Phi, ~~~ \Phi(x+1) = e^{i k_{\star}}\Phi(x), ~~~ \text{where} \\
&H^{(\eps)} \equiv -\partial_x^2+\varepsilon V_{\ee}(x).
\end{split}
\end{equation} 

We first show that for $|\eps|$ small there exist quasi-momentum / energy pairs  
$(k_\star,E^{(\eps)}_{\star,n})$ with $(\pi, E^{(\eps)}_{\star,n}))|_{\eps=0}=(\pi,\pi^2(2n+1)^2)$, which  are Dirac
points in the sense of Definition \ref{dirac-pt-gen}. Fix
$n$ and to simplify notation set $E^{(\eps)}_{\star,n}\equiv E_{\star}$. By Theorem \ref{conditions-for-dirac} we
need to prove that, for $\eps$ sufficiently small, the following conditions hold.
\medskip
\begin{enumerate}[I.]
 \item $E_{\star}$ is a simple $L^2_{k_{\star},\ee}$- eigenvalue of $H^{(\eps)}$ with 
1-dimensional
eigenspace \[\textrm{span}\{\Phi_1(x)\}\subset L^2_{k_{\star},\ee}.\]
 \item $E_{\star}$ is a simple $L^2_{k_{\star},\oo}$- eigenvalue of $H^{(\eps)}$ with 
1-dimensional
eigenspace \[\textrm{span}\Big\{\Phi_2(x)=\mathcal{I}\left[\Phi_1\right](x)=\Phi_1(-x)\Big\}\subset
L^2_{k_{\star},\oo}.\]
 \item Non-degeneracy condition:
 \begin{equation}
  0\ne\lamsharp \equiv 2i\left\langle\Phi_1,\D_x\Phi_1\right\rangle\ = 
-2\pi\left\{2\sum_{m\in2\mathbb{Z}}m\abs{c_1(m)}^2+1\right\},\ \label{lambda-sharp-eps}
 \end{equation} where $\{c_1(m)\}_{m\in\Z}$ denote the $L^2_{k_\star,\ee}-$ Fourier coefficients of
$\Phi_1(x)$.
\end{enumerate}
\medskip
 Since $\eps V_\ee(x)\in L^2_\ee$, it follows from Proposition \ref{H-action-on-L_e} that property I implies property
II. Thus to prove Theorem \ref{thm:dirac-pt}, it is sufficient verify properties I and III:
\medskip
\begin{lemma}
 \label{property1-proof} (Property I)
 There exists an $\eps_0>0$ such that for all $\eps\in(-\eps_0,\eps_0)$, $E_{\star}$ is a simple $L^2_{k_{\star},\ee}$-
eigenvalue of $H^{(\eps)}$ with 1-dimensional eigenspace
 \[\textrm{span}\{\Phi_1(x)\}\subset L^2_{k_{\star},\ee}.\]
\end{lemma}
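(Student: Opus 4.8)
\textbf{Proof proposal for Lemma \ref{property1-proof} (Property I).} The plan is to treat this as a standard degenerate perturbation theory problem for the $k_\star=\pi$ Floquet-Bloch operator $H^{(\eps)}(k_\star) = -(\D_x+i\pi)^2 + \eps V_\ee(x)$ acting on $H^2_{\mathrm{per}}([0,1])$, localizing on the relevant eigenvalue. First I would recall from Section \ref{eps_zero} that at $\eps=0$ the eigenvalue $E^{(0)}_{\star,n} = \pi^2(2n+1)^2$ has a two-dimensional $L^2_{k_\star}$-eigenspace, spanned by $\Phi^{(0)}_m(x) = e^{i\pi x}e^{2\pi imx}$ and $\Phi^{(0)}_m(-x) = e^{i\pi x}e^{-2\pi i(m+1)x}$ with $m=n$, and crucially that this eigenspace decomposes into a one-dimensional piece in $L^2_{\pi,\ee}$ and a one-dimensional piece in $L^2_{\pi,\oo}$, as recorded in the enumerated list at the end of Section \ref{eps_zero}. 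Since $V_\ee\in L^2_\ee$, Proposition \ref{H-action-on-L_e} tells us $H^{(\eps)}$ preserves the splitting $H^2_{\pi} = H^2_{\pi,\ee}\oplus H^2_{\pi,\oo}$, so I can work entirely within the invariant subspace $L^2_{\pi,\ee}$ and prove simplicity there.

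The key step is then an application of analytic (Rellich–Kato) perturbation theory to the self-adjoint holomorphic family $\eps \mapsto H^{(\eps)}(k_\star)\big|_{H^2_{\pi,\ee}}$. At $\eps=0$ the restriction of $E^{(0)}_{\star,n}$ to $L^2_{\pi,\ee}$ is a \emph{simple} eigenvalue (this is exactly the content of item 1 in the Section \ref{eps_zero} list — for $m=n$ even, the eigenspace in $L^2_{\pi,\ee}$ is one-dimensional, spanned by $\Phi^{(0)}_n$; for $m=n$ odd, it is spanned by $\Phi^{(0)}_n(-x)$, again one-dimensional). One also needs that this eigenvalue is isolated: the other $L^2_{\pi,\ee}$-eigenvalues of $-\partial_x^2$ at $k_\star=\pi$ are $\pi^2(2\ell+1)^2$ for $\ell\neq n$, which are bounded away from $\pi^2(2n+1)^2$. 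Standard perturbation theory (e.g. Kato, \emph{Perturbation Theory for Linear Operators}, Ch.\ VII, or the Riesz-projection argument) then yields $\eps_0>0$, a real-analytic eigenvalue branch $\eps\mapsto E^{(\eps)}_{\star,n}$ with $E^{(0)}_{\star,n}=\pi^2(2n+1)^2$, and a corresponding real-analytic eigenfunction branch $\eps\mapsto \Phi_1(\cdot\,;\eps)\in L^2_{\pi,\ee}$, such that for $|\eps|<\eps_0$ the eigenvalue $E^{(\eps)}_{\star,n}$ is a \emph{simple} $L^2_{\pi,\ee}$-eigenvalue with one-dimensional eigenspace $\mathrm{span}\{\Phi_1\}$. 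Smoothness of $\Phi_1$ in $x$ follows from elliptic regularity since $V_\ee\in C^\infty$.

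The main obstacle — really the only subtle point — is ensuring the eigenvalue does not collide with a \emph{neighboring} band as $\eps$ varies, i.e.\ that one can choose $\eps_0$ uniformly so that $E^{(\eps)}_{\star,n}$ stays isolated within the spectrum of $H^{(\eps)}(k_\star)\big|_{H^2_{\pi,\ee}}$; this is handled by a resolvent/Neumann-series bound showing that the spectral projection onto a small fixed circle $\Gamma$ around $\pi^2(2n+1)^2$ depends continuously (indeed analytically) on $\eps$ for $|\eps|$ small, hence has constant rank one. I would also remark that since $\Phi_1\in L^2_{\pi,\ee}$ and $V_\ee$ has minimal period $1/2$, the function $\Phi_2 := \mathcal{I}[\Phi_1]=\Phi_1(-\cdot)$ lies in $L^2_{\pi,\oo}$ by Proposition \ref{Imaps} and is, by Proposition \ref{H-action-on-L_e}, a linearly independent eigenfunction with the same eigenvalue — this gives Property II for free and shows the full $L^2_{k_\star}$-multiplicity is exactly two. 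Property III ($\lambda_\sharp\neq 0$) is proved separately in Lemma \ref{lam_nonzero_eps_small} (referenced below), by computing $\lambda_\sharp$ at $\eps=0$ from the Fourier coefficients of $\Phi^{(0)}_n$ via formula \eqref{lamsharp_fourier_exp} and using continuity of $c_1(m)=c_1(m;\eps)$ in $\eps$; at $\eps=0$ exactly one Fourier mode is active so the sum in \eqref{lamsharp_fourier_exp} evaluates to a nonzero multiple of $\pi$, and this persists for small $\eps$.
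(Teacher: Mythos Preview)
Your proposal is correct and rests on the same key observation as the paper: restricting to the invariant subspace $L^2_{\pi,\ee}$ the unperturbed eigenvalue $E^{(0)}_{\star,n}=\pi^2(2n+1)^2$ is \emph{simple} and isolated, so standard perturbation theory for simple eigenvalues applies. The difference is only in packaging. You invoke Kato--Rellich analytic perturbation theory (equivalently, the Riesz projection argument) as a black box, while the paper carries out the same mechanism by hand via a Lyapunov--Schmidt reduction: projecting onto and orthogonal to $\mathrm{span}\{\Phi^{(0)}\}$, solving the orthogonal component by a Neumann series for the resolvent $R(E^{(0)})Q_\perp$, and then reducing to a scalar equation $G(E^{(1)};\eps)=0$ which is dispatched by the implicit function theorem (finding $E^{(1)}(0)=v_0$ and $\partial_{E^{(1)}}G(v_0,0)=-1$). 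Your route is shorter and entirely adequate for the statement as written; the paper's explicit construction has the minor advantage of producing the leading-order formula $E^{(\eps)}_{\star,n}=\pi^2(2n+1)^2+\eps v_0+\mathcal{O}(\eps^2)$ and the explicit expression \eqref{phi1_soln} for the corrector $\Phi^{(1)}$, which feeds directly into the computation of $\lambda_\sharp$ in Lemma \ref{lam_nonzero_eps_small}. Your remarks on Properties II and III are accurate and match the paper's treatment.
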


\begin{lemma}
 \label{lam_nonzero_eps_small} (Property III)
 There exists an $\eps_0>0$ such that for all $\eps\in(0,\eps_0)$, the ``Fermi velocity'', $\lambda_{\sharp}$, defined
in \eqref{lambda-sharp-eps} for $H^{(\eps)}$ is nonzero.
\end{lemma}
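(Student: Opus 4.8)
\textbf{Proof proposal for Lemma \ref{lam_nonzero_eps_small}.}
The plan is to treat $\lambda_\sharp$ as a function of $\eps$, show it is continuous near $\eps=0$, and evaluate it at $\eps=0$, where it will be seen to be nonzero; the conclusion then follows on a (possibly smaller) interval $(-\eps_0,\eps_0)\supset(0,\eps_0)$. First I would compute $\lamsharp\big|_{\eps=0}$ directly from the Fourier formula \eqref{lambda-sharp-eps}. By Section~\ref{eps_zero}, the $L^2_{k_\star,\ee}$-eigenspace of $H^{(0)}=-\D_x^2$ at the energy $E^{(0)}_{\star,n}=\pi^2(2n+1)^2$ is one-dimensional, spanned by $\Phi^{(0)}_n(x)=e^{i\pi x}e^{2\pi inx}$ if $n$ is even and by $\Phi^{(0)}_n(-x)=e^{i\pi x}e^{-2\pi i(n+1)x}$ if $n$ is odd. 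In either case $\Phi_1$ has a single nonzero $L^2_{k_\star,\ee}$-Fourier coefficient $c_1(m_0)=1$, with $m_0=n$ or $m_0=-(n+1)$, and $m_0\in2\Z$ in both cases. Substituting into \eqref{lambda-sharp-eps} gives $\lamsharp\big|_{\eps=0}=-2\pi(2m_0+1)=\pm2\pi(2n+1)\ne0$.

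Next I would establish that $\eps\mapsto\lamsharp(\eps)$ is continuous at $\eps=0$. By Lemma~\ref{property1-proof}, for $|\eps|$ small $E_\star=E^{(\eps)}_{\star,n}$ is a \emph{simple} $L^2_{k_\star,\ee}$-eigenvalue of $H^{(\eps)}$; moreover, since $\eps V_\ee\in L^2_\ee$, Proposition~\ref{H-action-on-L_e} shows $H^{(\eps)}$ leaves $L^2_{\pi,\ee}$ invariant, so one may apply Kato--Rellich analytic perturbation theory to the restriction $H^{(\eps)}\big|_{L^2_{\pi,\ee}}$, which is an analytic (indeed affine) self-adjoint family of type (A) with fixed domain $H^2_{\pi,\ee}$, and for which $E^{(0)}_{\star,n}$ is an isolated simple eigenvalue. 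This furnishes an analytic branch $\eps\mapsto(E^{(\eps)}_{\star,n},\Phi_1(\cdot;\eps))$ with $\Phi_1(\cdot;\eps)$ analytic into $H^2_{\pi,\ee}$ and normalized in $L^2[0,1]$. In particular $\eps\mapsto\D_x\Phi_1(\cdot;\eps)$ is continuous into $L^2[0,1]$, so $\lamsharp(\eps)=2i\langle\Phi_1(\cdot;\eps),\D_x\Phi_1(\cdot;\eps)\rangle_{L^2[0,1]}$ is continuous near $\eps=0$. Combining the two facts: since $\lamsharp(0)\ne0$ and $\lamsharp$ is continuous, there is $\eps_0>0$ (no larger than the one from Lemma~\ref{property1-proof}) with $\lamsharp(\eps)\ne0$ for all $\eps\in(-\eps_0,\eps_0)$, in particular for $\eps\in(0,\eps_0)$. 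Together with Lemma~\ref{property1-proof}, Proposition~\ref{H-action-on-L_e}, and Theorem~\ref{conditions-for-dirac}, this completes the proof of Theorem~\ref{thm:dirac-pt}.

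The main obstacle is the continuity step: one must control the perturbed eigenfunction $\Phi_1(\cdot;\eps)$ in a topology strong enough to pass to the limit inside $\langle\Phi_1,\D_x\Phi_1\rangle$, i.e.\ with one derivative. This is handled by the analytic perturbation theory above, reinforced if desired by elliptic regularity applied to $(-\D_x^2+\eps V_\ee-E^{(\eps)}_{\star,n})\Phi_1(\cdot;\eps)=0$: boundedness and smoothness of $V_\ee$ upgrade $L^2$-convergence of the eigenpair to $H^2$-convergence, which is more than sufficient. Alternatively, one can avoid abstract perturbation theory entirely by reading off $\Phi_1(\cdot;\eps)$ and $E^{(\eps)}_{\star,n}$ from the Lyapunov--Schmidt construction used in the proof of Lemma~\ref{property1-proof}, where the dependence on $\eps$ is manifestly smooth.
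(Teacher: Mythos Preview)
Your proof is correct and follows essentially the same line as the paper's: compute $\lambda_\sharp$ at $\eps=0$ from the explicit $\eps=0$ eigenfunction (obtaining $\pm 2\pi(2n+1)\ne 0$) and use continuity in $\eps$ to conclude. The paper carries this out even more directly by substituting the expansion $\Phi_1(x)=e^{ik_\star x}e^{2\pi i n x}+\mathcal{O}(\eps)$ from the Lyapunov--Schmidt argument of Lemma~\ref{property1-proof} into the inner-product definition $\lamsharp=2i\langle\Phi_1,\partial_x\Phi_1\rangle$, yielding $\lamsharp=-2\pi(2n+1)+\mathcal{O}(\eps)$; this is exactly the alternative route you mention in your final paragraph.
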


\medskip

\textit{Proof of Lemma \ref{property1-proof}}: Recall that $E^{(0)}\equiv E^{(0)}_{\star,n}=\pi^2(2n+1)^2$ is a simple 
$L^2_{k_\star,e}$ eigenvalue of \eqref{eps_nonzero_prob} for $\eps=0$, with eigenspace 
\begin{equation}
\label{even-espace}
{\rm span}\left\{ \Phi^{(0)}_n(x)=e^{i\pi x}\ e^{2\pi i n x}\right\}\subset L^2_{k_\star,\ee},
\quad \textrm{if $n$ is even},
\end{equation}
and eigenspace
\begin{equation}
\label{odd-espace}
{\rm span}\left\{ \Phi^{(0)}_n(-x)=e^{i\pi x}\ e^{2\pi i (-n-1) x}\right\}\subset L^2_{k_\star,\ee},
\quad \textrm{ if $n$ is odd.}
\end{equation}
Without loss of generality, we may assume that $n$ is even and denote the corresponding eigenfunction as
$\Phi^{(0)}(x)$.

We seek a solution to \eqref{eps_nonzero_prob} of the form
\begin{equation}
 \label{eps_exp}
 \begin{split}
\Phi^{(\eps)}(x) &= \Phi^{(0)}(x) + \varepsilon \Phi^{(1)}(x),\\
E^{(\eps)} &= E^{(0)} + \varepsilon E^{(1)},
\end{split}
\end{equation} where $\Phi^{(1)}$ and $E^{(1)}$ are $\eps-$ dependent corrections and
$|\eps|\ne0$ and small. Substituting \eqref{eps_exp} into \eqref{eps_nonzero_prob}, we obtain the inhomogeneous problem
for $\Phi^{(1)}\in L^2_{k_\star,\ee}$:
\begin{equation}
 \label{eps_nonzero_prob_expanded}
\left(-\partial_x^2 - E^{(0)}\right)\Phi^{(1)} + \left(V_{\ee} -
E^{(1)}\right)\left(\Phi^{(0)}+\varepsilon \Phi^{(1)}\right)=0.
\end{equation}

We proceed with a Lyapunov-Schmidt reduction argument. Define the orthogonal projection operators
$Q_{\parallel}$ and $Q_{\perp}$:
\begin{equation*}
 \label{epssmlbig10}
 Q_{\parallel}f(x)=\inner{\Phi^{(0)},f}_{L^2[0,1]}\Phi^{(0)}(x),~~~\text{and}~~~  Q_{\perp}f(x)
= (I-Q_{\parallel})f(x).
\end{equation*} 
Equation \eqref{eps_nonzero_prob_expanded} may be rewritten as the following system for the unknowns
$\Phi^{(1)}=\Phi^{(1)}(x)$ and $E^{(1)}$:
\begin{align}
 \label{eps_orthog_proj}
 &Q_{\parallel}\left(V_{\ee} - E^{(1)}\right)\left(\Phi^{(0)}+\varepsilon \Phi^{(1)}\right) = 0, \\
 \label{eps_projected}
 &\left(-\partial_x^2 - E^{(0)}\right)\Phi^{(1)} =  -Q_{\perp}\left(V_{\ee} - E^{(1)}\right)
\left(\Phi^{(0)}+\varepsilon \Phi^{(1)}\right).
\end{align} 

Next, we introduce the bounded resolvent operator
\begin{equation*}
 R(E^{(0)}) = \left(-\partial_x^2 - E^{(0)}\right)^{-1}: 
 Q_{\perp}L^2_{k_{\star},\ee} \to Q_{\perp}H^2_{k_{\star},\ee},
\end{equation*} 
and rewrite equation \eqref{eps_projected} as 
\begin{equation}
 \label{eps_projected_cpt}
 \left(I+\mathcal{A}(\varepsilon)\right)\Phi^{(1)}(x) = -R(E^{(0)})Q_{\perp} \left(V_{\ee}(x) - E^{(1)}\right)
\Phi^{(0)},
\end{equation} where
\begin{equation*}
 \mathcal{A}(\varepsilon) = \varepsilon R(E^{(0)})Q_{\perp} \left(V_{\ee} - E^{(1)}\right),
\end{equation*} is a bounded operator on $H^2_{k_\star,\ee}$. 
Furthermore, for $\eps$ sufficiently small, $\mathcal{A}(\varepsilon)$ has norm less than one. Therefore $(I+\mathcal{A}(\varepsilon))^{-1}$ exists and
\eqref{eps_projected} is uniquely solvable in $Q_\perp H^2_{k_\star,\ee}$:
\begin{equation}
\label{phi1_soln}
\Phi^{(1)} = -\left(I+\varepsilon R(E^{(0)})Q_{\perp} \left(V_{\ee}(x) - E^{(1)}\right)\right)^{-1} \circ 
R(E^{(0)})Q_{\perp} \left(V_{\ee}(x) - E^{(1)}\right)\Phi^{(0)}.
\end{equation}

Substituting \eqref{phi1_soln} into \eqref{eps_orthog_proj}, yields algebraic problem with $E^{(1)}$ as
the only unknown:
\begin{equation}
\label{G-def}
G(E^{(1)};\varepsilon) \equiv \inner{\Phi^{(0)},\left(V_{\ee} - E^{(1)}\right)
\left(\Phi^{(0)}+\varepsilon\Phi^{(1)}\right)}_{L^2[0,1]}=0.
\end{equation} We solve $G(E^{(1)};\varepsilon)=0$ for $E^{(1)}$ using the implicit function theorem.

For $\varepsilon=0$, equation \eqref{G-def}, the Fourier description of the potential $V_\ee$ in \eqref{Vxs} and the definition
of $\Phi^{(0)}$ in \eqref{even-espace} imply
\begin{equation}
 \begin{split}
 \label{epssmlbig19}
 E^{(1)} &= \inner{\Phi^{(0)},V_{\ee}\Phi^{(0)}}_{L^2[0,1]}=v_{0}.
 \end{split}
\end{equation} Moreover, $G(E^{(1)};\varepsilon)$ is analytic in a neighborhood  of $(E^{(1)},\varepsilon)=(v_{0},0)$,
and it is straightforward to check directly that $\partial_{E^{(1)}} G(v_{0},0)=-1$. Therefore,  by the implicit
function
theorem, there exists a $\varepsilon_0>0$ such that for $\varepsilon\in(-\eps_0,\varepsilon_0)$,
there is a continuous function $\varepsilon\mapsto E^{(1)}(\varepsilon)$, satisfying $E^{(1)}(0)=v_0$ and 
\begin{equation*}
 \label{epssmlbig20}
 G(E^{(1)}(\varepsilon),\varepsilon) = 0~~~\text{for}~\varepsilon\in(-\eps_0,\varepsilon_0).
\end{equation*}

It follows from the expansions in \eqref{eps_exp} that, for $\varepsilon\in(-\eps_0,\varepsilon_0)$, $E_{\star}\equiv
E^{\eps}$ is a simple $L^2_{k_\star,\ee}-$ eigenvalue with corresponding eigenfunction
$\Phi_1(x)\equiv\Phi^{(\eps)}(x)$. This completes the proof of Lemma \ref{property1-proof}.

\medskip

\textit{Proof of Lemma \ref{lam_nonzero_eps_small}}: We continue to assume, without loss of generality, that $n$ is
even. From \eqref{eps_exp} and \eqref{even-espace}, $\Phi_1(x)$ may be expanded:
\begin{align*}
 \Phi_1(x)\equiv\Phi^{(\eps)}(x) = e^{ik_{\star} x}\ e^{2i\pi nx} + \mathcal{O}{(\varepsilon)}.
\end{align*}
Calculating $\lamsharp$ we have:
\begin{align}
 \label{lamsharp_small_eps}
 \lamsharp&=
2i\inner{\Phi_1,\partial_x\Phi_1}_{L^2([0,1])}\nonumber\\
 &= 2i\int^{1}_{0}e^{-ik_{\star} x}e^{-2\pi inx}\partial_x\left(e^{ik_{\star} x}e^{2\pi inx}\right)dx +
\mathcal{O}(\varepsilon)\nonumber\\
&=-2\left(2n\pi+k_{\star}\right) + \mathcal{O}{(\varepsilon)} = 
-2\pi\left(2n+1\right) + \mathcal{O}{(\varepsilon)}.
\end{align} 
Therefore, for $\eps$ sufficiently small, $\lamsharp\ne0$.
This completes the proof of Lemma \ref{lam_nonzero_eps_small} and therewith the proof of Theorem \ref{thm:dirac-pt}.

\chapter{Genericity of Dirac Points - 1D and 2D cases}\label{tC-discrete!}

Theorem \ref{thm:dirac-pt} establishes the existence of a sequence of Dirac points of $H^{(\varepsilon)}=-\D_x^2+ V_\ee(x)$   in the sense of Definition \ref{dirac-pt-gen} for all $\varepsilon$ sufficiently small and non-zero. That  is, 
 for each fixed   $n\ge1$, there exists $\eps_0=\eps_0(n)>0$ and  a real-valued  continuous function, defined on $(-\eps_0,\eps_0)$:
 \[ \eps\mapsto E^{(\eps)}_{\star,n},\ \ {\rm with}\
\left.E^{(\eps)}_{\star,n}\right|_{\eps=0}=\pi^2(2n+1)^2, \]
  such that
  $(k_\star=\pi,E^{(\eps)}_{\star,n})$ is a Dirac point (linear band crossing) in the sense of Definition \ref{dirac-pt-gen}
 with symmetry $\mathcal{S}=\mathcal{I}$ and decomposition $H^2_{k_\star}=H^2_{k_\star,\ee}\oplus H^2_{k_\star,\oo}$.
 
The article \cite{FW:12}  studies the operators $H_h^{(\eps)}=-\Delta +\eps V_h$, where $V_h$ is a
 a honeycomb lattice potential on $\R^2$, satisfying a simple non-degeneracy condition. For $H^{(\eps)}_h$ there is an analogous notion of Dirac point (linear band crossing), a quasi-momentum / energy pair at which a conical touching of  adjacent dispersion surfaces occurs; see Theorem 4.1 of \cite{FW:12}.  Part (3) of Theorem 5.1 of \cite{FW:12} states that 
  that for all $\varepsilon\in(-\eps_0,\eps_0)\setminus\{0\}$ 
  the operator $H^{(\eps)}_h$, has a Dirac point $({\bf k}={\bf K}_\star,E=E^\eps_\star)$ where ${\bf K}$ denotes a high-symmetry quasi-momentum (a vertex of the regular hexagonal Brillouin zone) provided   $V_h$ is a two-dimensional honeycomb lattice potential with an appropriate non-degeneracy hypothesis. 
  
  In \cite{FW:12} it is further proved that $H^{(\eps)}_h$ has Dirac points for all real $\eps$ with $|\eps|\ge\eps_0$, except possibly for an exceptional set, $\widetilde{\mathcal{C}}$, which is countable and closed; see Parts (1) and (2) of Theorem 5.1 of \cite{FW:12}.
    The method of \cite{FW:12} can be applied to the operator  $H^{(\varepsilon)}=-\D_x^2+ V_\ee(x)$ to show that 
each map $\eps\mapsto E^{(\eps)}_{\star,n}$ can be continued extended from $(-\eps_0(n),\eps_0(n))$ to all $\R$ minus an exceptional set, $\widetilde{\mathcal{C}}_n$,  which is a countable and closed subset of $\R\setminus(-\eps_0(n),\eps_0(n))$, and such that for all $\eps\in\R\setminus\widetilde{\mathcal{C}}_n$ the quasi-momentum / energy pair 
$(k_\star=\pi,E^{(\eps)}_{\star,n})$ is a Dirac point in the sense of Definition \ref{dirac-pt-gen}
 with symmetry $\mathcal{S}=\mathcal{I}$ and decomposition $H^2_{k_\star}=H^2_{k_\star,\ee}\oplus H^2_{k_\star,\oo}$.
 \medskip
 
\nit {\sl The purpose of this section is to prove:\smallskip

\begin{proposition}\label{discrete!}
For $H^{(\varepsilon)}=-\D_x^2+ V_\ee(x)$, the exceptional set of $\eps-$ values is a discrete subset of $\R\setminus(-\eps_0,\eps_0)$.
 \end{proposition}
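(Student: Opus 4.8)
The plan is to follow the strategy of \cite{FW:12} but to upgrade the conclusion from ``countable and closed'' to ``discrete'' by exploiting analyticity in $\eps$ more carefully. Recall that the conditions of Theorem \ref{conditions-for-dirac} that must persist are: (I) $E_\star$ is a simple $L^2_{k_\star,\ee}$-eigenvalue, (II) $E_\star$ is a simple $L^2_{k_\star,\oo}$-eigenvalue (automatic from (I) by Proposition \ref{H-action-on-L_e}), and (III) $\lamsharp\ne0$. The symmetry decomposition $H^2_{k_\star}=H^2_{k_\star,\ee}\oplus H^2_{k_\star,\oo}$ holds for all $\eps$, since $\eps V_\ee\in L^2_\ee$. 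So the exceptional set $\widetilde{\mathcal{C}}_n$ is precisely the set of $\eps$ where either the band at $E^{(\eps)}_{\star,n}$ fails to be a simple crossing of an even-index and an odd-index eigenvalue, or where $\lamsharp(\eps)=0$.

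First I would set up the analytic framework. For fixed $k_\star=\pi$, the operator $H^{(\eps)}_{k_\star,\ee}=-(\D_x+i\pi)^2+\eps V_\ee$ restricted to $H^2_{k_\star,\ee}$ is a self-adjoint, analytic (in fact entire) family of type (A) in the sense of Kato in the parameter $\eps\in\R$, with compact resolvent. Hence its eigenvalues can be organized into globally defined real-analytic branches $\eps\mapsto \mathcal{E}^{\ee}_j(\eps)$ (Rellich's theorem; see also \cite{RS4}), with corresponding real-analytic eigenprojections away from crossings, and the same for the odd-index operator, giving branches $\eps\mapsto \mathcal{E}^{\oo}_j(\eps)$. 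The branch $\eps\mapsto E^{(\eps)}_{\star,n}$ of Theorem \ref{thm:dirac-pt} is one such even branch $\mathcal{E}^{\ee}_{j(n)}$ that, by Proposition \ref{H-action-on-L_e}, coincides identically with an odd branch $\mathcal{E}^{\oo}_{j(n)}$ (because $\mathcal{I}$ intertwines the even and odd operators, their spectra are identical as multisets, branch by branch). Thus $E^{(\eps)}_{\star,n}$ is automatically a double $L^2_{k_\star}$-eigenvalue for \emph{all} $\eps$; the only way Definition \ref{dirac-pt-gen} can fail at a given $\eps$ is if the even branch $\mathcal{E}^{\ee}_{j(n)}(\eps)$ collides with a \emph{neighboring} even branch $\mathcal{E}^{\ee}_{j(n)\pm1}(\eps)$ (destroying simplicity within $L^2_{k_\star,\ee}$), or if $\lamsharp(\eps)=0$.

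Next I would argue discreteness of each of these bad sets. For the simplicity condition: consider the analytic function $\Delta_n(\eps)\equiv \bigl(\mathcal{E}^{\ee}_{j(n)}(\eps)-\mathcal{E}^{\ee}_{j(n)+1}(\eps)\bigr)\bigl(\mathcal{E}^{\ee}_{j(n)}(\eps)-\mathcal{E}^{\ee}_{j(n)-1}(\eps)\bigr)$, which is real-analytic on $\R$ (here one uses that on any compact $\eps$-interval only finitely many branches can approach $E^{(\eps)}_{\star,n}$, by Weyl asymptotics, so ``neighboring branch'' is well-defined and the relevant differences are analytic). At $\eps=0$ we have $\Delta_n(0)\ne0$ since $E^{(0)}_{\star,n}=\pi^2(2n+1)^2$ is a simple even eigenvalue with the neighboring even eigenvalues at $\pi^2(2(n\pm1)+1)^2$ strictly separated. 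Hence $\Delta_n$ is not identically zero on any interval, so its zero set is discrete in $\R$. For condition (III): on the complement of the (already discrete) zero set of $\Delta_n$, the eigenprojection onto $\mathrm{span}\{\Phi_1^{(\eps)}\}\subset L^2_{k_\star,\ee}$ is real-analytic, so we may choose a real-analytic normalized eigenfunction branch $\eps\mapsto\Phi_1^{(\eps)}$ locally, and then $\lamsharp(\eps)=2i\inner{\Phi_1^{(\eps)},\D_x\Phi_1^{(\eps)}}_{L^2[0,1]}$ is real-analytic there. By Lemma \ref{lam_nonzero_eps_small}, $\lamsharp(\eps)\ne0$ for $\eps$ near $0$ (indeed $\lamsharp(0)=-2\pi(2n+1)\ne0$), so $\lamsharp$ is not identically zero, and its zero set is discrete away from the zeros of $\Delta_n$. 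The union of these two discrete sets is discrete; intersecting with $\R\setminus(-\eps_0,\eps_0)$ (and noting that on this region Theorem \ref{thm:dirac-pt} already handles the interval about $0$) yields the exceptional set $\widetilde{\mathcal{C}}_n$ as a discrete subset of $\R\setminus(-\eps_0,\eps_0)$. Taking the union over $n$, or rather fixing the particular Dirac point one cares about, gives Proposition \ref{discrete!}.

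The main obstacle I anticipate is the global organization of eigenvalues into analytic branches across points where the even branch $\mathcal{E}^{\ee}_{j(n)}$ collides with neighboring even branches: at such points the notion of ``the'' branch and ``the'' neighboring branch requires care, since analytic branches can cross and one must track them through crossings (Rellich's theorem gives analytic branches globally, but labeling them so that $\Delta_n$ is globally analytic needs the finiteness-of-nearby-branches argument from Weyl asymptotics plus the observation that a finite product of such differences, symmetrized appropriately, is analytic regardless of how the branches are labeled). One clean way around this is to work with a symmetric function of the finitely many eigenvalues in a neighborhood of $E^{(\eps)}_{\star,n}$ — e.g.\ the discriminant/resultant of the characteristic polynomial of $H^{(\eps)}_{k_\star,\ee}$ restricted to a spectral subspace obtained by a contour integral around the relevant eigenvalue cluster — which is manifestly analytic in $\eps$ and vanishes exactly when there is a collision; this is essentially the device used in \cite{FW:12}, and the only new input is to observe it is analytic (not merely continuous) in $\eps$, hence has a discrete rather than merely countable-and-closed zero set, using that it is nonzero at $\eps=0$.
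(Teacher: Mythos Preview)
Your approach is correct and takes a genuinely different route from the paper's. The paper does not invoke Rellich/Kato analytic perturbation theory directly. Instead it builds an entire function $\mathcal{E}_\ee(E,\eps)$ on $\C^2$ (a Fredholm determinant) whose zeros of order $m$ are exactly the $L^2_{k_\star,\ee}$-eigenvalues of multiplicity $m$, proves two complex-analytic lemmas (a Puiseux-series argument showing that the reality constraint $\{F(z,t)=0,\ t\in\R\}\Rightarrow z\in\R$ forces each Puiseux branch to be a bona fide power series, and a Weierstrass-preparation-type factorization $F=G\cdot\prod_\nu(z-g_\nu(t))$ with $G(0,0)\ne0$), and then runs a contradiction argument: if bad $\eps$ accumulated at some $\eps_c$, one picks good $\eps_\nu\uparrow\eps_c$ with Dirac energies $E_\nu\to E_c$, factors $\mathcal{E}_\ee$ near $(E_c,\eps_c)$, and extracts an analytic simple-eigenvalue branch $g_1(\eps)$ (with $m_1=1$) that satisfies (a), (b), (c) for all but finitely many $\eps$ in a neighborhood of $\eps_c$, contradicting the definition of $\eps_c$.

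What each approach buys: your Rellich-based argument is shorter and avoids the Puiseux and Weierstrass-preparation machinery entirely; the paper's determinant-plus-factorization argument is packaged so as to transfer verbatim to the 2D honeycomb case (Remark~\ref{honey-discrete}), where there is an additional condition ($E_\star$ is \emph{not} an $L^2_{{\bf K}_\star,1}$-eigenvalue) that must be controlled simultaneously. Your route would handle that extra condition by the same mechanism (the coincidence of two distinct analytic eigenvalue branches is a discrete set), but you would have to say so. One small point to tighten: Rellich's theorem for self-adjoint type-(A) families in one real parameter gives globally real-analytic \emph{eigenfunctions} as well as eigenvalues, so you may take $\eps\mapsto\Phi_1^{(\eps)}$ analytic on all of $\R$, not merely away from the zeros of $\Delta_n$; then $\lamsharp(\eps)$ is globally real-analytic and its zero set is discrete without having to worry about connectivity of the domain. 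This also dissolves the branch-labeling issue you flag: with globally analytic branches, the ``collision set'' is simply $\bigcup_{j'\ne j(n)}\{\eps:\mathcal{E}^{\ee}_{j(n)}(\eps)=\mathcal{E}^{\ee}_{j'}(\eps)\}$, each term discrete because the branches are distinct at $\eps=0$, and the union locally finite by your Weyl-asymptotics observation.
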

\smallskip

 \nit We carry out the proof 
for $H^{(\varepsilon)}=-\D_x^2+ V_\ee(x)$. Discreteness of the exceptional set holds as well for $H^{(\eps)}_h$ and in Remark \ref{honey-discrete} we explain how to adapt the proof in this section to obtain this result.  Without loss of generality, assume $\eps\ge0$. }
 \medskip
 
The global study of the eigenvalue problem for $H^{(\eps)}$ is facilitated by the following:\smallskip

 \begin{proposition}\label{Eee-def}
 There exists a function  $(E,\eps)\mapsto\mathcal{E}_\ee(E,\eps)$, defined and analytic on $\C^2$ and such that for $\eps$ real,  $E$ is an $L^2_\ee-$ eigenvalue of geometric multiplicity $m$ if and only if $E$ is a root of 
$\mathcal{E}_\ee(E,\eps)=0$ of multiplicity $m$. 
\end{proposition}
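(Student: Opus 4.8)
\textbf{Proof proposal for Proposition \ref{Eee-def}.}

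The plan is to realize $L^2_\ee$-eigenvalues of $H^{(\eps)}=-\D_x^2+\eps V_\ee$ as zeros of the determinant of an analytic family of Fredholm operators, and then invoke the analytic Fredholm theory together with Keldysh's theorem on the factorization of such determinants. First I would recall that $V_\ee\in L^2_\ee$ has minimal period $1/2$, so the operator $H^{(\eps)}$ maps $H^2_{k,\ee}\to L^2_{k,\ee}$ for each quasimomentum $k$; the relevant $L^2_\ee$-spectral problem is the case $k=0$, i.e. the periodic problem on the subspace of even-index Fourier series. The key reduction is: for $E\notin\{(2m\pi)^2\}$ (a harmless exceptional set to be removed at the end), the operator $-\D_x^2-E$ is boundedly invertible on $L^2_\ee$, and
\[
H^{(\eps)}-E\ =\ (-\D_x^2-E)\Big(I+\eps(-\D_x^2-E)^{-1}V_\ee\Big)\ ,
\]
where $\mathcal{K}(E,\eps)\equiv \eps(-\D_x^2-E)^{-1}V_\ee$ is, for each fixed $(E,\eps)$, a Hilbert--Schmidt (indeed trace-class, after a mild manipulation using a second power of the resolvent) operator on $L^2_\ee$, depending analytically on $(E,\eps)\in(\C\setminus\{(2m\pi)^2\})\times\C$. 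Thus $E$ is an $L^2_\ee$-eigenvalue of geometric multiplicity $m$ exactly when $\dim\ker\big(I+\mathcal{K}(E,\eps)\big)=m$.

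Next I would pass to the regularized (modified) Fredholm determinant $\det{}_2\big(I+\mathcal{K}(E,\eps)\big)$, which is well-defined and jointly analytic on $(\C\setminus\{(2m\pi)^2\})\times\C$ for Hilbert--Schmidt $\mathcal{K}$, and whose zeros coincide (with correct \emph{algebraic} multiplicities) with the points where $I+\mathcal{K}$ fails to be invertible. Because $H^{(\eps)}$ is self-adjoint for real $\eps$, algebraic and geometric multiplicities of its eigenvalues coincide, so the order of vanishing of $E\mapsto\det{}_2(I+\mathcal{K}(E,\eps))$ at a real eigenvalue equals its geometric multiplicity; this is precisely the statement we want, modulo two cleanups. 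The first cleanup is the removal of the apparent singularities at $E=(2m\pi)^2$: these are poles of the resolvent, but multiplying $\det{}_2(I+\mathcal{K}(E,\eps))$ by a suitable entire factor (a power of $\prod$ of $(E-(2m\pi)^2)$-type Weierstrass-type product, or more simply by noting that near such $E$ one can use a finite-rank Riesz projection to isolate the singularity and absorb it) produces a genuinely entire function $\mathcal{E}_\ee(E,\eps)$ on $\C^2$ with the same zero set and multiplicities away from $\{(2m\pi)^2\}$; one then checks the values $E=(2m\pi)^2$ directly against the (trivially computable) $\eps=0$ spectrum and the analytic perturbation to see the multiplicity statement persists there. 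The second cleanup is to verify that $\mathcal{E}_\ee\not\equiv0$: this is immediate because at $\eps=0$ the function $E\mapsto\mathcal{E}_\ee(E,0)$ vanishes exactly at $E=(2m+1)^2\pi^2$ (the double $L^2_{\pi}$-eigenvalues restrict to simple $L^2_\ee$-eigenvalues, cf. the computation in Section \ref{eps_zero}) and at the $L^2_\ee$-eigenvalues coming from $k=0$, each to the right order, so it is not identically zero.

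The main obstacle I anticipate is the bookkeeping of multiplicities through the two cleanup steps --- in particular, making rigorous that multiplying the modified Fredholm determinant by the entire ``clearing factor'' exactly cancels the spurious poles without introducing or destroying zeros on the true spectrum, and that at the removed points $E=(2m\pi)^2$ the entire extension still records the correct geometric multiplicity. A clean way to handle this uniformly is to avoid resolvents with poles altogether: fix any $E_0$ below the spectrum, write $H^{(\eps)}-E=(H^{(0)}-E_0)\big(I+\mathcal{T}(E,\eps)\big)$ with $\mathcal{T}(E,\eps)=(H^{(0)}-E_0)^{-1}\big(\eps V_\ee-(E-E_0)\big)$ now a genuinely analytic family of trace-class operators on all of $\C^2$ (since $(H^{(0)}-E_0)^{-1}$ is a fixed bounded operator with $(H^{(0)}-E_0)^{-1}:L^2_\ee\to H^2_\ee\hookrightarrow L^2_\ee$ compact, and in fact its square is trace-class on this 1D periodic problem), and set $\mathcal{E}_\ee(E,\eps)\equiv\det{}_2\big(I+\mathcal{T}(E,\eps)\big)$ times an entire nonvanishing exponential normalizer. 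Then $\mathcal{E}_\ee$ is entire on $\C^2$ by construction, its zero set is exactly $\{(E,\eps): H^{(\eps)}\Phi=E\Phi\text{ solvable in }L^2_\ee\}$, and the order of vanishing in $E$ equals the algebraic multiplicity, hence for real $\eps$ equals the geometric multiplicity by self-adjointness. This is the route I would write up, relegating the trace-class verification and the elementary multiplicity lemma for $\det{}_2$ to short citations (Simon, \emph{Trace Ideals}, and Gohberg--Krein).
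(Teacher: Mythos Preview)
Your approach is essentially the same as the paper's: it, too, reduces the eigenvalue problem to a Lippmann--Schwinger--type equation $(I+\text{compact})\Phi=0$ and defines $\mathcal{E}_\ee$ as a Fredholm determinant, referring to Section~7 of \cite{FW:12} for the details. One simplification worth noting: in this one-dimensional setting the resolvent $(H^{(0)}-E_0)^{-1}$ (equivalently, the operator $T(\eps)$ in the paper's sketch) is already trace class, not merely Hilbert--Schmidt, so the ordinary Fredholm determinant $\det(I+\mathcal{T})$ suffices and the $\det{}_2$ machinery and exponential normalizer can be dropped---the paper explicitly remarks on this distinction from the two-dimensional honeycomb case.
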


\nit In analogy with the construction in \cite{FW:12}, we take  $\mathcal{E}_\ee(E,\eps)= \det[I-(\mu+1)T(\eps)]$, where $T(\eps)$ is a trace class operator, obtained when formulating the eigenvalue problem for $-\D_x^2+ V_\ee(x)$ on $L^2_\ee$ as a Lippmann-Schwinger (nonlocal / integral ) equation. See Section 7 of  \cite{FW:12} for details for the case of $H^{(\eps)}_h=-\Delta+\eps V_h$. In \cite{FW:12} the analogue of $T(\eps)$ is not trace class and hence a renormalized (modified) determinant is required.  
\smallskip

We require the following two results on complex-analytic functions on $\C^2$.
 
 \begin{lemma}\label{lemma1} Let $F(z,t)$ be analytic on a neighborhood of $(0,0)\in\C^2$ with $F(0,0)=0$. Suppose $F(z,t)=0$ and $ t\in\R$ implies $z\in\R$. Then there exists an analytic function $g(t)$ on a neighborhood of $0\in\C$ such that $g(0)=0$ and $F(g(t),t)=0$ for all small $t$.
 \end{lemma}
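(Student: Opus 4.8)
\textbf{Proof proposal for Lemma \ref{lemma1}.} The plan is to combine the Weierstrass preparation theorem with the hypothesis that real $t$ forces real $z$ in order to extract a single analytic branch. First I would argue that $F(z,0)$ does not vanish identically in $z$. Indeed, if $F(z,0)\equiv 0$ in a neighborhood of $0$, then every small real $z$ would be a zero of $F(z,0)$ with $t=0\in\R$, which is consistent with the hypothesis, so this alone is not a contradiction; instead I would invoke the structure coming from the application (in our setting $F=\mathcal{E}_\ee$ and $z=E$, $t=\eps$, and $\mathcal{E}_\ee(E,0)$ has only isolated zeros since these are the $L^2_\ee$-eigenvalues of $-\D_x^2$). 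So I would add, if needed, the standing assumption that $z\mapsto F(z,0)\not\equiv0$; then $F(z,0)$ has a zero of some finite order $N\ge1$ at $z=0$.

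With $F$ regular of order $N$ in $z$ at the origin, the Weierstrass preparation theorem gives a factorization $F(z,t)=u(z,t)\,W(z,t)$ on a polydisc, where $u$ is analytic and nonvanishing and
\[
W(z,t)=z^N+a_{N-1}(t)z^{N-1}+\cdots+a_0(t),
\]
with $a_j$ analytic near $0\in\C$ and $a_j(0)=0$. The zeros of $F$ near the origin coincide with the zeros of the Weierstrass polynomial $W(\cdot,t)$. The key step is then to analyze the roots of $W(\cdot,t)$ using the hypothesis. For $t$ real and small, all $N$ roots (counted with multiplicity) of $W(\cdot,t)$ are real by assumption. Hence the elementary symmetric functions of these roots — which are $\pm a_j(t)$ — are real for real $t$, so each $a_j$ maps a real neighborhood of $0$ into $\R$; being analytic, $a_j$ therefore has real Taylor coefficients. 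Consequently $W(z,t)$ is, for each real $t$, a real polynomial in $z$ with all roots real, and moreover its discriminant $\Delta(t)$ is a real-analytic function of $t$.

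To produce a single analytic branch $g(t)$ with $g(0)=0$, I would use the Puiseux expansion of the roots of $W(\cdot,t)$ at $t=0$: each root is given by a convergent Puiseux series in $t^{1/q}$ for some $q\ge1$. The obstacle — and I expect this to be the main difficulty — is ruling out genuinely fractional (branching) behaviour: a priori a root could behave like $c\,t^{1/2}+\cdots$, which is \emph{not} analytic in $t$. Here is where the reality hypothesis does the work: if a root branch had a leading term $c\,t^{p/q}$ with $q\ge2$ and $q\nmid p$, then letting $t$ run over small positive reals and small negative reals, the two determinations $c\,|t|^{p/q}e^{i\pi p/q}$ and $c\,|t|^{p/q}$ cannot both be real unless $c=0$ (since $e^{i\pi p/q}\notin\R$ for $q\ge3$, and for $q=2$, $p$ odd one gets $\pm i$ times a real, again forcing $c=0$); iterating down the Puiseux series term by term, every fractional-exponent coefficient must vanish, so the branch is in fact an honest power series in $t$. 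This yields an analytic $g(t)$ near $0$ with $g(0)=0$ and $W(g(t),t)=0$, hence $F(g(t),t)=0$, completing the proof. The only care needed is the bookkeeping in the inductive ``kill the fractional terms'' argument, and the mild hypothesis $F(\cdot,0)\not\equiv0$, which holds in every application we make of the lemma.
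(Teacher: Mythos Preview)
Your proof is correct and takes essentially the paper's route: Puiseux-expand a root and use the reality hypothesis at both signs of real $t$ to eliminate fractional exponents (your endgame, going straight to $t<0$ to force $e^{i\pi p/q}\in\R$, is in fact slightly more direct than the paper's two-step version, which first uses different $q$-th roots at $t>0$ to deduce $p/q\in\tfrac12\Z$ and then invokes $t<0$ to exclude half-integers). One correction: you need not add $F(\cdot,0)\not\equiv0$ as a standing assumption, because it already follows from the hypothesis --- if $F(z,0)\equiv0$ then $(z,t)=(i,0)$ is a zero with $t\in\R$ but $z\notin\R$, so your remark ``this alone is not a contradiction'' is mistaken. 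The Weierstrass preparation step is also unnecessary here.
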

 
\smallskip
 
 \begin{lemma}\label{lemma2} 
  Let $F$ be as in Lemma \ref{lemma1}. Then we may write 
\begin{equation}\label{F-factored}
  F(z,t)\ =\ G(z,t)\times \Pi_{\nu=1}^m\left(z-g_\nu(t)\right),
  \end{equation}
  where $m\ge1$, 
  \begin{enumerate}
  \item $G(z,t)$ and $g_\nu(t)$, $\nu=1,\dots,m$,  are analytic in a neighborhood of the origin,
  \item  $g_\nu(0)=0$ for each $\nu=1,\dots,m$,\ and
  \item $G(0,0)\ne0$.
  \end{enumerate}
  \end{lemma}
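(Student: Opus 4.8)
\textbf{Proof proposal for Lemma \ref{lemma2}.}

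The plan is to reduce the multivariable factorization to a one-variable fact about analytic functions via the Weierstrass preparation theorem. First I would set up the Weierstrass machinery: by Lemma \ref{lemma1} there is an analytic $g_1(t)$ near $0$ with $g_1(0)=0$ and $F(g_1(t),t)\equiv0$, so $F$ vanishes at the origin and is not identically zero in $z$ for generic $t$ (if $F$ vanished identically in $z$ on a neighborhood it could not be the characteristic determinant of a Fredholm operator, or one argues directly from the hypothesis together with the fact that $F$ is not the zero function). Thus $F$ has finite order $m\ge1$ in $z$ at the origin: $\partial_z^j F(0,0)=0$ for $j<m$ and $\partial_z^m F(0,0)\ne0$, after first making $z\mapsto F(z,0)$ nonconstant --- if $F(z,0)\equiv0$ one performs a preliminary linear change of coordinates mixing $z$ and $t$ to make it regular in $z$ of some finite order, which is harmless for the statement. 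Applying Weierstrass preparation, one writes
\begin{equation}
F(z,t)\ =\ G(z,t)\,\big(z^m + a_{m-1}(t)z^{m-1} + \cdots + a_0(t)\big),
\label{wp-form}
\end{equation}
where $G$ is analytic near the origin with $G(0,0)\ne0$, the $a_\nu(t)$ are analytic near $0\in\C$, and $a_\nu(0)=0$ for each $\nu$. This already gives conclusions (1) and (3) and the vanishing of the Weierstrass polynomial at the origin.

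The remaining task is to factor the Weierstrass polynomial $P(z,t)=z^m+\sum_{\nu<m}a_\nu(t)z^{m-1-\nu+1}$ --- more precisely $z^m+a_{m-1}(t)z^{m-1}+\cdots+a_0(t)$ --- into linear factors $\prod_{\nu=1}^m(z-g_\nu(t))$ with each $g_\nu$ analytic near $0$ and $g_\nu(0)=0$. The roots $g_\nu(t)$ of $P(\cdot,t)$ depend continuously on $t$ and tend to $0$ as $t\to0$ since $a_\nu(0)=0$ forces $P(z,0)=z^m$. In general the roots of a Weierstrass polynomial are only given by a Puiseux series --- they need not be single-valued analytic functions of $t$ --- so the genuine content of the lemma is that \emph{here} they are. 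This is exactly where the hypothesis of Lemma \ref{lemma1} (``$F(z,t)=0$ and $t\in\R$ imply $z\in\R$'') is used: for real $t$ the polynomial $P(\cdot,t)$ has all its roots real, and a Weierstrass polynomial whose roots are real for all real $t$ in a neighborhood of $0$ cannot have a branch point at $t=0$. I would argue this by contradiction: if some root were given by a non-trivial Puiseux expansion $z = c\,t^{p/q} + \cdots$ with $q\ge2$ and $c\ne0$, then letting $t$ run over small positive reals and tracking the $q$ conjugate branches $c\,\omega^j t^{p/q}$ (with $\omega$ a primitive $q$-th root of unity) produces non-real values, contradicting reality of the roots. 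Hence $q=1$ for every branch, so each root extends to a single-valued analytic function $g_\nu(t)$ near $0$; factoring $P$ over its roots and absorbing nothing extra into $G$ gives \eqref{F-factored} with all three listed properties. (One minor bookkeeping point: repeated roots are listed with multiplicity, which is consistent with the product notation and with ``$m\ge1$''.)

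The main obstacle is the ramification argument in the last step: making rigorous that reality of the roots for real parameter forces the Puiseux exponents to be integral. The cleanest route is to invoke the standard fact that the roots of a monic polynomial with analytic coefficients are locally given by convergent Puiseux series (Newton--Puiseux), organized into cycles permuted by the monodromy $t\mapsto e^{2\pi i}t$; a cycle of length $q\ge2$ necessarily produces, for small real $t>0$, a set of $q$ values related by multiplication by $q$-th roots of unity, and such a set contains a non-real element unless $q=1$ --- which is the desired contradiction. Alternatively, and perhaps more in the spirit of the rest of the paper, one can avoid Puiseux entirely: the discriminant $\Delta_P(t)$ of $P(\cdot,t)$ in $z$ is analytic in $t$; where $\Delta_P(t)\ne0$ the roots are locally analytic by the implicit function theorem and are globally single-valued precisely when the monodromy is trivial; reality of roots for $t$ real plus the reflection principle pins down the monodromy. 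Either way this is the delicate point, while the Weierstrass preparation step and the verification of $g_\nu(0)=0$, $G(0,0)\ne0$ are routine.
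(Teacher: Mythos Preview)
Your approach via Weierstrass preparation is the same as the paper's, but the paper organizes the remaining step differently: rather than arguing directly that every Puiseux branch of the Weierstrass polynomial is unramified, it simply invokes Lemma~\ref{lemma1} (which you are allowed to use) to produce one analytic root $g(t)$, divides $P(z,t)$ by $z-g(t)$, checks that the quotient again satisfies the hypotheses, and inducts on the order $m$. This is shorter and avoids redoing the Puiseux analysis.

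Your direct argument has a genuine gap in the case of a cycle of length $q=2$. For $t>0$ the two branches of $z=c\,t^{p/2}+\cdots$ (with $p$ odd, $c\ne0$) are $\pm c\,t^{p/2}$; reality of the roots forces $c\in\R$, and then \emph{both} values are real, so no contradiction arises from positive $t$ alone. This is exactly the subtlety handled in the paper's proof of Lemma~\ref{lemma1}: one must also take $t<0$, where $t^{p/2}\in i\R$ and the roots become purely imaginary, yielding the contradiction. Your statement that ``such a set contains a non-real element unless $q=1$'' is therefore false as written; it is correct for $q\ge3$ but fails for $q=2$.

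Two smaller points. First, the worry that $F(z,0)\equiv0$ might require a preliminary change of coordinates is unfounded: the hypothesis applied at $t=0$ says $F(z,0)=0\Rightarrow z\in\R$, so $F(i,0)\ne0$ and $F(\cdot,0)$ is not identically zero. Second, since Lemma~\ref{lemma1} is already available, the cleanest fix is simply to adopt the paper's inductive scheme rather than to repair the all-at-once Puiseux argument.
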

  \medskip
  
  \nit{\bf Proof of Lemma \ref{lemma1}:}\ By hypothesis, for $t\in\R$, $F$ can only vanish for  $z\in\R$. Therefore $F$  is not identically zero. Therefore, the equation $F(z,t)=0$ can be solved for $z$ in terms of $t$ by a Puisieux series
\cite{Knopp-volume2:47}, {\it i.e.} a convergent series of the form
\begin{equation}
z\ =\ \sum_{l\ge1}a_l (t^{\frac1m})^l,
\label{puisieux}
\end{equation}
where $m\ge1$ is an integer and any $m^{th}$ root of $t$ produces a solution of $F(z,t)=0$.
Let $t$ be small and positive and use the positive $m^{th}$ root of $t$. By hypothesis,  $z$ must be real. Hence, all $a_l$ are real. If all non-zero $a_l$ are such that $l\equiv0\ ({\rm mod}$ $ m$), then we are done. Otherwise, let $n$ denote the least positive integer such that $n$ is not a positive integer multiple of $m$ and $a_n\ne0$. We may then write our solution of $F(z,t)=0$ as
\begin{equation}
z= \left[\ \widetilde{\sum}_{l\ge1}\ a_l t^{l/m}\ \right]\ +\ a_n \left( t^{1/m} \right)^n\ +\ o(|t|^{\frac{n}{m}}),
\label{puisieux1}
\end{equation}
where $ \widetilde{\sum}_{l\ge1}$ indicates a sum over all $l\ge1$, such that  $l\equiv0$ (mod $m$). 
Recall again that \eqref{puisieux1} holds with any choice of $m^{th}$ root of $t$. 

\nit Whenever $t$ is real, we know that $z$ is real, and therefore the expression in \eqref{puisieux1} within brackets is real. Therefore, 
$a_n \left( t^{1/m} \right)^n\ =\ \textrm{real quantity}\ +\ o\left( |t|^{n/m} \right)$
for $t$ real and small. Since $a_n$ is real and non-zero, we must have 
\begin{equation}
 \left( t^{1/m} \right)^n\ =\ \textrm{real quantity}\ +\ o\left( |t|^{n/m} \right)\ \ \textrm{for $t$ real};
 \label{tmn}\end{equation}
 here again $t^{1/m}$ may be any of the $m^{th}$ roots of $t$.  It follows that $(e^{2\pi i/m})^n\in\R$, {\it i.e.} $n/m\in\Z$ or $n/m\in \Z+1/2$. By the definition of $n$, we have $n/m\notin\Z$. Therefore, 
 $n/m\in\Z+1/2$.
 
   Now take $t$ small and negative and we obtain $ \left( t^{1/m} \right)^n\in i\R$. This contradicts \eqref{tmn}. It follows that  all non-zero $a_l$ are real and are such that $l\equiv0$ mod $m$.
   In other words, the Puisieux series \eqref{puisieux}, which we now denote $g(t)$,  is in fact a convergent power series in the variable $t$ in a neighborhood of $0\in\C$. This completes the proof of Lemma \ref{lemma1}.
 \medskip
 
 \nit{\bf Proof of Lemma \ref{lemma2}:}  Since $z\mapsto F(z,0)$ is analytic and  $F(z,0)\ne0$ for $z\notin\R$, we know that $F(z,0)$ vanishes to finite order at $z=0$.  We proceed by induction on, $m\ge0$,  the order of vanishing.  We cannot have $m=0$ since $F(0,0)=0$. If $m=1$, then the desired conclusion follows from the implicit function theorem.
 
 Suppose that $F(z,0)$ vanishes at $z=0$ to order $m\ge2$ and assume the validity of Lemma \ref{lemma2} for $\widetilde{F}(z,t)$, such that $\widetilde{F}(z,0)$ vanishes to order $\le m-1$ at $z=0$.  We next establish the conclusion of Lemma \ref{lemma2} for the function $F$, completing our induction on $m$. 
 
 Applying the Weierstrass preparation theorem \cite{Krantz:92}, we may assume without loss of generality that $F(z,t)$ is a Weierstrass polynomial, {\it i.e.}
 $F(z,t)=z^m\ +\ \sum_{l=0}^{m-1}f_l(t)z^l$ in a neighborhood of $(0,0)$ with $f_l(t)$  analytic and $f_l(0)=0$.  By Lemma \ref{lemma1}, there is an analytic function $g(t)$ such that $g(0)=0$ and $F(g(t),t)=0$ .
  Dividing the polynomial in $z$, $F(z,t)$, by the polynomial in $z$, $z-g(t)$ we may write
  \begin{equation}
  F(z,t)=  (z-g(t))\cdot \tilde{F}(z,t), \label{double-star}
  \end{equation}
   where $\tilde{F}(z,t)$ satisfies the hypotheses of Lemma \ref{lemma2} and $\tilde{F}(z,0)$ vanishes to order $m-1$ at $z=0$. Applying the induction hypothesis to $\tilde{F}$ and using 
   \eqref{double-star}, we obtain the conclusion of Lemma \ref{lemma2} for the function $F(z,t)$.
  \medskip
  
Proposition \ref{discrete!}, discreteness of the exception set of $\eps$ values, will now be deduced from:\medskip
 
 \begin{proposition}\label{implies-discrete!}
 For all $\eps\in(0,\infty)$ outside of a discrete set, there exists a Floquet-Bloch eigenpair $(E,\Phi_1)=(E_\star^\eps,\Phi^\eps_1)$ with $\Phi^\eps_1\in L^2_\ee$ for $\left.(-\D_x^2+\eps V_\ee)\right|_{L^2_\ee}$ with the following properties:
 \begin{itemize}
 \item[(a)] $|E|\le \overline{C}_0\eps + \overline{C}_1$, where 
 $\overline{C}_0$ and $ \overline{C}_1$ depend only on $V_\ee$.
 \item[(b)] $E$ is a multiplicity one eigenvalue of $\left.(-\D_x^2+\eps V_\ee)\right|_{L^2_\ee}$ 
 \item[(c)] Non-degeneracy condition: $\lambda_\sharp=\lambda_\sharp^\eps\ne0$ holds, where $\lambda_\sharp$  defined in terms of $\Phi_1$, is given in  \eqref{lambda-sharp-eps}.
 \end{itemize}
 \end{proposition}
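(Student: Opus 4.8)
\textbf{Proof proposal for Proposition \ref{implies-discrete!}.}

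The plan is to exploit the analytic function $\mathcal{E}_\ee(E,\eps)$ of Proposition \ref{Eee-def}, whose roots in $E$ (counted with multiplicity) are exactly the $L^2_\ee$-eigenvalues of $H^{(\eps)}=-\D_x^2+\eps V_\ee$. I would first set up, near any point $(\eps_\natural,E_\natural)$ with $\mathcal{E}_\ee(E_\natural,\eps_\natural)=0$, a local factorization of $\mathcal{E}_\ee$ into branches $E=g_\nu(\eps)$ via Lemma \ref{lemma2}, after the change of variables $z=E-E_\natural$, $t=\eps-\eps_\natural$. The key input making Lemma \ref{lemma1} and Lemma \ref{lemma2} applicable is that for $\eps$ real every root $E$ is real (self-adjointness of $H^{(\eps)}$ on $L^2_\ee$), so $\mathcal{E}_\ee(E,\eps)=0$ with $\eps\in\R$ forces $E\in\R$; hence each branch $\eps\mapsto g_\nu(\eps)$ is genuinely analytic (a convergent power series, not a fractional-power Puiseux series) in a complex neighborhood of each real $\eps_\natural$. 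Starting from the branch $E^{(\eps)}_{\star,n}$ constructed in Theorem \ref{thm:dirac-pt} on $(-\eps_0,\eps_0)$, which at $\eps=0$ is a simple $L^2_\ee$-eigenvalue, I would continue it analytically along the real axis. Standard analytic continuation of a simple root of an analytic function yields a maximal analytic branch $\eps\mapsto E^{(\eps)}_\star$ defined on $\R$ minus a set of points where either two such branches collide (loss of simplicity) or the branch escapes to infinity; by Lemma \ref{lemma2} applied at each such candidate point, these bad points are isolated, hence the exceptional set is discrete.

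Next I would establish the three properties (a)--(c). For (a): the operator $H^{(\eps)}=-\D_x^2+\eps V_\ee$ acting on $L^2_{k_\star}$ with $k_\star=\pi$ has, for $\eps=0$, its $n$-th relevant double eigenvalue at $\pi^2(2n+1)^2$, and by standard eigenvalue perturbation bounds (Weyl comparison / min-max for $-\D_x^2+\eps V_\ee$ versus $-\D_x^2\pm|\eps|\,\|V_\ee\|_{L^\infty}$) the eigenvalue in question stays within $|\eps|\,\|V_\ee\|_{L^\infty}$ of $\pi^2(2n+1)^2$; this gives $|E^{(\eps)}_\star|\le \overline{C}_0\eps+\overline{C}_1$ with $\overline{C}_0=\|V_\ee\|_{L^\infty}$ and $\overline{C}_1=\pi^2(2n+1)^2$, depending only on $V_\ee$ (through $n$ and the sup norm). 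For (b): simplicity as an $L^2_\ee$-eigenvalue fails only at the isolated collision points removed above, so away from the exceptional set multiplicity one holds; one must also note that along the branch the eigenvalue remains a double $L^2_{k_\star}$-eigenvalue with one-dimensional $L^2_{\pi,\ee}$ and one-dimensional $L^2_{\pi,\oo}$ eigenspaces — this is automatic from Proposition \ref{H-action-on-L_e}, since any $L^2_{\pi,\ee}$ eigenfunction $\Phi_1$ produces the linearly independent $L^2_{\pi,\oo}$ eigenfunction $\mathcal{I}[\Phi_1]$ with the same eigenvalue. For (c): the Fermi velocity $\lambda_\sharp^\eps=2i\langle\Phi^\eps_1,\D_x\Phi^\eps_1\rangle$ is itself a real-analytic function of $\eps$ along the branch (the normalized eigenfunction $\Phi^\eps_1$ depends analytically on $\eps$ wherever $E^{(\eps)}_\star$ is simple), and by Lemma \ref{lam_nonzero_eps_small} it is nonzero on $(-\eps_0,\eps_0)$; a nontrivial real-analytic function on a connected real-analytic curve has only isolated zeros, so $\lambda_\sharp^\eps=0$ on at most a discrete set, which we adjoin to the exceptional set.

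Finally I would assemble the three discrete sets (collision points, escape-to-infinity points, zeros of $\lambda_\sharp^\eps$) into one discrete exceptional set $\widetilde{\mathcal{C}}_n\subset\R\setminus(-\eps_0,\eps_0)$ and conclude that for $\eps\notin\widetilde{\mathcal{C}}_n$ all hypotheses of Theorem \ref{conditions-for-dirac} are met at $(k_\star=\pi,E^{(\eps)}_\star)$, giving a Dirac point with $\mathcal{S}=\mathcal{I}$; Proposition \ref{implies-discrete!} follows, and Proposition \ref{discrete!} is then immediate. The main obstacle I anticipate is making precise and rigorous the ``no escape to infinity without being caught by Lemma \ref{lemma2}'' step — i.e. showing that the branch cannot wander off to $E=\infty$ as $\eps$ increases through a finite value, and that the only way analyticity can break down is through one of the finitely-many-per-compact-interval collision events. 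This is where the bound (a) is essential: it confines the branch to a bounded region of the $E$-plane over any bounded $\eps$-interval, so the branch has finite limit points, and Lemma \ref{lemma2} at each such limit point shows the continuation either extends or the point is isolated; compactness of $\eps$-intervals then gives discreteness globally. A secondary subtlety is confirming that $\mathcal{E}_\ee$ as built from the Lippmann--Schwinger formulation is genuinely entire in $(E,\eps)$ with the claimed root-multiplicity property, which I would handle exactly as in Section 7 of \cite{FW:12}, noting that here the relevant integral operator is trace class so the ordinary Fredholm determinant suffices and no renormalization is needed.
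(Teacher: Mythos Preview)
Your overall strategy—analytically continue the small-$\eps$ branch using Lemma~\ref{lemma2}, and show the obstructions form a discrete set—is sound and close to what the paper does. The paper, however, organizes this as a proof by contradiction: define $\eps_c$ as the infimum of $\tilde\eps$ for which $(0,\tilde\eps)$ contains infinitely many bad $\eps$, pick a good sequence $\eps_\nu\uparrow\eps_c$ with eigenvalues $E_\nu\to E_c$, and then do the local analysis once, at $(E_c,\eps_c)$, to push goodness past $\eps_c$ and reach a contradiction. Your direct-continuation framing is equivalent in principle but forces you to control the branch globally, and two of your steps do not quite do this.

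\textbf{Gap in (a).} Your min--max/Weyl comparison bounds the $n$-th \emph{ordered} $L^2_\ee$-eigenvalue, not the analytic branch you are following. After a crossing, your branch may no longer be the $n$-th ordered eigenvalue, so min--max says nothing about it. The paper instead uses the derivative bound: wherever the branch $g(\eps)$ is a simple eigenvalue, perturbation theory gives $g'(\eps)=\langle\Phi_1^\eps,V_\ee\Phi_1^\eps\rangle$, hence $|g'(\eps)|\le\overline{C}_0$ with $\overline{C}_0>\|V_\ee\|_{L^\infty}$. This propagates the bound $|g(\eps)|\le\overline{C}_0\eps+\overline{C}_1$ across each crossing and rules out escape to infinity. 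You should replace the min--max step by this Feynman--Hellmann argument.

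\textbf{Gap in (c).} You assert that $\lambda_\sharp^\eps$ is real-analytic ``wherever $E_\star^{(\eps)}$ is simple'' and then conclude its zero set is discrete because it is nonzero near $\eps=0$. But the simplicity set is a union of open intervals separated by crossings, and you have not shown $\lambda_\sharp^\eps$ is not identically zero on the later intervals. You need analyticity of $\lambda_\sharp^\eps$ \emph{through} the crossings. The paper handles this with a device you omit: it builds finitely many functions $(E,\eps)\mapsto\Phi_{1,jk}(x;E,\eps)$, analytic on a full complex neighborhood of any $(E_c,\eps_c)$ (including degenerate points), such that the associated $\lambda_{\sharp,jk}(E,\eps)$ are analytic there and at least one is nonzero whenever condition (c) holds. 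Since these are analytic through the crossing and nonzero at the approaching good points $\eps_\nu$, they are not identically zero, and their zero set is discrete. In the one-parameter setting you could alternatively invoke Rellich's theorem to get a globally analytic eigenfunction along the branch, but the paper's construction is chosen because it generalizes to the two-dimensional honeycomb case (Remark~\ref{honey-discrete}), where a naive appeal to Rellich is not available.
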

 \medskip
 
 \nit{\bf Proof of Proposition \ref{implies-discrete!}:} Our analysis for the small $\eps$ regime implies  that there exist positive constants $\overline{\eps}_1$ and $\overline{C}_1$ (depending only on $V_\ee$, such that every $\eps\in[0,\overline{\eps}_1)$ admits an eigenpair $(E,\Phi_1)$ for $\left.(-\D_x^2+\eps V_\ee)\right|_{L^2_\ee}$ satisfying (a), (b) and (c), with $|E|\le \overline{C}_1$.  We choose $\overline{C}_0>\max_{x\in\R}|V_\ee(x)|$.
 \medskip
 
 We say that a given $\eps$ is ``good'' if there exists an eigenpair 
 $(E,\Phi_1)$ for \\
 $\left.(-\D_x^2+\eps V_\ee)\right|_{L^2_\ee}$ satisfying (a)-(c); otherwise we say that $\eps$ is ``bad''. All $\eps\in[0,\overline{\eps}_1)$ are good.
 \medskip
 
  Suppose now that Proposition \ref{implies-discrete!} fails. Then for some $\tilde\eps>0$ there are infinitely many bad $\eps$ in $(0,\tilde\eps)$. We will derive a contradiction. 
    Let 
  \begin{equation*}
   \eps_c\ \equiv\ \inf\{\ \tilde\eps>0: \textrm{infinitely many $\eps$ in $(0,\tilde\eps)$\ are bad.}\ \}
  \end{equation*}
  
 \nit By definition of $\eps_c$,  $0<\overline{\eps}_1\le\eps_c<\infty$ and  we have the following
\begin{align}
&\textrm{For any $\tilde\eps\in(0,\eps_c)$, there are only finitely many bad $\eps$ in $(0,\tilde\eps)$.}\label{eqn1}\\
&\textrm{For any $\tilde\eps>\eps_c$, there are infinitely many bad $\eps$ in $(0,\tilde\eps)$.}
\label{eqn2}\end{align}
  By \eqref{eqn1}, we can find a strictly increasing sequence $\{\eps_\nu\}_{\nu\ge1}$ such that each $\eps_\nu$ is good, and $\eps_\nu\uparrow\eps_c$ as $\nu\to\infty$. 
  
 Corresponding to $\eps_\nu$, let  $(E_\nu,\Phi_{1,\nu})$ be an eigenpair for $\left.(-\D_x^2+\eps V_\ee)\right|_{L^2_\ee}$ satisfying (a)-(c). By (a), for each $\nu$ we have $|E_\nu|\le \overline{C}_0\eps_c+\overline{C}_1$. Hence, by passing to a subsequence we have that there exists $E_c$ such that
\begin{align*}
&\textrm{$E_\nu\to E_c$ as $\nu\to\infty$,\ \  with\  $|E_c|\le \overline{C}_0\eps_c+\overline{C}_1$.}
\end{align*}
 
\nit For $z_0\in\C$ and $r, \eta >0$, introduce the notation
\begin{equation*}
D(z_0,r) \equiv\ \{z\in\C:|z-z_0|<r\}\ \ \ {\rm and}\ \  I(\eta)\equiv(\eps_c-\eta,\eps_c+\eta)\ .
\end{equation*}
\medskip

By Proposition \ref{Eee-def} there exists an analytic mapping $\mathcal{E}_\ee:D(E_c,\eta_1)\times D(\eps_c,\eta_1)\to\C$ such that
\begin{align} \label{eqn4}
&\textrm{$E$ is an eigenvalue of $\left.(-\D_x^2+\eps V_\ee)\right|_{L^2_\ee}$ with multiplicity $m$}\\
&\textrm{if and only if $E$ is a zero of  $\mathcal{E}_\ee(\cdot,\eps)$ of order $m$,} \nn
\end{align}
where we take $0<\eta_1<\eps_c$. Property \eqref{eqn4} holds for all $(E,\eps)\in D(E_c,\eta_1)\times D(\eps_c,\eta_1)$ and all $m\ge1$.
\medskip
 
 Next, in a manner which is  parallel to the analysis of Section 8 in \cite{FW:12} we have that for some $0<\eta_2<\eps_c$
 \begin{align*}
&\textrm{there exist finitely many analytic functions  $(E,\eps)\mapsto\Phi_{1,jk}(x;E,\eps),\ 1\le j,k\le N$}\\
&\textrm{ mapping
$D(E_c,\eta_2)\times D(\eps_c,\eta_2)$ to $H^2_{k_\star,\ee}$ with the following property:}\nn
\end{align*}
For any  $(E,\eps)\in  D(E_c,\eta_1)\times D(\eps_c,\eta_1)$ such that $E$ is a multiplicity one eigenvalue of $\left.(-\D_x^2+\eps V_\ee)\right|_{L^2_\ee}$, all $\Phi_{1,jk}(x)$ are in the corresponding eigenspace. Furthermore, there is at least one choice of $jk$ with $1\le j,k\le N$,  such that the function $\Phi_{1,jk}(x;E,\eps)$ is not identically zero. 

Denote by $\lambda_{\sharp,jk}(E,\eps)$ the quantity arising from the  function $\Phi_{1,jk}(\cdot;E,\eps)\in H^2_{k_\star,\ee}$ via the formula  \eqref{lambda-sharp-eps}. We have that 
\begin{equation*}
\textrm{  each mapping $(E,\eps)\mapsto \lambda_{\sharp,jk}(E,\eps)$ is analytic 
 on $D(E_c,\eta_2)\times D(\eps_c,\eta_2)$}
 \end{equation*}
and the following holds:

 \nit  Let $(E,\eps)\in D(E_c,\eta_2)\times D(\eps_c,\eta_2)$, and
suppose $E$ is a multiplicity one eigenvalue of $\left.(-\D_x^2+\eps V_\ee)\right|_{L^2_\ee}$.
Then,
 \footnote{ The expression for $\lambda_\sharp$ in 
 \eqref{lambda-sharp-eps} depends on the choice of eigenfunction, but condition (c) does not.}
\begin{align}
\label{eqn7}
&\textrm{
(c) holds for $(E,\eps)$
 if and only if $\lambda_{\sharp,jk}(E,\eps)\ne0$ for some $j,k=1,\dots,N$.
  }
\end{align}
By taking $\eta_1$ and $\eta_2$ smaller we may take $\eta_1=\eta_2$.
\medskip

From \eqref{eqn4} we have that $\eps\in\R$ and $\mathcal{E}_\ee(E,\eps)=0$ implies $E\in\R$.
 By Lemma \ref{lemma2}  we find that $\mathcal{E}_\ee$ may be expanded as:
 \begin{equation}
 \mathcal{E}_\ee(E,\eps)\ =\ \Theta(E,\eps)\cdot \Pi_{j=1}^J\left(\ E-g_j(\eps)\ \right)^{m_j},
  \ \ {\rm on}\ \ D(E_c,\eta_3)\times D(\eps_c,\eta_3)\ ,
  \label{Eee-factored}\end{equation}
  where $\Theta(E,\eps)$ is analytic and non-vanishing on $D(E_c,\eta_3)\times D(\eps_c,\eta_3)$,   and 
  \begin{align*}
&\textrm{ $g_1,\dots, g_J:D(\eps_c,\eta_2)\to D(E_c,\eta_1)$ are distinct analytic functions of $\eps$,} 
\end{align*}
and $m_1,\dots, m_J$ are positive integers.
  We may suppose $\eta_3<\eta_2=\eta_1$.
  
  Since $E_\nu\to E_c$ and $\eps_\nu\to\eps_c$ as $\nu\to\infty$, we may pass to a subsequence and assume that $(E_\nu,\eps_\nu)\in  D(E_c,\eta_3)\times D(\eps_c,\eta_3)$
   for all $\nu$.  Since each $(E_\nu,\eps_\nu)$ satisfies (b), from \eqref{eqn4} and \eqref{Eee-factored} we have that  for each $\nu$ we have $\mu_\nu=g_{j_\nu}(\eps_\nu)$ for some $j_\nu\in\{1,\dots, J\}$, with $m_{j_\nu}=1$.  By passing to a subsequence
 and by permuting $g_1,\dots, g_J$ and $m_1,\dots, m_J$, we may assume without loss of generality that $j_\nu=1$ for each $\nu$. Thus, for each $\nu\ge1$,
   \begin{align}
   \label{eqn11-12} E_\nu\ =\ g_1(\eps_\nu)\ \ {\rm with}\ \ m_1=1 .
   \end{align}
   Since the $\eps\mapsto g_1(\eps),\dots,g_J(\eps)$ are \underline{distinct} analytic functions on $D(\eps_c,\eta_3)$, there are only finitely many $\eps\in I\left(\eta_3/2\right)=(\eps_c-\eta_3/2,\eps_c+\eta_3/2)$ such that $g_1(\eps)=g_j(\eps)$ for some $j\ne1$. Therefore, by \eqref{Eee-factored} and since $m_1=1$, it follows that $E=g_1(\eps)$ is a simple zero of $\mathcal{E}_\ee(\cdot,\eps)$ for all but finitely many $\eps\in I\left(\eta_3/2\right)$.
   
   Thanks to \eqref{eqn4} we conclude that
   \begin{align}\label{eqn13}
 &  \textrm{$E=g_1(\eps)$ is a multiplicity one eigenvalue of $\left.(-\D_x^2+\eps V_\ee)\right|_{L^2_\ee}$}
   \end{align}
   for all but finitely many $\eps\in I(\eta_3/2)$.
   
   Note that $\frac{d}{d\eps}\left.(-\D_x^2+\eps V_\ee)\right|_{L^2_\ee}=V_\ee$ has norm at most $\overline{C}_0$ as an operator on $L^2_\ee$. Hence, \eqref{eqn13} and perturbation theory of simple eigenvalues  together show that $\left|g_1'(\eps)\right|\le\overline{C}_0$ for all but finitely many $\eps\in I\left(\eta_3/2\right)$. Recalling that $g_1(\eps)$ is analytic in $D(\eps_c,\eta_3)$ we conclude that $\left|g_1'(\eps)\right|\le\overline{C}_0$ for all $\eps\in I\left(\eta_3/2\right)$.
   
   Recalling that $E_\nu=g_1(\eps_\nu)$, that the $\eps_\nu$ strictly increase to $\eps_c$
    and that each $(E_\nu,\eps_\nu)$ satisfies (a), we conclude that 
    \begin{equation}
  \textrm{  $E=g_1(\eps)$ satisfies $|E|\le \overline{C}_0\eps+\overline{C}_1$ }
 \label{eqn14} \end{equation}
 for all $\eps\in (\eps_{\overline{\nu}},\eps_c+\eta_3/2)$, 
 where we pick $\overline{\nu}$ so that 
 \begin{equation}
\textrm{ $\eps_{\overline{\nu}}$ belongs to $I\left(\eta_3/2\right)$, and $\eps_{\overline{\nu}}<\eps_c$}
\label{eqn15}
\end{equation}

From  \eqref{eqn13},  \eqref{eqn14},  \eqref{eqn15}
  we see that 
  \begin{equation}\label{eqn16}
\textrm{  $(E,\eps)\equiv (g_1(\eps),\eps)$\ satisfies (a) and (b)}
\end{equation}
for all but finitely many $ \eps\in \widehat{I}\equiv (\eps_{\overline{\nu}},\eps_c+\eta_3/2)$, 
where
 \begin{equation}\label{eqn17}
\textrm{  $\eps_c$ is an interior point of $\widehat{I}$ and $\widehat{I}\subset I\left(\eta_3/2\right)$}\ .
\end{equation}

Next, we address (c). Since $(E_\nu,\eps_\nu)$ satisfies (c), we learn from \eqref{eqn7} that the
 vector in $\C^{N^2}$: $\left(\lambda_{\sharp,jk}(E_\nu,\eps_\nu)\right)$,\  $ j,k=1,\dots,N$ is nonzero for each $\nu$. Recalling \eqref{eqn11-12}, we see that the function
  $\eps\mapsto\left(\lambda_{\sharp,jk}(E_\nu,\eps_\nu)\right)$,\  $ j,k=1,\dots,N$, which maps
   $D(\eps_c,\eta_3)$ into $\C^{N^2}$, is analytic and not identically zero. Hence, there are at most finitely many $\eps\in\widehat{I}$ for which $\lambda_{\sharp,jk}(E_\nu,\eps_\nu)=0$
    for all $j,k=1,\dots,N$. Another appeal to \eqref{eqn7} tells us that (c) holds for $(E,\eps)=(g_1(\eps),\eps)$ for all but finitely many $\eps\in\widehat{I}$.
    
 Together with   \eqref{eqn16}-\eqref{eqn17}, this in turn tells us that 
 \begin{equation}
 \label{eqn19} 
 \textrm{all but finitely many $\eps\in\widehat{I}=
 \left(\eps_{\overline{\nu}},\eps_c+\eta_3/2\right)$ are good.}
 \end{equation}
 On the other hand, \eqref{eqn1} tells us that 
 \begin{equation}
  \label{eqn20}
 \textrm{all but finitely many $\eps\in\widehat{I}=
 \left(0,\eps_{\overline{\nu}}\right)$ are good.}
 \end{equation}
 since $\eps_{\overline{\nu}}<\eps_c$. From \eqref{eqn19} and \eqref{eqn20} we conclude that there are at most finitely many bad $\eps$ in $(0,\eps_c+\eta_3/2)$, contradicting \eqref{eqn2}. Thus, our assumption that Proposition \ref{implies-discrete!} fails must be false.

\medskip
  
  \begin{remark}[Honeycomb lattice potentials in 2D; the exceptional set of $\eps$ for $H^{(\eps)}_h$ is discrete]
  \label{honey-discrete}
 The above strategy can be applied to $H^{(\eps)}_h=-\Delta+\eps V_h$, where $V_h$ is a honeycomb lattice potential in the sense of \cite{FW:12}. In particular, it can be shown that outside of a possible discrete set of real $\eps-$ values, the operator  $H^{(\eps)}_h$ has Dirac points for all $\eps\in\R\setminus\{0\}$. 
 
 We remark briefly on how to adapt the argument of this section to  $H^{(\eps)}_h$. For definitions and notation see \cite{FW:12}.  The essential differences are:
 \begin{itemize}
 \item[(i)] A Dirac point is an energy / quasimomentum pair $\left({\bf K}_\star,E_\star\right)$, with 
  ${\bf K}_\star\in\mathcal{B}_h\subset\R^2_{\bf k}$, the 2D Brillouin zone, such that:\\
  - $E_\star$ is a simple $L^2_{{{\bf K}_\star},\tau}$ eigenvalue of $H^{(\eps)}_h$,\\
  -  $E_\star$ is a simple $L^2_{{\bf K}_\star,\overline{\tau}}$ eigenvalue of $H^{(\eps)}_h$, and \\
  -   $E_\star$ is not a $L^2_{{\bf K}_\star,1}$ eigenvalue of $H^{(\eps)}_h$.
    We note that $E_\star$ is a simple $L^2_{{\bf K}_\star,\tau}$ eigenvalue implies, by a symmetry argument,  that $E_\star$ is a simple $L^2_{{\bf K}_\star,\overline{\tau}}$ eigenvalue.\\
    - Let $E_\star$ denote such an eigenvalue. Then,  $\lambda_\sharp$ (see  (4.1) of \cite{FW:12} ), defined in terms of any choice of normalized $L^2_{{\bf K}_\star,\tau}$  eigenstate corresponding to the eigenvalue $E_\star$,
     does not vanish.
 \item[(ii)] There is an analytic function, $\mathcal{E}_\tau(E,\eps)$ which vanishes to order $m$ 
 if and only if $H^{(\eps)}_h$ has an eigenfunction with geometric multiplicity eigenvalue $m$ in the space $L^2_{{\bf K},\tau}$. 
 \item[(iii)] There is an analytic function, $\mathcal{E}_1(E,\eps)$ which vanishes to order $m$ 
 if and only if $H^{(\eps)}_h$ has an eigenfunction with geometric multiplicity eigenvalue $m$ in the space $L^2_{{\bf K},1}$.
 \end{itemize}
Discreteness follows from the formulation and proof of an assertion analogous  to Proposition \ref{implies-discrete!}: For all $\eps\in\R\setminus(-\eps_0,\eps_0)$, outside of a possible  discrete set of exceptional values, there exists a real eigenvalue of $H^{(\eps)}_h$, $E=E_\star^\eps$ satisfying the following conditions:
 \begin{itemize}
 \item[(a)] $|E|\le \overline{C}_0\eps + \overline{C}_1$, where 
 $\overline{C}_0$ and $ \overline{C}_1$ depend only on $V_h$.
 \item[(b.$\tau$)] $E$ is a multiplicity one $L^2_{{\bf K},\tau}-$ eigenvalue of $H^{(\eps)}_h$. 
  \item[(b.$1$)] $E$ is \underline{not} an $L^2_{{\bf K},1}-$ eigenvalue of $H^{(\eps)}_h$
 \item[(c)] Non-degeneracy condition: $\lambda_\sharp=\lambda_\sharp(E,\eps)\ne0$ holds.
 \end{itemize}
   \end{remark}
   \medskip
  
\begin{remark}\label{TexasA&M}
In a preprint (arXiv.org/abs/1412.8096) and follow-up ongoing work to revise it,
G. Berkolaiko and A. Comech have been 
working to simplify the treatment of Dirac points [11]
and to clarify the nature of the exceptional set of $\varepsilon-$ values, defined therein.
They have recently communicated their ideas to us.
Those ideas have some overlap with our
use of Lemma \ref{lemma1} and Lemma \ref{lemma2} .
The above works are independent of one another.
\end{remark}
\medskip

\chapter{Degeneracy Lifting at Quasi-momentum Zero}\label{degeneracies_lifted}

In this section we prove Remark \ref{k0_remark}. 
 Fix $k=0$ and consider the Floquet eigenvalue problem \eqref{FB-evp}:
\begin{equation}
\label{k0_eps_prob}
 \begin{split}
  &H^{(\eps)}\Phi = E\Phi, ~~~ \Phi(x+1) = \Phi(x), ~~~ \text{where} \\
&H^{(\eps)} \equiv -\partial_x^2+\varepsilon V_{\ee}(x).
\end{split}
\end{equation} 
For $\varepsilon=0$, \eqref{k0_eps_prob} has, for $n\in\Z$,  doubly degenerate eigenvalues at $E_n^{(0)}=(2n\pi)^2$
with corresponding eigenfunctions $\Psi_{n}(x)=e^{2\pi inx}$. We now study how these eigenvalues perturb for
$\varepsilon$ small but nonzero.

Fix $n\in\mathbb{Z}$. To simplify notation, we shall drop the $n$ subscripts and label the eigenvalue as
$E^{(0)} = E_n^{(0)}$ and corresponding eigenfunctions as $\Psi_1(x)=\Psi_{n}(x)=e^{2\pi inx}$ and
$\Psi_2(x)=\Psi_{-n}(x)=e^{-2\pi inx}$. 
For $|\eps|\ne0$ and small,  expand the solution of the eigenvalue problem \eqref{k0_eps_prob} as 
\begin{equation}
 \label{k0_eps_exp}
 \begin{split}
\Phi(x;\varepsilon) &= \Phi^{(0)} + \varepsilon \Phi^{(1)},~~~\Phi^{(0)}=\alpha\Psi_1+\beta\Psi_2,\\
E(\varepsilon) &= E^{(0)} + \varepsilon E^{(1)},
\end{split}
\end{equation} 
where $\Phi^{(1)}$ and $E^{(1)}$ are $\eps-$ dependent corrections. Substituting
\eqref{k0_eps_exp} into \eqref{k0_eps_prob} yields an inhomogeneous problem for $\Phi^{(1)}\in L^2_{\rm per}[0,1]$:
\begin{equation}
 \label{k0_eps_prob_expanded}
\left(-\partial_x^2 -E^{(0)}\right)\Phi^{(1)}(x) + \left(V_{\ee}(x) - E^{(1)}\right)
\left(\Phi^{(0)}(x)+\varepsilon \Phi^{(1)}(x)\right)=0.
\end{equation}

We now proceed by a Lyapunov-Schmidt reduction strategy. Define the orthogonal projection operators
$Q_{\parallel}$ and $Q_{\perp}$:
\begin{align*}
 Q_{\parallel}f(x) &= \inner{\Psi_1,f}_{L^2[0,1]}\Psi_1(x) +
\inner{\Psi_2,f}_{L^2[0,1]}\Psi_2(x),~~~\text{and}~~~\\
 Q_{\perp}f(x) &= (I-Q_{\parallel})f(x).
\end{align*} Equation \eqref{k0_eps_prob_expanded} may then be rewritten as the equivalent system
\begin{align}
  Q_{\parallel}\left(V_{\ee} - E^{(1)}\right)\left(\Phi^{(0)}+\varepsilon \Phi^{(1)}\right) &= 0,
\label{k0_qpara} \\
    Q_{\perp}\left(V_{\ee} - E^{(1)}\right)\left(\Phi^{(0)}+\varepsilon \Phi^{(1)}\right) &=
\left(-\partial_x^2-E^{(0)}\right) \Phi^{(1)}. \label{k0_qperp}
\end{align} 

Introducing the resolvent operator
\begin{equation*}
 R(E^{(0)}) = \left(-\partial_x^2 - E^{(0)}\right)^{-1}: 
 Q_{\perp}L^2 \to Q_{\perp}H^2 \ ,
\end{equation*}
 equation \eqref{k0_qperp} for $\Phi^{(1)}$ may be equivalently written as
 \begin{equation}
 \left( I + \mathcal{A}_\eps \right) \Phi^{(1)}\ =\ R(E^{(0)})\ {Q}_\perp\ 
 \left(V_{\ee}(x) - E^{(1)}\right) \Phi^{(0)},
 \label{D_p1-perp2}
 \end{equation}
 where
 \begin{equation}\label{D_A-def}
 f\mapsto \mathcal{A}_\eps f\ \equiv\ -\eps R(E^{(0)})\ {Q}_{\perp} 
 \left(V_{\ee}(x) - E^{(1)}\right) \ f
 \end{equation}
 is a bounded operator on $H^2_{\rm per}[0,1]$. Furthermore, for $\eps$ sufficiently small,
the operator norm of $\mathcal{A}_\eps$ is less than one, $(I+\mathcal{A}_\eps)^{-1}$ exists, and hence
\eqref{D_p1-perp2} is uniquely solvable in ${Q}_\perp H^2_{\rm per}[0,1]$:
 \begin{align}
 \Phi^{(1)}\ &= \left( I - \eps \ R(E^{(0)}) {Q}_{\perp}
 \left(V_{\ee}(x) - E^{(1)}\right)  \right)^{-1}
  R(E^{(0)}) {Q}_\perp\
 \left(V_{\ee}(x) - E^{(1)}\right)) \Phi^{(0)}.
 \label{D_p1-perp3}
 \end{align}

Recall that $\Phi^{(0)}=\alpha \Psi_1(x)+\beta \Psi_2(x)$ and therefore that $\Phi^{(1)}$ is linear in $\alpha$ and
$\beta$. We may therefore write
\begin{equation}
  \Phi^{(1)} = g^{(1)}[\varepsilon,E^{(1)}](x)\alpha +  g^{(2)}[\varepsilon,E^{(1)}](x)\beta, \label{k0_psi1}
\end{equation}
where $(\varepsilon,E^{(1)})\mapsto g^{(j)}(\varepsilon,E^{(1)})$ is a smooth mapping from a
neighborhood of $(0,0)\in\R\times\mathbb{C}$ into $H^2_{\rm per}([0,1])$, which satisfies the bound
\begin{equation}
\label{g_bounds}
 \norm{g^{(j)}(\eps,E^{(1)})}_{H^2([0,1])} \lesssim\ 1+ \abs{\eps} + |E^{(1)}|\ ,~~~j=1,2.
\end{equation} 
Note also that
\begin{equation}
 \label{k0_g_perp}
 Q_{\parallel}g^{(j)}(\varepsilon,E^{(1)})=0,~~~j=1,2.
\end{equation}

We may now substitute \eqref{k0_psi1} into \eqref{k0_qpara}, projecting onto $\Psi_1$ and $\Psi_2$ to obtain a
system of two homogeneous linear equations for $\alpha$ and $\beta$. Noting the relations
\begin{align*}
 \inner{\Psi_i,\Psi_{j}}_{L^2[0,1]} &= \delta_{ij},~~~i,j=1,2,\\
 \inner{\Psi_i,V_{\ee}\Psi_i}_{L^2[0,1]} &= v_0, ~~~i=1,2,\\
\inner{\Psi_i,V_{\ee}\Psi_j}_{L^2[0,1]} &= v_{2n}, ~~~i,j=1,2,~i\neq j,
\end{align*}
and 
\begin{equation*}
 \inner{\Psi_{i},g^{(j)}(\varepsilon,E^{(1)})}_{L^2[0,1]}=0,~~~i,j=1,2,
\end{equation*} from \eqref{k0_g_perp}, this system may be compactly written as
\begin{equation}
 \label{k0_matrix_sys}
 \mathcal{M}(\varepsilon,E^{(1)}) \begin{pmatrix} \alpha\\\beta \end{pmatrix} =0,
\end{equation} with (inner products in \eqref{Mk0-def} are over $L^2([0,1])$)
\begin{align}
&\mathcal{M}(\varepsilon,E^{(1)}) \label{Mk0-def} \\
&\quad = 
 \begin{pmatrix} 
 v_0-E^{(1)} +\varepsilon \inner{\Psi_{1},V_{\ee}g^{(1)}[\varepsilon,E^{(1)}]} &
 v_{2n} +\varepsilon \inner{\Psi_{1},V_{\ee}g^{(2)}[\varepsilon,E^{(1)}]}\\
 v_{2n} +\varepsilon \inner{\Psi_{2},V_{\ee}g^{(1)}[\varepsilon,E^{(1)}]} & 
 v_0-E^{(1)} +\varepsilon \inner{\Psi_{2},V_{\ee}g^{(2)}[\varepsilon,E^{(1)}]}
\end{pmatrix}.
\nn
\end{align}

Therefore $E(\varepsilon) = E^{(0)} + \varepsilon E^{(1)}$ is an eigenvalue for the spectral problem
\eqref{k0_eps_prob} if and only if $E^{(1)}= E^{(1)}(\varepsilon)$ solves
\begin{equation}
 \label{k0_det}
 \text{det}\mathcal{M}\left(\varepsilon,E^{(1)}\right) =0,
\end{equation} or equivalently
\begin{align}
 \label{k0_eval_corrections}
 \mathcal{J}(E^{(1)},\eps) \equiv (E^{(1)})^2 - 2 E^{(1)} v_0 + v_0^2 - v_{2n}^2 + \eps\rho(\eps,E^{(1)}) \ ,
\end{align}
where $(\eps,E^{(1)}) \mapsto \rho(\eps,E^{(1)})$ is a smooth mapping from a neighborhood of
$(0,0)\in\R\times\mathbb{C}$ given by
\begin{align*}
\rho(\eps,E^{(1)}) &\equiv (v_0-E^{(1)}) \left(\inner{\Psi_{1},V_{\ee}g^{(1)}} +
\inner{\Psi_{2},V_{\ee}g^{(2)}} \right) \\
&\qquad- v_{2n} \left(\inner{\Psi_{1},V_{\ee}g^{(2)}} +
\inner{\Psi_{2},V_{\ee}g^{(1)}} \right) \\
&\qquad + \eps 
\left(\inner{\Psi_{1},V_{\ee}g^{(1)}}\inner{\Psi_{2},V_{\ee}g^{(2)}}
-\inner{\Psi_{2},V_{\ee}g^{(1)}}\inner{\Psi_{1},V_{\ee}g^{(2)}} \right) ,
\end{align*} 
and satisfies the bound
\begin{equation*}
\rho(\eps,E^{(1)}) = \mathcal{O}\left(1+\abs{\eps}+\abs{E^{(1)}}\right) ,
\end{equation*}
for $\abs{\eps}+\abs{E^{(1)}}$ small. We solve $\mathcal{J}(E^{(1)},\eps)=0$, \eqref{k0_eval_corrections}, for
$E^{(1)}$ with $\eps$ small using the implicit function theorem.

Since, for
\begin{equation*}
 \nu_{\pm} = \nu_{\pm,n} = v_0 \pm v_{2n} \ ,
\end{equation*} $\mathcal{J}(E^{(1)},\eps)$ is analytic in a neighborhood of $(E^{(1)},\eps)=(\nu_{\pm},0)$, 
$\mathcal{J}(\nu_\pm,0)=0$ and $\partial_{E^{(1)}}\mathcal{J}(\nu_\pm,0) = \pm 2 v_{2n}$,
it follows from the implicit function theorem that for $v_{2n}\neq0$, there exists an
$\eps_0>0$ and continuous functions $\eps\mapsto E_\pm^{(1)}(\eps)$, satisfying
\begin{equation*}
 \mathcal{J}(E_\pm^{(1)}(\eps),\eps) = 0, \quad \text{for $\eps\in(0,\eps_0)$} \ .
\end{equation*} Thus, for $v_{2n}\ne0$ and $\varepsilon$ sufficiently small and nonzero, the double
eigenvalue at $k=0$
splits and the degeneracy is ``lifted'': 
\begin{align*}
 E_{\pm}(\varepsilon) &=  E^{(0)} + \varepsilon E_\pm^{(1)}(\varepsilon) \\
 &= (2n\pi)^2 + \varepsilon\left(v_{0} \pm v_{2n}\right) +
\mathcal{O}\left(\varepsilon^2\right).
\end{align*}

\chapter{Gap Opening Due to Breaking of Inversion Symmetry}\label{band_splitting}

We consider the Floquet-Bloch eigenvalue problem \eqref{FB-evp}
\begin{align}
\left( H_{V_\ee}(k) + \delta \kappa_\infty W_\oo(x) \right) p\ &=\ E p,\quad  p(x+1;k)=  p(x;k),\ \ {\rm
where}\label{E_fb-p}\\
H_{V_\ee}(k)\ &\equiv\ -(\D_x+ik)^2+V_\ee(x) \ , \label{E_HVofk-def}
\end{align}
for $k$ near $k_\star-\pi$ and $\delta$ small and nonzero. Here we allow $\delta$ to take both positive and negative
values: $\delta>0$ corresponds to the $\kappa_\infty$ case while $\delta<0$ corresponds to the
$-\kappa_\infty$ case. We let 
 \[ k=k_{\star}+\delta {\zeta}\ ,\]
 where $\zeta$ is bounded and rewrite the Floquet-Bloch eigenvalue problem as
 \begin{align}
 \label{E_dis_rel_prob}
& \left(-(\partial_x+i(k_\star+ \delta \zeta))^2+  V_{\ee}(x) +\delta \kappa_\infty W_\oo(x)
\right)p(x;k_{\star}+\delta {\zeta}) \nn \\
&\qquad\qquad = E(k_\star+ \delta \zeta)\ p(x;k_\star+ \delta \zeta),\\
& p(x+1;k_{\star}+\delta {\zeta}) = p(x;k_{\star}+\delta {\zeta}).\nn
\end{align} 
\medskip

Recall that for $\delta=0$, $E_{\star}$ is a doubly degenerate eigenvalue with corresponding eigenspace
spanned by $\{\Phi_1,\Phi_2\}$. We study how $E$ in \eqref{E_dis_rel_prob} deforms away from $E_\star$ for $\delta$
small. Let
\begin{equation*}
p_j(x)=e^{-ik_\star x}\Phi_j(x)
\end{equation*}
and seek a solution  of the form:
\begin{align}
E(k_\star+\delta \zeta)&=E_\star+ \delta E^{(1)},\label{E_E-ansatz}\\
 p(x;k_{\star}+\delta {\zeta})&= p^{(0)} + \delta p^{(1)}, \qquad p^{(0)}\equiv\alpha p_1+\beta p_2\ ,
\label{E_p-ansatz}
\end{align}
where
\begin{equation*}
\inner{ p_j,p^{(1)} }_{L^2[0,1]}=0,\quad j=1,2 .
\end{equation*}

Substituting \eqref{E_E-ansatz}-\eqref{E_p-ansatz} into \eqref{E_dis_rel_prob}, we obtain
an inhomogeneous problem for $p^{(1)}\in L^2_{\rm per}[0,1]$:
\begin{align}
\left(H_{V_\ee}(k_\star)-E_\star\right)p^{(1)} &= \left(2i\delta \zeta(\D_x+ik_\star)-\delta^2 \zeta^2 +\delta E^{(1)} -
\delta\kappa_\infty W_\oo(x) \right)p^{(1)}\nn\\
&\qquad +\left(2i\zeta(\D_x+ik_\star)-\delta\zeta^2 + E^{(1)} -\kappa_\infty W_\oo(x) \right)p^{(0)}
\nn\\
&\equiv J(\alpha,\beta,\delta,{\zeta},E^{(1)},p^{(1)})\ .
\label{E_p1-eqn}
\end{align}

We proceed to follow a Lyapunov-Schmidt reduction argument. Introduce the orthogonal projections
$\widetilde{Q}_{\parallel}$ and $\widetilde{Q}_{\perp}$ defined by:
\begin{align*}
 \widetilde{Q}_{\parallel}f(x)&=\inner{p_1,f}_{L^2([0,1])}p_1(x)+\inner{p_2,f}_{L^2([0,1])}
p_2(x),  \\ 
 \widetilde{Q}_{\perp}f(x) &= (I-\widetilde{Q}_{\parallel})f(x).
\end{align*} 
Equation \eqref{E_p1-eqn} may then be rewritten as the equivalent system for the unknowns
$p^{(1)}=p^{(1)}(x;\delta,{\zeta})$ and $E^{(1)}=E^{(1)}(\delta,{\zeta})$:
\begin{align}
\left(H_{V_\ee}(k_\star)-E_{\star}\right) p^{(1)} &=
\widetilde{Q}_{\perp}J(\alpha,\beta,\delta,{\zeta},E^{(1)},p^{(1)}),\label{E_p1-perp}\\
0 &= \widetilde{Q}_{\parallel}J(\alpha,\beta,\delta,{\zeta},E^{(1)},p^{(1)}).\label{E_p1-para}
\end{align} 
\medskip

In detail, system \eqref{E_p1-perp}-\eqref{E_p1-para} reads as
\begin{align}
\left(H_{V_\ee}(k_\star)-E_{\star}\right)p^{(1)} &=
\widetilde{Q}_{\perp} \left(2i\delta{\zeta}(\D_x+ik_\star)-\delta^2\zeta^2+\delta E^{(1)} -\delta\kappa_\infty
W_\oo(x) \right) p^{(1)} \nn\\
&\qquad + \widetilde{Q}_{\perp} \Big(2i{\zeta}(\D_x+ik_\star) -\kappa_\infty W_\oo(x) \Big) p^{(0)}
,\label{E_p1-perp1} \\
0&= \widetilde{Q}_{\parallel} \left(2i{\zeta}(\D_x+ik_\star)-\delta\zeta^2+E^{(1)} -\kappa_\infty W_\oo(x)
\right) p^{(0)} \nn \\
&\qquad + \widetilde{Q}_{\parallel} \Big(2i\delta{\zeta}(\D_x+ik_\star) -\delta\kappa_\infty W_\oo(x) \Big)
p^{(1)} \label{E_p1-para1} .
\end{align}

Introduce the resolvent
\begin{equation*}
 R_{k_\star}(E_{\star}) = \left(H_{V_\ee}(k_\star) - E_{\star}\right)^{-1},
\end{equation*} defined as a bounded map from $\widetilde{Q}_{\perp}L^2[0,1]$ to $\widetilde{Q}_{\perp}H^2[0,1]$.
 Equation \eqref{E_p1-perp1} for $p^{(1)}$ may then be rewritten as
 \begin{equation}
 \left( I + A_\delta \right) p^{(1)}\ =\ R_{k_\star}(E_\star)\ \widetilde{Q}_\perp\ 
 \big(2i{\zeta}(\D_x+ik_\star)-\kappa_\infty W_\oo(x) \big) p^{(0)},
 \label{E_p1-perp2}
 \end{equation}
 where
 \begin{equation*}
 f\mapsto A_\delta f\ \equiv\ -R_{k_\star}(E_\star)\ \widetilde{Q}_{\perp} 
 \left(2i\delta{\zeta}(\D_x+ik_\star)-\delta^2\zeta^2+\delta E^{(1)} -\delta\kappa_\infty W_\oo(x) \right) \ f
 \end{equation*}
 is a bounded operator on $H^2_{\rm per}[0,1]$. Furthermore, for $\delta$ sufficiently small,
the operator norm of $A_\delta$ is less than one, $(I+A_\delta)^{-1}$ exists, and hence \eqref{E_p1-perp2} is uniquely
solvable in $\widetilde{Q}_\perp H^2_{\rm per}[0,1]$:
 \begin{align*}
 p^{(1)}\ &=\ \left( I\ -\ R_{k_\star}(E_\star)\ \widetilde{Q}_{\perp}
 \left(2i\delta{\zeta}(\D_x+ik_\star)-\delta^2\zeta^2+ \delta E^{(1)} -\delta\kappa_\infty W_\oo(x) \right) \
\right)^{-1}\nn\\
 &\qquad \circ\ R_{k_\star}(E_\star)\ \widetilde{Q}_\perp\ 
 \left(2i{\zeta}(\D_x+ik_\star) -\kappa_\infty W_\oo(x) \right) p^{(0)}.
 \end{align*}

Recall that $p^{(0)}=\alpha p_1(x)+\beta p_2(x)$ (equation \eqref{E_p-ansatz}) and therefore that $p^{(1)}$ is 
linear in $\alpha$ and $\beta$. We may therefore write
\begin{equation}
\label{E_p1-perp4}
 p^{(1)} = p^{(1)}(x;\delta,\zeta,E^{(1)}) = g^{(1)}[\delta,{\zeta},E^{(1)}](x)\alpha +
g^{(2)}[\delta,{\zeta},E^{(1)}](x)\beta,
\end{equation} 
where $(\delta,\zeta,E^{(1)})\mapsto g^{(j)}(\delta,\zeta,E^{(1)})$ is a smooth mapping from a neighborhood of
$(0,0,0)\in\R\times\R\times\mathbb{C}$ into $H^2_{\rm per}([0,1])$, which satisfies the bound
\begin{equation*}
 \norm{g^{(j)}(\delta,\zeta,E^{(1)})}_{H^2([0,1])} \lesssim\ 1 + \zeta + |\delta|\left(1+|E^{(1)|}\right)  \ ,~~~j=1,2.
\end{equation*}
Note also that 
\begin{equation}
\widetilde{Q}_\parallel g^{(j)}(\delta,\zeta,E^{(1)})=0,\ j=1,2.
\label{E_Ppar}
\end{equation}

We may now substitute \eqref{E_p1-perp4} into \eqref{E_p1-para1} and obtain a system of two homogeneous linear
equations for $\alpha$ and $\beta$. To express this system in a compact form we employ several simplifying
relations. Recall that
\begin{align}
\left(\D_x+ik_\star\right)p_j &= e^{-ik_\star x}\ \D_x\ e^{ik_\star x}p_j= e^{-ik_\star
x}\D_x\Phi_j\label{E_simple_0},\\
\left\langle p_i,p_j\right\rangle_{L^2[0,1]} &=\left\langle \Phi_i,\Phi_j\right\rangle_{L^2[0,1]}=\delta_{ij},\ j=1,2,
\end{align}
along with the definitions of $\lamsharp$ and $\thetasharp$ in \eqref{thetasharp_defn}:
\begin{align}
 &2i\inner{p_i,(\partial_x + ik_\star) p_i}_{L^2[0,1]} = 2i\inner{\Phi_i,\partial_x \Phi_i}_{L^2[0,1]} =
(-1)^{i+1} \lamsharp, ~~~i=1,2,\\
&\inner{p_i,W_\oo(x) p_i}_{L^2[0,1]} = \inner{\Phi_i,W_\oo(x) \Phi_j}_{L^2[0,1]} = \thetasharp,
~~~~i\neq j.
\end{align}
Define
\begin{equation}
G^{(j)}[\zeta,E^{(1)}](x)= e^{ik_\star x}\ g^{(j)}[\zeta,E^{(1)}](x),\ \textrm{ and note}\  \left\langle
\Phi_i,G^{(j)}\right\rangle_{L^2[0,1]}=0,\ i,j=1,2
\end{equation}
by \eqref{E_Ppar}. 
Furthermore, since
\begin{equation*}
f\in L^2_{k_\star,\ee},\ g\in L^2_{k_\star,\oo}\ \ \implies\ \ \left\langle f,g\right\rangle=0,
\end{equation*}
we obtain
\begin{align}
&\inner{p_i,(\partial_x +ik_\star) p_j}_{L^2[0,1]} = \inner{\Phi_i,\partial_x \Phi_j}_{L^2[0,1]} = 0, ~~~i\neq j, \\
&\inner{p_i,W_{\oo}p_i}_{L^2[0,1]} = \inner{\Phi_i,W_{\oo}\Phi_i}_{L^2[0,1]} = 0, ~~~i=1,2, \label{E_simple_1}
\end{align} 

Using the above relations \eqref{E_simple_0}-\eqref{E_simple_1}, equation \eqref{E_p1-para1} for $(\alpha, \beta)$ can
now be written in the form
\begin{equation}
\mathcal{M}(E^{(1)},\delta,{\zeta})
\begin{pmatrix}
 \alpha \\ \beta
\end{pmatrix} =0,
\label{E_Mab0}
\end{equation}
 with (inner products in \eqref{E_M-def} are over $L^2([0,1])$)
\begin{align}
\mathcal{M}(E^{(1)},\delta,{\zeta}) &\equiv
\begin{pmatrix}
 E^{(1)}+{\zeta}\lamsharp &
-\kappa_\infty\thetasharp \\
-\kappa_\infty\thetasharp &
E^{(1)}-{\zeta}\lamsharp 
\end{pmatrix} \ + \nn\\
&\ \delta
\begin{pmatrix}
-{\zeta}^2 + \inner{\Phi_1,(2i \zeta\partial_x -\kappa_\infty W_\oo) G^{(1)}} &
\inner{\Phi_1,(2i \zeta\partial_x -\kappa_\infty W_\oo) G^{(2)}} \\
-{\zeta}^2 + \inner{\Phi_2,(2i \zeta\partial_x -\kappa_\infty W_\oo) G^{(1)}} &
\inner{\Phi_2,(2i \zeta\partial_x -\kappa_\infty W_\oo) G^{(2)}}
\end{pmatrix},
\label{E_M-def}
\end{align} where $(\delta,\zeta,E^{(1)}) \mapsto G^{(j)}[\delta,\zeta,E^{(1)}], \D_x G^{(j)}[\delta,\zeta,E^{(1)}]$ 
are smooth
functions of \\
$(\delta,\zeta,E^{(1)})$ in a neighborhood  of $(0,0,0)$ and
\begin{align*}
 &  \|G^{(j)}[\delta,\zeta, E^{(1)}]\|_{L^2[0,1]} +
\|\D_x G^{(j)}[\delta,\zeta, E^{(1)}]\|_{L^2[0,1]}=
\mathcal{O}\left(1 + |\zeta| + |\delta|\left( 1+\abs{E^{(1)}} \right)\right).
\end{align*}

\medskip

Thus $E=E_{\star}+E^{(1)}(\delta,\zeta)$ is an eigenvalue for the spectral problem \eqref{E_dis_rel_prob} if and
only if $E^{(1)}=E^{(1)}(\delta,\zeta)$ solves
\begin{equation*}
 \text{det}\mathcal{M}\left(E^{(1)},\delta,{\zeta}\right)=0,
\end{equation*}
 or equivalently
\begin{equation*}
\mathcal{J}(E^{(1)},\delta,\zeta) = 0, 
\end{equation*}
where
\begin{align*}
 \mathcal{J}(E^{(1)},\delta,\zeta) &\equiv (E^{(1)})^2 - \zeta^2\lamsharp^2 - \kappa_\infty^2\thetasharp^2 
 + \delta \rho(\delta,\zeta,E^{(1)}) \ ,
\end{align*}
 and
 \begin{align*}
 \norm{\rho(\delta,\zeta,E^{(1)})}_{L^2[0,1]} &\lesssim 1 + \zeta + |\delta|\left(1+|E^{(1)}\right) \ .
\end{align*}

Note that $\mathcal{J}(E^{(1)},0,\zeta)$ has a solution
\begin{equation*}
 E^{(1)} = \nu_{\pm}(\zeta) \equiv \pm \sqrt{\kappa_\infty^2\thetasharp^2 + \zeta^2\lamsharp^2}.
\end{equation*}
We will now apply the implicit function theorem to prove that there exists a $\delta_0>0$ and smooth functions
$E^{(1)}_\pm(\delta,\zeta)$  given by
\begin{equation*}
 E^{(1)}_\pm(\delta,\zeta) = \nu_\pm(k)\left(1+\mathcal{O}(\delta)\right) = \pm \sqrt{\kappa_\infty^2\thetasharp^2 +
\zeta^2\lamsharp^2} + \mathcal{O}(\delta) \ ,
\end{equation*} and defined for $|\delta|<\delta_0$ and $|\zeta|\leq1$ such that
\[ \mathcal{J}(E^{(1)}_\pm(\delta,\zeta),\delta,\zeta) = 0 \ . \]
To confirm this it suffices to check that 
\[\partial_{E^{(1)}}\mathcal{J}(\nu_{\pm},0,\zeta) = 2 \nu_{\pm}(\zeta) \neq 0, \]
which holds since by assumption $\thetasharp\neq0$.

Altogether then, for all $|\delta|<\delta_0$ and nonzero and quasi-momentums $k$ such that 
\begin{equation}
\label{E_k_cond}
 |k-k_\star| < \delta ,
\end{equation}
the eigenvalue $E$ in \eqref{E_dis_rel_prob} splits:
\begin{equation}
\label{E_Epm_defn}
 E_{\delta,\pm}(k) =  E^{(0)} + \delta E_\pm^{(1)}(\delta,k/\delta) = E_{\star} \pm
\sqrt{\delta^2\kappa_\infty^2\thetasharp^2 + k^2\lamsharp^2} +\mathcal{O}\left(\delta^2\right).
\end{equation}

\chapter{Bounds on Leading Order Terms in Multiple Scale Expansion}\label{psi_bound_proof}

In this section we prove bounds on $x\mapsto\psi^{(0)}(x,\delta x)$ and $\psi^{(1)}_p(x,\delta x)$ in order to prove Lemma \ref{lemma:psi_bounds}.
 For any $s\ge0$ the bound 
\[
\norm{\psi^{(0)}(\cdot,\delta\cdot)}_{H^s(\R)}\ +\  
\norm{\left.\D_X^2\psi^{(0)}(x,X)\right|_{X=\delta x}}_{L^2(\R)}\ \lesssim\ \delta^{-\frac12}
\]
is a straightforward consequence of the expression for $\psi^{(0)}(x,\delta x)$ and 
 the boundedness of
 $\Phi_j,\ j=1,2$ and its derivatives in $L^\infty(\R)$. 

Turning now to the  bounds on $\psi^{(1)}_p(x,X)$, recall
from \eqref{perturbed_schro_delta1} and \eqref{G1def} that $\psi^{(1)}_p$ satisfies:
\begin{align}
 &(-\partial_x^2+V_{\ee}(x)-E_{\star})\psi^{(1)}_p(x,X)\nn\\
  &=
\sum_{j=1}^2\left.\Big[2\partial_{x}\Phi_j(x)\partial_{X}\alpha_{\star,j}(X)
+\left(-\kappa(X)W_{\oo}(x)\right)\alpha_{\star,j}(X)\Phi_j(x)\Big]\right|_{X=\delta x} \nn \\ 
 & \equiv G^{(1)}(x,X)\Big|_{X=\delta x} \equiv \sum_{i=1}^4f_i(x)g_i(X)\Big|_{X=\delta x}\ ,\ \ \psi^{(1)}_p(x+1,X)
=e^{ik_\star}\psi^{(1)}_p(x,X)
 \label{psi1_eqn}
\end{align} 
Recall further that  $W_{\oo}(x)$, $\Phi_j(x,k)$ are smooth and have derivatives with respect to $x$ which are uniformly bounded
on $\R$, and furthermore that  $\alpha_{\star}=(\alpha_{\star,1},\alpha_{\star,2})\in\mathcal{S}(\R)$ (Remark
\ref{regularity}), and therefore that
$f_i(x)$ is bounded, smooth and $k_{\star}$-pseudo periodic, and that $g_{i}(X)\in\mathcal{S}(\R_X)$.

The solution of \eqref{psi1_eqn}, $\psi^{(1)}_p(x,X)$, can be expanded in terms of a complete set of  $k_\star-$
pseudo-periodic states: $\Phi_b(x,k_\star)=e^{ik_{\star}x}p_b(x;k_{\star})$ , $p_b(x+1;k_{\star})=p_b(x;k_{\star})$,\ 
$b\geq1$. Here, we recall that the Dirac point occurs at the intersection of the $b_\star$-band and $(b_\star+1)$-band.
\begin{equation}
 \label{lemma_psi1_exp}
 \psi^{(1)}_p(x,\delta x) = \sum_{\underset{b\neq b_\star,b_{\star+1}}{b\geq1}}
\frac{1}{(E_b(k_{\star})-E_{\star})}\inner{\Phi_b(\cdot;k_{\star}),G^{(1)}(\cdot,X)}_{L^2([0,1])}
\bigg|_{X=\delta x}\Phi_b(x;k_{\star})\ .
\end{equation} 
Squaring and integrating over $\R$ yields:
\begin{align}
 &\norm{\psi^{(1)}_p(\cdot,\delta\cdot)}_{L^2(\R)}^2 =
 \sum_{\underset{b\neq b_\star,b_{\star+1}}{b\geq1}}
\frac{1}{(E_b(k_{\star})-E_{\star})\ (E_{b'}(k_{\star})-E_{\star})} \times \nonumber\\&~~~ 
\int_{\R} \inner{\Phi_b(\cdot;k_\star),G^{(1)}(\cdot,\delta x)}
\ \overline{\inner{\Phi_{b'}(\cdot;k_\star),G^{(1)}(\cdot,\delta x)}}\ 
\cdot \Phi_b(x;k_\star)\overline{\Phi_{b'}(x;k_\star)}\ dx
\nonumber\\&
= \sum_{\underset{b\neq b_\star,b_{\star+1}}{b\geq1}} \frac{1}{ (E_b(k_{\star})-E_{\star})\ (E_{b'}(k_{\star})-E_{\star}) }
 \int_0^1\ p_b(x;k_{\star}) p_{b'}(x;k_{\star}) \ \times 
\nn\\&\qquad\qquad
 \Big[\ \sum_{m\in\mathbb{Z}}
\left\langle\Phi_b(\cdot;k_\star),G^{(1)}(\cdot,\delta(x+m))\right\rangle
\overline{\left\langle \Phi_{b'}(\cdot;k_\star),G^{(1)}(\cdot,\delta(x+m))\right\rangle}\ \Big] dx \label{lemma_psi1_exp2}\end{align}  
Consider now the sum in square brackets in \eqref{lemma_psi1_exp2}.  By \eqref{psi1_eqn},
\begin{align*}
&\Big|\ \sum_{m\in\mathbb{Z}}
\left\langle\Phi_b(\cdot;k_\star),G^{(1)}(\cdot,\delta(x+m))\right\rangle_{L^2([0,1])}
\overline{\left\langle \Phi_{b'}(\cdot;k_\star),G^{(1)}(\cdot,\delta(x+m))\right\rangle_{L^2([0,1])}}\ \Big|\nn\\
&\qquad = 
\Big|\ \sum_{1\le i,j\le4} \Big[
\left\langle\Phi_b(\cdot;k_\star),f_i(\cdot)\right\rangle_{L^2([0,1])}
\overline{\left\langle \Phi_{b'}(\cdot;k_\star),f_j(\cdot)\right\rangle_{L^2([0,1])}} \ \times \\
&\qquad\qquad \sum_{m\in\mathbb{Z}}
g_i(\delta(x+m))\overline{g_j(\delta(x+m))}\Big]\ \Big|\nn\\
&\qquad \le\ C\ \max_{1\le i,j\le4}\ \Big|\ \sum_{m\in\mathbb{Z}} g_i(\delta(x+m))\overline{g_j(\delta(x+m))}\ \Big|\ ,
\end{align*}
where $C$ is a constant depending only on bounds of $f_j,\ 1\le j\le4$.

Since $g_{i}\in\mathcal{S}(\R_X)$,  we may apply the
Poisson summation formula to the sum just above and obtain:
\begin{align}
 \label{lemma_F_sum}
 \sum_{m\in\mathbb{Z}}\overline{g_i(\delta(x+m))}g_j(\delta(x+m)) &=
\frac{2\pi}{\delta}\sum_{m\in\mathbb{Z}}\widehat{ \overline{g_i}\ g_j}\left(\frac{2\pi m}{\delta}\right)e^{2\pi imx}
\nonumber\\&
=\frac{2\pi}{\delta}\left[\widehat{\overline{g_i}g_j}\left(0\right) +
\sum_{\underset{m\neq0}{m\in\mathbb{Z}}}\widehat{\overline{g_i}g_j}\left(\frac{2\pi m}{\delta}\right)e^{2\pi
imx}\right]\ .
\end{align} 
Furthermore, since  $g_{i}\in\mathcal{S}(\R)$ we have for $\delta$ sufficiently small that the sum over nonzero $m$ is
bounded:
\begin{align*}
\max_{1\leq
i,j\leq4}\abs{\sum_{\underset{m\neq0}{m\in\mathbb{Z}}}\widehat{\overline{g_i}g_j}\left(\frac{2\pi
m}{\delta}\right)e^{2\pi imx}}
&\leq C \sum_{m=1}^{\infty} \frac{1}{1+\left(\frac{2\pi m}{\delta}\right)^{100}} \lesssim \sum_{m=1}^{\infty}
\left(\frac{\delta}{m}\right)^{100} \lesssim \delta^{100}\ .
\end{align*} Therefore, the full sum \eqref{lemma_F_sum} is of order of magnitude $\delta^{-1}$.
 Substitution of the above bounds into \eqref{lemma_psi1_exp2} and using the Weyl asymptotics: $E_b(k)\approx b^2,\
b\gg1$, we obtain
\begin{equation*}
 \norm{\psi^{(1)}_p(\cdot,\delta\cdot)}_{L^2(\R)} \lesssim \delta^{-1/2}. 
\end{equation*}
The general  $H^s$ bound: $\norm{\psi^{(1)}_p(\cdot,\delta\cdot)}_{H^2(\R)} \lesssim \delta^{-1/2}$ and the
bounds
\[\norm{\partial_X^2\psi^{(1)}_p(x,X)\Big|_{X=\delta x}}_{L^2(\R_x)} \lesssim \delta^{-1/2}\ \ \textrm{and}\ \ 
\norm{\partial_x\partial_X\psi^{(1)}_p(x,X)\Big|_{X=\delta x}}_{L^2(\R_x)} \lesssim \delta^{-1/2}\ ,\]
 are all obtained very similarly. We omit the details.

\medskip

\chapter{Derivation of Key Bounds and Limiting relations in the Lyapunov-Schmidt Reduction}\label{near_freq_limits}

The proof of our main theorem is reduced to the search for solutions, $\mu=\mu(\delta)$,  of the algebraic equation $\mathcal{J}_+[\mu,\delta]=0$; see Section \ref{final-reduction} of Chapter \ref{sec:proof-exists-mode} .
The construction of  $(\mu,\delta)\mapsto\mathcal{J}_+[\mu,\delta]$ requires the following:
\begin{proposition}
Let $0<\tau<1/2$. There exist constants $\delta_0, C_M>0$, such that for all $0<\delta<\delta_0$:
\begin{align}
 \lim_{\delta\rightarrow0}\delta\inner{\widehat{\alpha}_{\star}(\cdot),
\widehat{\mathcal{M}}(\cdot;\delta)}_{L^2(\R)} =
\lim_{\delta\rightarrow0}\sum_{j=1}^3\delta\inner{\widehat{\alpha}_{\star}(\cdot),
\widehat{\mathcal{M}}_j(\cdot;\delta)}_{L^2(\R)}   &=1; \label{limit1} \\
 \lim_{\delta\rightarrow0}\delta\inner{\widehat{\alpha}_{\star}(\cdot),
\widehat{\mathcal{N}}(\cdot;\delta)}_{L^2(\R)} =
\lim_{\delta\rightarrow0}\sum_{j=1}^4\delta\inner{\widehat{\alpha}_{\star}(\cdot),
\widehat{\mathcal{N}}_j(\cdot;\delta)}_{L^2(\R)} &= -\mu_0; \label{limit2} \\
 \abs{\delta\inner{\widehat{\alpha}_{\star}(\cdot), \left(\widehat{\mathcal{D}}^{\delta}-\widehat{\mathcal{D}}\right)
\widehat{\beta}(\cdot;\mu,\delta)}_{L^2(\R)}} &\leq C_M \delta^{1-\tau}; \label{bound1} \\
 \abs{\delta\inner{\widehat{\alpha}_{\star}(\cdot), \widehat{\mathcal{L}}^{\delta}(\mu)
\widehat{\beta}(\cdot;\mu,\delta)}_{L^2(\R)}} &\leq C_M \delta^{\tau}; \label{bound2} \\
 \abs{\delta^2 \mu\inner{\widehat{\alpha}_{\star}(\cdot), \widehat{\beta}(\cdot;\mu,\delta)}_{L^2(\R)}} 
&\leq C_M \delta. \label{bound3}
\end{align}
\end{proposition}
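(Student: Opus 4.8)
\textbf{Proof proposal for the final Proposition (the near-frequency limits and bounds, \eqref{limit1}--\eqref{bound3}).}
The plan is to treat each of the five assertions separately, exploiting the explicit formulas for $\widehat{\mathcal{M}}(\xi;\delta)$, $\widehat{\mathcal{N}}(\xi;\delta)$, $\widehat{\mathcal{D}}^{\delta}-\widehat{\mathcal{D}}$, and $\widehat{\mathcal{L}}^{\delta}(\mu)$ from Proposition \ref{near_freq_compact} and equations \eqref{bl-dirac-op}--\eqref{N_op_4}, together with: (i) the $L^{2,1}$ bound $\|\widehat{\beta}(\cdot;\mu,\delta)\|_{L^{2,1}(\R)}\lesssim\delta^{-1}$ from Proposition \ref{solve4beta}; (ii) the operator bounds $\|(\widehat{\mathcal{D}}^{\delta}-\widehat{\mathcal{D}})\widehat{\beta}\|_{L^2}\lesssim\delta^{1-\tau}\|\widehat{\beta}\|_{L^{2,1}}$ and $\|\widehat{\mathcal{L}}^{\delta}(\mu)\widehat{\beta}\|_{L^2}\lesssim\delta^{\tau}\|\widehat{\beta}\|_{L^{2,1}}$ from Proposition \ref{lemma12}; (iii) the fact that $\widehat{\alpha}_\star$ is Schwartz class, hence in $L^2(\R)$ with all moments finite; and (iv) Lemma \ref{lemma:psi_bounds} for the $\psi^{(0)},\psi^{(1)}_p$ bounds. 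The bounds \eqref{bound1}--\eqref{bound3} then follow immediately by Cauchy--Schwarz: for \eqref{bound1}, $\delta|\langle\widehat{\alpha}_\star,(\widehat{\mathcal{D}}^{\delta}-\widehat{\mathcal{D}})\widehat{\beta}\rangle|\le\delta\|\widehat{\alpha}_\star\|_{L^2}\cdot\delta^{1-\tau}\|\widehat{\beta}\|_{L^{2,1}}\lesssim\delta\cdot\delta^{1-\tau}\cdot\delta^{-1}=\delta^{1-\tau}$; for \eqref{bound2}, $\delta\|\widehat{\alpha}_\star\|_{L^2}\cdot\delta^{\tau}\|\widehat{\beta}\|_{L^{2,1}}\lesssim\delta^{\tau}$; and for \eqref{bound3}, $\delta^2|\mu|\|\widehat{\alpha}_\star\|_{L^2}\|\widehat{\beta}\|_{L^2}\lesssim\delta^2 M\delta^{-1}=M\delta$. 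These three estimates are the routine part.

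The substantive work is in the two limits \eqref{limit1} and \eqref{limit2}. For \eqref{limit1}, I would show that only $\widehat{\mathcal{M}}_1$ contributes in the limit, while $\delta\langle\widehat{\alpha}_\star,\widehat{\mathcal{M}}_2\rangle$ and $\delta\langle\widehat{\alpha}_\star,\widehat{\mathcal{M}}_3\rangle$ vanish (the former because of the extra factor $\delta$ in \eqref{M_op_2}, the latter because of the bound $\|B(\cdot;\delta)\|_{H^2}\le\delta^{1/2-\tau}$ from \eqref{B_bdd} combined with Lemma \ref{parseval_bdd}, giving $\delta\cdot\delta^{-1/2}\cdot\delta^{1/2-\tau}=\delta^{1-\tau}\to0$). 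For the main term $\delta\langle\widehat{\alpha}_\star,\widehat{\mathcal{M}}_1\rangle$, I would pass to physical space: $\widehat{\mathcal{M}}_1$ is the vector of inner products $\langle\Phi_\pm(\cdot,k_\star+\delta\xi),\psi^{(0)}(\cdot,\delta\cdot)\rangle_{L^2(\R)}$ restricted to $|\xi|\le\delta^{\tau-1}$, and by the Poisson-summation Lemma \ref{poisson_exp} (with $\Gamma(x,\delta x)=\alpha_{\star,j}(\delta x)$, $g(x)=\overline{p_\pm(x,k_\star)}p_j(x;k_\star)$ and using $p_\pm(x,k_\star+\delta\xi)=p_\pm(x,k_\star)+\Delta p_\pm$), the leading contribution is the $m=0$ term, $\widehat{\alpha}_{\star,j}(\xi)\langle p_j(\cdot,k_\star),p_\pm(\cdot,k_\star)\rangle_{L^2[0,1]}=\widehat{\alpha}_{\star,j}(\xi)\,\delta_{j,\mp}$ (recalling $p_-=p_1$, $p_+=p_2$). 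After multiplying by $\delta$ and integrating against $\widehat{\alpha}_\star$, the Jacobian of the rescaling $k=k_\star+\delta\xi$ produces a compensating factor; more precisely, since $\delta\langle\widehat{\alpha}_\star,\widehat{\mathcal{M}}_1\rangle$ equals (up to the $\Delta p$ and $m\ne0$ error terms controlled by \eqref{deltapbdd} and Lemma \ref{I_bdds}, which are $O(\delta^\tau)$) the quantity $\delta\int_{|\xi|\le\delta^{\tau-1}}\big(\overline{\widehat{\alpha}_{\star,1}(\xi)}\widehat{\alpha}_{\star,1}(\xi)+\overline{\widehat{\alpha}_{\star,2}(\xi)}\widehat{\alpha}_{\star,2}(\xi)\big)d\xi$, which is not right dimensionally --- so I need to track the $2\pi$ and $\delta$ factors carefully through the definitions \eqref{eta_near_expression}, \eqref{teta-xi}, and the Poisson formula, and the correct normalization $\|\alpha_\star\|_{L^2}=1$ makes the limit exactly $1$. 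I would verify this constant by a direct computation using \eqref{plancherel} and $\|\widehat{\alpha}_\star\|^2$, isolating precisely the factors of $2\pi$ and $\delta$ introduced by the rescaling and the Fourier conventions of \eqref{FT-def}.

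For \eqref{limit2}, the strategy is parallel but now with four pieces $\widehat{\mathcal{N}}_1,\dots,\widehat{\mathcal{N}}_4$: the pieces $\widehat{\mathcal{N}}_3$ (extra factor $\delta$) and $\widehat{\mathcal{N}}_4$ (involving $C(\cdot;\delta)$ with $\|C\|_{H^2}\le\delta^{1/2-\tau}$) vanish in the limit by the same reasoning as for $\widehat{\mathcal{M}}_2,\widehat{\mathcal{M}}_3$, while $\widehat{\mathcal{N}}_1$ and $\widehat{\mathcal{N}}_2$ combine, after Poisson summation, to reproduce exactly the right-hand side of the order-$\delta^2$ solvability condition \eqref{solvability_cond_E2}: indeed $\widehat{\mathcal{N}}_1+\widehat{\mathcal{N}}_2$ is the Fourier-side version of the vector $\mathcal{G}^{(2)}(X)$ of \eqref{ipG}, so $\delta\langle\widehat{\alpha}_\star,\widehat{\mathcal{N}}_1+\widehat{\mathcal{N}}_2\rangle\to-\langle\alpha_\star,\mathcal{G}^{(2)}\rangle_{L^2(\R_X)}=\mu_0$ with a sign; taking the $m=0$ terms and the Jacobian factor as before gives $-\mu_0$. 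The bookkeeping here --- matching the Fourier-space expression term by term with the physical-space definition of $\mathcal{G}^{(2)}$, and in particular confirming that the $m\ne0$ and $\Delta p_\pm$ terms contribute only $O(\delta^\tau)$ via Lemma \ref{I_bdds} and \eqref{deltapbdd} --- is the main obstacle, since it requires carefully unwinding the chain of definitions \eqref{G2def}, \eqref{ipG}, \eqref{Fdefn}, \eqref{nearinxi}--\eqref{rhodefn} and keeping all normalizing constants straight. Once the $m=0$ terms are correctly identified and the error terms bounded, the two limits follow and the proof is complete. I expect the constant-tracking in \eqref{limit1}--\eqref{limit2} (getting exactly $1$ and exactly $-\mu_0$, not merely a nonzero multiple) to be the delicate point; the bounds \eqref{bound1}--\eqref{bound3} are straightforward applications of the already-established operator estimates.
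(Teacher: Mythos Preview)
Your proposal is correct and follows essentially the same route as the paper: bounds \eqref{bound1}--\eqref{bound3} via Cauchy--Schwarz plus Propositions \ref{lemma12} and \ref{solve4beta}, and the two limits via Poisson summation (Lemma \ref{poisson_exp}) after discarding the $\widehat{\mathcal{M}}_{2,3}$ and $\widehat{\mathcal{N}}_{3,4}$ terms. Your dimensional worry about the stray $\delta$ in \eqref{limit1} is resolved exactly as you anticipate---Lemma \ref{poisson_exp} absorbs it, the $m=0$ term gives $2\pi\widehat{\alpha}_{\star,\pm}(\xi)$, and then $2\pi\|\widehat{\alpha}_\star\|_{L^2}^2=\|\alpha_\star\|_{L^2}^2=1$ by the Plancherel relation \eqref{plancherel}; the orthonormality gives $\delta_{j,\pm}$ rather than $\delta_{j,\mp}$, but that slip is inconsequential.
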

We first recall from \eqref{psi0-soln} that
$
 \psi^{(0)}(x,X) =  \alpha_{\star,1}(X)\Phi_1(x) +\alpha_{\star,2}(X)\Phi_2(x) 
$. It is convenient to introduce the notation:
\begin{equation*}
\alpha_{\star,-}(X) \equiv \alpha_{\star,1}(X)\ \ {\rm and}\ \ \alpha_{\star,+}(X)\equiv \alpha_{\star,2}(X) .
\end{equation*}
It follows that
\begin{align}
 \psi^{(0)}(x,X) 
 &= \alpha_{\star,-}(X)\Phi_-(x;k_\star) +\alpha_{\star,+}(X)\Phi_+(x;k_\star)\ . \label{LS-leading}
\end{align}

\medskip

\nit \textbf{Proof of limit \eqref{limit1}.} There are three terms in $\widehat{\mathcal{M}}(\xi;\delta)$ that
contribute to the limit \eqref{limit1}. These are  the expressions $\widehat{\mathcal{M}}_j(\xi;\delta), j=1,2,3$,
displayed in 
\eqref{M_op_1}-\eqref{M_op_3}. We first check that the contributions from the $j=2$ and $j=3$ terms tend to zero in
the limit as
$\delta\rightarrow0$. We then evaluate the contribution from $\widehat{\mathcal{M}}_1(\xi;\delta)$.\medskip

Applying \eqref{mnbddformula} in Lemma \ref{parseval_bdd} and the bound \eqref{B_bdd}
 for $B(x;\delta)$, we
observe: 
\begin{align*}
&\delta\inner{\widehat{\alpha}_{\star}(\xi),\widehat{\mathcal{M}}_3(\xi;\delta)}_{L^2(\R_\xi)}\nn\\
 &\quad =\delta\ \abs{\inner{\widehat{\alpha}_{\star}(\xi),
\begin{pmatrix}
 \chi\left(\abs{\xi}\leq\delta^{\tau-1}\right) \inner{\Phi_{-}(\cdot,k_{\star}+\delta\xi),
\kappa(\delta\cdot)W_{\oo}(\cdot)B(\cdot;\delta)}_{L^2(\R)} \\
 \chi\left(\abs{\xi}\leq\delta^{\tau-1}\right) \inner{\Phi_{+}(\cdot,k_{\star}+\delta\xi),
\kappa(\delta\cdot)W_{\oo}(\cdot)B(\cdot;\delta)}_{L^2(\R)}
\end{pmatrix} }_{L^2(\R_\xi)} }\\ 
&\quad \lesssim
\delta\norm{\widehat{\alpha}_{\star}}_{L^2(\R)}
\delta^{-1/2}\norm{\kappa(\delta\cdot)W_{\oo}(\cdot)B(\cdot;\delta)}_{L^2(\R)} \\
&\quad  \lesssim \delta^{1/2}\norm{B(\cdot;\delta)}_{L^2(\R)} 
\lesssim \delta^{1/2}\delta^{1/2-\tau} \lesssim \delta^{1-\tau} \ .
\end{align*}
 Similarly,  %
\begin{align*}
&\delta\inner{\widehat{\alpha}_{\star}(\xi),
\widehat{\mathcal{M}}_2(\xi;\delta)}_{L^2(\R_\xi)}\\
&\qquad \le \norm{\widehat{\alpha}_\star}_{L^2(\R)} \norm{\delta^2 \chi\left(\abs{\xi}\leq\delta^{\tau-1}\right)
\begin{pmatrix}
 \inner{\Phi_{-}(\cdot,k_{\star}+\delta\xi),\psi^{(1)}_p(\cdot,\delta\cdot)}_{L^2(\R)} \\
 \inner{\Phi_{+}(\cdot,k_{\star}+\delta\xi),\psi^{(1)}_p(\cdot,\delta\cdot)}_{L^2(\R)}
\end{pmatrix}}_{L^2(\R_\xi)}\nn\\
 &\qquad  \leq \delta^2\delta^{-1/2}\norm{\psi^{(1)}_p(\cdot,\delta\cdot)}_{L^2(\R)}
\lesssim (\delta^2\delta^{-1/2})\delta^{-1/2}= \delta.
\end{align*}  
Therefore $\lim_{\delta\to0}\sum_{j=2}^3\delta\inner{\widehat{\alpha}_{\star}(\xi),
\widehat{\mathcal{M}}_j(\xi;\delta)}_{L^2(\R_\xi)}=0$ and it follows that
\[\lim_{\delta\to0}\delta\inner{\widehat{\alpha}_{\star}(\cdot),
\widehat{\mathcal{M}}(\cdot;\delta)}_{L^2(\R)}=
\lim_{\delta\to0}\delta\inner{\widehat{\alpha}_{\star}(\cdot),
\widehat{\mathcal{M}}_1(\cdot;\delta)}_{L^2(\R)}.\]
To compute this limit we begin with the following:\medskip

\noindent{\bf Claim A:} The components of $\widehat{\mathcal{M}}_1(\xi;\delta)$ satisfy 
\begin{align}
 \label{limita_claim}
&\chi\left(|\xi|\le \delta^{\tau-1}\right)\left[
\delta\inner{\Phi_{\pm}(\cdot,k_{\star}+\delta\xi),\psi^{(0)}(\cdot,\delta\cdot)}_{L^2(\R)} -
2\pi \widehat{\alpha}_{\star,\pm}(\xi) \right]
 =  o(1) \\
 &\qquad\qquad\qquad \text{as} ~ \delta\rightarrow 0\ \textrm{in}\ L^2(\R_\xi). \nn
\end{align} 
Assuming \eqref{limita_claim}, it  follows that
\begin{align*}
 \delta
\inner{\widehat{\alpha}_{\star}(\xi),\widehat{\mathcal{M}}_1(\xi;\delta)}_{L^2(\R_\xi)}
&= 2\pi\ \|\widehat{\alpha}_{\star}\|_{L^2(|\xi|\le\delta^{\tau-1})} + o(1) \\
&= \norm{\alpha_{\star}}_{L^2(\R)}^2 + o(1)=1+o(1)\ .
\end{align*} 

We now prove the claim \eqref{limita_claim}.
Recall from \eqref{LS-leading} that \\ $\psi^{(0)}(x,X)= \alpha_{\star,+}(X)p_+(x,k_\star)+\alpha_{\star,-}(X)p_-(x,k_\star)$ and that \\
$\Phi_\pm(x,k)=e^{ikx}p_\pm(x,k)$.
Therefore, 
\begin{align}
\label{limita_inner}
&\delta\inner{\Phi_{+}(\cdot,k_{\star}+\delta\xi),\psi^{(0)}(\cdot,\delta\cdot)}_{L^2(\R)}  \nn \\
&=\delta\inner{e^{i\xi\delta \cdot}p_{+}(\cdot,k_{\star}+\delta\xi), \alpha_{\star,+}(\delta
\cdot)p_{+}(\cdot,k_{\star})}_{L^2(\R)}\nn\\ 
&\quad + \delta\inner{e^{i\xi\delta \cdot}p_{+}(\cdot,k_{\star}+\delta\xi), \alpha_{\star,-}(\delta
\cdot)p_{-}(\cdot,k_{\star})}_{L^2(\R)},
\end{align} 
We now apply the expansion Lemma \ref{poisson_exp}, based on the Poisson summation, to expand the inner products in
\eqref{limita_inner}.
With the choices:
$f(x,\delta\xi)=p_{+}(x,k_{\star}+\delta\xi)$, $g(x)=p_{\pm}(x,k_{\star})$ and
$\Gamma(x,X)=\alpha_{\star,\pm}(X)$ we obtain, in $L^2_{\rm loc}(d\xi)$,
\begin{align}
&\delta\inner{\Phi_{+}(\cdot,k_{\star}+\delta\xi),\psi^{(0)}(\cdot,\delta\cdot)}_{L^2(\R)}  \nn \\
&= 2\pi\ \sum_{m\in\mathbb{Z}} \int_0^1 e^{imx} \overline{p_{+}(x,k_{\star}+\delta\xi)} \ \times \nn \\
&\qquad \qquad\left[ \widehat{\alpha}_{\star,+}
\left(\frac{2\pi m}{\delta}+\xi\right) p_{+}(x,k_{\star}) +
\widehat{\alpha}_{\star,-} \left(\frac{2\pi m}{\delta}+\xi\right)
p_{-}(x,k_{\star})\right] dx \nn\\
&= 2\pi\ \int_0^1 \overline{p_{+}(x,k_{\star}+\delta\xi)} \left[ \widehat{\alpha}_{\star,+}
\left(\xi\right) p_{+}(x,k_{\star}) +\widehat{\alpha}_{\star,-} \left(\xi\right)
p_{-}(x,k_{\star})\right] dx \nn\\
&\qquad+
2\pi\ \sum_{\abs{m}\geq1} \int_0^1 e^{imx} \overline{p_{+}(x,k_{\star}+\delta\xi)}\ \times  \nn \\
&\qquad\qquad \left[ \widehat{\alpha}_{\star,+}
\left(\frac{2\pi m}{\delta}+\xi\right) p_{+}(x,k_{\star}) +
\widehat{\alpha}_{\star,-} \left(\frac{2\pi m}{\delta}+\xi\right)
p_{-}(x,k_{\star})\right] dx \nn\\
&\equiv \widehat{a}_{+}(\xi;\delta) + \widehat{A}_{+}(\xi;\delta), \label{aA-def}
\end{align} 
where $\widehat{a}_{+}$ refers to the first line and $\widehat{A}_{+}$ to the second. An analogous expression holds for
the inner product in \eqref{limita_inner} with $\Phi_-$ in place of $\Phi_+$.

Since  $p_{\pm}(x,k)$ is uniformly bounded in $(x,k)\in[0,1]\times[0,2\pi]$ we may  pass to the limit as
$\delta\rightarrow0$ in $\widehat{a}_{\pm}(\xi;\delta)$ and using the orthonormality of $p_\pm(x,k)$ we obtain 
\begin{equation*}
\widehat{a}_{\pm}(\xi;\delta) = 2\pi\ \widehat{\alpha}_{\star,\pm}(\xi) + o(1),\ \ {\rm as}\ \ \delta\to0.
\end{equation*} 

With a view toward proving \eqref{limita_claim} we use \eqref{aA-def}. Subtracting $2\pi\widehat{\alpha}_{\star,+}(\xi)$
from  \eqref{aA-def} we obtain:   
\begin{equation}
\textrm{ LHS of}\ \eqref{limita_claim}\ =\ \chi(|\xi|\le\delta^{\tau-1})\left[\ \left(\ \widehat{a}_+(\xi;\delta) -
2\pi\ \widehat{\alpha}_{\star,+}(\xi)\ \right)
 + \widehat{A}_{+}(\xi;\delta)\ \right] \ ,
 \label{aA-diff}
\end{equation}
and we seek to show that this expression tends to zero in $L^2(\R_\xi)$. An analogous argument applies with $-$ in place
of $+$.
By orthonormality of $p_\pm(x,k)$, the  first term within the brackets of \eqref{aA-diff} can be written as
\begin{align}
\widehat{a}_+(\xi;\delta) - 2\pi\ \widehat{\alpha}_{\star,+}(\xi) &= 
2\pi\int_0^1
\left[\overline{p_+(x,k_\star+\delta\xi)}-\overline{p_+(x,k_\star)}\right]
\widehat{\alpha}_{\star,+}(\xi)p_+(x,k_\star)\ dx\nn\\
&+ 2\pi\int_0^1
\left[\overline{p_+(x,k_\star+\delta\xi)}-\overline{p_+(x,k_\star)}\right]
\widehat{\alpha}_{\star,-}(\xi)p_-(x,k_\star)\ dx \ .
\label{e11}\end{align}
By  \eqref{deltapbdd} we have :
\begin{align}
&\chi\left(\abs{\xi}\leq\delta^{\tau-1}\right)\Big| \widehat{a}_+(\xi;\delta) - 2\pi\ \widehat{\alpha}_{\star,+}(\xi) 
\Big|\nn\\
&\qquad \lesssim \chi\left(\abs{\xi}\leq\delta^{\tau-1}\right) \delta|\xi|\ \left( |\alpha_{\star,+}(\xi)|+
|\alpha_{\star,-}(\xi)|\right)\ \lesssim\ \delta^\tau |\alpha_\star(\xi)| \ .
\label{e12}\end{align}
Squaring and integrating over $\R_\xi$ we obtain:
\[\Big\|\chi\left(\abs{\xi}\leq\delta^{\tau-1}\right)\ \left[\widehat{a}_+(\xi;\delta) - 2\pi\
\widehat{\alpha}_{\star,+}(\xi) \right]
\Big\|_{L^2(\R_\xi)}\ \lesssim\delta^\tau .\]

To complete the proof of Claim A we study the $L^2(\R_\xi)-$ limit, as $\delta\to0$ of 
$\chi\left(\abs{\xi}\leq\delta^{\tau-1}\right)\widehat{A}_{\pm}(\xi;\delta)$.  Applying Lemma \ref{I_bdds} with  the
identifications
$f(x,\delta\xi)=p_+(x,k_{\star}+\delta\xi)$, $g(x)=p_{\pm}(x,k_{\star})$ and $\widehat{\Gamma}(x,\xi) =
\widehat{\alpha}_{\star,\pm}(\xi)$, \eqref{I_bdds_1} implies 
\begin{equation}
\norm{\chi\left(\abs{\xi}\leq\delta^{\tau-1}\right)\widehat{A}_{+}(\xi;\delta)}_{L^2(\R)}
\lesssim \delta \norm{\widehat{\alpha}_{\star}}_{L^{2,1}(\R)}
\lesssim \delta .
\label{A+bound}\end{equation} 
A similar bound applies to  $\widehat{A}_{-}(\xi;\delta)$.
 This completes the proof of the Claim A, \eqref{limita_claim}, and therewith the evaluation of the limit, 
\eqref{limit1}. \medskip

\nit \textbf{Proof of limit \eqref{limit2}.} We now turn to evaluating 
\[\lim_{\delta\rightarrow0} \delta\inner{\widehat{\alpha}_{\star}(\cdot),\widehat{\mathcal{N}}(\cdot;\delta)}
=\lim_{\delta\rightarrow0}\sum_{j=1}^4
\delta\inner{\widehat{\alpha}_{\star}(\cdot),\widehat{\mathcal{N}}_j(\cdot;\delta)},\]
where $\widehat{\mathcal{N}}_j(\xi;\delta),\ j=1,\dots,4$, are displayed in \eqref{N_op_1}-\eqref{N_op_4}.
 We first show that the contributions from the $j=3$ and $j=4$ terms vanish in the
limit as $\delta$ tends to zero.

Using the expression for \eqref{N_op_4} in $\widehat{\mathcal{N}}_4$, the bound   \eqref{mnbddformula} (Lemma
\ref{parseval_bdd}) and the bound \eqref{B_bdd} for
$C(x;\delta)$, we obtain:
\begin{align*}
&\delta\inner{\widehat{\alpha}_{\star}(\xi),\widehat{\mathcal{N}}_4(\xi;\delta)}_{L^2(\R_\xi)}\nn\\
 &\  =\delta\ \abs{\inner{\widehat{\alpha}_{\star}(\xi),
\begin{pmatrix}
 \chi\left(\abs{\xi}\leq\delta^{\tau-1}\right) \inner{\Phi_{-}(\cdot,k_{\star}+\delta\xi),
\kappa(\delta\cdot)W_{\oo}(\cdot)C(\cdot;\delta)}_{L^2(\R)} \\
 \chi\left(\abs{\xi}\leq\delta^{\tau-1}\right) \inner{\Phi_{+}(\cdot,k_{\star}+\delta\xi),
\kappa(\delta\cdot)W_{\oo}(\cdot)C(\cdot;\delta)}_{L^2(\R)}
\end{pmatrix} }_{L^2(\R_\xi)} }\\ 
&\quad  \lesssim
\delta\norm{\widehat{\alpha}_{\star}}_{L^2(\R)}
\delta^{-1/2}\norm{\kappa(\delta\cdot)W_{\oo}(\cdot)C(\cdot;\delta)}_{L^2(\R)} \\
&\quad \lesssim \delta^{1/2}\norm{C(\cdot;\delta)}_{L^2(\R)} 
 \lesssim \delta^{1/2}\delta^{1/2-\tau}=\delta^{1-\tau} \ .
\end{align*}
Similarly, the contribution to the limit \eqref{limit2} from
$\delta\inner{\widehat{\alpha}_{\star}(\xi),\widehat{\mathcal{N}}_3(\xi;\delta)}_{L^2(\R_\xi)}$ vanishes, by Lemma
\ref{lemma:psi_bounds}, since:
\begin{align*} 
& \delta\ \norm{\delta \chi\left(\abs{\xi}\leq\delta^{\tau-1}\right)
\begin{pmatrix}
 \inner{\Phi_{-}(\cdot,k_{\star}+\delta\xi), \partial_X^2\psi^{(1)}_p(x,X)\Big|_{X=\delta x} }_{L^2(\R)} \\
 \inner{\Phi_{+}(\cdot,k_{\star}+\delta\xi), \partial_X^2\psi^{(1)}_p(x,X)\Big|_{X=\delta x} }_{L^2(\R)}
\end{pmatrix}}_{L^2(\R)}\\
&\qquad \lesssim \delta^2\delta^{-1/2}\|\D_X^2\psi_p^{(1)}\|_{L^2}\lesssim \delta^2\delta^{-1/2}\delta^{-1/2}=\delta.
\end{align*} 
Therefore 
\[\lim_{\delta\rightarrow0} \delta\inner{\widehat{\alpha}_{\star}(\cdot),\widehat{\mathcal{N}}(\cdot;\delta)}
=\lim_{\delta\rightarrow0}\sum_{j=1}^2
\delta\inner{\widehat{\alpha}_{\star}(\cdot),\widehat{\mathcal{N}}_j(\cdot;\delta)}.\]

\noindent{\bf Claim B:} In  $L^2(|\xi|\le\delta^{\tau-1};d\xi)$,  we have the following as $\delta\to0$:
\begin{enumerate}
\item The component inner products  of  the vector $\widehat{\mathcal{N}}_1(\xi;\delta)$
are given by:\medskip

\begin{align}
&\delta\inner{\Phi_{\pm}(\cdot,k_{\star}+\delta\xi),
\left(2\partial_x\partial_X-\kappa(X)W_{\oo}(x)\right) \psi^{(1)}_p(x,X)\Big|_{X=\delta x}}_{L^2(\R_x)} \nn \\ 
 &= 2\pi\ \int_0^1\overline{\Phi_{\pm}(x,k_{\star})}
\mathcal{F}_X\left[\left(2\partial_x\partial_X-\kappa(X)W_{\oo}(x)\right)\psi^{(1)}_p(x,X) 
\right](\xi) dx + o(1) . \label{limitb_claim}
\end{align} 
\item The component inner products of  the vector 
$\widehat{\mathcal{N}}_2(\xi;\delta)$ are given by:
\begin{align}
\delta
& \inner{\Phi_{\pm}(x,k_{\star}+\delta\xi),
\partial_X^2\psi^{(0)}(x,X)\Big|_{X=\delta x}}_{L^2(\R_x)} \nn \\
&~~~ = 2\pi\ \int_0^1\overline{\Phi_{\pm}(x,k_{\star})}
\mathcal{F}_X\left[\partial_X^2\psi^{(0)}(x,X)\right](\xi) dx + o(1) .
\label{limitb_claim2}
\end{align}
\end{enumerate}
Below we give the proof of Claim B, \eqref{limitb_claim}-\eqref{limitb_claim2}, but we first show how to use these
assertions,
together with the formal multi-scale asymptotic calculation of 
Section \ref{subsec:multiscale_analysis} of Chapter \ref{sec:bound_states},  to evaluate the limit \eqref{limit2}. 
\medskip

 Using the asymptotics 
 \eqref{limitb_claim}-\eqref{limitb_claim2}, and recalling that  $\widehat{\mathcal{N}}(\xi;\delta)$ is localized to 
the set $|\xi|\le\delta^{\tau-1}$, we obtain
{\small
\begin{align*}
& \lim_{\delta\rightarrow0} \delta \inner{\widehat{\alpha}_{\star}(\xi),
\widehat{\mathcal{N}}(\xi;\delta)}_{L^2(\R_{\xi})}\ =\ 
\lim_{\delta\rightarrow0} \int_{|\xi|\le\delta^{\tau-1}}\ d\xi\ \  \overline{\widehat{\alpha}_{\star}(\xi)}^T
\sum_{j=1}^2\delta\widehat{\mathcal{N}}_j(\xi;\delta) \\
&= \lim_{\delta\rightarrow0}\ \int_{|\xi|\le\delta^{\tau-1}}\ d\xi\ \  \overline{\widehat{\alpha}_{\star}(\xi)}^T \ \times \\
&\qquad \begin{pmatrix}
 \inner{\delta \Phi_{-}(x,k_{\star}+\delta\xi),
\left(2\partial_x\partial_X-\kappa(X)W_{\oo}(x)\right) \psi^{(1)}_p(x,X)+\partial_X^2\psi^{(0)}(x,X)\Big|_{X=\delta
x}}_{L^2(\R_x)} \\
\inner{\delta \Phi_{+}(x,k_{\star}+\delta\xi),
\left(2\partial_x\partial_X-\kappa(X)W_{\oo}(x)\right) \psi^{(1)}_p(x,X)+\partial_X^2\psi^{(0)}(x,X)\Big|_{X=\delta
x}}_{L^2(\R_x)}
\end{pmatrix} \\
&= 2\pi\ \int_{L^2(\R_{\xi})} \ d\xi\ \ \overline{\widehat{\alpha}_{\star}(\xi)}^T \ \times \\
&\qquad \begin{pmatrix}
 \inner{\Phi_{-}(x,k_{\star}),\mathcal{F}_X\left[
\left(2\partial_x\partial_X-\kappa(X)W_{\oo}(x)\right)
\psi^{(1)}_p(x,X)+\partial_X^2\psi^{(0)}(x,X)\right](\xi)}_{L^2_x[0,1]}\\
\inner{\Phi_{+}(x,k_{\star}),\mathcal{F}_X\left[
\left(2\partial_x\partial_X-\kappa(X)W_{\oo}(x)\right)
\psi^{(1)}_p(x,X)+\partial_X^2\psi^{(0)}(x,X)\right](\xi)}_{L^2_x[0,1]}
\end{pmatrix} \\
& = 2\pi\ \inner{\mathcal{F}_X[\alpha_{\star}](\xi),
\begin{pmatrix}
 \inner{\Phi_{-}(x,k_{\star}),\mathcal{F}_X[G^{(2)}](x,\xi)}_{L^2_x[0,1]} \\
 \inner{\Phi_{+}(x,k_{\star}),\mathcal{F}_X[G^{(2)}](x,\xi)}_{L^2_x[0,1]}
\end{pmatrix}}_{L^2(\R_{\xi})} \ ,
\end{align*} 
}
where $G^{(2)}(x,X)$ is as defined in \eqref{G2def}.
The last equality bridges the current quasi-momentum domain calculation to the formal multi-scale calculation of Section \ref{subsec:multiscale_analysis}
 of Chapter \ref{sec:bound_states}. \medskip

We now evaluate the limiting inner product, by interchanging the order of integration and an application of the
Plancherel Theorem:
\begin{align}
&2\pi\ \inner{\mathcal{F}_X[\alpha_{\star}](\xi),
\begin{pmatrix}
 \inner{\Phi_{-}(x,k_{\star}),\mathcal{F}_X[G^{(2)}](x,\xi)}_{L^2_x[0,1]} \\
 \inner{\Phi_{+}(x,k_{\star}),\mathcal{F}_X[G^{(2)}](x,\xi)}_{L^2_x[0,1]}
\end{pmatrix}}_{L^2(\R_{\xi})}\nn\\
&\quad = 2\pi\ \int_{\R_\xi} \sum_{j=\pm} \overline{\mathcal{F}_X[\alpha_{\star,j}](\xi)}\
\inner{\Phi_j(x;k_\star),\mathcal{F}_X[G^{(2)}](x,\xi)}_{L^2_x[0,1]}\ d\xi\nn\\
&\quad = 2\pi\ \int_0^1 \sum_{j=\pm}\ \overline{\Phi_j(x;k_\star)} \
\inner{\mathcal{F}_X[\alpha_{\star,j}](\xi),\mathcal{F}_X[G^{(2)}](x,\xi)}_{L^2(\R_\xi)}\ dx\nn\\
&\quad =  \int_0^1 \sum_{j=\pm}\ \overline{\Phi_j(x;k_\star)} \ \inner{\alpha_{\star,j}(X),G^{(2)}(x,X)}_{L^2(\R_X)}\ dx\nn\\
&\quad =  \int_{\R_X} \sum_{j=\pm}\ \overline{\alpha_{\star,j}(X)} \ \inner{\Phi_j(x;k_\star),G^{(2)}(x,X)}_{L_x^2([0,1])}\
dX\nn\\ 
&\quad =  \inner{\alpha_{\star}(X),
\begin{pmatrix}
 \inner{\Phi_-(\cdot;k_\star),G^{(2)}(\cdot,X)}_{L^2[0,1]} \\
 \inner{\Phi_+(\cdot;k_\star),G^{(2)}(\cdot,X)}_{L^2[0,1]}
\end{pmatrix}}_{L^2(\R_X)} \nn \\
&\quad =\inner{\alpha_{\star}(X),\mathcal{G}^{(2)}(X)}_{L^2(\R_X)}\nn\\
&\quad = -E^{(2)} =_{\rm def} -\mu_0,
\end{align}
where $E^{(2)}\equiv\mu_0$ is given in  \eqref{solvability_cond_E2}
 and $\mathcal{G}^{(2)}(X)$ is defined in \eqref{ipG}; see also \eqref{mu0-def}. That $\mathcal{G}^{(2)}(X)$ and $E^{(2)}$ do indeed match the multiscale definitions follows from \eqref{LS-leading} and \eqref{G2def}.
\medskip

\noindent{\bf Proof of Claim B,  \eqref{limitb_claim} and \eqref{limitb_claim2}:}\  We begin with \eqref{limitb_claim2}
which is proved in a  manner similar to the proof of \eqref{limita_claim}. 
By the definition of $\psi^{(0)}(x,X)$ and since $\Phi_\pm(x,k)=e^{ikx}p_\pm(x,k)$, the left hand side of
\eqref{limitb_claim2}
may be written as
\begin{align}
\label{limitb2_inner}
&\delta\inner{\Phi_{+}(\cdot,k_{\star}+\delta\xi),\partial_X^2\psi^{(0)}(\cdot,\delta\cdot)}_{L^2(\R)}\nn\\
 &=
\delta\inner{e^{i\xi\delta \cdot}p_{+}(\cdot,k_{\star}+\delta\xi),
\partial_X^2\alpha_{\star,+}(X)p_{+}(x,k_{\star})\Big|_{X=\delta x}}_{L^2(\R_x)} \nn \\
&~~~+ \delta\inner{e^{i\xi\delta \cdot}p_{+}(\cdot,k_{\star}+\delta\xi),
\partial_X^2\alpha_{\star,-}(X)p_{-}(x,k_{\star})\Big|_{X=\delta x}}_{L^2(\R_x)}.
\end{align}
We next apply  Lemma \ref{poisson_exp} to \eqref{limitb2_inner} with the identifications:
$f(x,\delta\xi)=p_{+}(x,k_{\star}+\delta\xi)$, $g(x)=p_{\pm}(x,k_{\star})$ and
$\Gamma(x,\delta x)=\partial_X^2\alpha_{\star,\pm}(X)|_{X=\delta x}$. 
In $L^2_{\rm loc}(d\xi)$ we have (recall that $\mathcal{F}_X$ denotes the Fourier transform with respect to the variable
$X$):
\begin{align}
&\delta\inner{\Phi_{+}(\cdot,k_{\star}+\delta\xi),\partial_X^2\psi^{(0)}(\cdot,\delta\cdot)}_{L^2(\R)} 
\equiv \widehat{d}_{+}(\xi;\delta) + \widehat{D}_{+}(\xi;\delta),
\label{cb-1}\end{align}
where
\begin{align}
\widehat{d}_{+}(\xi;\delta) &= 2\pi \int_0^1 \overline{p_{+}(x,k_{\star}+\delta\xi)} \ \times \label{d+def} \\
&\qquad \Big[ \mathcal{F}_X\left[\partial_X^2\alpha_{\star,+}\right](\xi)p_+(x;k_\star)+
\mathcal{F}_X\left[\partial_X^2\alpha_{\star,-}\right](\xi)p_-(x;k_\star) \Big] dx , \nn \\
\widehat{D}_{+}(\xi;\delta) &= 2\pi \sum_{\abs{m}\geq1} \int_0^1e^{imx}\ \overline{p_{+}(x,k_{\star}+\delta\xi)}
e^{-ik_\star x} \mathcal{F}_X\left[\partial_X^2\psi^{(0)}(x,X) \right]\left(\frac{2\pi m}{\delta}+\xi\right) .
\label{D+def}
\end{align}
The terms $\widehat{d}_{+}(\xi;\delta)$ and $\widehat{D}_{+}(\xi;\delta)$ are, respectively, the $m=0$ and $|m|\ge1$
terms in the Poisson sum.  An analogous expression holds for when $\Phi_+$ is replaced by $\Phi_-$.

The expression for $\widehat{D}_{+}$ is of the same type as $\widehat{A}_{+}$, displayed in \eqref{aA-def}. Thus, in a
manner similar to the proof of the bound \eqref{A+bound} we can apply Lemma \ref{I_bdds} to obtain
 \[ \lim_{\delta\to0}\ \Big\|\chi(|\xi|\le\delta^{\tau-1})\widehat{D}_{\pm}(\xi;\delta)\Big\|_{L^2(\R_\xi)}\ =\
o(\delta)\ .\]  
So to prove claim \eqref{limitb_claim2} of Claim B it suffices to show that as $\delta\to0$:
\begin{equation*}
\Big\| \chi(|\xi|\le\delta^{\tau-1})\Big[\ \widehat{d}_{\pm}(\xi;\delta) -
2\pi\ \int_0^1 \overline{\Phi_\pm(x,k_{\star})} 
\mathcal{F}_X\left[\partial_X^2\psi^{(0)}(x,X)\right]\left(\xi\right) dx\ \Big] \Big\|_{L^2(\R_\xi)}\to0 .
\end{equation*} 
To see this, we have from \eqref{d+def}:
\begin{align*}
&\widehat{d}_{+}(\xi;\delta) \nn \\
&\quad = 2\pi \int_0^1 \overline{p_{+}(x,k_{\star})}
\Big[ \mathcal{F}_X\left[\partial_X^2\alpha_{\star,+}\right](\xi)p_+(x;k_\star)+
\mathcal{F}_X\left[\partial_X^2\alpha_{\star,-}\right](\xi)p_-(x;k_\star) \Big] dx \nn\\
&\qquad + 2\pi \int_0^1 \left[ \overline{p_{+}(x,k_{\star}+\delta\xi)}-\overline{p_{+}(x,k_{\star})} \right] \ \times \nn \\
&\qquad \qquad
\Big[ \mathcal{F}_X\left[\partial_X^2\alpha_{\star,+}\right](\xi)p_+(x;k_\star)+
\mathcal{F}_X\left[\partial_X^2\alpha_{\star,-}\right](\xi)p_-(x;k_\star) \Big] dx \nn\\
&\quad = 2\pi\ \int_0^1 \overline{\Phi_+(x,k_{\star})} 
\mathcal{F}_X\left[\partial_X^2\psi^{(0)}(x,X)\right]\left(\xi\right) dx + \widehat{e}_+(\xi,\delta),
\end{align*}
where (as in \eqref{e11}-\eqref{e12}) we have 
\begin{align*}
\Big\|\ \chi(|\xi|\le\delta^{\tau-1}) \widehat{e}_+(\xi;\delta)\ \Big\|_{L^2(\R_\xi)}\ &\lesssim\ 
\Big\|\chi(|\xi|\le\delta^{\tau-1})\delta\xi\cdot \xi^2\widehat{\alpha_\star}(\xi)\Big\|_{L^2(\R_\xi)} \\ 
& \le\delta^\tau
\Big\| \xi^2\widehat{\alpha_\star}(\xi)\Big\|_{L^2(\R_\xi)} ,
\end{align*}
which tends to zero as $\delta\to0$. Similar results hold for $\Phi_+$ replaced by $\Phi_-$ in \eqref{cb-1}. This completes the proof of 
\eqref{limitb_claim2} of Claim B.
\medskip

We conclude the proof of Claim B by now verifying the limit \eqref{limitb_claim}.
Recall from \eqref{main_result_psi1}, \eqref{Fdefn} 
and the relation $\Phi_j(x,k_\star)=e^{ik_\star x}p_j(x,k_\star)$, $j=\pm$, that
\begin{align}
 \psi^{(1)}_p(x,X) 
 &= R(E_{\star})\circ  e^{ik_{\star}x}
 \mathcal{A}\ P^{(0)}(x,X)\ \equiv e^{ik_\star x}\  R_{k_\star}(E_\star)
 \mathcal{A}\ P^{(0)}(x,X) .
\label{psi1p-star}
\end{align} 
Here, 
\begin{align}
P^{(0)}(x,X)&= e^{-ik_\star x}\psi^{(0)}(x,X)=\sum_{j\in\{+,-\}}  \alpha_{\star,j}(X)p_j(x,k_{\star}) \
,\label{psi0-def1}\\
\mathcal{A}&=\mathcal{A}(x,X,\D_x,\D_X)\ \equiv\ 2 \left(\partial_x +
ik_{\star}\right)\partial_X -\kappa(X)W_{\oo}(x) .
\label{Adef} \end{align}
The operator $R_{k_\star}(E_{\star})$ is given by 
\begin{align}
R_{k_\star}(E_{\star}) &= e^{-ik_\star x}\ R(E_{\star})\ e^{ik_{\star}x}=(H_{V_\ee}(k_\star)-E_\star)^{-1}\label{Rkstar}
\end{align}
and is bounded from $\Pi_\perp L^2_{\rm per}(dx)$ to $ H^2_{\rm per}(dx)$. Here, 
$\Pi_\perp$  projects onto the orthogonal complement of the nullspace of $H_{V_\ee}(k_\star)-E_\star$.
Note that  $\alpha_\star(X)$ was chosen to ensure solvability, which is equivalent to
$x\mapsto\mathcal{A}P^{(0)}(x,X)\in\Pi_\perp L^2_{\rm per}([0,1];dx)$; see \eqref{G1orthog}-\eqref{dirac_eqn1}.

 Next, applying the operator $\left(2\partial_x\partial_X-\kappa(X)W_{\oo}(x)\right)$ to $\psi^{(1)}_p$ we obtain:
\begin{align}
 &\left(2\partial_x\partial_X-\kappa(X)W_{\oo}(x)\right)\psi^{(1)}_p(x,X) =e^{ik_{\star}x} \mathcal{K}(x,X) 
,\ \ \textrm{where } \label{nonzero_limit} \\
& \mathcal{K}(x,X) \equiv\ 
 \Big(\mathcal{A}\circ
R_{k_\star}(E_{\star})\ \circ\mathcal{A}\Big)\ P^{(0)}(x,X) \ .
\label{K-def}\end{align}
We next substitute \eqref{nonzero_limit} into the inner product on the left hand side of \eqref{limitb_claim}
 and find 
\begin{align}
 &\delta\inner{\Phi_{+}(x,k_{\star}+\delta\xi),\left(2\partial_x\partial_X-\kappa(X)W_{\oo}(x)\right)
\psi^{(1)}_p(x,X)\Big|_{X=\delta x}}_{L^2(\R_x)}\nn\\
 &\qquad = 
\delta\left\langle e^{i\delta\xi x} p_{+}(x,k_{\star}+\delta\xi),
\mathcal{K}(x,\delta x) \right\rangle_{L^2(\R_x)} \ . \label{ip-limitb}
\end{align}
We  study the limit of \eqref{ip-limitb} as $\delta\to0$ by applying  Lemma
\ref{poisson_exp} with the choices $f(x,\delta \xi)=p_+(x,k_\star+\delta\xi)$, $g(x)\equiv1$ and
$\Gamma(x,X)=\mathcal{K}(x,X)$. 
To verify the hypotheses of Lemma \ref{poisson_exp}  we shall check that
\begin{equation}
\|\mathcal{K}\|_{\mathbb{H}^{a,b}}^2\equiv \sum_{j=0}^a\int_0^1 \|\D_y^j\mathcal{K}(x,X)\|_{H^b(\R_X)}^2\ dx < \infty\ ,
\ {\rm for} \  a=2,\ b=2\ .\label{moreGamma}
\end{equation}
The condition \eqref{moreGamma}, which is more stringent  than 
the condition \eqref{Gamma-conditions2} of Lemma \ref{poisson_exp},  will be used below.
 Denote by $\mathbb{H}^{a,b}$ the Hilbert space of functions obtained via completion of the linear space of functions
$x\mapsto\Gamma(x,X)$ which are $C^\infty_{\rm per}([0,1]_x)$ with values in $\mathcal{S}(\R_X)$;
 compare with \eqref{Gamma-conditions2a}.
Note that $\mathcal{A}:\mathbb{H}^{a,b}\to \mathbb{H}^{a-1,b-1}$ is bounded and 
$R_{k_\star}(E_\star)\Pi_\perp:\mathbb{H}^{a,b}\to\mathbb{H}^{a+2,b}$ is bounded. 

Therefore, 
\begin{align}
\|\mathcal{K}(x,X)\|_{\mathbb{H}^{2,2}}^2 &=
\Big\|\Big(\mathcal{A}\circ
R_{k_\star}(E_{\star})\ \circ\mathcal{A}\Big)\ P^{(0)}(x,X)  \Big\|_{\mathbb{H}^{2,2}}^2\nn\\
&\lesssim \Big\|\left( R_{k_\star}(E_{\star})\ \circ\mathcal{A} \right)\ P^{(0)}(x,X) \Big\|_{\mathbb{H}^{3,3}}^2\nn\\
&\lesssim \Big\| \mathcal{A} P^{(0)}(x,X) \Big\|_{\mathbb{H}^{1,3}}^2 
\lesssim \Big\| P^{(0)}(x,X) \Big\|_{\mathbb{H}^{2,4}}^2\le C, \label{Knorm}
\end{align}
where $C$ is an order one constant which depends on $p_j,\ j=\pm$, and $\alpha_\star(X)$; see \eqref{psi0-def1}.
\medskip

Applying Lemma \ref{poisson_exp} to \eqref{ip-limitb},
the inner product on the left hand side of \eqref{limitb_claim}, we obtain
\begin{align}
\eqref{ip-limitb}  \equiv\ 
\widehat{f}_{+}(\xi;\delta) + \widehat{F}_{+}(\xi;\delta),
 \label{ip+}
\end{align} 
where
 \begin{align}
 \widehat{f}_{+}(\xi;\delta) = 2\pi\ \int_0^1 \overline{p_{+}(x,k_{\star}+\delta\xi)}
 \mathcal{F}_X[\mathcal{K}(x,X)](\xi) dx\ ,
\label{fhat+}\end{align} and
\begin{align}
\widehat{F}_{+}(\xi;\delta) = 2\pi \sum_{\abs{m}\geq1} \int_0^1 e^{imx} \overline{p_{+}(x,k_{\star}+\delta\xi)}
\mathcal{F}_X[\mathcal{K}(x,X)]\left(\frac{2\pi m}{\delta}+\xi\right) dx \ .
\label{Fhat+}\end{align}
Analogous expressions can be obtained for the inner product \eqref{ip+}, with $\Phi_+$ replaced by $\Phi_-$, leading to 
expressions:  $ \widehat{f}_{-}(\xi;\delta)$ and $ \widehat{F}_{-}(\xi;\delta)$.
\medskip

We next apply the bounds Lemma \ref{I_bdds} to show
 $\chi\left(\abs{\xi}\leq\delta^{\tau-1}\right)\widehat{F}_{+}(\xi;\delta)\to0$ in $L^2(d\xi)$. The hypotheses of Lemma
\ref{I_bdds} require, with 
 $\widehat{\Gamma}(x,\zeta)=\mathcal{F}_X[\mathcal{K}(x,X)](\zeta)\equiv \widehat{\mathcal{K}}(x,\zeta)$, the bound 
\eqref{Gammahat-bound}:
\begin{equation}
   \left\|\ \sup_{0\le x\le1} |\widehat{\mathcal{K}}(x,\zeta)|\ \right\|_{L^{2,1}(\R_\zeta)}\ <\infty \ .
   \label{Khat-bound}\end{equation}
  To prove \eqref{Khat-bound}, first   note 
 \begin{align*}
\left| \sup_{0\le x\le1}   |\widehat{\mathcal{K}}(x,\zeta)| \right|^2
\le\ \int_0^1 |\widehat{\mathcal{K}}(x,\zeta)|^2+
|\D_x\widehat{\mathcal{K}}(x,\zeta)|^2\ dx \ .
  \end{align*}
 Multiplication by $1+\zeta^2$ and  integration $d\zeta$ over $\R$ and applying the Plancherel identity, yields
  \begin{align*}
   &\left\|\ \sup_{0\le x\le1} |\widehat{\mathcal{K}}(x,\zeta)|\ \right\|^2_{L^{2,1}(\R_\zeta)} \\
   &\qquad \le  \int_\R (1+|\zeta|^2)\ d\zeta\  \int_0^1 |\widehat{\mathcal{K}}(x,\zeta)|^2+
|\D_x\widehat{\mathcal{K}}(x,\zeta)|^2\ dx\nn\\
&\qquad= \int_0^1 dx\ \int_\R (1+|\zeta|^2)\ \left( |\widehat{\mathcal{K}}(x,\zeta)|^2+
|\D_x\widehat{\mathcal{K}}(x,\zeta)|^2 \right)\  d\zeta\\
&\qquad= \int_0^1 dx\ \int_\R  |(I-\D_X^2)^{\frac12}\mathcal{K}(x,X)|^2+
|\D_x (I-\D_X^2)^{\frac12}\mathcal{K}(x,X)|^2\  dX\nn\\
&\qquad= \sum_{j=0}^1 \int_0^1 \|\D_x^j\mathcal{K}(x,X)\|_{H^1(\R_X)}^2\ dx\ =\ \|\mathcal{K}\|^2_{\mathbb{H}^{1,1}} \ .
\end{align*}
 The norm $\|\mathcal{K}\|_{\mathbb{H}^{1,1}}$ is finite by  \eqref{Knorm} and therefore \eqref{Khat-bound}  holds.
 
Now apply Lemma \ref{I_bdds} and we obtain, using \eqref{K-def},  that
\begin{align*}
\norm{\chi\left(\abs{\xi}\leq\delta^{\tau-1}\right)\widehat{F}_{+}(\xi;\delta)}_{L^2(\R_\xi)} &\lesssim 
\delta \norm{\sup_{0\le x\le1} \widehat{\mathcal{K}}(x,\zeta)}_{L^{2,1}(\R_\zeta)} \\
&\lesssim \delta \norm{\left(\abs{\zeta}^2+1\right)
\widehat{\alpha}_{\star}\left(\zeta\right)}_{L^{2,1}(\R_\zeta)} \\
&\approx \delta
\norm{\widehat{\alpha}_{\star}}_{L^{2,3}(\R)} \lesssim \delta .
\end{align*}
The same bound holds with $ \widehat{F}_{+}$ replaced by $\widehat{F}_{-}$. 
Therefore, 
\begin{align*}
&\Big|\ \inner{\widehat{\alpha}_{\star}(\xi),
\begin{pmatrix}
 \chi\left(\abs{\xi}\leq\delta^{\tau-1}\right)\widehat{F}_{-}(\xi;\delta) \\
 \chi\left(\abs{\xi}\leq\delta^{\tau-1}\right)\widehat{F}_{+}(\xi;\delta)
\end{pmatrix}}_{L^2(\R_\xi)}\ \Big| \\
&\qquad \lesssim\ 
\norm{\widehat{\alpha}_{\star}}_{L^2(\R)} \norm{
\begin{pmatrix}
 \chi\left(\abs{\xi}\leq\delta^{\tau-1}\right)\widehat{F}_{-}(\xi;\delta) \\
 \chi\left(\abs{\xi}\leq\delta^{\tau-1}\right)\widehat{F}_{+}(\xi;\delta)
\end{pmatrix}}_{L^2(\R_\xi)}
\lesssim \delta.
\end{align*}
\medskip

\nit To complete the proof of Claim B,  we study the limits of 
$\lim_{\delta\to0}\widehat{f}_\pm(\xi;\delta)$. 

\begin{align}
\widehat{f}_{+}(\xi;\delta) &= 2\pi\ \int_0^1 \overline{p_{+}(x,k_{\star})}
 \widehat{\mathcal{K}}(x,\xi) dx\nn\\
 &\quad + 2\pi\ \int_0^1 \Big[\overline{p_{+}(x,k_{\star}+\delta\xi)}-\overline{p_{+}(x,k_{\star})}\Big]
 \widehat{\mathcal{K}}(x,\xi) dx \ . \label{e38}
 \end{align}
 Multiplying \eqref{e38} by $\chi\left(|\xi|\le\delta^{\tau-1}\right)$ and estimating in $L^2(\R_\xi)$ we have: 
 \begin{align}
&\Big\|\chi\left(|\xi|\le\delta^{\tau-1}\right)\Big[ \widehat{f}_{+}(\xi;\delta) - 2\pi\ \int_0^1
\overline{p_{+}(x,k_{\star})}
 \widehat{\mathcal{K}}(x,\xi) dx\Big]\Big\|_{L^2(\R_\xi)}\nn\\
 &\quad \lesssim 
 \Big\|\chi\left(|\xi|\le\delta^{\tau-1}\right) (\delta\xi) \sup_{0\le x\le1}|
\widehat{\mathcal{K}}(x,\xi)|\Big\|_{L^2(\R_\xi)}\nn\\ &\quad \lesssim \delta^\tau\ \Big\|\sup_{0\le x\le1}|
\widehat{\mathcal{K}}(x,\xi)|\Big\|_{L^2(\R_\xi)}=o(\delta^\tau),\ \ \delta\to0
 \end{align}
using \eqref{deltapbdd} and \eqref{Khat-bound}. Finally observe from \eqref{nonzero_limit} that
\begin{align}
  \label{implicit16}
 & 2\pi \int_0^1 \overline{p_{+}(x,k_{\star})} \mathcal{F}_X[\mathcal{K}](x,\xi) dx  = 2\pi \int_0^1
\overline{\Phi_{+}(x,k_{\star})} \mathcal{F}_X[e^{ik_\star x}\mathcal{K}](x,\xi) dx \nn \\
 &= 2\pi \ \int_0^1\overline{\Phi_+(x,k_{\star})}
\mathcal{F}_X\left[\left(2\partial_x\partial_X-\kappa(X)W_{\oo}(x)\right) \psi^{(1)}_p(x,X)  \right](\xi) dx \
.
\end{align} 
The last equality holds by \eqref{nonzero_limit}. This proves \eqref{limitb_claim}
and therewith 
 Claim B .
\medskip

\nit \textbf{Proof of bound \eqref{bound1}.} To prove \eqref{bound1}, we use that
\[(\widehat{\mathcal{D}}^{\delta}-\widehat{\mathcal{D}})=\vartheta_\sharp\chi\left(|\xi|>\delta^{\tau-1}
\right)\sigma_1\widehat{\kappa\beta}\] (see \eqref{dirac-diff}) acts only on high frequencies components and that
$\widehat{\alpha}_{\star}$, being a Schwartz class function, is dominated by low frequencies.
 We obtain, using the Cauchy-Schwarz inequality, Plancherel identity and the uniform boundedness 
 of $\kappa(X)$:
\begin{align*}
& \abs{\delta\inner{\widehat{\alpha}_{\star}(\cdot),\left(\widehat{\mathcal{D}}^{\delta}-\widehat{\mathcal{D}}\right)
\widehat{\beta}(\cdot;\mu,\delta)}_{L^2(\R)}} \\
&\qquad \lesssim \delta
\left(\int_{\abs{\xi}>\delta^{\tau-1}}\abs{\widehat{\alpha}_{\star}(\xi)}^2 d\xi\right)^{1/2}
\left(\int_{\abs{\xi}>\delta^{\tau-1}}\abs{\widehat{\kappa\beta}(\xi;\mu,\delta)}^2 d\xi\right)^{1/2} \\
&\qquad\lesssim \delta \left(\int_{\abs{\xi}>\delta^{\tau-1}}\frac{\abs{\xi}^{2}}{\abs{\xi}^{2}}
\abs{\widehat{\alpha}_{\star}(\xi)}^2 d\xi\right)^{1/2}
\norm{\kappa\beta(\cdot;\mu,\delta)}_{L^2(\R)} \\
&\qquad\lesssim \delta\frac{1}{\delta^{(\tau-1)}}
\left(\int_{\abs{\xi}>\delta^{\tau-1}}\abs{\xi}^{2}\abs{\widehat{\alpha}_{\star}(\xi)}^2
d\xi\right)^{1/2} \norm{\widehat{\beta}(\cdot;\mu,\delta)}_{L^{2,1}(\R)} \\
&\qquad\lesssim \delta^{2-\tau} \norm{\xi\widehat{\alpha}_{\star}(\xi)}_{L^2(\R_\xi)}
\delta^{-1} \lesssim \delta^{1-\tau}.
\end{align*}
Here, we have used the bound $\|\beta(\cdot,\mu,\delta)\|_{L^{2,1}}\lesssim\delta^{-1}$ in \eqref{betabound2a}.
\medskip

\nit \textbf{Proof of bound \eqref{bound2}.} The bound \eqref{bound2} follows from the Cauchy-Schwarz
inequality, bound \eqref{llemma} (Proposition \ref{lemma12}) and  \eqref{betabound2a}:
\begin{align*}
 \abs{\delta\inner{\widehat{\alpha}_{\star}(\cdot),\widehat{\mathcal{L}}^{\delta}(\mu)
\widehat{\beta}(\cdot;\mu,\delta)}_{L^2(\R)}} &\leq
\delta\norm{\widehat{\alpha}_{\star}}_{L^2(\R)}
\norm{\widehat{\mathcal{L}}^{\delta}(\mu) \widehat{\beta}(\cdot;\mu,\delta)}_{L^2(\R)} \\
& \lesssim \delta^{1}\delta^{\tau}\norm{\widehat{\beta}(\cdot;\mu,\delta)}_{L^{2,1}(\R)}\ \lesssim
\delta^{1+\tau}\delta^{-1}  = \delta^{\tau} .
\end{align*}
\medskip

\nit \textbf{Proof of bound \eqref{bound3}.} The bound \eqref{bound3} follows directly from the Cauchy-Schwarz
inequality and \eqref{betabound2a}:
\begin{align*}
 \abs{\delta^2 \mu\inner{\widehat{\alpha}_{\star}(\cdot),\widehat{\beta}(\cdot;\mu,\delta)}_{L^2(\R)}} 
 &\leq \delta^2\abs{\mu} \norm{\widehat{\alpha}_{\star}}_{L^2(\R)}
\norm{\widehat{\beta}(\cdot;\mu,\delta)}_{L^2(\R)} \ \lesssim \delta^2 \delta^{-1} = \delta.
\end{align*}

\backmatter

\bibliographystyle{amsplain}
\bibliography{1d-edge}

\end{document}